\def\IsFull{}
\ifdefined\IsFull{}
\tikzstyle{system}=[rectangle,draw,fill=lightgray,minimum height=0.8cm,minimum
\tikzstyle{BC}=[system]
\tikzstyle{rzesource}=[system]
\tikzstyle{RO}=[resource, minimum width=1cm]
\tikzstyle{protocol}=[circle, inner sep=0.7mm, draw]
\tikzstyle{simulator}=[circle, inner sep=0.7mm, draw]
\tikzstyle{memory}=[resource]
\tikzstyle{distinguisher}=[resource,fill=white,minimum width=3.5cm,
\tikzstyle{link}=[]
\newcommand{\algoHead}[1]{\vspace{0.2em} \underline{\textbf{#1}} \vspace{0.3em}}
\algnewcommand{\ExtendedState}[1]{\State
\parbox[t]{\dimexpr\linewidth-\ALG@thistlm}{\hangindent=\algorithmicindent\strut\hangafter=3#1\strut}}
\algnewcommand\algorithmicinput{\textbf{Input:}}
\algnewcommand\Input{\item[\algorithmicinput]}
\renewcommand{\algorithmicensure}{\textbf{Output:}}
\algrenewcommand\algorithmicrequire{\textbf{Input:}}
\algrenewcommand\algorithmicensure{\textbf{Output:}}
\algrenewcommand{\algorithmiccomment}[1]{{\color{blue}// #1}}
\algnewcommand{\IIf}[1]{\State\algorithmicif\ #1\ \algorithmicthen}
\algnewcommand{\EndIIf}{\unskip\ \algorithmicend\ \algorithmicif}
 \newtcolorbox{titlebox}[5]{enhanced,halign=center,colframe=black,colback=white,boxrule={#3},arc={#2},auto outer arc,%
 pad at break*=5pt,vfill before first,before={\par\smallskip\noindent},after={\par\smallskip},top=12pt,left=4pt,%
 enlarge top by=7pt,
 fontupper=\small,
 title={\rule[-.3\baselineskip]{0pt}{\baselineskip}\sffamily\bfseries #1}, varwidth boxed
 title*=-30pt,
 attach boxed title to top left={yshift=-10pt,xshift=10pt}, coltitle=black,
 boxed title style={colback=white,boxrule={#5},arc={#4},auto outer arc}
 }
 \newtcolorbox{notitlebox}{enhanced,halign=center,colframe=black,colback=white,boxrule={0.5pt},arc={0.5pt},auto outer arc,
  pad at break*=5pt,vfill before first,after={\par\smallskip},left=4pt,
  fontupper=\small,
  attach boxed title to top left={yshift=-10pt,xshift=10pt}, coltitle=black
}
 \newenvironment{protocolbox}[1]
 {\begin{titlebox}{Protocol \normalfont #1}{0.5pt}{0.5pt}{1pt}{0.75pt}}
 {\end{titlebox}}
\global\def\@linemarker{}
\algrenewcommand{\alglinenumber}[1]{%
	\ifdefempty{\@linemarker}{
		\rule{7pt}{0pt}
	}{
		\rlap{
			\footnotesize\textcolor{black}{\@linemarker}:
			\global\def\@linemarker{}
		}{\rule{7pt}{0pt}}
	}
}
\algnewcommand{\linemarker}[1]{\global\def\@linemarker{#1}}
\newcommand{\refmarker}[1]{line~\textcolor{black}{#1}}
\newcounter{markercounter}
\newenvironment{code}{\begin{tabbing}
    12345\=12345\=12345\=12345\=12345\=12345\=12345\=12345\= \kill }
  {\end{tabbing}}
\newcommand{\ignore}[1]{}
\newcommand{\floor}[1]{\lfloor#1\rfloor}
\newcommand{\etal}{\textit{et al.}~}
\newcommand{\getsr}{{\:{\leftarrow{\hspace*{-3pt}\raisebox{.75pt}{$\scriptscriptstyle\$$}}}\:}}
\def \resw {2}
\def \resh {0.75}
\def \convrad {0.5}
\tikzstyle{border}=[rounded corners, dashed]
\tikzstyle{res}=[thick,violet,rounded corners,draw,minimum width=\resw cm,minimum height=\resh cm]
\tikzstyle{conv}=[thick,green!60!black,circle,draw,minimum width=2*\convrad cm]
\tikzstyle{conn}=[thick]
\DeclareMathOperator{\negl}{negl}
\newcommand{\dotcup}{\mmbox{\kern0.5em{\cdot\kern-0.5em\cup}\kern0.3em}}
\newcommand{\set}[1]{{\{\allowbreak #1\allowbreak \}}}
  \newcommand{\sqed}{\qed}
  \newcommand{\sqed}{}
\renewenvironment{proof}[1][\proofname]{\par
  \normalfont\topsep6\p@\@plus6\p@\relax
  \trivlist
  \item[\hskip\labelsep\itshape
    #1\@addpunct{.}]\ignorespaces
}{%
  \nobreak\hfill \sqed\endtrivlist\@endpefalse
}
\newtheorem{theorem}{Theorem}
\newtheorem{lemma}[theorem]{Lemma}
\newenvironment{claim*}[1]{\par\noindent\textbf{Claim:}\space#1}{}
\newtheorem{definition}{Definition}
\theoremstyle{definition}
\newtheorem{remark}{Remark}
\newtheorem{construction}{Construction}
\newcommand{\prf}{\term{PRF}}
\newcommand{\tprf}{\term{DPRF}}
\newcommand{\tpgen}{\term{Gen}}
\newcommand{\tpeval}{\term{Eval}}
\newcommand{\tpshare}{\term{Share}}
\newcommand{\tppeval}{\term{P\mhyphen Eval}}
\newcommand{\tpcombine}{\term{Combine}}
\newcommand{\tppp}{\pp_{\prf}}
\newcommand{\tpk}{\term{k}}
\newcommand{\pres}{\term{Setup}}
\newcommand{\presk}{r}
\newcommand{\prect}{\term{ct}}
\newcommand{\psae}{\term{Enc}}
\newcommand{\psakg}{\term{KeyGen}}
\newcommand{\psas}{\term{Setup}}
\newcommand{\psag}{\term{PublicKeyGen}}
\newcommand{\psaa}{\term{Aggregate}}
\newcommand{\psask}{\tpk}
\newcommand{\psapk}{\upkepk}
\newcommand{\psact}{\term{ct}}
\newcommand{\psaaux}{\term{aux}}
\newcommand{\psain}{x}
\newcommand{\csize}{\term{m}}
\newcommand{\cthr}{\term{t}}
\newcommand{\crec}{\term{r}}
\newcommand{\lab}{\term{\ell}}
\newcommand{\modulus}{\term{M}}
\newcommand{\AUX}{\term{AUX}}
\renewcommand{\set}[1]{\allowbreak\left\{\allowbreak #1\allowbreak \right\}}
\newcommand{\oEnc}{{\cO}_{\term{Enc}}}
\newcommand{\oCorr}{{\cO}_{\term{Corr}}}
\newcommand{\oEval}{\cO_\term{Eval}}
\mathchardef\mhyphen="2D
\newcommand{\upkepk}{\mathsf{pk}}
\newcommand{\pke}{\mathcal{E}}
\newcommand{\pkee}{E}
\renewcommand{\pkee}{\mathsf{Enc}}
\newcommand{\pkepk}{\mathsf{pk}}
\newcommand{\liss}{\term{LISS}}
\newcommand{\lshare}{\term{Share}}
\newcommand{\lcomb}{\term{Reconstruct}}
\newcommand{\lcoeff}{\term{GetCoeff}}
\newcommand{\ssize}{\term{\ell}_s}
\newcommand{\rsize}{\term{\ell}_r}
\newcommand{\secret}{\term{s}}
\newcommand{\offset}{\Delta}
\newcommand{\coeff}{r}
\newcommand{\hash}{\texttt{H}}
\newcommand{\share}[2]{{[#1]_{#2}}}
\newcommand{\kcorr}{\term{K}}
\newcommand{\khon}{\term{H}}
\newcommand{\CAPS}{One-shot Private Aggregation}
\newcommand{\caps}{\term{OPA}}
\newcommand{\cFor}{\mbox{\bf for }}
\newcommand{\cDo}{\mbox{\bf do }}
\newcommand{\gglbasic}{BIK+17}
\newcommand{\gglgroup}{BBG+20}
\newcommand{\flmgo}{Flamingo}
\renewcommand{\ss}{\term{SS}}
\renewcommand{\share}{\term{Share}}
\newcommand{\recons}{\term{Reconstruct}}
\def\ddefloop#1{\ifx\ddefloop#1\else\ddef{#1}\expandafter\ddefloop\fi}
\def\ddef#1{\expandafter\def\csname bf#1\endcsname{{\mbox{\bf #1}}}}
\def\ddef#1{\expandafter\def\csname bf#1\endcsname{\ensuremath{\pmb{\csname #1\endcsname}}}}
\def\ddef#1{\expandafter\def\csname bb#1\endcsname{\ensuremath{\mathbb{#1}}}}
\def\ddef#1{\expandafter\def\csname c#1\endcsname{\ensuremath{\mathcal{#1}}}}
\newcommand{\bpsact}{\boldsymbol{\psact}}
\newcommand{\bpsain}{\boldsymbol{\term{x}}}
\newcommand{\bpsaout}{\boldsymbol{\term{X}}}
\renewcommand{\prg}{\term{PRG}}
\newcommand{\prgg}{\term{Gen}}
\newcommand{\prge}{\term{Expand}}
\newcommand{\seed}{\term{sd}}
\spnewtheorem{construction}{Construction}{\bfseries}{\itshape}
\newcommand{\eqq}{\stackrel{?}{=}}
\newcommand{\tSim}{\term{Sim}}
\newcommand{\bmask}{\boldsymbol{\term{mask}}}
  \newcommand{\ran}[2]{{r}}
\renewcommand{\floor}[1]{\left\lfloor #1\right\rfloor}
\newcommand{\Floor}[1]{\left\lfloor #1\right\rfloor}
\newcommand{\term}[1]{\mathsf{#1}}
\renewcommand{\pp}{\term{pp}}
\newcommand{\DL}{\term{DL}}
\newcommand{\Gen}{\term{Gen}}
\newcommand{\Solve}{\term{Solve}}
\newcommand{\Hybrid}{\mbox{\sf Hybrid}}
\renewcommand{\state}{st}
\newcommand{\Ts}{\tilde{\secret}}
\newcommand{\Tbs}{\tilde{\bfs}}
\newcommand{\LWR}{\texttt{LWR}}
\newcommand{\LWE}{\texttt{LWE}}
\newcommand{\DCR}{\texttt{DCR}}
\newcommand{\CL}{Castagnos and Laguillaumie}
\newcommand{\cl}{\term{CL}}
\newcommand{\Ghat}{\widehat{\bbG}}
\newcommand{\shat}{\widehat{s}}
\newcommand{\HSM}{\ensuremath{\texttt{HSM}_\modulus}}
\newcommand{\Prime}{p}
\newcommand{\Gp}{H}
\newcommand{\gp}{h}
\newcommand{\Ghp}{\Ghat^\Prime}
\newcommand{\baro}{\overline{\omega}}
\newcommand{\Dhat}{\widehat{\cD}}
\newcommand{\Dhp}{\Dhat_H}
\newcommand{\Dp}{\cD_H}
\newcommand{\sbar}{\bar{s}}
\renewcommand{\angle}[1]{\langle #1 \rangle}
\newcommand{\Adv}{\term{Adv}}
\newcommand{\sss}{\term{SS}}
\newcommand{\bsp}{\term{\mathbf{sp}}}
\renewcommand{\sp}{\term{sp}}
\newcommand{\lwr}{\term{LWR}}
\newenvironment{customlegend}[1][]{%
    \begingroup
    \csname pgfplots@init@cleared@structures\endcsname
    \pgfplotsset{#1}%
}{%
    \csname pgfplots@createlegend\endcsname
    \endgroup
}%
\def\addlegendimage{\csname pgfplots@addlegendimage\endcsname}
\newcommand{\Active}[1]{{\color{blue}\underline{#1}}}
 \newcommand{\hseed}{\term{dig}}
\title{$\mathsf{OPA}$: One-shot Private Aggregation with Single Client Interaction and its Applications to Federated Learning\thanks{This is an extended version of the work accepted at CRYPTO 2025~\cite{C:KP25}.}}
\author{%
 \begin{tabular}{c c}
Harish Karthikeyan~\orcidlink{0000-0002-1787-4906} & Antigoni Polychroniadou~\orcidlink{0009-0003-0125-2971} \\
\texttt{harish.karthikeyan@jpmorgan.com} & \texttt{antigoni.polychroniadou@jpmorgan.com}
\end{tabular}
\\[5ex]
JPMorgan AI Research, JPMorgan AlgoCRYPT CoE
}
\date{}
\begin{document}

\maketitle
\begin{abstract}


Our work minimizes interaction in secure computation, addressing the high cost of communication rounds, especially with many clients. We introduce One-shot Private Aggregation $\mathsf{OPA}$, enabling clients to communicate only once per aggregation evaluation in a single-server setting. This simplifies dropout management and dynamic participation, contrasting with multi-round protocols like Bonawitz et al. (CCS'17) (and subsequent works) and avoiding complex committee selection akin to YOSO. $\mathsf{OPA}$'s communication behavior \emph{closely} mimics learning-in-the-clear where each client party speaks only once. 

$\mathsf{OPA}$, built on LWR, LWE, class groups, and DCR, ensures single-round communication for all clients while also achieving sub-linear overhead in the number of clients, making it asymptotically efficient and practical. We achieve malicious security with abort and input validation to defend against poisoning attacks, which are particularly relevant in Federated Learning, where adversaries attempt to manipulate the gradients to degrade model performance or introduce biases.

We build two flavors of $\mathsf{OPA}$ (1) from (threshold) key homomorphic PRF and (2) from seed homomorphic PRG and secret sharing. 
The threshold Key homomorphic PRF addresses shortcomings observed in previous works that relied on DDH and LWR in the work of Boneh~\textit{et al.}(CRYPTO, 2013), marking it as an independent contribution to our work. Our other contributions include new constructions of (threshold) key-homomorphic PRFs and seed-homomorphic PRGs that are secure under the LWE, DCR Assumption, and other Class Groups of Unknown Order. 

\end{abstract}


 \section{Introduction}
\label{sec:intro}

Minimizing interaction in Multiparty Computation (MPC) is a highly sought-after objective in secure computation. This is primarily because each communication round is costly, and ensuring the liveness of participants, particularly in scenarios involving a large number of parties, poses significant challenges. Unlike throughput, latency is now primarily constrained by physical limitations, making it exceedingly difficult to reduce the time required for a communication round substantially. Furthermore, non-interactive primitives offer increased versatility and are better suited as foundational building blocks. However, any non-interactive protocol, which operates with a single communication round, becomes susceptible to a vulnerability referred to as the ``residual attack''~\cite{HaleviLP11} where the server can collude with some clients and evaluate the function on as many inputs as they wish revealing the inputs of the honest parties. 

We explore a natural ``hybrid'' model between the 2-round and 1-round settings. Our model allows for private aggregation, aided by an ephemeral committee of members, where the clients and committee members only speak once. This approach brings us closer to achieving non-interactive protocols while preserving traditional security guarantees. Our specific focus is within the domain of secure aggregation protocols, where a group of $n$ clients $P_i$ for $i\in [n]$ hold a private value $x_i$, wish to learn the sum $\sum_i x_i$ without leaking any information about the individual $x_i$. Furthermore, we support multiple sessions or iterations of the secure aggregation, where a different random set of clients is selected in each session, each with a different input. In this model, per aggregation session clients release encoded versions of their confidential inputs $x_i$ to a designated committee of ephemeral members and they go offline, they only speak once. Later, any subset of the ephemeral members can compute these encodings by transmitting a single public message to an unchanging, stateless evaluator or server. This message conveys solely the outcome of the secure aggregation and nothing else. The ephemeral members are stateless, speak only once, and can change (or not) per aggregation session. With that in mind, the committee members can be regarded as another subset of clients who abstain from contributing input when selected to serve on the committee during a current aggregation session. Each client/committee member communicates once per aggregation, eliminating the complexity of handling dropouts commonly encountered in multi-round secure aggregation protocols. The security guarantee ensures that adversaries corrupting some clients and committee members learn only the sum of honest clients' outputs, with no additional information. We provide a standard simulation-based proof against malicious adversaries with abort.

It is crucial to highlight the distinction between our single-server setting and multi-server protocols~\cite{DBLP:conf/eurocrypt/GilboaI14,prio,DBLP:conf/scn/AddankiGJOP22,10179468,EPRINT:ZhaZhoWan24}. In the multi-server model, clients securely distribute their inputs across a committee of multiple servers, which then engage in interactive protocols to achieve secure aggregation. However, these servers must be stateful, retaining data across aggregation iterations, and require significant computational resources.
In contrast, our approach leverages ephemeral committee members who are stateless and operate with lightweight computation, eliminating the need for persistent data storage or extensive computation A key design goal of our model is to ensure that committee members remain computationally light, making it feasible even for resource-constrained client devices to be committee members.

Our main application is Federated Learning (FL) in which a server trains a model using data from multiple clients. This process unfolds in iterations where a randomly chosen subset of clients (or a set of clients based on the history of their availability) receives the model's current weights. These clients update the model using their local data and return the updated weights. The server then computes the ``average'' of these weights (in the naive setting this is known as FedAvg~\cite{pmlr-v54-mcmahan17a} where the global model is simply the average of the client model weights), repeating this cycle until model convergence is achieved. Intuitively, this was supposed to guarantee the privacy of the client-held data as the server only sees the final weights. Unfortunately, prior works, such as~\cite{shokri2017membership}, have shown that the final weights can be successfully used to recover client-held data. This motivates the need for a secure aggregation tool. Unlike previous multi-round secure aggregation schemes with or without an offline setup~\cite{bonawitz2017practical,bell2020secure,EPRINT:GPSBB22,AC:LLPT23,SP:MWAPR23}, we introduce the first protocol that minimizes client involvement, ensuring that both clients and committee members can be computationally lightweight devices while speaking only once per aggregation iteration. Furthermore, our protocol does not require offline setup and guarantees completion, even if selected clients or committee members drop out and remain silent. 
\section{Our Contributions} 
\label{sec:contr}
\label{sec:contributions}
We introduce $\caps$ designed to achieve maximal flexibility by granting parties the choice to speak once or remain silent, fostering dynamic participation from one aggregation to the next. We present the communication model in Figure~\ref{fig:comm-model}. This diverges from prior approaches~\cite{bonawitz2017practical,bell2020secure,AC:LLPT23,SP:MWAPR23,GPSBB22}, which necessitate multiple interaction rounds and the management of dropout parties to handle communication delays or lost connections in FL. At its core, every iteration of $\caps$ employs a random committee of size $\csize$ as intermediate helper parties. We build $\caps$ from new leakage-resilient, seed-homomorphic PRGs with simulatable leakage.\footnote{We can also build it from a (length-extended) key-homomorphic pseudorandom function (KHPRF) in the random oracle model with the same asymptotic performance. See \ifdefined\IsPRF{}the full version of this paper~\cite[\S C]{EPRINT:KarPol24} for details of this construction.\else Section~\ref{sec:khprf-cons} for the instantiations.\fi} $\caps$ relies on a mechanism such as PKI or authenticated channels. 

\noindent {\bf Cryptographic Assumptions:} We construct $\caps$ protocols providing a suite of six distinct versions based on a diverse spectrum of assumptions:
    \begin{itemize}
        \itemsep0em
        \item Learning With Rounding (\LWR) Assumption.
        \item Learning with Errors (\LWE) Assumption.
        \item Hidden Subgroup Membership ($\HSM$) assumption where $\modulus$ is a prime integer.
        \item $\HSM$ assumption where $\modulus=\Prime^k$ for some prime $\Prime$ and integer $k$.
        \item $\HSM$ assumption where $\modulus=2^k$.
        \item $\HSM$ assumption where $\modulus=\term{N}$ where $\term{N}$ is an RSA modulus (i.e., the \DCR\ assumption).
    \end{itemize}
$\caps$ does not require any trusted setup for keys, and for $\modulus$ being either a prime or an exponent of prime, or the \LWR~ assumption, we do not require any trusted setup of parameters either. The contributions are summarized in Figure~\ref{fig:contrib}.\\
\textbf{Threat Model:} We assume a static, malicious adversary that can corrupt up to an $\eta<N$ fraction of them, where $N$ is the number of clients in the universe of clients. For the committee of size $\csize$, the adversary can corrupt up to an $\eta_C$ fraction. Additionally, up to a $\delta$ fraction of input-providing clients and a $\delta_C$ fraction of committee members may drop out per iteration. For $\caps$ to function, we require $\delta_C + \eta_C < 1/3$ per iteration. In each iteration, a malicious adversary only learns the sum of the inputs of at least $(1-\delta-\eta)|\cC|$ where $\cC$ is the number of online clients in that iteration. We ensure security against a malicious adversary, achieving overall malicious security with abort, a feature not present in prior works such as ~\cite{bonawitz2017practical,bell2020secure,SP:MWAPR23,GPSBB22,AC:LLPT23}. Specifically, when all parties adhere to the protocol, the server successfully computes the sum of the online clients' contributions for a given iteration, provided a sufficient number of parties remain active during that iteration. Notably, we introduce a novel alternative approach instead of relying on signatures\footnote{See Remark~\ref{rem:signatures} for more details.}, as done in prior work to secure against a malicious server.
    \begin{figure*}[!tb]
    \centering
     \resizebox{0.95\textwidth}{!}{\begin{subfigure}[t]{0.6\textwidth}
\resizebox{6.76cm}{!}{\begin{tikzpicture}[x=0.75pt,y=0.75pt,yscale=-1,xscale=1,scale=0.8]
\draw (513,244) node  {\includegraphics[width=52.5pt,height=52.5pt]{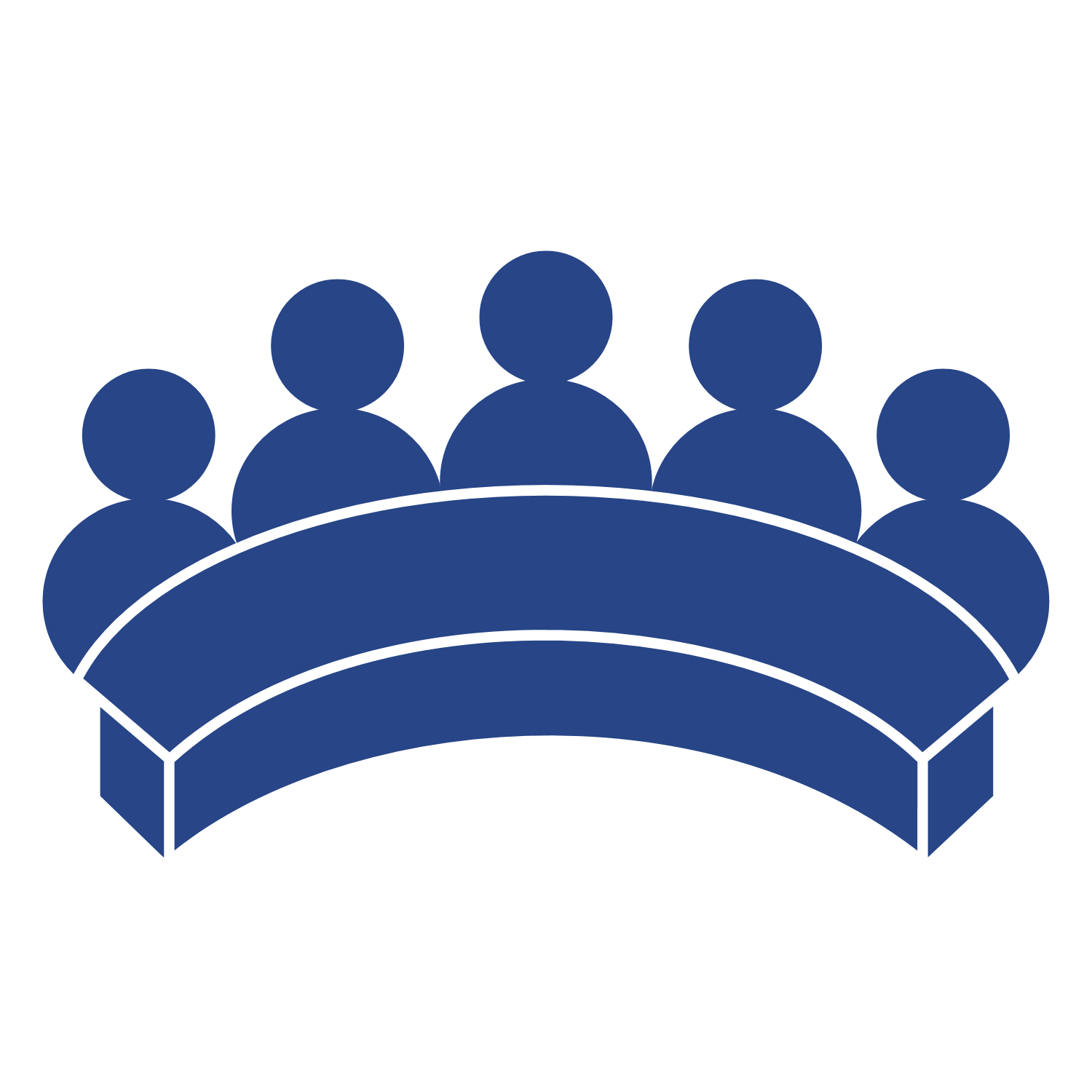}};
\draw (168,380) node  {\includegraphics[width=52.5pt,height=52.5pt]{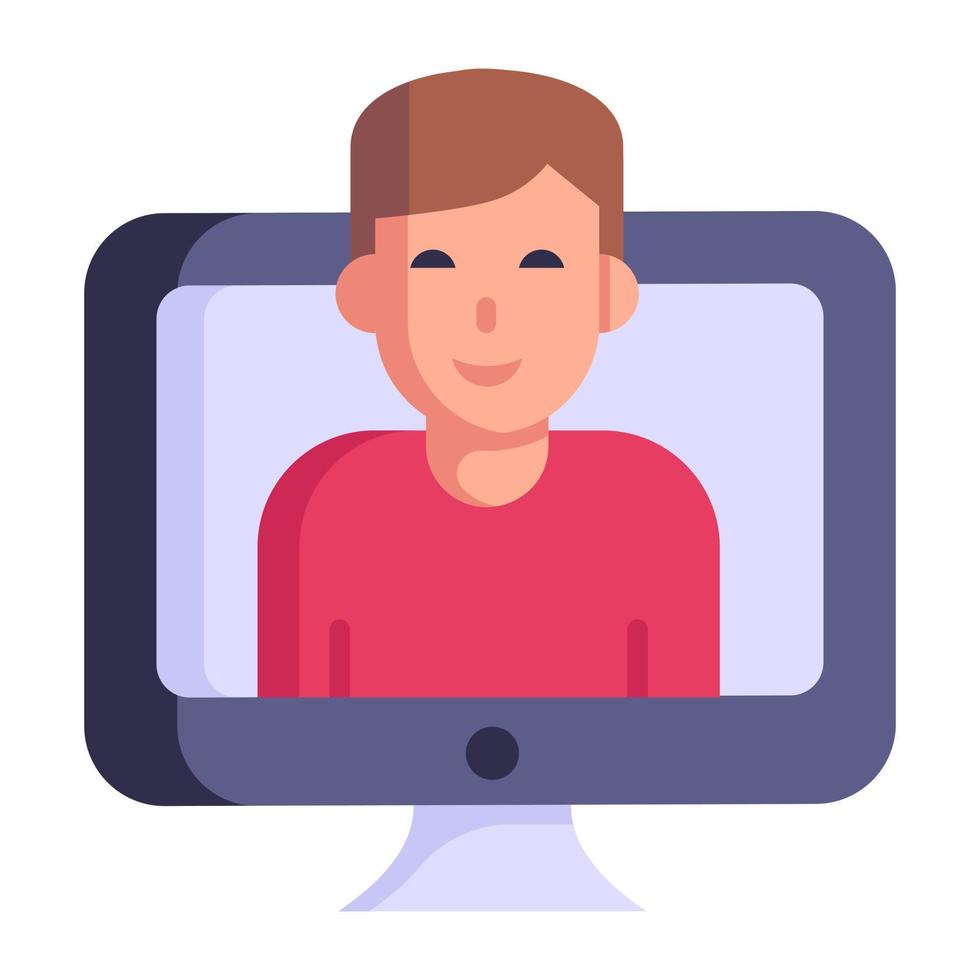}};
\draw (251,380) node  {\includegraphics[width=52.5pt,height=52.5pt]{images/client.jpg}};
\draw (329,380) node  {\includegraphics[width=52.5pt,height=52.5pt]{images/client.jpg}};
\draw (273,104) node  {\includegraphics[width=60pt,height=60pt]{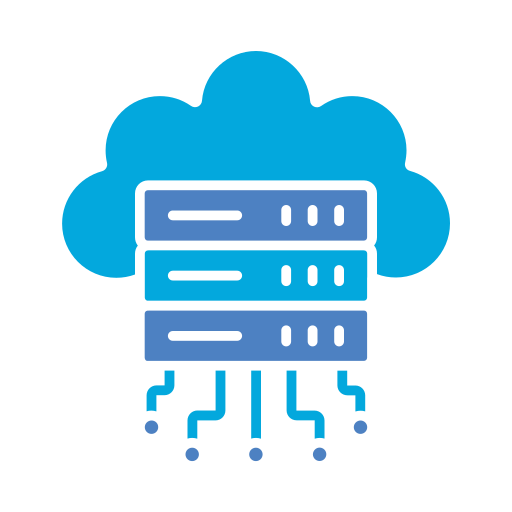}};
\draw    (209,161) -- (209,274) -- (209,347) ;
\draw [shift={(209,349)}, rotate = 269.24] [color={rgb, 255:red, 0; green, 0; blue, 0 }  ][line width=0.75]    (10.93,-3.29) .. controls (6.95,-1.4) and (3.31,-0.3) .. (0,0) .. controls (3.31,0.3) and (6.95,1.4) .. (10.93,3.29)   ;

\draw    (250,347) -- (250,161) ;
\draw [shift={(250,157)}, rotate = 90.29] [color={rgb, 255:red, 0; green, 0; blue, 0 }  ][line width=0.75]    (10.93,-3.29) .. controls (6.95,-1.4) and (3.31,-0.3) .. (0,0) .. controls (3.31,0.3) and (6.95,1.4) .. (10.93,3.29)   ;
\draw    (299,347) -- (299,161) ;
\draw [shift={(299,159)}, rotate = 90.29] [color={rgb, 255:red, 0; green, 0; blue, 0 }  ][line width=0.75]    (10.93,-3.29) .. controls (6.95,-1.4) and (3.31,-0.3) .. (0,0) .. controls (3.31,0.3) and (6.95,1.4) .. (10.93,3.29)   ;
\draw    (325.41,82.41) -- (472,229) ;
\draw [shift={(324,81)}, rotate = 45] [color={rgb, 255:red, 0; green, 0; blue, 0 }  ][line width=0.75]    (10.93,-3.29) .. controls (6.95,-1.4) and (3.31,-0.3) .. (0,0) .. controls (3.31,0.3) and (6.95,1.4) .. (10.93,3.29)   ;


\draw (229.97,163.55) node [anchor=north west][inner sep=0.75pt]  [rotate=-90.3] [align=left] {\small Message 0: Begin Iteration};

\draw (273.97,163.55) node [anchor=north west][inner sep=0.75pt]  [rotate=-90.3] [align=left] {\small Message 1a: Masked Input};
\draw (317.97,163.55) node [anchor=north west][inner sep=0.75pt]  [rotate=-90.3] [align=left] {\small Message 1b: $\pkee$(Aux Info)};
\draw (336.35,62.7) node [anchor=north west][inner sep=0.75pt]  [rotate=-45.54] [align=left] {\small Message 2: Combined Aux. Info};

\draw    (490,207) -- (343.41,60.41) ;
\draw [shift={(489,206)}, rotate = -135] [color={rgb, 255:red, 0; green, 0; blue, 0 }  ][line width=0.75]    (10.93,-3.29) .. controls (6.95,-1.4) and (3.31,-0.3) .. (0,0) .. controls (3.31,0.3) and (6.95,1.4) .. (10.93,3.29)   ;

\draw (354.35,40.7) node [anchor=north west][inner sep=0.75pt]  [rotate=-45.54] [align=left] {\small Forward: Aux Info to Committee};

\draw (513,280) node [anchor=north] [align=center] {Committee};
\draw (240,104) node [anchor=east] [align=right] {Server/Aggregator};
\draw (251,420) node [anchor=north] [align=center] {Clients};
\end{tikzpicture}}
    \caption{
    The $\caps$ system model operates in iterations. Each iteration begins with the server sending a message to initiate the process (Message 0). In response, clients train the model on their local data, obtain updates, and mask the input. (Message 1): masked input is sent to the server, while auxiliary information is transmitted (via encryption) to the committee via the server. Upon receiving the forwarded information, the committee members combine these into a single value. Finally, this consolidated data is sent to the server (Message 2), concluding the iteration. Importantly, the encryption information sent to the committee includes context information such as the current model and iteration. This prevents a malicious server from asking the committee for help with the same ciphertext. This thwarts residual attacks. 
    }
    \label{fig:comm-model}
    \end{subfigure}
    \hspace{1em}
    \begin{subfigure}[t]{0.4\textwidth}
        \definecolor{lightblue}{RGB}{173,216,230}
\definecolor{lightyellow}{RGB}{255,255,224}
\definecolor{orangecolor}{RGB}{255,165,0}
\definecolor{darkgreen}{RGB}{0,100,0}
\tikzset{
    block/.style={rectangle, draw, rounded corners, fill=lightblue, 
        text width=12em, text centered, minimum height=3em},
    line/.style={draw, -latex'},
    cloud/.style={draw, ellipse, fill=lightyellow, minimum height=2em},
}

\resizebox{6cm}{!}{\begin{tikzpicture}[node distance=2cm, auto]

    \node [block] (opa) {One-shot Private Aggregation};
    \node [block, below left=4cm and -1cm of opa] (prg) {(LR) Seed-Homomorphic PRG + Secret Sharing};
    \node [block, below right=4cm and -1cm of opa] (prf) {Threshold Key Homomorphic PRFs};
    
    \node [cloud, below left=2cm and -0.5cm of prg] (lwr1) {LWR};
    \node [cloud, below right=2cm and -0.5cm of prg] (lwe) {LWE};
    \node [cloud, below left=2cm and -0.5cm of prf] (cl) {HSM};
    \node [cloud, below right=2cm and -0.5cm of prf] (lwr2) {LWR};
    
    \path [line] (opa) -- node[sloped, above, text width=15em, align=center] {\textcolor{darkgreen}{$o(n)$} Committee Work} 
                        node[sloped, below, text width=8em, align=center, midway, text=darkgreen] {$\caps$} (prg);
    \path [line] (opa) -- node[sloped, above, text width=15em, align=center] {\textcolor{red}{$o(nL)$} Committee Work}
                        node[sloped, below, text width=5em, align=center, midway, text=red] {$\caps'$} (prf);
    \path [line] (prg) -- (lwr1);
    \path [line] (prg) -- (lwe);
    \path [line] (prf) -- (cl);
    \path [line] (prf) -- (lwr2);
    

\end{tikzpicture}}
    \caption{Our Cryptographic Assumptions. Here, $n$ is the number of clients, and $L$ is the length of the input vector. Finally, ``LR'' refers to leakage-resilient. Note that the HSM assumption is parametrized by $\modulus$ and covers the case where $\modulus$ is a prime integer, $\modulus=\Prime^k$ for some prime $\Prime$ and integer $k$, $\modulus=2^k$, and $\modulus=\term{N}$ where $\term{N}$ is an RSA modulus (i.e., the \DCR\ assumption). }
    \label{fig:contrib}
    \end{subfigure}}
    \caption{$\caps$ communication model and summarized contributions.}
    \label{fig:combined}
    \end{figure*}

\noindent    \textbf{Other contributions beyond Secure Aggregation :}
    \begin{itemize} 
        \item New Key Homomorphic PRF Constructions: We develop the first Key Homomorphic PRF based on the \HSM~ assumption and based on the \LWE\ assumption.
        \item New Threshold Key Homomorphic PRF Constructions: We extend the \HSM\ construction to a distributed key homomorphic setting using a modified Shamir's Secret Sharing scheme over integers. We also extend the almost Key Homomorphic PRF based on the \LWR\ assumption~\cite{C:BLMR13,PETS:ErnKoc21} using Shamir Secret Sharing over prime-order fields. In doing so, we fix gaps in the prior Distributed Key Homomorphic PRF based on \LWR, as proposed by Boneh~\etal~\cite{C:BLMR13}.
        \item Packed Secret Sharing over Integers: We also extend Shamir Secret Sharing over Integers to a packed version, which enables packing more secrets in one succinct representation. 
        \item New Seed Homomorphic PRG Construction: Of independent interest, we also build a seed homomorphic PRG from the \HSM\ assumption.\footnote{It is important to note that one cannot build $\caps$ from this seed-homomorphic PRG. The main reason is that the seed space is in a group of an unknown order. Therefore, one requires an integer secret-sharing scheme. Unfortunately, because the server reconstructs an integer value, one cannot prove that this construction is leakage resilient, where the leakage is defined as $\seed+r$. This is not indistinguishable from $\seed'$, where they are all integers.}
    \end{itemize} 
    \textbf{Applications in Federated Learning:} Our motivating application is Federated Learning (FL). The dynamic participation feature of $\caps$, crucial for federated learning, facilitates secure federated learning, where participants speak only once, streamlining the process significantly. In contrast, prior works\cite{bonawitz2017practical,bell2020secure} involve eight rounds, and the work of \cite{SP:MWAPR23} requires seven rounds, including the setup. Our advantages extend beyond just round complexity. See Section~\ref{sec:detailed_Contributions} and Tables~\ref{table:asym-perf-computation} and \ref{table:asym-perf-communication} for a detailed comparison of asymptotics. We introduce another variant, $\caps'$, derived from a threshold key-homomorphic PRF, eliminating simulatable leakage. In $\caps'$, the committee's workload scales with the length of the input vector $L$. Given its linear performance scaling with $L$, this remains a theoretical result aimed at achieving secure aggregation from the \HSM\ assumption and for small $L$.\ifdefined\IsPRF{} In the full version of this work~\cite[\S G]{EPRINT:KarPol24}, \else In Section~\ref{sec:stronger},\fi we show how to achieve security of honest user's inputs in $\caps'$ against a corrupt committee (but an honest server), for $\caps'$ instantiated with $\HSM$ assumption (called $\caps_\cl$).

    \noindent\textbf{Implementation and Benchmarks:} We implement $\caps$ as a secure aggregation protocol and benchmark with several state-of-the-art solutions~\cite{bonawitz2017practical,bell2020secure,GPSBB22,SP:MWAPR23}. We benchmark $\caps_\LWR$, based on the \LWR-based seed homomorphic PRG, offering competitive performance over prior works. Concretely, at 1000 clients, $\caps_\LWR$ has a server computation time of 0.31s, orders of magnitude smaller than other protocols. Similar orders of magnitude improvement are observed in client performance over other protocols' running time. 


To further demonstrate the feasibility of our protocol, we train a binary classification model using logistic regression in a federated manner for two datasets. Our protocol carefully handles floating point values (using two different methods of quantization - multiplying by a global multiplier vs representing floating point numbers as a vector of integer values), and the resulting model is shown to offer performance close to simply learning in the clear. 
        We also trained a neural network-based MLP classifier over popular machine learning datasets, including MNIST, CIFAR-10, and CIFAR-100. More details can be found in Section~\ref{sec:exp}. 

\noindent\textbf{Resilience against Data Heterogeneity and Malicious Clients.} While $\caps$ is a tool that achieves properties expected of standard secure aggregation, we show how to combine $\caps$ with existing solutions from the machine learning community to handle the heterogeneity of data distribution, i.e., when the clients; datasets are non-iid distributed and when there are poisoning attacks from the clients to their contributed inputs. This is detailed in \ifdefined\IsPRF{} the full version of our paper~\cite[\S F]{EPRINT:KarPol24}\else Section~\ref{sec:byzantine}\fi, where we are the first to work on secure aggregation to replace FedAvg with FedOPT~\cite{fedopt}, which performs better accuracy and convergence when data is non-iid. We also show how to combine $\caps$ with Byzantine Robust Stochastic Aggregation~\cite{brsa}, which is secure against poisoning attacks of clients. Meanwhile, we also describe how to use lattice-based zero-knowledge proofs to ensure that a client can prove their honest behavior. This is illustrated in Section~\ref{sub:mal-clients}.

\subsection{Detailed Contributions in Federated Learning}\label{sec:detailed_Contributions}
Next, we compare our protocol with all efficient summation protocols in terms of high-level features listed in Table~\ref{tab:psa}, focusing on those that accommodate dynamic participation, a key feature shared by all federated learning methodologies. Regarding performance 
in Table~\ref{table:asym-perf-computation}, we list the communication complexity, computational complexity, and round complexity per participant. Notably, our protocols are setup-free, eliminating any need for elaborate initialization procedures. Furthermore, they are characterized by a streamlined communication process, demanding just a single round of interaction from the participants. 
 \paragraph{Asymptotic Comparison.}
More concretely, based on Table~\ref{table:asym-perf-computation}, our approach stands out by significantly reducing the round complexity, ensuring that each participant's involvement is limited to a single communication round, i.e., each participant speaks only once. That is, users speak once and committee members speak once too. On the contrary, previous works\cite{bonawitz2017practical,bell2020secure}\footnote{\cite{bell2020secure} offer a weaker security definition from the other works: for some parameter $\alpha$ between $[0, 1]$, honest inputs are guaranteed to be aggregated at most once with at least $\alpha$ fraction of other inputs from honest users.} require $8$ rounds and the work of \cite{SP:MWAPR23} requires $7$ rounds in total, including the setup. This reduction in round complexity serves as a significant efficiency advantage. 

Despite our advantage in the round complexity, our advantages extend beyond just round complexity (see Table~\ref{table:asym-perf-computation}). Notably, our protocol excels in computational complexity as the number of participants ($n$) grows. While previous solutions exhibit complexities that are quadratic~\cite{bonawitz2017practical,SASH} or linearithmic~\cite{SP:MWAPR23} in $n$, our approach maintains a logarithmic complexity for the users, which is noteworthy when considering our protocol's concurrent reduction in the number of communication rounds. Furthermore, our committee framework demonstrates a sublinear relationship with $n$ for the committee members, a notable improvement compared to the linearithmic complexity and setup requirement in the case of~\cite{SP:MWAPR23} which considers a stateful set of decryptors (committee), as opposed to our stateless committee.\footnote{
Flamingo~\cite{SP:MWAPR23} employed \emph{decryptors}, a random subset of clients the server chose to interact with it to remove masks from masked data sent by the more extensive set of clients. LERNA\cite{AC:LLPT23} also requires a fixed, stateful committee (like Flamingo) to secret share client keys, whereas we support dynamic, stateless committees that can change in every round.} 

When it comes to user communication and message sizes, previous solutions entail user complexities that either scale linearly~\cite{bonawitz2017practical,SASH} or linearithmically~\cite{SP:MWAPR23} with the number of participants ($n$) according to Table~\ref{table:asym-perf-communication}. However, in our case, user communication complexity is reduced to a logarithmic level ($m\approx \log n$). Furthermore, as the number of users $n$ increases, the communication load placed on the server is also effectively reduced compared to other existing protocols. That said, the above advantages underline the scalability and efficiency of our protocols in the federated learning context, which typically requires a very large number of $n$ and $L$ where $L$ is the input length per client. It is important to stress that the number of committee members a client speaks to ($\csize$) purely depends on $\eta$, the fraction of malicious clients in the entire population of size $N$. We discuss sampling these clients in Section~\ref{sec:cons-caps}. Jumping ahead, we sample $\csize \log n$ clients for the committee and ensure that we assign clients to committee members, such that each client speaks with at most $\csize$ clients while each committee member receives communication from at most $\log n$ clients (via the server). This reduces the computation and the communication for the committee member to be $O(\log n)$.

The works of \cite{bonawitz2017practical,bell2020secure,SP:MWAPR23} address a malicious adversary that can provide false information regarding which users are online on behalf of the server by requiring users to verify the signatures of the claimed online set. This approach introduces an additional two rounds into each protocol, resulting in 10 rounds in \cite{bonawitz2017practical,bell2020secure} and 5 rounds in~\cite{SP:MWAPR23} with 10 rounds of setup. The setup communication complexity of ~\cite{SP:MWAPR23} also increases to $O(k \log^2 n)$. Prior work often required a two-thirds honest majority for the malicious setting. One can leverage this to introduce a gap between the reconstruction threshold, $\crec$, and the corruption threshold, $\cthr$. Specifically, if $\crec>(\csize+\cthr)/2$, then a malicious server can only reconstruct for a unique set, and any other information is purely random and unhelpful to the server. Furthermore, the usage of signatures typically adds a round of communication between the committee and the server. The combined technique of signatures and the reconstruction-corruption threshold gap ensures that the committee speaks once, while guaranteeing security against residual attacks. See Remark~\ref{rem:signatures} for more details.
This is the added communication and computation cost:
\begin{itemize}\itemsep0em
    \item Client: It sends an additional $O(\csize)$ elements and incurs a computation cost of $O(\csize)$ to perform one additional secret-sharing. 
    \item Committee: Each committee member receives an additional $o(n)$ elements from the clients, via the server. It decrypts and forwards them, as is, to the server. There is no additional computation cost. 
    \item Server: The server receives an additional $o(n)$ communication from each of the $\csize$ committee members and incurs a computation cost of $O(n\csize \allowbreak \log \csize)$.
\end{itemize}
To guarantee security against malicious clients, we propose zero-knowledge proofs based on the work of Lyubashevksy~\etal\cite{C:LyuNguPla22} and is described in Section~\ref{sub:mal-clients}. Critically, our proofs are designed such that the server performs the bulk of the verification with the committee only checking if the share it decrypts is a valid opening to the commitment for the underlying commit-and-prove proof system. 

\ifdefined\IsLLNCS
\begin{table*}[!tb]
\caption{
   Total asymptotic computation cost for all rounds per aggregation with semi-honest security.
    $n$ denotes the total number of users per iteration, with committee size $\csize$ and
    $L$ is the length of the input vector. 
    The ``Rounds'' column
    indicates the number of rounds in the setup phase (on the left, if applicable) and each aggregation iteration (on the right). A ``round of communication'' refers to a discrete step within a protocol during which a participant or a group of participants send messages to another participant or group of participants, and participants from the latter group must receive these messages before they can send their own messages in a subsequent round. ``fwd'' means that
    the server only forwards the messages from the users. The second column in the User Aggregation phase refers to the cost of the committee members.  
    }
\label{table:asym-perf-computation}
\small
    \centering

    \resizebox{\columnwidth}{!}{\begin{tblr}{
      colspec={|r|c|c|c|c|c|c|c|l|},
      row{1}={font=\bfseries},
      column{1}={font=\itshape},
      row{even}={bg=gray!10},
    }
\hline
\SetCell[r=3]{c} Protocol & \SetCell[c=3,r=2]{c} Rounds & & &\SetCell[c=5]{c}Computation Cost&&&&\\
\hline
&&&&\SetCell[c=2]{c}Server&&\SetCell[c=3]{c}User&&\\\hline
     & Setup & \SetCell[c=2]{c} Agg.&&  Setup  &  Agg. & Setup &\SetCell[c=2]{c} Agg.&    \\
\hline
\gglbasic\cite{bonawitz2017practical} & - & \SetCell[c=2]{c} 8& & - & $O(n^2 L)$ & - &\SetCell[c=2]{c} $O(n^2+nL)$& \\
\hline
\gglgroup\cite{bell2020secure}  & - & \SetCell[c=2]{c} 8& & - & $O(n \log^2 n +nL\log n)$ & - &\SetCell[c=2]{c} $O(\log^2 n + L\log n)$ & \\\hline
\flmgo\cite{SP:MWAPR23} & 4 & \SetCell[c=2]{c} 3& & fwd & $O(nL + n\log^2 n)$ & $O(\log^2 n)$ & $O(L + n\log n)$ & $O(\csize^2+ n)$\\\hline
LERNA~\cite{AC:LLPT23} & 1 & 1 & 1 & fwd & $O((\kappa+n)L+\kappa^2)$ & $O(\kappa^2)$ & $O(L)$ & $O(L+n)$\\\hline
SASH\cite{SASH} & - & \SetCell[c=2]{c} 10& & - & $O(L + n^2)$ & - &\SetCell[c=2]{c} $O(L + n^2)$ & \\\hline
$\caps$ & - & 1 & 1 & - & $O(nL+\csize \log \csize)$ & - & $O(L+\csize)$ & $O(\log n)$\\\hline
$\caps'$ & - & 1 & 1 & - & $O(nL+L(\csize\log\csize))$ & - & $O(L + \csize L)$ & $O(\log n\cdot L)$\\\hline
\end{tblr}}
\end{table*}

\begin{table*}[!tb]
   \caption{
    Total received and sent asymptotic communication cost for all rounds per aggregation with semi-honest security.
    $n$ denotes the total number of users, $\csize$ is the size of the committee,
    $k$ the security parameter, $L$ the length of the input vector, and $\ell$ the bit length of each element. 
    }
\label{table:asym-perf-communication}
\small
    \centering

    \resizebox{\columnwidth}{!}{\begin{tblr}{
      colspec={|r|c|c|c|c|l|},
      row{1}={font=\bfseries},
      column{1}={font=\itshape},
      row{even}={bg=gray!10},
    }
\hline
\SetCell[r=3]{c} Protocol & \SetCell[c=5]{c}Communication Cost&&&&\\
\hline
&\SetCell[c=2]{c}Server&&\SetCell[c=3]{c}User&&\\\hline
     &  Setup  &  Agg. & Setup &\SetCell[c=2]{c} Agg.&    \\
\hline
\gglbasic\cite{bonawitz2017practical} & - & $O(nL\ell+n^2k)$ & - &\SetCell[c=2]{c} $O(L\ell + nk)$& \\
\hline
\gglgroup\cite{bell2020secure} & - & $O(nL\ell + nk\log n)$&-&\SetCell[c=2]{c} $O(L\ell + k\log n)$ & \\\hline
\flmgo \cite{SP:MWAPR23} & $O(k\log n)$ & $O(nL\ell + nk\log^2 n)$& $O(k\log n)$ & $O(L\ell + nk\log n)$ & $O(\csize^2+n\log n)$\\\hline
LERNA~\cite{AC:LLPT23} & - & $O(Lnk+\csize^2\cdot k)$ & $O(\kappa^2k)$ & $O(\kappa L +L\log n+k)$ & $O(L\kappa+L\log n+k)$\\\hline
SASH \cite{SASH} & - & $O(nL\ell + \kappa^2 k)$ & - &\SetCell[c=2]{c} $O(L\ell + nk)$ & \\\hline
$\caps$ & -& $O(Ln+\csize)$ & - & $O(kL + \csize)$ & $O(\log n)$\\\hline
$\caps'$ & -& $O(kL(n+\csize))$ & - & $O(k(L + \csize L))$ & $O(k(\log n\cdot L+L))$\\\hline
\end{tblr}}

\end{table*}
\else
\begin{table*}[!tb]
\caption{
   Total asymptotic computation and communication costs for all rounds per aggregation with semi-honest security.
   $n$ denotes the total number of users per iteration, with committee size $\csize$ and
   $L$ is the length of the input vector. $k$ is the security parameter, and $\ell$ is the bit length of each element. The ``Rounds'' column
    indicates the number of rounds in the setup phase (on the left, if applicable) and each aggregation iteration (on the right). A ``round of communication'' refers to a discrete step within a protocol during which a participant or a group of participants send messages to another participant or group of participants, and participants from the latter group must receive these messages before they can send their own messages in a subsequent round. ``fwd'' means that
    the server only forwards the messages from the users. The second column in the User Aggregation phase refers to the cost of the committee members.
}
\label{table:combined-perf}
\label{table:asym-perf-computation}
\label{table:asym-perf-communication}
\small
    \centering
    \resizebox{\textwidth}{!}{\begin{tblr}{
      colspec={|r|c|c|c|c|c|c|c|c|c|c|c|c|c|},
      row{1}={font=\bfseries},
      column{1}={font=\itshape},
      row{even}={bg=gray!10},
    }
\hline
\SetCell[r=3]{c} Protocol & \SetCell[c=3,r=2]{c} Rounds & & &\SetCell[c=5]{c}Computation Cost&&&&&\SetCell[c=5]{c}Communication Cost&&&&\\
\hline
&&&&\SetCell[c=2]{c}Server&&\SetCell[c=3]{c}User&&&\SetCell[c=2]{c}Server&&\SetCell[c=3]{c}User&&\\\hline
     & Setup & \SetCell[c=2]{c} Agg.&&  Setup  &  Agg. & Setup &\SetCell[c=2]{c} Agg.& & Setup & Agg. & Setup &\SetCell[c=2]{c} Agg.&    \\
\hline
\gglbasic\cite{bonawitz2017practical} & - & \SetCell[c=2]{c} 8& & - & $O(n^2 L)$ & - &\SetCell[c=2]{c} $O(n^2+nL)$& & - & $O(nL\ell+n^2k)$ & - &\SetCell[c=2]{c} $O(L\ell + nk)$& \\
\hline
\gglgroup\cite{bell2020secure}  & - & \SetCell[c=2]{c} 8& & - & $O(n \log^2 n +nL\log n)$ & - &\SetCell[c=2]{c} $O(\log^2 n + L\log n)$ & & - & $O(nL\ell + nk\log n)$&-&\SetCell[c=2]{c} $O(L\ell + k\log n)$ & \\\hline
\flmgo\cite{SP:MWAPR23} & 4 & \SetCell[c=2]{c} 3& & fwd & $O(nL + n\log^2 n)$ & $O(\log^2 n)$ & $O(L + n\log n)$ & $O(\csize^2+ n)$ & $O(k\log n)$ & $O(nL\ell + nk\log^2 n)$& $O(k\log n)$ & $O(L\ell + nk\log n)$ & $O(\csize^2+n\log n)$\\\hline
LERNA~\cite{AC:LLPT23} & 1 & 1 & 1 & fwd & $O((\kappa+n)L+\kappa^2)$ & $O(\kappa^2)$ & $O(L)$ & $O(L+n)$ & - & $O(Lnk+\csize^2\cdot k)$ & $O(\kappa^2k)$ & $O(\kappa L +L\log n+k)$ & $O(L\kappa+L\log n+k)$\\\hline
SASH\cite{SASH} & - & \SetCell[c=2]{c} 10& & - & $O(L + n^2)$ & - &\SetCell[c=2]{c} $O(L + n^2)$ & & - & $O(nL\ell + \kappa^2 k)$ & - &\SetCell[c=2]{c} $O(L\ell + nk)$ & \\\hline
$\caps$ & - & 1 & 1 & - & $O(nL+\csize \log \csize)$ & - & $O(L+\csize)$ & $O(\log n)$ & -& $O(Ln+\csize\cdot n\log n)$ & - & $O(kL + \csize)$ & $O(\log n)$\\\hline
$\caps'$ & - & 1 & 1 & - & $O(nL+L(\csize\log\csize))$ & - & $O(L + \csize L)$ & $O(nL)$ & -& $O(kL(n+\csize))$ & - & $O(k(L + \csize L))$ & $O(k(nL+L))$\\\hline
\end{tblr}}
\end{table*}
\fi

\paragraph{Comparison with Willow~\cite{Willow}.} The concurrent work, Willow, employs a static, stateful committee of members, with non-committee members (clients) communicating only once, offering improvements over previous approaches like LERNA. While Willow supports client-to-committee communication that remains independent of committee size—compared to LERNA and $\caps$ —this advantage comes with trade-offs: 
\begin{enumerate}\itemsep0em
    \item Each committee members undergoes a two-round setup procedure. While Willow focuses on the single-iteration setting, it is unclear how Willow can be extended to multi-iteration setting without requiring a two-round setup procedure, at the beginning of each iteration. Specifically, to achieve constant server to committee communication, the committee members always provide the secret key corresponding to the public key encryption (PKE) scheme. This PKE is used to encrypt the second mask that is used for malicious privacy. Further, even in the semi-honest setting, the secret key of the committee members who have dropped out is revealed to the server.
    \item Each committee member also participates in two-rounds during the decryption phase: one for threshold decryption of the aggregated ciphertext and another to supply key shares for any missing members—similar to LERNA, where committee members also communicate multiple times
    \item They additionally require parties called verifiers who are tasked with validating the behavior of a possibly untrusted server, though these verifiers can be committee members. 
\end{enumerate}

In contrast, $\caps$ employs stateless committee members that can dynamically change across iterations. All client parties, including committee members, communicate only once per aggregation round, eliminating the need for a setup procedure. Furthermore, $\caps$ requires no additional verifier committee. Last but not least, unlike Willow, $\caps$ offers input validation and malicious client security with abort.

 \section{Related Work}\label{sec:cont}
\label{sec:relatedwork}
\ifdefined\IsFull
 First, we compare with other communication models that bear similarities with $\caps$ in Section~\ref{sec:other}. Second, we give an overview and compare $\caps$ with other tools employed for aggregation, with privacy, under various models in Section~\ref{sec:fl-other}, while summarizing our comparison in Table~\ref{tab:psa}.
 \ifdefined\IsFull
\subsection{$\caps$ vs Other Communication/Computation Models}
\else 
\section{$\caps$ vs Other Communication/Computation Models}
\fi
\label{sec:other}
\paragraph{Shuffle Model.} Note that our model bears similarities to the shuffle model, in which clients dispatch input encodings to a shuffler or a committee of servers responsible for securely shuffling before the data reaches the server for aggregation, as in the recent work of Halevi et al.~\cite{HaleviIKR23}. Nonetheless, it is important to note that such protocols typically entail multiple rounds among the committee servers to facilitate an efficient and secure shuffle protocol.

\paragraph{Multi-Server Secure Aggregation Protocols.} It's worth emphasizing that multi-server protocols, as documented in~\cite{DBLP:conf/eurocrypt/GilboaI14,prio,DBLP:conf/scn/AddankiGJOP22,10179468,EPRINT:ZhaZhoWan24}, have progressed to a point where their potential standardization by the IETF, as mentioned in~\cite{patton-cfrg-vdaf-01}, is indeed noteworthy. In the multi-server scenario, parties can share their inputs securely among a set of servers, which then collaborate to achieve secure aggregation. Some of the works in this domain include two-server solutions Elsa~\cite{SP:RSWP23} and SuperFL~\cite{EPRINT:ZhaZhoWan24} or the generic multi-server solution Flag~\cite{ASIACCS:BSHV23}. Unfortunately, in the case of federated learning, which involves handling exceptionally long inputs, the secret-sharing approach becomes impractical due to the increased communication complexity associated with each input. Furthermore, these servers must have heavy computation power and be stateful (retaining data/state from iteration to iteration). In our protocol, the ephemeral parties are neither stateful nor require heavy computation. Finally, there must be non-collusion among a majority of the servers. Ensuring this constraint is logistically complicated to maintain.

\paragraph{Commmittee-Based MPC} Committee-based MPC is widely used for handling scenarios involving a large number of parties. However, it faces a security vulnerability known as adaptive security, where an adversary can corrupt parties after committee selection. The YOSO model, introduced by Gentry \etal~\cite{C:GHKMNR21}, proposes a model that offers adaptive security. In YOSO, committees speak once and are dynamically changed in each communication round, preventing adversaries from corrupting parties effectively. The key feature of YOSO is that the identity of the next committee is known only to its members, who communicate only once and then become obsolete to adversaries. YOSO runs generic secure computation calculations, and aggregation can be one of them. However, its efficiency is prohibitive for secure aggregation. In particular, the communication complexity of YOSO in the computational setting scales quadratic with the number of parties $n$ (or linear in $n$ if the cost is amortized over $n$ gates for large circuits). Additionally, an expensive role assignment protocol is applied to select the committees. Like LERNA, in YOSO, specific committee sizes need to be fulfilled to run a protocol execution. Lastly, our protocol does not rely on any secure role assignment protocol to choose the committees since even if all committee members are corrupted, privacy is still preserved. Fluid MPC~\cite{C:CGGJK21,C:BieEscPol23} also considers committee-based general secure computation. However, like YOSO, it is not practical. Unlike YOSO, it lacks support for adaptive security.



 Moreover, SCALES~\cite{AcharyaHKP22} considers ephemeral servers a la YOSO responsible for generic MPC computations where the clients only provide their inputs. This approach is of theoretical interest as it is based on heavy machinery such as garbling and oblivious transfer if they were to be considered for secure aggregation. Moreover, SCALES needs to make extra effort to hide the identities of the servers, which we do not require. 

 \ifdefined\IsFull
 \subsection{$\caps$ vs Other Secure Aggregation Algorithms}
 \else \section{$\caps$ vs Other Secure Aggregation Algorithms}
 \fi
 \label{sec:fl-other}

\paragraph{Multi-Round Private Summation.} We begin by revisiting the concept outlined in~\cite{HaleviLP11}. The first multi-round secure aggregation protocol, designed to enable a single server to learn the sum of inputs $x_1, \ldots, x_n$ while hiding each input $x_i$, is based on the following idea.  Each user $i$ adds a mask $r_i$ to their private input $x_i$. This mask remains hidden from both the server and all other users it exhibits the property of canceling out when combined with all the other masks, i.e., $\sum_{i \in [n]} r_i = 0$. Subsequently, each user forwards $X_i = x_i + r_i$ to the server. By aggregating all the $X_i$ values, the server can then determine the sum of all $x_i$. More specifically, to generate these masks, a common key $k_{ij} = k_{ji} = \mathrm{PRG}(g^{s_i s_j})$ is established by every pair of clients ${i, j}$. Here, $g^{s_i}$ is an ephemeral ``public key'' associated with each client $i \in [n]$. This public key is shared with all other clients during an initial round and facilitated through the server. Importantly, the value $s_i$ remains secret by each client $i$. Then, each client $i \in [n]$ computes the mask \(r_i = \sum_{j < i} k_{ij} - \sum_{j >i} k_{ij}\; \) and due to the cancellation property the server outputs $\sum_i X_i = \sum_i x_i$. In this protocol, users must engage in multiple rounds of communication, where each user communicates more than once. Moreover, the protocol prevents users from dropping out from an aggregation iteration.

\paragraph{Non-Interactive Private
Summation with trusted setup.} If we were to require users to communicate only once in a protocol iteration, we would encounter the challenge of mitigating residual attacks. In a prior study conducted by~\cite{NDSS:SCRCS11}, a solution based on DDH was proposed to mitigate residual attacks by involving a trusted setup that assumes the generation of the common keys $k_{ij}=k_{ji}$ into the protocol. However, it is essential to highlight that this setup lacked the necessary mechanisms to accommodate dropouts and facilitate dynamic participation for multiple aggregation iterations. 
Furthermore, to ensure that the server cannot recover the masking key given a client's masked inputs, the work relies on the DDH Assumption. An unfortunate consequence is that the server has to compute the discrete logarithm to recover the aggregate, a computationally expensive operation, particularly when dealing with large exponents. Numerous other works within this framework have emerged, each relying on distinct assumptions, effectively sidestepping the requirement for laborious discrete logarithm calculations. These include works based on the DCR assumption~\cite{FC:JoyLib13,BJL16}, and lattice-based cryptography~\cite{NDSS:BecGuaZim18,PETS:ErnKoc21,EPRINT:TKGJ20,SC:TCKJ22,EPRINT:WMSA21,SCN:OttKoc24}. 


\paragraph{Multi-Round Private
Summation without Trusted Setup.}
Another research direction focuses on removing the need for a trusted setup by developing multi-round decentralized reusable setups that generate masks while ensuring the essential cancellation property. However, akin to the previously mentioned approaches, these protocols come with a caveat—they do not accommodate scenarios involving dropouts or dynamic user participation across multiple iterations. Dipsauce~\cite{BroGun23} is the first to formally introduce a definition for a distributed setup PSA with a security model based on $k$-regular graph, non-interactive key exchange protocols, and a distributed randomness beacon~\cite{SP:ChoManBon23,drand,ACISP:RaiGli22} to build their distributed setup PSA. Meanwhile, the work of Nguyen~\etal~\cite{AC:NguPhaPoi23}, assuming a PKI (or a bulletin board where all the public keys are listed), computed the required Diffie-Hellman keys on the fly to then build a one-time decentralized sum protocol which allowed the server to sum up the inputs \emph{one-time}, with their construction relying on class group-based cryptography. To facilitate multiple iterations of such an aggregation, they combined their one-time decentralized sum protocol with Multi-client Functional Encryption (MCFE) to build a privacy-preservation summation protocol that can work over multiple rounds without requiring a trusted setup and merely requiring a PKI. Unfortunately, per iteration, the clients need to be informed of the set of users participating in that round, and unfortunately, they cannot drop out once they are chosen. 
\paragraph{Non-Interactive Private
Summation with a collector.} 
To circumvent the need for a trusted setup and multi-round decentralized arrangements, an approach is presented in the work of ~\cite{CANS:LeoElkMol14}, which introduces an additional server known as the ``collector''. The fundamental premise here is to ensure that the collector and the evaluation server do not collude, thus effectively mitigating the risks associated with residual attacks. This protocol does allow dynamic participation and dropouts per iteration. 

\paragraph{Multi-Round Private Summation with Dynamic Participation (aka Secure Aggregation).} 

Secure aggregation of users’ private data with the aid of a server has been well-studied in the context of federated learning. Given the iterative nature of federated learning, dynamic participation is crucial. It enables seamless integration of new parties and those chosen to participate in various learning iterations while also addressing the challenge of accommodating parties that may drop out during the aggregation phase due to communication failures or delays. Furthermore, an important problem in federated learning with user-constrained computation and wireless network resources is the computation and communication overhead, which wastes bandwidth, increases training time, and can even impact the model accuracy if many users drop out. Seminal contributions by Bonawitz \etal~\cite{bonawitz2017practical} and Bell \etal~\cite{bell2020secure} have successfully proposed secure aggregation protocols designed to cater to a large number of users while addressing the dropout challenge in a federated learning setting. However, it's important to note that these protocols have a notable drawback—substantial round complexity and overhead are incurred during each training iteration. Even in the extensive corpus of research based on more complex cryptographic machinery (see~\cite{DBLP:journals/ftml/KairouzMABBBBCC21} for a plethora of previous works) such as threshold additive homomorphic encryption etc., these persistent drawbacks continue to pose challenges. Notably, all follow up works~\cite{bell2020secure,AC:LLPT23,GPSBB22,SP:MWAPR23,ACORN,SASH,SP:LBVKH23} of \cite{bonawitz2017practical} require multiple rounds of interaction based on distributed setups. With their adaptable nature, secure aggregation protocols hold relevance across a wide array of domains. They are applicable in various scenarios, including ensuring the security of voting processes, safeguarding privacy in browser telemetry as illustrated in~\cite{Firefox}, and facilitating data analytics for digital contact tracing, as seen in~\cite{ENPA} besides enabling secure federated learning.

 It is also important to note that some of these works - ACORN~\cite{ACORN} and RoFL~\cite{SP:LBVKH23} build on top of the works of \cite{bonawitz2017practical,bell2020secure} to tackle the problem of ``input validation'' using efficient zero-knowledge proof systems. The goal is for the clients to prove that encrypted inputs are ``well-formed'' to prevent poisoning attacks.RoFL allows for detection when a malicious client misbehaves, while ACORN presents a \emph{non-constant} round protocol to identify and remove misbehaving clients. We leave it as a direction for future research on augmenting our protocol and supporting input validation. The above discussion is also summarized in Table~\ref{tab:psa}, by looking at four properties (a) whether the aggregate can be efficiently recovered, (b) whether it allows dynamic participation, (c) whether it requires trusted setup or multi-round distributed setup, and (d) the security assumptions. 
 
\paragraph{Comparison with LERNA~\cite{AC:LLPT23}.} LERNA requires a fixed, stateful committee (like Flamingo) to secret share client keys, whereas we support smaller, dynamic, stateless committees that can change in every round. Concretely, LERNA works by having \emph{each} client (in the entire universe of clients, not just for that iteration) secret-share the keys with the committee. Consequently, LERNA's committee needs to be much larger ($2^{14}$ members for $\kappa=40$ due to the number of shares they receive) and tolerate fewer dropouts than our approach. Furthermore, LERNA's benchmarks assume 20K+ clients, while real-world deployments have 50-5000 clients per iteration. When the client count is low, the committee has to do significantly more work to handle and store the required large number of shares. That said, LERNA is not suitable for traditional FL applications. In the table, we use the same notations, as LERNA refers to committee size by utilizing $\kappa$ in the committee calculations. LERNA could work for less than 16K parties, but the computation of committee members increases significantly as the number of parties decreases. Let us look at the concrete costs for the non-committee clients in both LERNA and $\caps_\LWR$. Assuming 20K clients and $L=50,000$, LERNA requires approximately 2GB of data followed by 0.91 MB per iteration.
Meanwhile, $\caps_\LWR$ requires each client to send 2.36 MB per iteration. As a result, LERNA only becomes cost-effective after more than 1300 iterations, requiring a fixed, stateful, and large committee to stay alive. The communication cost difference becomes starker for the committee clients between LERNA and $\caps_\LWR$ as LERNA requires every client, in the universe of clients, to communicate with the committee clients at setup; meanwhile, $\caps_\LWR$ only requires the chosen clients ($<<20K$) to speak in every iteration, and there is no setup. 
\else Due to space constraints, we defer an exposition on related work to the full version of the paper~\cite[\S 2.3, 2.4, 2.5]{EPRINT:KarPol24}
while summarizing our comparison of $\caps$ with other tools employed for aggregation, with privacy, under various models  in Table~\ref{tab:psa}.
\fi


\newcommand{\xmark}{{\color{red}\ding{55}}}
\definecolor{applegreen}{rgb}{0.01, 0.75, 0.24}
\newcommand{\cmark}{{\color{applegreen}\ding{51}}}
\newcommand{\TD}{{\color{red}\texttt{TD}}}
\newcommand{\DS}{{\color{orange}\texttt{DS}}}
\renewcommand{\PP}{{\color{applegreen}\texttt{NA}}}

\begin{table}[!tb]
  \caption{Comparison of Various Private Summation Protocols. \TD\ stands for trusted dealer/trusted setup, \DS\ stands for multi-round distributed setup. Note that \DS\ implies several rounds of interaction. Here, efficient aggregate recovery refers to whether the server can recover the aggregate efficiently. For example, \cite{NDSS:SCRCS11,BJL16,GPSBB22} require restrictions on input sizes to recover the aggregate due to the discrete logarithm computation.}
  \label{tab:psa}
  \centering
  \resizebox{0.8\columnwidth}{!}{\begin{tblr}{
      colspec={|l|l|l|l|l|l|},
      row{1}={font=\bfseries},
      column{1}={font=\itshape},
      row{even}={bg=gray!10},
    }
    \toprule
              & Efficient Aggregate & \begin{tabular}[c]{@{}l@{}}Dynamic \\ Participation\end{tabular} & \begin{tabular}[c]{@{}l@{}}\TD\\ vs\\ \DS\end{tabular} & Assumptions  \\
    \toprule
\cite{NDSS:SCRCS11}   & \xmark             & \xmark                                                               & \TD                                                                                                           & DDH                                                                  \\ \hline
\cite{FC:JoyLib13}       & \cmark             & \xmark                                                               & \TD                                                                                                       & DCR                                                                               \\ \hline
\cite{BJL16}    & \xmark/\cmark             & \xmark                                                               & \TD                                                                                                      & DDH/DCR                                                               \\ \hline
\cite{NDSS:BecGuaZim18}    & \cmark            & \xmark                                                               & \TD                                                                                                            & LWE/R-LWE                                                           \\ \hline
\cite{JC:TKGJ23}     & \cmark             & \xmark                                                               & \TD         & R-LWE                                                               \\ \hline
\cite{EPRINT:WMSA21}    & \cmark            & \xmark                                                               & \TD           & AES         \\ \hline
\cite{SC:TCKJ22}    & \cmark            & \xmark                                                               & \TD       & RLWE                                                                           \\ \hline
\cite{CANS:LeoElkMol14}   & \xmark             & \cmark                                                              &      \TD    & DCR                                                                           \\ \hline
\cite{PETS:ErnKoc21}    & \cmark            & \xmark                                                               & \TD                           & LWR                                                                               \\ \hline
\cite{BroGun23} & \xmark             & \xmark                                                               & \DS                                                                                       & LWR                                                                              \\ \hline
\cite{bonawitz2017practical,bell2020secure} & \cmark & \cmark & \DS$^\ast$ & DDH  \\\hline
\cite{GPSBB22} & \cmark & \cmark & \DS & DDH  \\\hline
\cite{SP:MWAPR23} & \cmark & \cmark & \DS & DDH  \\\hline
\cite{AC:LLPT23} & \cmark & \cmark & \DS & DDH  \\\hline
Our Work & \cmark & \cmark & \PP  & HSM, LWR, (R)LWE  \\
    \bottomrule
  \end{tblr}}
\end{table}

 \section{Technical Overview}
\label{sec:tech}

In this work, we focus on building a primitive, \CAPS\ ($\caps$), that enables privacy-preserving aggregation of multiple inputs across several aggregation iterations whereby a client only speaks once on his will per iteration.  

\paragraph{Seed-Homomorphic PRG (SHPRG)} \allowdisplaybreaks Recall that a pseudorandom generator PRG $\prg$ takes as input a random seed and outputs pseudorandom values. PRG $\prg:\cK\to\cY$ is seed-homomorphic if $\prg(s_1\oplus s_2)=\prg(s_1)\otimes \prg(s_2)$ where $(\oplus,\cK)$ and $(\otimes,\cY)$ are groups. While using the known \LWR-based construction~\cite{C:BLMR13}, we introduce the first \LWE-based SHPRG. Both are almost seed-homomorphic, with induced error handled via input encoding/decoding. For the \LWR\ construction (Construction~\ref{cons:shprg-lwr}): given random $\bfA\getsr\bbZ_q^{L\times n}$, $\allowdisplaybreaks \prg_{\LWR,\bfA}(\bfs)=\lfloor \bfA\bfs\rfloor_p$ where $p<q$, with error

\noindent $\allowdisplaybreaks \prg_{\LWR,\bfA}(\bfs_1+\bfs_2)=\prg_{\LWR,\bfA}(\bfs_1)+\prg_{\LWR,\bfA}(\bfs_2)+\bfe$ for $\bfe\in\set{0,1}^L$. Algorithm $\prg.\prge$ computes the PRG output on input seed $\seed$. It is important to note that the \LWE\ based SHPRG requires careful consideration of seed and error spaces and accounting for ensuing error in homomorphism while secret-sharing over appropriate seed spaces.  
We also provide the first construction of SHPRG based on \HSM\ assumption as an independent result, which does not have any error in the homomorphism.

\paragraph{Secret Sharing over Finite Fields.} 
In standard Shamir secret sharing~\cite{Shamir79}, a secret $s$ is shared via polynomial $f(X) = \sum_{i=0}^{\crec-1} a_i X^i$ where $a_0=s$ and $a_1,\ldots,a_{\crec-1}$ are random field elements. For $\csize$ parties, party $i \in [\csize]$ receives share $f(i)$, allowing any $\crec$ parties to reconstruct $s$ while any $\crec-1$ shares remain uniformly random. The corruption threshold can be lowered from $\crec-1$ to $\cthr$ for additional properties. Packed secret sharing~\cite{STOC:FraYun92} enables hiding multiple secrets in a single polynomial. A secret sharing scheme $\ss$ consists of sharing algorithm $\{\seed^{(j)}\}_{j\in[\csize]} \getsr \ss.\share\allowbreak(\allowbreak\seed,\cthr,\crec,\csize)$ taking secret $\seed$, thresholds $\cthr,\crec$, party count $\csize$ as input and outputting share $\seed^{(j)}$ for each party $j$, and reconstruction algorithm $\seed\getsr\ss.\recons(\{\seed^{(j)}\}_{j\in \cS})$ taking shares $\{\seed^{(j)}\}_{j\in \cS}$ as input and returning secret $\seed$ when $|\cS|\geq \crec$. Finally, Shamir's Secret Sharing is linearly homomorphic, which we leverage. Specifically, given two secrets $\seed_1,\seed_2$ with $j$-th share $\seed_1^{(j)}$ and $\seed_2^{(j)}$, then $\seed_1^{(j)}+\seed_2^{(j)}$ is a valid share of $\seed_1+\seed_2$. Note that since our corruption threshold $\cthr<\csize/3$, we can use reconstruction techniques with error correction to guarantee that a malicious share holder does not thwart the reconstruction.

\subsection{$\caps$ based on seed-homomorphic PRG}
\label{sub:caps-overview}

Every client needs to ensure the privacy of their input. Therefore, a client has to mask their input. In iteration $\lab$, if client $i$ has input $\bpsain_{i,\lab}$, it chooses a mask of the same length to ``add'' to the inputs. Let the mask be $\bmask_{i,\lab}$ and the ciphertext is defined as $\bpsact_{i,\lab}=\bpsain_{i,\lab}+\bmask_{i,\lab}$. To ensure privacy, we need the mask chosen uniformly randomly from a large distribution. Furthermore, by performing the addition with respect to a modulus $\modulus$, we get the property that for a random $\bmask_{i,\lab}$, $\bpsact_{i,\lab}$ is identically distributed to a random element from $\bbZ_\modulus$. The client $i$ sends $\bpsact_{i,\lab}$ to the server. This is Message 1a in Figure~\ref{fig:comm-model}. 

The server, upon receiving the ciphertexts, can add up the ciphertexts. This leaves it with $\sum_i \bpsact_{i,\lab}=\sum_i \bpsain_{i,\lab}+\sum_i \bmask_{i,\lab}$. The goal of the server is to recover $\sum_i \bpsain_{i,\lab}$. Therefore, it requires $\sum_i \bmask_{i,\lab}$ to complete the computation. In works on Private Stream Aggregation~\cite{NDSS:SCRCS11,BJL16,FC:JoyLib13,PETS:ErnKoc21}, the assumption made is that $\sum_i \bmask_{i,\lab}=0$. However, this requires all the clients to participate, which is a difficult requirement in federated learning. Instead, our approach follows the prior works in federated learning~\cite{bonawitz2017practical,bell2020secure,SP:MWAPR23,AC:LLPT23,GPSBB22,Willow}\footnote{\cite{bonawitz2017practical} had all the clients be the intermediate parties. In contrast, \cite{bell2020secure} created neighborhoods of clients where the intermediate parties were only the neighbors. \cite{SP:MWAPR23,AC:LLPT23,Willow} explicitly defined a committee of chosen clients, who may or may not be participating in that round.} to enlist the help of some intermediate parties to provide the server with $\sum_i \bmask_{i,\lab}$, for only the participating clients. However, most previous works \cite{bonawitz2017practical,bell2020secure} mandate that the masks must sum to zero, requiring mask reconstruction if a party drops out. This approach, often dependent on secret-sharing of the masks, results in multiple interaction rounds. Another recent line of works~\cite{GPSBB22,AC:LLPT23,SP:MWAPR23} necessitate an expensive setup for mask generation but still demands Shamir reconstruction by stateful committee members to handle dropout. Our approach is the first to break away from these paradigms, removing the need for zero-sum masks and setup, thereby significantly streamlining the process while working with ephemeral committee. In $\caps$, following Flamingo~\cite[Figure 1, Lines 2-7]{SP:MWAPR23}, $\csize$ committee members are ephemerally chosen using randomness beacon~\cite{drand}. More details can be found in Section~\ref{sub:caps-lwr}. 
The list of committee members can be included in the opening message from server to client, to begin the iteration. 

\paragraph{Working with the Committee- First Attempt.} Each client $i$, therefore, has to communicate information about its respective $\bmask_{i,\lab}$ to the committee members. We refer to this as ``Aux info'', and we route it through the server to the committee member. The information is encrypted under the public key of the respective committee member. This is Message 1b in Figure~\ref{fig:comm-model}. This ensures that the server cannot recover the auxiliary information. Eventually, each committee member ``combines'' the aux info it has received to the server (Message 2 in Figure~\ref{fig:comm-model}), with the guarantee that this is sufficient to reconstruct $\sum_i \bmask_{i,\lab}$. We rely on Shamir's secret-sharing~\cite{Shamir79} to distribute a secret $s$ to a committee of $\csize$ such that as long as $\crec$ number of them participate, the server can learn the secret $s$. The security of the secret is guaranteed even if the server colludes with $\cthr$ committee members. We require $\crec> (\csize+\cthr)/2$. We also rely on the homomorphism property, which ensures that if a committee member receives shares of two secrets $s_1,s_2$. Then, adding up these shares will help reconstruct the secret $s_1+s_2$. While prior works relied on committees, our contribution is in (a) ensuring that the committee's computation and communication cost is minimal and (b) ensuring that the client and committee speak once per iteration.

\paragraph{Optimizing Committee Performance} The solution laid out above requires the clients to ``secret-share'' $\bmask_{i,\lab}$. However, note that $\bmask_{i,\lab}$ is as long as $\bpsain_{i,\lab}$. Therefore, each committee member will receive communication $O(L)$ where $n$ is the number of clients, and $L$ is the length of the vector. It must also perform computations proportional to $O(nL)$. We can further reduce the complexity to $O(\log n\cdot L)$ by sampling a much larger committee such that each committee member receives communication from $\log n$ clients while each each client speaks with $\csize$ committee members. 
$\caps$ reduces the burden of the committee by introducing a succinct communication independent of $L$ to the committee. We rely on a structured pseudorandom generator (PRG) called a ``seed-homomorphic PRG''. A seed-homomorphic PRG ensures that $\prg(\seed_1+\seed_2)=\prg(\seed_1)+\prg(\seed_2)$. Now, each client $i$ secret-shares their respective seeds $\seed_{i,\lab}$ (and not $\bmask_{i,\lab})$, while setting $\bmask_{i,\lab}=\prg(\seed_{i,\lab})$. By the homomorphism of the secret-sharing scheme, the server can reconstruct $\sum_i \seed_{i,\lab}$, and then expand it as $\prg(\sum_i \seed_{i,\lab})$. By seed-homomorphism, this is also equal to $\sum_i \prg(\seed_{i,\lab})=\sum_i \bmask_{i,\lab}$. 

Three key challenges arise: (1) server learns $\sum_{i=1}^{n} \seed_{i,\lab}$, requiring leakage-resilient SHPRG (satisfied by our \LWR/\LWE\ constructions), (2) almost-homomorphic \LWR/\LWE\ PRGs need careful encoding/decoding, and (3) active security simulation requires valid ciphertext simulation for honest clients before finalizing dropout set. For (3), we resolve this in programmable random oracle model: each client adds a mask from hashing seed $\hseed_{i,\lab}$, shares $\{\hseed_{i,\lab}^{(j)}\}_{j\in[\csize]}\getsr\ss.\allowbreak\share(\hseed_{i,\lab},\cthr,\crec,\csize)$, which committee decrypts and forwards for server reconstruction. Notably, committee information remains vector-length independent. 

Though the underpinning idea of $\caps$ combines seed-homomorphic PRG and an appropriate secret-sharing scheme, there are technical issues with presenting a generic construction. We begin by presenting the construction based on the Learning with Rounding Assumption. We present constructions from both \LWR\ and \LWE\ assumptions in Section~\ref{sub:caps-lwr} and Section~\ref{sub:caps-lwe}, respectively.

\paragraph{Malicious Client Behavior and Heterogeneity.} In applications of secure aggregation to Federated Learning, it is important for the model updates to be tolerant to any heterogeneity in data distribution. In \ifdefined\IsPRF{}the full version of our paper~\cite[\S F.1]{EPRINT:KarPol24}\else Section~\ref{sub:fedopt}\fi, we show how to extend FedOpt~\cite{fedopt} to the setting involving secure aggregation. This two pronged approach of first securely aggregating the updates and then using an optimization technique to update the model weights, as a function of the aggregated updated, helps bring prior research on FedOpt to the secure aggregation setting. Additionally, we also detail how to combine $\caps$ with Byzantine-Robust Stochastic Aggregation (bRSA). The latter was conceived to handle both heterogeneity in data and client's malicious behavior when it came to providing inputs. Finally, we also present a protocol where a server can abort when malicious behavior is encountered. Specifically, when a client sends inconsistent information to the server and the committee members. We eschew the computation heavy Feldman's Verifiable Secret Sharing (VSS) and instead rely on SCRAPE~\cite{ACNS:CasDav17}, along with other zero-knowledge proof techniques based on lattices~\cite{C:LyuNguPla22}. Critically our proofs ensure that the bulk of the verification is done by the server with only \emph{constant} work done by a client. 
We refer readers to Section~\ref{sub:mal-clients} for more details. 
\subsection{$\caps'$ based on Threshold, Key-Homomorphic PRF}
We present an alternative approach to building $\caps'$ based on a threshold, key-homomorphic PRF. We present this alternative approach that is suitable for small $L$. 

{\paragraph{$\cl$ Framework.}  We use the generalized version of the framework, as presented by Bouvier~\etal\cite{JC:BCIL22}. Broadly, there exists a group $\Ghat$ whose order is $\modulus\cdot \shat$ where $\gcd(\modulus,\shat)=1$ and $\shat$ is unknown while $\modulus$ is a parameter of the scheme. Then, $\Ghat$ admits a cyclic group $\bbF$, generated by $f$ whose order is $\modulus$. Consider the cyclic subgroup $\bbH$, generated by $h=x^\modulus$, for a random $x\in\Ghat$. Then, one can consider the cyclic subgroup $\bbG$ generated by $g=f\cdot h$ with $\bbG$ factoring as $\bbF\cdot \bbH$. The order of $\bbG$ is also unknown. The $\HSM$ assumption states that an adversary cannot distinguish between an element in $\bbG$ and $\bbH$, while a discrete logarithm is easy in $\bbF$, $\sbar$, an upper bound for $\shat$ is provided as input. Note that for $\modulus=N$ where $N$ is an RSA modulus, the \HSM\ assumption reduces to the \DCR\ assumption. Therefore, the \HSM\ assumption can be viewed as a generalization of the \DCR\ assumption. }

\paragraph{Secret Sharing over Integers.} Braun~\etal~\cite{C:BraDamOrl23} was the first to identify how to suitably modify Shamir's secret sharing protocol to ensure that the operations can work over a group of unknown order, such as the ones we use on the $\cl$ framework. This stems from two reasons. The first is leakage in that a share $f(i)$ corresponding to some sharing polynomial $f$ always leaks information about the secret $\secret$, $\bmod i$ when the operation is over the set of integers. Meanwhile, the standard approach to reconstructing the polynomial requires the computation of the Lagrange coefficients, which involves dividing by an integer, which again needs to be ``reinterpreted'' to work over the set of integers. The solution to these problems is multiplying with an offset $\Delta=\csize!$ where $\csize$ is the total number of shares.

\paragraph{(Almost) Key Homomorphic Pseudorandom Functions} 
Naor~\etal~\cite{EC:NaoPinRei99} introduced the concept of key homomorphic PRFs (KH-PRFs), demonstrating that $H(x)^k$ is a secure KH-PRF under the DDH assumption in the Random Oracle Model, where $H$ is a hash function, $k$ is the key, and $x$ is the input. \cite{C:BLMR13} later constructed an almost KH-PRF under the Learning with Rounding assumption \cite{EC:BanPeiRos12}, which was formally proven secure by \cite{PETS:ErnKoc21}. Our work leverages almost KH-PRF constructions from both \LWR\ and \LWE\ assumptions in the Random Oracle Model. Additionally, we present a novel KH-PRF construction in the $\cl$ framework, yielding new constructions under the HSM assumption (including DCR-based constructions). We adapt the DDH-based construction to the $\cl$ framework and prove that $F(k,x)=\hash(x)^k$, where $k \getsr \cK$ and $\hash:\set{0,1}^* \to \bbH$ is modeled as a random oracle, is a secure KH-PRF under the HSM assumption. This adaptation requires careful consideration of appropriate groups, input spaces, and key spaces.

\paragraph{Distributed Key Homomorphic Pseudorandom Functions (KHPRF)} 
\cite{C:BLMR13} presented generic constructions of Distributed PRFs from any KH-PRF using secret sharing techniques. However, the $\cl$ framework's use of groups with unknown order necessitates working over integer spaces. While Linear Integer Secret Sharing \cite{PKC:DamTho06} exists, it can be computationally expensive. Instead, we utilize Shamir Secret Sharing over Integer Space as described by \cite{C:BraDamOrl23}, refining it with appropriate offsets to construct Distributed PRFs from our HSM-based construction. In other words, rather than reconstructing the secret, the solution reconstructs a deterministic function of the secret, which is accounted for in the PRF evaluation. This induces complications in reducing the security of our distributed key homomorphic PRF to that of the HSM-based KH-PRF. Finally, we demonstrate that our construction maintains the key homomorphism property, allowing combinations of partial evaluations of secret key shares to match the evaluation of the sum of keys at the same input point.


\paragraph{Constructions of \emph{almost} Distributed KHPRFs.}
\cite{C:BLMR13} introduced a generic construction of distributed KHPRFs from almost key homomorphic PRFs. They proposed $F_\LWR(\bfk,x):=\floor{\angle{\hash(x),\bfk}}_\Prime$ as an almost key homomorphic PRF, where $q>\Prime$ are primes, $\bfk\getsr\bbZ_q^{\rho}$, and $\hash$ is a suitably defined hash function. Their reduction to build a distributed PRF utilizes standard Shamir's Secret Sharing over fields, simplifying the process compared to integer secret sharing due to the prime nature of $q$ and $\Prime$. However, their proposed construction contained shortcomings affecting both correctness and security proofs. We will now provide a brief overview of these issues.
An almost KH-PRF satisfies $F(k_1+k_2,x)=F(k_1,x)+F(k_2,x)-e$ for some error $e$. For the \LWR~ construction, $e\in{0,1}$. However, this implies $F(T\cdot k,x)=T\cdot F(k,x)-e_T$ where $e_T\in{0,\ldots,T-1}$ for any integer $T$, leading to error growth and affecting Lagrange interpolation. The authors proposed multiplying by an offset $\offset=\csize!$ to bound the error and use rounding to mitigate its impact by ensuring that the error terms are ``eaten'' up. However, their security reduction works by reducing the security of the Distributed KHPRF (DKHPRF) to the underlying KHPRF. For a particular $i^\ast$, the adversary uses its oracle access to the KHPRF security game to obtain partial evaluations. In other words, the key of the KHPRF is implicitly set as the share of the key corresponding to some $i^\ast$. However, to use the KHPRF Challenger's response to construct the actual evaluation of the DKHPRF (via Lagrange interpolation), one has to rely on the ``clearing out the denominator'' technique by multiplying with some offset $\Delta$. Unfortunately, this induces additional errors when dealing with an \emph{almost} KH-PRF such as the one based on the \LWR\ assumption. 
%
%
Thus, their definition of partial evaluation function needs to be updated to be consistent with what is simulatable. We identified further issues in their rounding choices. Specifically, partial evaluations should be rounded down to $u$ where $\floor{p/u}>(\offset+1)\cdot \crec\cdot\offset$ ($\crec$ being the reconstruction threshold). Moreover, their framework only addressed single-key PRFs, not vector-key cases like $\LWR$-based construction. We address these issues in constructing a distributed, almost key homomorphic PRF based on LWR. We formally prove that
$F_\LWR(\bfk,x)=\floor{\Delta\floor{\Delta\floor{\angle{\hash(x),\bfk}_\Prime}_u}}_v$ for appropriate choices of $u$ and $v$ is a secure, distributed, almost key homomorphic, PRF.  

\paragraph{\CAPS\ without Leakage Simulation.} The construction of $\caps'$ is similar to the earlier ones based on seed-homomorphic PRG. There exist the following differences:
\begin{itemize}
    \item A client $i$ has to sample $L$ different keys. Each key $in$ is used to evaluate the PRF at a point $\lab$ and is used to mask the input $\psain_{i}^{(in)}$.
    \item It then secret shares each of the $L$ keys by running the $\tprf.\tpshare$, the algorithm to generate the shares of the $\tprf$ key. 
    \item Each share for committee member $j$ is evaluated at $\lab$. This evaluation is sent to the committee member. 
    \item Finally, the server runs $\tprf.\tpcombine$ to combine the information from the committee members to get the PRF evaluation at $\lab$ under the sum of the keys for each index $in$. $\tpcombine$ is the algorithm that helps reconstruct the evaluation from partial evaluations. 
\end{itemize}

\paragraph{Stronger Security Definitions and Construction.} We also present a stronger security guarantee, which was not provided by the committees of Lerna and Flamingo, whereby the committee members can all collude and observe all encrypted ciphertexts and all auxiliary information and cannot mount an IND-CPA-style attack. Unfortunately, our current construction, where the inputs are solely blinded by the PRF evaluation, which is also provided to the committee members in shares, can be unblinded by the committee leaking information about the inputs.  
Instead, we will have the server choose its key material $\psask_0$, and the corresponding key for that iteration communicated with the ``initiate transaction'' message. The client can then compute an ``ephemeral'' Diffie-Hellman key on the fly and use this to mask the input. Since the adversary does not receive $\psask_0$, it intuitively provides a sufficient mask for the inputs and the actual key $\psask_i$ of the honest client. Note that the adversary only receives the honest client's ``iteration'' public key. The actual construction is presented in \ifdefined\IsPRF{}the full version of our paper~\cite[\S G.1]{EPRINT:KarPol24}\else Section~\ref{sub:stronger}\fi.
\section{Preliminaries and Cryptographic Building Blocks}
\label{sec:prelims}
\paragraph{Notations.} For a distribution $X$, we use $x\getsr X$ to denote that $x$ is a random sample
drawn from the distribution $X$. We denote by $\bfu$ a vector and by $\bfA$ a matrix. For a set $S$ we use $x\getsr S$ to denote
that $x$ is chosen uniformly at random from the set $S$. By $[n]$ for some integer $n$, we denote the set $\set{1,\ldots,n}$. For $x\in\bbR$, $\floor{x}$ (resp. $\ceil{x}$) refers to $y\in\bbZ$ such that $y\leq x<y+1$ (resp. $y-1<x\leq y$). 

\begin{definition}[Hypergeometric Distribution]
\label{def:hypergeo}
A Hypergeometric Distribution $\term{HyperGeom}(N,\eta,\csize)$ is a discrete probability distribution
that describes the probability of successes in $\csize$ draws, without
replacement from a finite population of size $N$ that contains $\eta$ fraction with that feature. We use the following tail bounds for $X \sim \term{HyperGeom}(N,\eta,\csize)$
$\forall d>0$, $\Pr[X\leq (\eta-d)\csize]\leq e^{-2d^2\cdot \csize}$ and  $\Pr[X\geq (\eta+d)\cdot \csize]\leq e^{-2d^2\cdot \csize}$  
\end{definition}
\subsection{Cryptographic Preliminaries}
We begin by discussing lattice-based assumptions in Section~\ref{sub:lattice}. Later, in Section~\ref{sec:shprg}, we introduce the syntax and security of a seed-homomorphic pseudorandom generator and present our lattice-based constructions. 

Due to space constraints, we defer \ifdefined\IsPRF{}a discussion of various constituent cryptographic primitives to the full version of our paper~\cite{EPRINT:KarPol24}. However, for completeness, we present the Packed Secret Sharing Construction based on Shamir's Secret Sharing here. 
\newcommand{\pos}{\term{pos}}
\begin{construction}[Packed Secret Sharing over $\bbF_q$]
\label{cons:pssf}
Consider the following $(\cthr,\crec,\csize)$ Secret Sharing Scheme where $\csize$ is the total number of parties, $\cthr$ is the corruption threshold, $\crec$ is the threshold for reconstruction. Further, let $\rho$ be the number of secrets being packed which are to be embedded at points $\pos_1,\ldots,\pos_\rho$ where $\pos_i=\csize+i$. Here, $\cthr:=\crec-\rho$. Then, we have the following scheme:

    \centering
 \begin{pchstack}[center,space=2\fboxsep]
\resizebox{0.7\columnwidth}{!}{\procedure{$\lshare(\bfs=(\secret_1,\ldots,\secret_\rho),\cthr,\crec,\csize)$}{
         (\coeff_0,\ldots,\coeff_{\crec-\rho-1})\getsr \bbF_q\\
         q(X):=\sum_{i=0}^{\crec-\rho-1} X^{i}\cdot \coeff_i\\
         \pos_i=\csize+i~\cFor i=1,\ldots,\rho\\
         \cFor i\in[\rho]~\cDo\\
        \pcind L_i(X):=\prod_{j\in[\rho]\setminus i}  \frac{X-\pos_j}{\pos_i-\pos_j}\cdot \Delta\\
         f(X):=q(X)\prod_{i=1}^{\rho} (X-\pos_i)+\sum_{i=1}^{\rho}{\secret}_i\cdot L_i(X)\\
          \pcreturn \set{\secret^{(i)}}_{i\in[\csize]}
        }
        
         \procedure{$\lcomb(\set{\secret^{(i)}}_{i\in\cS})$}{
            \pcif |\cS|< \crec ~\pcreturn \bot\\
            \text{Parse}~\cS:=\set{i_1,\ldots,i_{\cthr},\ldots}\\
            \cFor k\in [\rho]\\
            \pcind \cFor j\in[\cthr]\\
            \pcind\pcind \Lambda_{i_j}(X):=\prod_{\zeta\in[\cthr]\setminus{j}} \frac{i_\zeta-X}{i_\zeta-i_j}\\
             \pcind \pcind \secret_k':=\sum_{j\in[t]} \Lambda_{i_j}(m+k)\cdot \secret^{(j)}\\
             \pcind \pcreturn \bfs':=(\secret_1',\ldots,\secret_\rho')\\
            }
            }
        \end{pchstack}
\end{construction}
\else discussion on the syntax of secret-sharing schemes and constructions over fields and integers in Section~\ref{sec:secret}. This is followed by an exposition of various pseudorandom functions' syntax and security definitions through Sections~\ref{sub:prf} and \ref{sub:dkhprf}. Finally, we introduce the $\cl$ Framework in Section~\ref{sub:class}. 
\fi

\subsection{Lattice-Based Assumptions}
\label{sub:lattice}
This section will look at constructions based on three different lattice-based assumptions. 

\subsubsection{Learning with Rounding Assumption}
\label{sub:lwr}
We will begin by defining the learning with rounding (\LWR) assumption, which can be
viewed as a deterministic version of the learning with errors (\LWE) assumption~\cite{ACM:Regev09}. 
\LWR\ was introduced by Banerjee~\etal\cite{EC:BanPeiRos12}.  

\begin{definition}[Learning with Rounding]
\label{def:lwr}
    Let $\lambda,q,\Prime\getsr\lwr\Gen(1^{\rho})$ be functions of the security parameter $\rho$, with $L,\lambda,q,\Prime\in\bbN$ such that $q>\Prime$. Then, the Learning with Rounding assumption ($\LWR_{\lambda,L,q,p}$) states that for all PPT adversaries $\cA$, there exists 
    a negligible function $\negl$ such that:
\[
\Pr\left[b=b'~~
\begin{array}{|c}
      \bfs\getsr\bbZ_q^\lambda,\bfA\getsr\bbZ_q^{\lambda\times L}, \\
      \bfu_0:=\floor{\bfA^\top\cdot \bfs}_\Prime,
      \bfu_1\getsr\bbZ_p^L\\
      b\getsr\{0,1\},b'\getsr\cA(\bfA,\bfu_b)
\end{array}
\right]=\frac{1}{2} + \negl(\rho)
\]
where $\floor{x}_\Prime:=\lfloor x\cdot p/q\rfloor$. 
\end{definition}

\subsubsection{Learning with Errors Assumption}
\label{sub:cons-prflwe}
\begin{definition}[Learning with Errors Assumption (LWE)] 
\label{def:lwe} 
Consider integers $\lambda,L,~q\in\bbN$ that are functions of the security parameter $\rho$,
and a probability distribution $\chi$ on $\bbZ_q$, typically taken to be a normal
distribution that has been discretized. Then, the $\LWE_{\lambda,L,q,\chi}$ assumption states
that for all PPT adversaries $\cA$, there exists a negligible function $\negl$ such that: 
\[
\Pr\left[b=b'~~
\begin{array}{|c}
      \bfA\getsr \bbZ_q^{L \times \lambda},\bfx\getsr \bbZ_q^\lambda,\bfe\getsr \chi^L\\
      \bfy_0:=\bfA\bfx+\bfe\\
      \bfy_1\getsr\bbZ_q^L,
      b\getsr\{0,1\},b'\getsr\cA(\bfA,\bfy_b)
\end{array}
\right]=\frac{1}{2} + \negl(\rho)
\]
\end{definition}

\begin{definition}[Hint-LWE~\cite{ACISP:CKKLSS21,EC:DKLLMR23}]
\label{def:hint-lwe}
Consider integers $\lambda, L, q$
and a probability distribution $\chi$ on $\bbZ_q$, typically taken to be a normal
distribution that has been discretized. Then, the Hint-LWE assumption\footnote{Kim~\etal~\cite{C:KLSS23b} demonstrates that the Hint-LWE assumption is computationally equivalent to the standard LWE assumption. This assumption posits that $\mathbf{y}_0$ maintains its pseudorandom properties from an adversary's perspective, even when provided with certain randomized information about the secret and error vectors. Consider a secure LWE instance defined by parameters $(\lambda, m, q, \chi)$, where $\chi$ represents a discrete Gaussian distribution with standard deviation $\sigma$. The corresponding Hint-LWE instance, characterized by $(\lambda, m, q, \chi')$, where $\chi'$ denotes a discrete Gaussian distribution with standard deviation $\sigma'$, remains secure under the condition $\sigma'=\sigma/\sqrt{2}$. As a result, we can decompose any $\mathbf{e}\in\chi$ into the sum $\mathbf{e}_1+\mathbf{e}_2$, where both $\mathbf{e}_1$ and $\mathbf{e}_2$ are drawn from $\chi'$.} states
that for all PPT adversaries $\mathcal{A}$, there exists a negligible function $\mathsf{negl}$ such that: 
\[
\Pr\left[b=b'~~
\begin{array}{|c}
      \bfA\getsr \bbZ_q^{L\times \lambda},\bfk\getsr \bbZ_q^{\lambda},\bfe\getsr \chi'^L\\
      \bfr\getsr\bbZ_q^{\lambda},\bff\getsr\chi'^L\\
      \bfy_0:=\bfA \bfk+\bfe,
      \bfy_1\getsr\bbZ_q^L,
      b\getsr\{0,1\}\\
      b'\getsr\cA(\bfA,(\bfy_b,\bfk+\bfr,\bfe+\bff))
\end{array}
\right]=\frac{1}{2} + \mathsf{negl}(\rho)
\]
Where $\rho$ is the security parameter. 
\end{definition}


\subsection{Seed Homomorphic PRG}
\label{sec:shprg}
\label{sub:shprg-prf}
\label{sub:def-shprg}
\begin{definition}[Seed Homomorphic PRG (SHPRG)]
\label{def:shprg}
Consider an efficiently computable function $\prg:\cK\to\cY$, parametrized by $\prg=(\prg.\tpgen,\prg.\prge)$ where $(\cK,\oplus),(\cY,\otimes)$ are groups. Then $(\prg,\oplus,\otimes)$ is said to be a secure seed homomorphic pseudorandom generator (SHPRG) if: 
\begin{itemize}
    \item $\prg$ is a secure pseudorandom generator (PRG), i.e., for all PPT adversaries $\cA$, there exists a negligible function $\negl$ such that: \begin{gather*}
	\Pr\left[b=b'~~
	\begin{array}{|c}
     \pp_\prg\getsr\prg.\prgg,
	 b\getsr\set{0,1},\seed\getsr\cK\\
	 Y_0=\prg.\prge(\pp_\prg,\seed), Y_1\getsr cY \\
	 b'\getsr\cA(Y_b)
	\end{array}
	\right]\leq \frac{1}{2}+\negl(\kappa)
    \end{gather*}
    \item For every $\seed_1,\seed_2\in \cK$, we have that $\prg.\prge(\seed_1)\otimes \prg.\prge(\seed_2)=\prg.\prge(\seed_1\oplus \seed_2)$
\end{itemize}
\end{definition}
We abuse notation and drop $\pp_\prg$ from the input of $\prge$. 
\section{Constructions of SHPRG}
\label{sec:cons-shprg}
In this section, we first recap the construction of SHPRG based on \LWR\ assumption as presented by Boneh~\etal~\cite{C:BLMR13}. We then show that it is leakage-resilient. Later, we present the \emph{first} construction based on \LWE\ assumption and then prove that it is indeed leakage resilient under Hint-\LWE\ assumption.
\subsection{Construction from \LWR\ Assumption}
\begin{construction}[SHPRG from \LWR\ Assumption]
\label{cons:shprg-lwr}
    Let $\pp_\prg:=\bfA\getsr\bbZ_q^{\lambda\times L}$ be the output of $\prg.\prgg$, then consider the following: $\prg_\LWR:\bbZ_q^{\lambda}\to \bbZ_p^{L}$ where $L>\lambda$ is defined as $\prg.\prge(\seed=\bfs)=\lfloor \bfA^\top \cdot \bfs\rfloor_p$ where $q>p$ with $\lceil x \rfloor_p=\lfloor x\cdot p/q\rfloor$ for $x\in\bbZ_q$.  
\end{construction}
It is easy to see that Construction~\ref{cons:shprg-lwr} is a secure PRG under the \LWR\ assumption, which gives us the following theorem
\begin{theorem}[PRG Security of Construction~\ref{cons:shprg-lwr}]
    \label{thm:shprg-lwr-prg}
    If $\LWR_{\lambda,L,q,p}$ assumption holds, then $\prg_\LWR$ is a secure PRG. 
\end{theorem}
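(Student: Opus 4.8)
The plan is a one-line reduction: the challenge in the PRG security game for $\prg_\LWR$ is, by construction, exactly an $\LWR_{\lambda,L,q,p}$ sample, so any distinguisher for the former immediately yields one for the latter with the same advantage. Concretely, I would first line up the two experiments. In the PRG game of Definition~\ref{def:shprg} instantiated with Construction~\ref{cons:shprg-lwr}, the public parameter is $\pp_\prg = \bfA$ with $\bfA \getsr \bbZ_q^{\lambda\times L}$ output by $\prg.\prgg$; the seed is $\bfs \getsr \cK = \bbZ_q^{\lambda}$; and the challenge is $Y_0 = \prg.\prge(\bfA,\bfs) = \lfloor \bfA^\top \bfs\rfloor_p$ when $b=0$, or $Y_1 \getsr \cY = \bbZ_p^{L}$ when $b=1$. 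In the $\LWR_{\lambda,L,q,p}$ game of Definition~\ref{def:lwr} one samples $\bfs \getsr \bbZ_q^{\lambda}$, $\bfA \getsr \bbZ_q^{\lambda\times L}$, sets $\bfu_0 := \lfloor \bfA^\top \bfs\rfloor_p$ and $\bfu_1 \getsr \bbZ_p^{L}$, and gives the adversary $(\bfA,\bfu_b)$. For each fixed bit $b$ the joint distribution of $(\bfA,Y_b)$ is identical to that of $(\bfA,\bfu_b)$, because $\prg.\prgg$ produces a uniform matrix and the groups $\cK,\cY$ coincide with $\bbZ_q^{\lambda},\bbZ_p^{L}$.

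Given this, the reduction is immediate. For any PPT PRG-adversary $\cA$ against $\prg_\LWR$, I would define an $\LWR$-adversary $\cB$ that, on input $(\bfA,\bfu_b)$, sets $\pp_\prg := \bfA$, runs $\cA$ on the challenge $\bfu_b$ (together with $\pp_\prg$ if $\cA$ expects it), and outputs $\cA$'s guess $b'$. The view $\cB$ hands $\cA$ is exactly that of the PRG challenger, so $\Pr[b=b']$ is the same in both games and $\cB$'s $\LWR$-advantage equals $\cA$'s PRG-advantage, with $\cB$ running in essentially the same time as $\cA$. Invoking the $\LWR_{\lambda,L,q,p}$ assumption then bounds this advantage by $\negl(\rho)$, which is precisely the conclusion of Theorem~\ref{thm:shprg-lwr-prg}.

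There is no genuine obstacle here --- this is the ``easy direction'', and the whole point of the construction is that the PRG output is nothing but an $\LWR$ sample. The only thing to verify carefully is the bookkeeping: that $\prg.\prgg$ indeed outputs a uniformly random $\bfA\in\bbZ_q^{\lambda\times L}$, that the seed space and output space of Construction~\ref{cons:shprg-lwr} match the domains in Definition~\ref{def:lwr}, and that the rounding operator $\lfloor\cdot\rfloor_p$ is defined identically in both places, so that the two experiments are literally the same up to renaming. It is also worth noting in passing that the hypothesis $L>\lambda$ used in the construction is consistent with (and needed for) the $\LWR$ problem to be non-trivial, and that the reduction is tight, losing nothing in advantage and only an additive polynomial term in running time.
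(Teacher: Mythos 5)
Your reduction is correct and is exactly what the paper has in mind: the paper states only that ``it is easy to see that Construction~1 is a secure PRG under the LWR assumption,'' which is precisely the observation that the PRG challenge $(\bfA,\prg.\prge(\bfs))$ versus $(\bfA,\bfy\getsr\bbZ_p^L)$ is, verbatim, the $\LWR_{\lambda,L,q,p}$ challenge. Your write-up supplies the bookkeeping the paper leaves implicit; the approach is identical.
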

This is almost seed homomorphic in that: $\prg.\prge(\bfs_1+\bfs_2)=\prg.\prge(\bfs_1)+\prg.\prge(\bfs_2)+\bfe$ where $\bfe\in \set{0,1}^{L}$.


\begin{restatable}[Leakage Resilience of Construction~\ref{cons:shprg-lwr}]{theorem}{lrprglwr}
\label{thm:lr-prg-lwr}
    Let $\prg_\LWR$ be the PRG defined in Construction~\ref{cons:shprg-lwr}. Then, it is leakage resilient in the following sense:
    \begin{gather*}
    \begin{array}{cc}
        \left\{ 
        \begin{array}{cc}
          \prg.\prge(\seed)\bmod p   &   \seed\getsr\bbZ_q^{\lambda}\\
            \seed+\seed'\bmod q & \seed'\getsr\bbZ_q^{\lambda}
        \end{array}
        \right\} & \approx_c  \left\{ 
        \begin{array}{cc}
          \bfy   &   \seed,\seed'\getsr\bbZ_q^{\lambda}         \\   \seed+\seed'\bmod q & \bfy\getsr\bbZ_p^L
        \end{array}
        \right\} 
    \end{array}
    \end{gather*}
\end{restatable}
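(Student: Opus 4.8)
The plan is to reduce the statement to the plain PRG security of $\prg_\LWR$ proved in Theorem~\ref{thm:shprg-lwr-prg}, after an information-theoretic simplification of the leakage term $\seed+\seed'\bmod q$. The key observation is that, since $\seed'\getsr\bbZ_q^\lambda$ is uniform and independent of $\seed$, the map $\seed'\mapsto\seed+\seed'$ is a bijection of $\bbZ_q^\lambda$ for every fixed $\seed$; hence, writing $\bft:=\seed+\seed'\bmod q$, the pair $(\seed,\bft)$ is distributed exactly as two \emph{independent} uniform samples from $\bbZ_q^\lambda$. In other words, the addition $\seed\mapsto\seed+\seed'$ acts as a one-time pad in the group $\bbZ_q^\lambda$, so the ``leakage'' carries no information about $\seed$ at all. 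Using this I would resample both sides: the left-hand distribution is identical to $\bigl\{\,(\prg.\prge(\seed),\bft)\ :\ \seed\getsr\bbZ_q^\lambda,\ \bft\getsr\bbZ_q^\lambda\,\bigr\}$ with $\seed,\bft$ drawn independently, and the right-hand distribution is identical to $\bigl\{\,(\bfy,\bft)\ :\ \bfy\getsr\bbZ_p^L,\ \bft\getsr\bbZ_q^\lambda\,\bigr\}$ with all variables independent. (Here I also note that $\prg.\prge(\seed)=\lfloor\bfA^\top\cdot\seed\rfloor_p$ already lies in $\bbZ_p^L$, so the ``$\bmod p$'' in the statement is vacuous.)

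Given this, the reduction is immediate. From any PPT distinguisher $\cD$ for the leakage experiment I build a PPT distinguisher $\cB$ for PRG security: on input a challenge $Y_b$ — which equals $\prg.\prge(\seed)$ for $\seed\getsr\bbZ_q^\lambda$ when $b=0$ and is uniform over $\bbZ_p^L$ when $b=1$ — the reduction $\cB$ samples $\bft\getsr\bbZ_q^\lambda$ on its own, runs $\cD(Y_b,\bft)$, and forwards its output. By the resampling step above, $(Y_0,\bft)$ is distributed exactly as the left-hand side of the claim and $(Y_1,\bft)$ exactly as the right-hand side, so the PRG-distinguishing advantage of $\cB$ equals the leakage-distinguishing advantage of $\cD$. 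Theorem~\ref{thm:shprg-lwr-prg} bounds the former by $\negl(\rho)$, which yields the claim.

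There is essentially no hard step here; the argument is a one-time-pad observation plus a black-box reduction. The one point that genuinely deserves care — and the place I would double-check — is that the leakage is the sum taken \emph{modulo $q$}, i.e.\ in the same group $\bbZ_q^\lambda$ the seeds live in, so that $\seed\mapsto\seed+\seed'$ is a group automorphism; over the integers this masking would no longer be perfectly hiding (indeed, this is precisely the obstruction noted for the $\HSM$-based SHPRG, whose seed space has unknown order), and the clean factorization of the joint distribution would fail.
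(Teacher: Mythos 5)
Your proposal is correct and takes essentially the same approach as the paper: the crux in both cases is that $\seed'\getsr\bbZ_q^\lambda$ acts as a one-time pad, making the leakage $\seed+\seed'\bmod q$ independent of $\seed$ and uniform, after which the claim is exactly plain PRG security of $\prg_\LWR$. The paper reaches the same conclusion via a four-hybrid chain (replace the leakage by an independent uniform $\seed''$, invoke PRG security, swap back) and frames the first swap through a somewhat indirect ``leakage oracle'' argument; your version states the factorization of the joint distribution of $(\seed,\seed+\seed')$ explicitly and collapses the hybrids into a single black-box reduction, which is cleaner and makes the independence point — the only place the argument could go wrong — transparent. Your closing remark about why the argument breaks when the leakage is over $\bbZ$ rather than $\bbZ_q$ matches the paper's own footnote explaining why the $\HSM$-based SHPRG cannot be used here.
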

\ifdefined\IsPRF
Due to space constraints, we refer the readers to the full version of this paper for the proof~\cite[\S 5.1]{EPRINT:KarPol24}.
\else
\begin{proof}
The proof proceeds through a sequence of hybrids. 
\begin{gamedescription}[name=Hybrid,nr=-1]
    \describegame The left distribution is provided to the adversary. In other words, the adversary gets: 
    \[
   \set{\prg_\LWR(\seed)\bmod p, \seed+\seed'\bmod q:\seed,\seed'\getsr\bbZ_q^{\lambda}}
    \]
    \describegame In this hybrid, we replace $\seed+\seed'\bmod q$ with a uniformly random value $\seed''\getsr \bbZ_q^{\lambda}$.
    \[
    \set{\prg_\LWR(\seed)\bmod p,{\color{blue}\seed''}:\seed,\seed''\getsr\bbZ_q^{\lambda}}
    \]
    Note that $(\seed+\seed')\bmod q$ and $\seed''$ are identically distributed. Let us assume that there exists a leakage function oracle $L$ that can be queried with an input $\seed$, for which it either outputs $\seed+\seed'\bmod q$ for a randomly sampled $\seed'\getsr\bbZ_q^{\lambda}$ or outputs $\bfs'\getsr\bbZ_q^{\lambda}$. Then, an adversary $\cA$ that can distinguish between $\Hybrid_0$ ad $\Hybrid_1$ can be used by an adversary $\cB$ to break the leakage function as follows: $\cB$ samples $\seed$. It sends this value to the leakage function. Meanwhile, it also computes $\prg_\LWR(\seed)$ and appends the outputs of the leakage function. Note that if the leakage output was $\seed+\seed'$, the $\Hybrid_0$ is simulated by $\cB$, and if it was $\seed''$ then $\Hybrid_1$ is simulated by $\cB$. However, the output of the leakage function is an identical distribution. Therefore, $\Hybrid_0,\Hybrid_1$ are also identically distributed.
    \describegame In this hybrid, we will replace the PRG computation with a random value from the range. 
    \[
    \set{{\color{blue}\bfy},{\seed''}:\bfy\getsr\bbZ_p^L,\seed''\getsr\bbZ_q^{\lambda}}
    \]
    Under the security of the PRG, we get that $\Hybrid_1,\Hybrid_2$ are computationally indistinguishable. 
    \describegame We replace $\seed''$ with $(\seed+\seed')\bmod q$. 
    \[
    \set{\bfy,{\color{blue}(\seed+\seed')\bmod q}:\bfy\getsr\bbZ_p^L,\seed,\seed''\getsr\bbZ_q^{\lambda}}
    \]
    As argued before $\Hybrid_2,\Hybrid_3$ are identically distributed. 
\end{gamedescription}
Note that $\Hybrid_3$ is the right distribution from the theorem statement. This completes the proof. 
\end{proof}
\fi
\subsection{Construction from \LWE\ Assumption}
\begin{construction}[SHPRG from \LWE\ Assumption]
\label{cons:shprg-lwe}
    Let $\bfA\getsr \mathbb{Z}_q^{L\times\lambda}$ be the output of $\prg.\prgg$. Then, consider the following seed homomorphic PRG $\prg_\LWE:\bbZ_q^\lambda\times \chi^L \to \mathbb{Z}_q^L$ is defined as:$\prg.\prge((\bfs,\bfe))=\bfA \bfs+\bfe$
\end{construction}
The proof of security directly applies the \LWE\ Assumption. However, we now prove the (almost) seed homomorphic property. 
\begin{align*}
    &\prg.\prge(\bfs_1,\bfe_1)=\bfA \bfs_1+\bfe_1;
    \prg.\prge(\bfs_2,\bfe_2)=\bfA \bfs_2+\bfe_2\\
    & \prg.\prge(\bfs_1,\bfe_1)+\prg.\prge(\bfs_2,\bfe_2)=\bfA (\bfs_1+s_2)+(\bfe_1+\bfe_2)\\
    &\hspace{5.3cm} =\prg.\prge(\bfs_1+s_2)+\bfe
\end{align*}
where $\bfe\leq \bfe_1+\bfe_2+1$. 

Looking ahead, when we use this PRG to mask the inputs, we will do the following: $\prg.\prge(s_i,e_i)=\bfA \bfs_i+\bfe_i+\lfloor q/p \rfloor \cdot \bfx_i$ where $\bfx_i\in \bbZ_p^m$. Upon adding $n$ such ciphertexts $(\bmod q)$, we get:
\[
\bfA  \sum_{i=1}^{n}{\bfs_i} +\bfe + \lfloor q/p \rfloor \cdot \sum_{i=1}^{n}\bfx_i
\]
where $\bfe\leq 1+\sum_{i=1}^{n} \bfe_i$. Therefore, to eventually recover $\sum_{i=1}^{n} \bfx_i \bmod p$ from just the value of  $\sum_{i=1}^{n}{\bfs_i}$, we will require $||\bfe||_{\infty}< \frac{q}{2p}$. 
Looking ahead, we will rely on the Hint-LWE Assumption~\ref{def:hint-lwe} to show that it is leakage resilient when we use it to build our aggregation tool. 

\newcommand{\RLWE}{\texttt{R-LWE}}
\begin{remark}[Construction based on Ring-LWE]
    The above LWE construction can be extended to the Ring-LWE~\cite{EC:LyuPeiReg10} setting. 
    
    Recall the R-LWE Assumption. Let $N$ be a power of two and $m>0$ be an integer. Let $R$ be a cyclotomic ring of degree $N$, and let $R_{q}$ be its residue ring modulo $q>0$. Then, 
    the following holds: 
\begin{gather*}
\left\{(\mathbf{a}, \mathbf{a} \cdot k+\mathbf{e}): \mathbf{a} \getsr R_{q}^{m}, k\getsr R_q, \mathbf{e}\getsr \chi^{m}\right\} \approx_c
\left\{(\mathbf{a}, \mathbf{u}): \mathbf{a} \getsr R_{q}^{m}, \mathbf{u} \getsr R_{q}^{m}\right\}
\end{gather*}
This gives us the following construction:
$\prg_\RLWE((k,e)):\bfa k+\bfe$
\end{remark}



\section{One-shot Private Aggregation}
\label{sec:caps}
In this section, we construct One-shot Private Aggregation ($\caps$). Broadly speaking, the goal of this primitive is to support a server (aka aggregator) to sum up the clients' inputs encrypted to a particular label (which can be a tag, timestamp, etc.), without it learning any information about the inputs beyond just the sum. 
\subsection{Simulation-Based Privacy}
\label{sec:sim-def}
Our proof approach is based on the standard simulation-based framework~\cite{Goldreich,Lindell} where we demonstrate that any attacker against our protocol can be simulated by an attacker $\tSim$ in an ideal world where a trusted party $\cT$ computes a function $F$ on the clients' inputs $X$. In our case, this function is that of vector summation. We consider an attacker $\cA$ that controls at most $\eta\cdot n$ clients and possibly the server. Further, $\eta_C$ will be the proportion of corrupt clients controlled by the adversary and those belonging to the committee. The ideal world consists of the following steps, which are adapted to the more straightforward setting where only one party, i.e., the server, has the output:
\begin{enumerate}[label=(\alph*)]
    \item \label{stepa}The honest clients provide the inputs to the trusted party $\cT$. 
    \item  \label{stepb}$\tSim$ chooses which corrupted clients send the input and which abort. 
    \item  \label{stepc}If the server is corrupted, then $\tSim$ can either choose to abort the protocol or continue. 
    \item  \label{stepd}If the protocol is not aborted, then $\cT$ computes the function $F(X)$ and sends to the server. 
    \item  \label{stepe}Finally, if the server is not corrupted, it outputs what it has received from $\cT$. 
\end{enumerate}
Our function $F$ is parametrized by the following: (a) the set of inputs $X=\set{\bpsain_{i,\lab}}_{i\in [n]}$, (b) the set of client dropouts $D\subseteq [n]$, and (c) $\delta\in[0,1)$ which is the maximum fraction of dropouts permitted. In addition, we also parametrize it by the iteration ID $\lab$. Then, 

\begin{equation}
\label{eq:functionality}
F_{D,\delta}^\lab=\begin{cases}
    \sum_{i\in [n]\setminus D} \bpsain_{i,\lab} & \textbf{if } |D|\leq \delta n\\
    \bot & \textbf{otherwise}
\end{cases}
\end{equation}

\renewcommand{\SS}{\term{SS}}
\subsection{Our Construction of \CAPS~Scheme}
\label{sec:cons-caps}
%
\subsubsection{Construction of $\caps_\LWR$}
\label{sub:caps-lwr}
We now present $\caps_\LWR$ and prove its correctness and security. We rely on the seed-homomorphic PRG (Construction~\ref{cons:shprg-lwr}) $\prg$ with seed space $\prg.\cK:=\bbZ_q^\lambda$ and combine it with Shamir's Secret Sharing Scheme over $\bbF_q$ (Construction~\ref{cons:pssf}). We will also employ a public key encryption scheme $\pkee$, which clients use to encrypt the shares to the committee. To facilitate, we assume a PKI or any other mechanism, such as authenticated channels, to ensure that every client knows a public key $\pkepk_j$ to encrypt to committee member $j$.\footnote{Recent work by Pasquini \textit{et} 
\textit{al.}~\cite{CCS:PasFraAte22} describes an attack where the server can send different models to different client updates with the goal that the model sent to a particular client can negate the training done by other clients on different models. In our case, this attack can be easily remedied with no overhead. Rather than evaluating the $\prg$ with just the public matrix $\bfA$, one can first compute a hash $\hash(iteration,model)$ and multiply it with $\bfA$. If $\hash$ was preimage-resistant, then the adversary cannot find a suitable model to force computation. As a result, the client's computation is tied to the model update sent. If different clients use different $\bfA$, the seed homomorphism fails. This would make it difficult for a malicious server to send different models to different clients to ensure that a particular client's contributions are not aggregated and, therefore, can be recovered.  We can also switch to the key-homomorphic PRF constructions described in \ifdefined\IsPRF{} the full version of the paper~\cite{EPRINT:KarPol24}\else Section~\ref{sec:khprf-cons}\fi.}

For ease of presentation, we will abuse notation and drop the packing factor $\rho$ (of the secret-sharing scheme) and the consequent vector slicing needed for the seed. We chunk the vector of length $\lambda$ into smaller vectors, each of length $\rho$, before secret-sharing each smaller vector. 

\begin{figure}[!tb]
\centering
	\resizebox{0.9\textwidth}{!}{
    \begin{minipage}{0.98\textwidth}
    \begin{protocolbox}{Construction of $\caps_\LWR$}
		\scalebox{0.85}{
			\begin{minipage}[t]{1.05\linewidth}
			    \algoHead{One-Time System Parameters Generation}
			    \begin{algorithmic}
                       \State Run $\pp\getsr\SS.\psas(1^\kappa,1^\cthr,1^\crec,1^\csize)$
                       \State Run $\pp_\prg\gets \allowbreak\prg.\prgg(1^\kappa)$ 
                \State Set  $\pp=(\pp_\prg,\cthr,\crec,\csize)$
                       \State \Return Committee of size $\csize$ and $\pp$. 
			    \end{algorithmic}
			     \algoHead{Data Encryption Phase by Client $i$ in iteration $\lab$}
			    \begin{algorithmic}
                    \State Sample $\seed_{i,\lab}\getsr \prg.\cK,\Active{\hseed_{i,\lab}\getsr\{0,1\}^{\log q}}$
                    \State Compute $(x_1,\ldots,x_L)\gets\term{Encode}(\bpsain_i)$
                    \State Compute $\bmask_{i,\lab}=\prg.\prge(\seed_{i,\lab}),\Active{ \bmask_{i,\lab}':=\hash(\hseed_{i,\lab})}$
                    \State Compute $\bpsact_{i,\lab}={(\psain_{1},\ldots\psain_{L})}+ \bmask_{i,\lab}$ \Active{$+\bmask_{i,\lab}'$}
                    \State $(\seed_i^{(j)})_{j\in[\csize]}\getsr\SS.\lshare(\seed_{i,\lab},\cthr,\crec,\csize)$
                    \State \Active{$(\hseed_{i,\lab}^{(j)})_{j\in[\csize]}\getsr\SS.\lshare(\hseed_{i,\lab},\cthr,\crec,\csize)$}
                    \For{$j=1,\ldots,\csize$}
                    \State $\psaaux_{i,\lab}^{(j)}\gets \seed_i^{(j)}\Active{,\hseed_{i,\lab}^{(j)}}$
                    \EndFor
                    \State Send $\bpsact_{i,\lab}$ to the Server
                    \State Send $c_{i,\lab}^{(j)}=\pke.\term{Enc}_{\upkepk_j}(\lab,i,\psaaux_{i,\lab}^{(j)})$ to committee member $j$ for each $j\in[\csize]$, via server. 
			    \end{algorithmic}
                \algoHead{Set Intersection Phase by Server in iteration $\lab$}
			    \begin{algorithmic}
                    \State For $j\in[\csize]$, let ${\cC^{(j)}:=\set{i:c_{i,\lab}^{(j)}\text{was received by server}}}$
                    \State Let $C^{(0)}:=\set{i:\bpsact_{i,\lab}\text{was received by the server}}$ 
                    \State Compute $\cC:=\cap_{j\in\cS\cup\set{0}}\cC^{(j)}$ \Comment{This is bit-wise AND operation of the $\crec+1$ bit strings.}
                    \State \textbf{assert} $|\cC|\geq (1-\delta)n$
                   \State Send $\cC,\set{c_{i,\lab}^{(j)}}_{i\in\cC}$ for every $j$ in $[\csize]$ 
			    \end{algorithmic}
               \algoHead{Data Combination Phase by Committee Member $j$ in iteration $\lab$}
			    \begin{algorithmic}
                    \State Decrypt $(i,\lab,\set{\psaaux_{i,\lab}^{(j)}=(\seed_{i,\lab}^{(j)}\Active{,\hseed_{i,\lab}^{(j)})}})$ from $\set{c_{i,\lab}^{(j)}}_{i\in\cC}$ using $\term{sk}_j$
                    \State Verify $i\in\cC$ and $\lab$ is the current iteration, else abort.
                    \State Compute $\AUX^{(j)}\gets\sum_{i\in\cC} \seed_{i,\lab}^{(j)}$
                    \State Send $\AUX^{(j)}\Active{,\set{\hseed_{i,\lab}^{(j)}}_{i\in\cC}}$ to server
			    \end{algorithmic}
                     \algoHead{Data Aggregation Phase by Server in iteration $\lab$}
			    \begin{algorithmic}
                    \State Let $\set{\AUX_\lab^{(j)}}_{j\in\cS},\set{\bpsact_{i,\lab}}_{i\in\cC}$ be the inputs received by the server with $|\cS|\geq\crec$. 
                    \State Run $\seed_\lab\gets \SS.\lcomb(\set{\AUX_\lab^{(j)}}_{j\in\cS})$
                    \State $\AUX_\lab=\prg.\prge(\seed_\lab)$
                     \For{\Active{$i\in\cC$}}
                        \State \Active{$\hseed_{i,\lab}\gets\SS.\lcomb(\set{\hseed_{i,\lab}^{(j)}}_{j\in\cS})$}
                    \EndFor
                    \State Compute $\boldsymbol{\term{CT}}_\lab\gets\sum_{i\in\cC}\bpsact_{i,\lab}$
                    \State Compute $(X_1,\ldots,X_L)=\boldsymbol{\term{CT}}_\lab-\AUX_\lab\Active{-\sum_{i\in\cC}\hash(\hseed_{i,\lab})}$
                    \State Compute $\bpsaout_\lab\gets\term{Decode}(\pp,(X_1,\ldots,X_L))$
                    \State \Return $\bpsaout_\lab$
			    \end{algorithmic}
		\end{minipage}
		}
	\end{protocolbox}
    \end{minipage}
    }
	\caption{Our Construction of $\caps$ built from $\prg=(\prg.\tpgen,\prg.\prge)$ with key space $\cK=\bbZ_q^\lambda$ (Construction~\ref{cons:shprg-lwr}) and the $(\cthr,\crec,\csize)$-secret sharing scheme $\term{SS}=(\SS.\tpshare,\allowbreak \SS.\recons)$ (Construction~\ref{cons:pssf}).
    Here, $\term{Encode}(\bpsain_{i,\lab}):=\Delta\cdot \bpsain_{i,\lab}+\bfr_i+2^{\kappa_s}$ and $\term{Decode} (X_i):=\lceil{X_i/\Delta}\rceil-1$ with $\Delta=2^{\kappa_s}\cdot n$ and $\bfr_i\getsr \set{0,\ldots,2^{\kappa_s}-1}^L$ for statistical security parameter $\kappa_s$ and $\delta$ is the protocol's dropout parameter. The \Active{lines} are for security against an active server. The second mask $\bmask_{i,\lab}'$, as the output of a $\hash$ is for simulation proof, for an active server. We will model $\hash$ as a programmable random oracle.  } 
	\label{fig:opa-lwr}
 \label{fig:caps-lwr}
 \label{fig:caps-active-lwr}
 \label{cons:caps-lwr}
\end{figure}

\paragraph{Correctness.} First, recall that Construction~\ref{cons:shprg-lwr} is only almost seed homomorphic. In other words, 
\[
\prg(\seed_1+\seed_2)=\prg(\seed_1)+\prg(\seed_2)+e
\]
where $e\in\{0,1\}$. 

For ease of presentation, our correctness proof is for $L=1$, but it extends to any arbitrary $L$. Therefore, while the correctness of Shamir's Secret Sharing scheme guarantees that $\seed_\lab$, computed by the server, is indeed $\sum_{i=1}^{n}\seed_i\bmod q$, there is an error growth in $\AUX_\lab$. 
Specifically, we get that:
\[
\AUX_\lab:=\prg.\prge\left(\seed_\lab=\sum_{i=1}^{n}\seed_{i,\lab}\right)=\sum_{i=1}^{n} \prg.\prge(\seed_{i,\lab})+e'
\]
Or, 
\begin{equation}
\label{eq:aux}
\sum_{i=1}^{n} \prge(\seed_{i,\lab})=\AUX_\lab-e'
\end{equation}
where $e'\in\{0,\ldots,n-1\}$.
We know that $\term{Encode}(\psain_{i,\lab}):=\Delta\cdot \psain_{i,\lab}+r_i+1$ where $\Delta=n\cdot 2^{\kappa_s}$ and $r_i\getsr\set{0,\ldots,2^{\kappa_s}-1}$ which gets:

{\small\begin{align*}
    \sum_{i=1}^n \psact_{i,\lab}&=\left(\sum_{i=1}^n (\psain_{i,\lab}\Delta +r_i) + \prge\left(\seed_{i,\lab}\right)\right)\bmod p\\
    &=\Delta \cdot \sum_{i=1}^n \psain_{i,\lab} + \sum_{i=1}^n r_i +n+\sum_{i=1}^n \prge(\seed_{i,\lab})\bmod p\\
    &=\Delta \cdot \sum_{i=1}^n \psain_{i,\lab} + \sum_{i=1}^n r_i  +n+\AUX_\lab-e'\bmod p\\
   \sum_{i=1}^n \psact_{i,\lab}-\AUX_\lab &=\Delta \cdot \sum_{i=1}^n \psain_{i,\lab} + \sum_{i=1}^n r_i +n - e' \bmod p\\
    \bar{X}_{\lab} &=\Delta \cdot \sum_{i=1}^n \psain_{i,\lab} + \sum_{i=1}^n r_i +n- e'
\end{align*}}
The jump from the second to the third step follows from Equation~\ref{eq:aux}, and to make the last jump in the proof, we require:
\[
0\leq \Delta \cdot \sum_{i=1}^n \psain_{i,\lab} + \sum_{i=1}^n r_i +n  - e' < p
\]
First, $e'\leq n-1$. This guarantees that: $0\leq \Delta \cdot \sum_{i=1}^n \psain_{i,\lab} + \sum_{i=1}^n r_i +n - e'$. Now, if $\sum_{i=1}^n \psain_{i,\lab}<(p-\Delta)/\Delta$ then we also get: 
\[
\Delta \cdot \sum_{i=1}^n \psain_{i,\lab} + \sum_{i=1}^n r_i + n - e' < p
\]
Now, we show the correctness of $\term{Decode}$ algorithm to recover $\sum_{i=1}^{n} \psain_{i,\lab}$ from $\bar{X}_\lab$. 
\begin{itemize}
    \item $\bar{X}_\lab/\Delta= \sum {\psain_{i,\lab}}+ (\sum_{i=1}^n r_i+n-e')/\Delta$
    \item $0\leq \sum_{i=1}^n r_i <\Delta$ and $0\leq e'\leq n-1 \Rightarrow 1/\Delta\leq (\sum_{i=1}^n r_i+n-e')/\Delta\leq 1$
    \item Therefore, $\lceil \bar{X}_\lab /\Delta\rceil=\sum {\psain_{i,\lab}}+1$
\end{itemize}

\begin{restatable}{theorem}{thmopalwr}
    \label{thm:opa-lwr}
    Let $\delta,\eta$ be the dropout and corruption fraction among the universe of clients and let $\delta_C,\eta_C$ be the dropout and corruption fraction among the clients in committee. Let $\kappa$ be the security parameter. Let $N$ be the total universe of clients and $n$ be the number of clients chosen for summation in each iteration while $\csize$ be the number of committee clients chosen to help in each iteration. Let $L$ be the length of the vector. 

    Let $\prg=(\prg.\tpgen,\prg.\prge)$ be the leakage-resilient, seed-homomorphic PRG defined in Construction~\ref{cons:shprg-lwr} and $\ss=(\ss.\tpshare,\allowbreak\ss.\recons)$ be the $(\cthr,\crec,\csize)$-secret sharing scheme such that $\crec>(\csize+\cthr)/2)$ defined in Construction~\ref{cons:pssf}. Further, assuming a PKI (or authenticated channels) associated with an IND-CPA secure public key encryption scheme $\pke$. Then, if $\delta_C+\eta_C<1/3$,
    $\caps_\LWR$ securely realizes the functionality $F_{D,\delta}^\lab(X)$  (defined in Equation~\ref{eq:functionality}) with server malicious security 
    with abort where $X=\set{\bpsain_{i,\lab}}_{i\in[n]-\setminus \kcorr}$ and $\kcorr\subset [N]$ and $|\kcorr|\cap [n]\leq \eta n$, under the \LWR\ assumption.
\end{restatable}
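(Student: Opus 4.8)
The plan is to exhibit a simulator $\tSim$ that, given only the ideal functionality output $F^\lab_{D,\delta}(X)$ and the corrupted parties' keys and public state, produces a transcript indistinguishable from the real execution. I would structure the argument around the three corruption patterns the threat model allows: (i) a corrupted server colluding with $\le\eta n$ clients and $\le\eta_C\csize$ committee members, (ii) an honest server with corrupted clients and committee members, and (iii) subcases where the dropout set is chosen adversarially. The heart of the simulation is: for each honest client, $\tSim$ must produce a ciphertext $\bpsact_{i,\lab}$ and the encrypted shares $c^{(j)}_{i,\lab}$ \emph{before} it learns which honest clients will be counted in $\cC$ and before it receives $F^\lab_{D,\delta}(X)$. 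I would handle this exactly as the overview sketches: $\tSim$ samples $\seed_{i,\lab}$ honestly (so the $\prg$-masked part is simulatable from $\prg$-security), encrypts dummy shares of $\hseed_{i,\lab}$ under the honest committee members' keys (indistinguishable by IND-CPA of $\pke$), and then, once the dropout set $\cC$ is fixed and $F^\lab_{D,\delta}(X)$ is received, uses programmability of the random oracle $\hash$ to retroactively fix $\sum_{i\in\cC_H}\hash(\hseed_{i,\lab})$ so that $\boldsymbol{\term{CT}}_\lab-\AUX_\lab-\sum\hash(\hseed)$ decodes to the correct sum. This is where the second mask $\bmask'_{i,\lab}$ earns its keep.

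**Key steps, in order.** First I would set up the hybrid sequence. $\hyb{0}$ is the real world. $\hyb{1}$ replaces the encrypted shares $c^{(j)}_{i,\lab}$ sent to \emph{honest} committee members by honest clients with encryptions of zero; indistinguishability follows from IND-CPA of $\pke$ over the (polynomially many) honest committee slots, using that the adversary never sees $\term{sk}_j$ for honest $j$. Next, $\hyb{2}$ replaces, for each honest client, $\bmask_{i,\lab}=\prg.\prge(\seed_{i,\lab})$ and the single combined share $\AUX^{(j)}$-contribution by values consistent with a uniformly random $\seed_{i,\lab}$ together with a \emph{leaked} $\seed_{i,\lab}+\seed'$; this is precisely Theorem~\ref{thm:lr-prg-lwr} (leakage resilience of Construction~\ref{cons:shprg-lwr}), invoked once per honest client via a standard hybrid over clients. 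At this point the $\prg$-mask part of each honest ciphertext is uniform in $\bbZ_p^L$ conditioned on the share the server will reconstruct, and the only remaining dependence on the honest inputs is through $\sum_{i\in\cC_H}\bmask'_{i,\lab}$. In $\hyb{3}$ I would invoke the Shamir secret-sharing argument: since $\cthr<\csize/3$ and $\crec>(\csize+\cthr)/2$, the $\le\cthr$ shares held by corrupted committee members are uniformly random and independent of the honest seeds, and the gap $\crec>(\csize+\cthr)/2$ ensures that even a malicious server, which learns shares only from the (at most one) reconstructible set, cannot extract a second evaluation — this is the residual-attack defense, enforced also by the $(\lab,i)$ context baked into the ciphertexts and checked by committee members. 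Finally, in $\hyb{4}=\tSim$, I program $\hash$ on the (freshly sampled, hence unqueried except with negligible probability) points $\hseed_{i,\lab}$ so that the honest contributions sum to $F^\lab_{D,\delta}(X)-\text{(known corrupted contributions)}$ after decoding; correctness of $\term{Decode}$ — already established in the text via the bound $0\le \Delta\sum\psain + \sum r_i + n - e' < p$ and the error $e'\in\{0,\dots,n-1\}$ — guarantees the server's output matches the functionality. For an honest server I would additionally argue the server aborts (outputs $\bot$) exactly when $|\cC|<(1-\delta)n$, matching the $\bot$ branch of $F^\lab_{D,\delta}$, and that the committee's error-correcting reconstruction (valid since $\cthr<\csize/3$) prevents a minority of malicious committee members from forcing an abort or a wrong value.

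**Main obstacle.** The delicate point is Step $\hyb{4}$ combined with the ordering constraint: the simulator commits to all honest ciphertexts and all encrypted shares at the start of the iteration, but the set $\cC$ of counted honest clients and the ideal output are revealed only afterward, and a malicious server may adaptively drop honest clients \emph{after} seeing their messages. The programmable random oracle resolves this, but I must argue carefully that (a) with overwhelming probability the adversary has not already queried $\hash$ at any honest $\hseed_{i,\lab}$ (true since $\hseed_{i,\lab}\getsr\{0,1\}^{\log q}$ is hidden until the shares are combined, and the only way to recover it needs $\ge\crec$ shares, $\ge\crec-\cthr$ of them from honest committee members who only forward \emph{after} $\cC$ is fixed), and (b) the programmed values remain uniform in the appropriate range so the server's view of each individual $\hash(\hseed_{i,\lab})$ is still uniform — only the \emph{sum} over $\cC_H$ is constrained, which the server cannot check componentwise because it never sees the individual $\hseed_{i,\lab}$ for honest $i$, only their shares. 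A secondary subtlety is accumulating the $\prg$-almost-homomorphism error $e'\in\{0,\dots,n-1\}$ and the encoding slack $\sum r_i + n < \Delta$ across all $L$ coordinates without overflow modulo $p$; this is routine given the parameter choices $\Delta=2^{\kappa_s}n$ and the input-size restriction $\sum_i\psain_{i,\lab}<(p-\Delta)/\Delta$, and I would simply remark that the per-coordinate analysis already carried out in the correctness proof applies verbatim to each of the $L$ slots.
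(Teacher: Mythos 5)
Your proposal is essentially the same proof as the paper's, built on the same five ingredients applied in the same overall template: IND-CPA of $\pke$ to erase the encrypted shares to honest committee members, privacy of the $(\cthr,\crec,\csize)$ Shamir scheme to make the $\le\cthr$ corrupt committee members' shares uninformative, leakage resilience of the seed-homomorphic PRG (Theorem~\ref{thm:lr-prg-lwr}) to replace honest masks with uniform randomness while the server still learns $\sum_i\seed_{i,\lab}$, the gap $\crec>(\csize+\cthr)/2$ to argue the server can obtain at most one meaningful reconstruction and hence cannot mount a residual attack across inconsistent sets $\cC^{(j)}$, and programmability of the random oracle on the $\hseed_{i,\lab}$ points to retroactively fix the sum after $\cC$ and the ideal output are known.

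Two stylistic differences from the paper, neither a gap. First, you place the leakage-resilience hop before the explicit secret-sharing hop; the paper does the reverse (SS first, then leakage). Your ordering is fine because $\cthr$ Shamir shares are already information-theoretically independent of the secret, so the leakage-resilience reduction can sample the corrupt committee members' shares and the honest committee members' combined replies consistently without knowing the individual seeds — but the paper's order is cleaner to write down since it makes this sampling step explicit before the computational step. Second, in the final hybrid you program $\hash$ on all honest $\hseed_{i,\lab}$ and constrain only the sum, whereas the paper programs a single distinguished $\hseed_{i^\ast,\lab}$ and leaves the others as genuine oracle outputs; both are valid since only the sum is checkable by the adversary, and your version is a bit more symmetric at the cost of requiring you to argue the joint distribution of all programmed outputs remains uniform subject to the one linear constraint. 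You also correctly identify the central obstacle — honest ciphertexts must be committed before $\cC$ or the ideal output is known — and the correct resolution via the second mask and the ordering-of-disclosure argument for when the adversary could possibly query $\hash$ at an honest $\hseed_{i,\lab}$, which matches the paper's reasoning.
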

\ifdefined\IsPRF{}
The proof is deferred to the full version of the paper found in~\cite[\S G]{EPRINT:KarPol24}. 
\else 
\begin{proof}
We will prove the theorem statement by defining a simulator $\tSim$, through a sequence of hybrids such that the view of the adversary $\cA$ between any two subsequent hybrids are computationally indistinguishable. Let $H=[n]\setminus \kcorr$, the set of honest clients. Further, let $\cC=[n]\setminus D$ where $D$ is the set of dropout clients. 

It is important to note that the server is semi-honest. Therefore, it is expected to compute the set intersection of online clients $\cC$, as expected. In other words, all committee members (and specifically the honest committee members) receive the same $\cC$. This is an important contrast from active adversaries as a corrupt and active server could deviate from expected behavior and send different $\cC^{(j)}$, for different committee members. This could help it glean some information about the honest clients. 


We now sketch the proof below:
\begin{gamedescription}[name=Hybrid,nr=-1]
    \describegame This is the real execution of the protocol where the adversary $\cA$ interacts with the honest parties. 
    \describegame In this hybrid, we will rely on the security of the secret sharing scheme to do two things:
    \begin{itemize}
        \item On the one hand, all corrupt committee members receive a random share from the honest client's seed. Note that there can be only a maximum of $\cthr$ corrupt committee members. By appropriately choosing $\csize$, conditioned on $\eta$, we can guarantee that this holds with overwhelming probability. Then, for an honest client $i$, these are the shares denoted by $\set{\seed_i^{(j)}}_{j\in[\csize]\cap \kcorr}$ and are generated randomly. 
        \item On the other hand, all the honest committee members receive a valid share of the honest client's seeds. However, each honest client $i$ need to generate this from a polynomial $p(X)$ that satisfies $p(0)=\seed_i$, while also ensuring $p(j)=\seed_{i}^{(j)}$ for $j\in[\csize]\cap \kcorr$. Note that this is a polynomial time operation and is similar to the way packed secret sharing is done, where multiple secrets are embedded at distinct points of the polynomial. See Construction~\ref{cons:pssf} for how to build such a polynomial.
    \end{itemize}
    It is clear that by relying on the privacy of the secret sharing scheme, $\Hybrid_0,\Hybrid_1$ are indistinguishable for the adversary. 
    \describegame In this hybrid, we change the definition of the last honest party's ciphertext. WLOG, let $n$ be the last honest party in $\cC$. Then, we will set $\bpsact_n:=\prg.\prge(\seed_\lab)+\bpsain_\tau-\sum_{i\in\cC\cap H} \bpsact_i$. Here, $\bpsain_\tau$ is the sum of the honest clients inputs. Note that we are still in the hybrid where $\tSim$ knows all the inputs. 

    It is clear that $\Hybrid_1,\Hybrid_2$ are identically distributed, by the almost seed-homomorphism property of $\prg$, provided $\tSim$ chooses the inputs for the honest parties such that they sum up to the value in $\bpsain_\tau$. 
    \describegame Again, without loss of generality, let client 1 be the first honest client in $\cC\cap H$. We will modify the way $\bpsact_{1,\lab}$ is generated. We will set it as $\bpsact_{1,\lab}:=\bpsain_{1,\lab}+u$ where $u\getsr\prg.\cY$. 

    $\Hybrid_2,\Hybrid_3$ are indistinguishable, provided Theorem~\ref{thm:lr-prg-lwr} holds. In the reduction, we will implicitly set $\seed_1+\seed_n$ to be the leakage obtained from the Theorem~\ref{thm:lr-prg-lwr}'s challenger. In this hybrid, $\tSim$ still knows all the inputs. If it was a real PRG output, then we can simulate $\Hybrid_2$, while simulating $\Hybrid_3$ in the random case. 
    \describegame In this hybrid, we will replace $\bpsact_{1,\tau}:=u'$ for $u'\getsr\prg.\cY$. It is clear that $\Hybrid_3,\Hybrid_4$ are identically distributed. 
\end{gamedescription}
We have successfully replaced the first honest client's ciphertext with a uniformly random value independent of its input. $\tSim$ will continue to modify this for every non-dropout honest client $i\in\cC\cap H$. This leaves the clients with all but the last honest clients' ciphertext to be independent of the input while leaving the previous honest client's ciphertext to be only a function of the sum of the inputs, which can be obtained by $\tSim$'s query to the functionality. 
$\tSim$ beings its interaction with the functionality. After all the honest clients have provided inputs to the trusted party $\cT$, in Step~\ref{stepb}, $\tSim$ does not instruct any corrupted client to abort but rather set their inputs to be 0. Then in Step~\ref{stepc}, $\tSim$ does not abort the server. Therefore, in Step~\ref{stepd}, $\tSim$ will learn the sum of the honest parties inputs. Denote it as $\bpsain_\lab$, which is also the sum of the inputs of the honest, surviving clients. With this information, $\tSim$ uses the last hybrid to interact with the adversary $\cA$, who's expecting the real world interaction. This will enable $\tSim$ to run $\cA$ internally. This is crucial to ensure that $\tSim$ can get the output of $\cA$, in the real world, which might depend on its view (including the output) of the server. This view will, in turn, depend on the the honest clients' inputs. Since $\tSim$ sets the honest inputs, in this internal execution, to match the sum of inputs in the real world, we can guarantee that the output of $\cA$ in the internal simulation is indistinguishable from $\cA$'s interaction in the real world by the aforementioned hybrid arguments. 
\end{proof}
Our constructions so far have relied on providing security against
a semi-honest server. Note that, as shown in the proof of security for Theorems~\ref{thm:opa-lwr}, we can use the functionality query to obtain the sum of all the honest non-drop out clients, as before. 
     
In the semi-honest setting, it is easy to see that the set $\cC$, with respect to which aggregation is performed, includes \emph{all} the honest, non-dropout clients' inputs. Therefore, querying the functionality, $\tSim$ does indeed get the sum of all the honest clients' inputs that are also included in the summation in the real world. This is imperative to ensure that $\tSim$, when internally invoking $\cA$, can get the output of $\cA$ which should be indistinguishable from $\cA$'s output in the real world. Specifically, this output of $\cA$ (in either the internal invocation or the actual execution) will depend on the view which consists of the output of the server. Therefore, if the output of the server in the real world does not include any of these honest clients' inputs, then the output produced by the internal invocation of $\cA$ can be different from that in the real world. 

Let us look at the case when the server is corrupted. Such a server can mount an attack whereby the real-world execution of the protocol may exclude inputs of some of those honest parties but actually included in the output of the ideal functionality. The proof of malicious security is tricky in this setting. Specifically, a malicious server can drop clients after seeing the honest input. This is an issue in the simulation as the simulator has to generate the masked inputs for the honest clients without knowing which of them would be dropped later.

\begin{proof}[Proof of Security against Active Server.]
    Let us look at the setting when the server is corrupted. Then, we need to look at the information that
the adversary $\cA$ gleans from this, in the real world. As before, let $\kcorr$ denote the corrupted clients.
Then, $\khon_{\term{Cli}}:=[n]\setminus \kcorr$ is the set of honest clients, $\khon_\term{Com}:=[\csize]\setminus\kcorr$ is the set of honest
committee members. Let $\kcorr_{\term{Cli}}:=[n]\cap \kcorr$ denote the corrupted clients and $\kcorr_{\term{Com}}:=[\csize]\cap \kcorr$ denotes the corrupted
committee members. 

\begin{itemize}
    \item It receives every honest client's masked input, i.e., $\set{\bpsact_i}_{i\in H}$
    \item It receives every honest client's shares sent to the corrupted committee members, i.e., $\set{\seed_{i}^{(j)}}_{i\in \khon_{\term{Cli}},j\in \kcorr_{\term{Com}}}$
    \item It receives the combined shares sent by every honest committee member $j$, $\sum_{i\in \khon^\ast} \seed_{i}^{(j)}$ where $\khon^\ast$ is the final set over which the aggregation occurs. 
\end{itemize}
\begin{remark}[Using Signatures]
    \label{rem:signatures}

Figure~\ref{fig:caps-lwr}, as previously described, does not incorporate the use of signatures. Nevertheless, the implementation of signatures is crucial to prevent a specific type of residual attack. Specifically, a malicious server can craft a message that encrypts 0 for all clients except one. Consequently, the committee inadvertently assists in reconstructing the information of the targeted client, which can then be exploited to reveal the inputs. This attack is successful due to the absence of a binding between the client's identity and the message it sends. It is important to note that this attack can succeed even when all clients are acting honestly. However, this issue can be effectively addressed by employing signatures, whereby each client signs each message. This solution presupposes the existence of a Public Key Infrastructure (PKI) setup for all clients, and aggregate signatures can be utilized to minimize communication overhead. Nonetheless, this requirement may be burdensome for certain applications, such as the one required in Willow~\cite{Willow}. If one chooses to forgo signatures, they must be prepared to endure this attack. While incorporating differential privacy, as suggested by Willow~\cite{Willow}, can mitigate data leakage, it cannot fully prevent the attack, as it only offers differentially private guarantees for the clients' inputs.

\end{remark}

\paragraph{Committee Speaks Once.} Our construction currently has the server identify $\cC^{(j)}$, for each $j\in[\csize]$, based on the information it has received from the client. Then, the server forwards the message to the committee member along with its computed intersection. This setting allows the server to selectively forward shares to committee member and also choose different sets for different committee members. We will show that if $\crec>(\csize+\cthr)/2$ where $\crec$ is the reconstruction threshold in the committee and $\cthr$ is the corruption threshold, then the server doing so will receive meaningless information. Formally, we will show that  there does not exist two sets of users $\cC\neq \cC'$ such that the server can reconstruct the shares over these two sets. 

Observe that the server controls $\cthr$ committee members. We require each honest committee member to participate once per iteration. This is easily enforced as the share from the honest client encrypts, along with the share for the honest committee member, the identity of the honest client and the iteration count. Therefore, a server cannot replay the same share, in another iteration. With this guarantee, a malicious server, in order to reconstruct the shares of two distinct sets $\cC,\cC'$, will require the cooperation of at least $\crec-\cthr$ honest users, while there are $\crec-\cthr$ honest users present. We will therefore need $2(\crec-\cthr)>\csize-\cthr$. Or, $\crec>(\csize+\cthr)/2$. This ensures that the server can only effectively reconstruct with respect to a unique set $\cC$ and $\khon^\ast$ is the set of honest users in this set. 
Note that the above inequality holds for $\crec=2*\csize/3, \cthr<\csize/3$. Indeed, prior works such as Bonawitz~\etal~\cite{bonawitz2017practical} and most recently LERNA~\cite{AC:LLPT23} also tolerated only upto a $\csize/3$ corruption threshold. 

While we have shown that there is a unique set $\khon^\ast$ of honest users, $\khon^\ast$ is only revealed after all the honest clients have sent their inputs. Therefore, the simulator, during its internal execution of $\cA$, needs to be able to generate the masked inputs for the honest users and it only knows the sum of \emph{all} the honest clients that have not dropped out. This set may be distinct from $\khon^\ast$. Therefore, we need a way for the simulator to generate masked inputs, independent of the sum of the inputs, and then ensure that the correct sum is computed during reconstruction.

The simulator does the following:
\begin{itemize}
    \item For every honest client that hasn't dropped out, i.e., for all $i\in \khon_{\term{Cli}}$, the simulator does the following:
    \begin{itemize}
        \item Samples $\hseed_{i,\lab}\getsr\{0,1\}^{\log q}$
        \item Samples $\seed_{i,\lab}\getsr\prg.\cK$
        \item It computes $\bmask_{i,\lab}':=\hash(\hseed_{i,\lab}$
        \item Computes $\bmask_{i,\lab}=\prg.\prge(\seed_{i,\lab})$
        \item Sets $\bpsact_{i,\lab}:=\bmask_{i,\lab}+\bmask_{i,\lab}'$
        \item Like shown in proof of Theorem~\ref{thm:opa-lwr}, the shares of $\seed_{i,\lab}^{(j)},\hseed_{i,\lab}^{(j)}$ for corrupt committee members $j\in \kcorr_{\term{Com}}$ are chosen at random. Meanwhile, the shares for the honest committee members are to be sampled in the second phase, with a specific purpose. However, the server still expects an encryption of shares from honest client to honest committee members. Therefore, it simply encrypts some random shares for the honest committee members too and sends it to the server.
        \item It sends to $\cA$, $\bpsact_{i,\lab}$ and $\seed_{i,\lab}^{(j)}$ and $\hseed_{i,\lab}^{(j)}$ for $j\in[\csize]$, which is encrypted appropriately. 
    \end{itemize}
    \item This concludes the client phase of the operation. Then, comes the interaction with the committee. Note that the simulator is also required to simulate the honest committee member $j$. 
    \item The simulator, which has received $\cC^{(j)}$ for each honest committee member $j$ does the check to make sure that there exists at least $\crec-\cthr$ such committee members with the same $\cC^{(j)}$. We will call this client set as $\cC$, while calling the set of these committee members to be $C_{\term{good}}$. 
    Meanwhile, it records those committee members with a different $\cC{(j)}$. We will call this as some set $C_{\term{bad}}$. 
    Looking ahead, for those honest committee members in $C_{\term{bad}}$, the shares of the honest clients that are to be added up is going to be random values. Note that $|C_\term{bad}|\leq \csize-\crec$. 
    \item The simulator now operates in two phases for honest committee member $j\in\khon_{\term{Com}}$. First is the share generation phase for honest clients $i$. It does the following:
    \begin{itemize}
        \item If $j\in C_{\term{bad}}$, then for honest client $i\in \cC^{(j)}\cap \khon_{\term{Cli}}$, set $\seed_{i,\lab}^{(j)},\hseed_{i,\lab}^{(j)}$ to be random values. 
        \item Now, the simulator computes the shares for all honest clients $i$ to $j\in C_{\term{good}}$. These are valid shares of $\seed_{i,\lab},\hseed_{i,\lab}$ subject to the constraint that random values were fixed for those $j\in C_{\term{bad}}$ where $i\in\cC^{(j)}$, and for those $j\in \kcorr_{\term{Com}}$. 
        \item The honest committee member $j$ receives from $\cA$, $\seed_{i,\lab}^{(j)}$ and $\hseed_{i,\lab}^{(j)}$ for $i\in\cC^{(j)}$. Note that the maximum number of prefixed values is $\csize-\crec+\cthr$, and by our constraint $\crec>\csize-\crec+\cthr$ which guarantees that these prefixed values cannot uniquely determine a polynomial of degree $\crec$. 
    \end{itemize}
    \item The second phase, is the combination phase. It responds, as expected, subject to the set $\cC^{(j)}$ that it receives. 
    \item $\tSim$ now queries the functionality. First, it provides $[n]\setminus\cC$ as the set of dropped out clients. Then, it sends for those corrupted clients $\kcorr\cap \cC$, input as 0 to the functionality. In response, it gets
    $\sum_{\cC\cap H} \bpsain_{i,\lab}$. Call this $\bpsain_{H}$.
    \item $\tSim$ now picks $i^\ast\in \khon^\ast$. It programs the random oracle by setting $\hash(\hseed_{i^\ast,\lab})=\bpsain_{H}-\bmask_{i^\ast,\lab}'$.
    \item $\tSim$ continues to respond, on behalf of the honest committee member, as expected. 
    \item Finally, $\cA$ (which controls the server) will make queries to random oracle and it answers as expected. $\tSim$ outputs whatever $\cA$ outputs at the end. 
\end{itemize}
We will now need to show that the above simulation is indistinguishable from the real world execution that $\cA$ expects when it is internally run. The hybrids proceeds as follows:
\begin{gamedescription}[name=Hybrid,nr=-1]
    \describegame This is the real world execution. 
    \describegame In this execution, we replace the shares sent by the honest client $i$ to honest committee member $j$, which are encrypted under $\upkepk_j$ with a random value. Under the semantic security of this encryption scheme, we can guarantee that this is indistinguishable from the previous hybrid. Meanwhile, these honest committee members (which the simulator controls) will receive the shares directly from the simulator. The view of $\cA$, in this hybrid, is indistinguishable from the real world execution, under the semantic security of the encryption scheme. 
    \describegame We will rely on the security of the secret sharing scheme to sample the shares for the honest clients, similar to $\Hybrid_1$ of semi-honest security. For those $j\in\kcorr_{\term{Com}}$, the shares are randomly chosen. 
    Furthermore, for those $j\in C_{\term{bad}}$ also the shares are randomly chosen. Finally, for those $j\in C_{\term{good}}$ it gets a valid share subject to those previously chosen random values. This is similar to $\Hybrid_1$ in the proof of semi-honest security. 
    \describegame In this hybrid, for all those honest clients $i$ that are not in $\cC$, we will set $\bpsact_{i,\lab}=\bmask_{i,\lab}+\bmask_{i,\lab}'$, effectively setting the input to be 0. Observe that the view of $\cA$ remains unchanged as these honest clients inputs were never incorporated in the final sum anyway. Furthermore, if any of these $i\in\cC^{(j)}$ for $j\in C_{\term{bad}}$, the shares from these honest clients $i$ to these $j$ are completely random \emph{and} independent of $\bmask_{i,\lab}$ and $\bmask_{i,\lab}'$. 
    \describegame In this hybrid, we pick an honest surviving client $i^\ast\in\khon^\ast$. It sets the inputs for all $i\neq i^\ast \in \khon^\ast$ to be 0. Then sets $\bpsain_{i^\ast,\lab}$ to be the sum of all the $i\in\khon^\ast$. Call this sum as $\bpsain_{H}$. Observe that the values are still correlated and pseudorandom. 
    \describegame In this hybrid, we will program $\hash(\hseed_{i^\ast,\lab})=\bpsain_H-\bmask_{i^\ast,\lab}'$, while setting $\bpsact_{i^\ast,\lab}=\bmask_{i^\ast,lab}+\bmask_{i^\ast,\lab}'$. Note that because $\hseed_{i^\ast,\lab}$ is chosen uniformly at random from $q$ values where $q$ is a large prime. The probability of collision is negligible. There is only negligible difference in the view of $\cA$.
    \describegame In this hybrid, we will set $\bpsact_{i,\lab}$ for $i\neq i^\ast$ to be some random term in the ciphertext space. Then, we will set $\bpsact_{i^\ast,\lab}=\hash(\hseed_{i^\ast,\lab})-\sum_{i\neq i^\ast}\bpsact_{i,\lab}$.
    
    Note that under the leakage resilience property of the seed-homomorphic PRG, we can conclude that the two hybrids are computationally indistinguishable. 
    \describegame In this hybrid, we will replace $\bpsact_{i,\lab}=\bmask_{i,\lab}+\bmask_{i,\lab}'$ for $i\neq i^\ast$. 
\end{gamedescription}
Observe that this last hybrid is exactly what the simulator produces. This concludes the proof. 
\end{proof}
\fi
Achieving malicious security with abort is outlined in Section~\ref{sub:mal-clients}. 
\paragraph{Sampling the Committee.} Let $C$ be a committee of size $\csize$. 
 Specifically, to ensure $\eta_C+\delta_C<1/3$, we need $\eta_C<1/3 - \delta_C$.  While the server controls $\delta_C$, $\eta_C$ is a random variable. Given the universe of $N$ clients where $\eta$ fraction of them are malicious, randomly sample $\csize$ clients from them. Then, the number of malicious clients $X$ in the committee should follow the tail bound of hypergeometric distribution (Definition~\ref{def:hypergeo}). 
 \[
 \Pr[X \geq (\eta + (1/3 - \delta_C - \eta))\csize] \le e^{-2\cdot \csize (1/3-\delta_C-\eta)^2}\leq 2^{-\gamma}
 \]
 for some $\gamma$. Observe that $X/\csize = \eta_C$. 
Assuming $\eta=\delta_C=0.01$, and committee of size 50, we get that $\Pr[\eta_C\geq 1/3-\delta_C]\leq 5\cdot 10^{-5}$. 
Extending it to the setting where we have a much larger pool of committee members, say $M\cdot \csize$, corresponding with $M$ groups each of size $\csize$. Call the smaller groups $C_1,\ldots,C_{M}$. We want to upper bound the probability that $\cup_{i=1}^{M} \Pr[\delta_{C_i}+\eta_{C_i}>1/3]$. Using Union Bound, we get that $\cup_{i=1}^{M} \Pr[\delta_{C_i}+\eta_{C_i}>1/3]\leq M\cdot 2^{-\gamma}$. 
Let $M=N/\log n$ (thereby assigning $\log n$ clients for each subcomittee of size $\csize$), then $\bigcup_{i=1}^{M} \Pr[\delta_{C_i}+\eta_{C_i}>1/3]\leq  2^{\log n -\log\log n-\gamma}$.

\subsubsection{Construction of $\caps_\LWE$}
\label{sub:caps-lwe}
\begin{construction}
\label{cons:caps-lwe}
    As alluded to before, $\caps_\LWE$ is largely similar to $\caps_\LWR$
with the following differences:
\begin{itemize}
    \item While the seed of $\prg_\LWE$ is $(\seed,\bfe)$, we will only secret share $\seed$. We will argue below that the correctness still holds for a suitable definition of $\chi$. 
    \item The plaintext space for $\caps_\LWE$, like the one for $\caps_\LWR$, is $\mathbb{Z}_p$. Meanwhile the seed space for both $\caps_\LWE$ and $\caps_\LWR$ will be $\mathbb{Z}_q$. Let $\Delta:=\lfloor{q/p}\rfloor$.
    \item We will use Shamir's Secret Sharing over $q$, as before, which is the seed space.
    \item There is a change in the server's last phase. To compute $\AUX_\lab$, the server uses the reconstructed seed $\seed_\lab$, and additionally sets the error component of the PRG seed to be 0. 
    \item $\term{Encode}(\bfx_{i,\lab}):=\Delta\cdot \bfx_{i,\lab}$
    \item $\term{Decode}(X_i):=\lceil X_i/\Delta\rceil -1$
\end{itemize}
\end{construction}
Due to the similarities, we do not present the construction entirely. Meanwhile, we present the proof of correctness and proof of security. 

\paragraph{Correctness.} First, observe that Construction~\ref{cons:shprg-lwr} is only almost seed-homomorphic, i.e.
\[
\prg((\seed_1+\seed_2,\bfe),\lab)=\prg((\seed_1,\bfe_1),\lab)+\prg((\seed_2,\bfe_2),\lab)+\bfe'
\]
for some error $\bfe'$. Indeed, assuming the correctness of Shamir's Secret Sharing, we get that the server computes:
\[
\AUX_\lab:=\prg.\prge\left(\left(\sum_{i=1}^{n} \seed_i,0\right),\lab\right):=\bfA\cdot \sum_{i=1}^{n} \seed_i 
\]
Meanwhile,
{\small\begin{align*}
    \sum_{i=1}^n \psact_{i,\lab}&=\sum_{i=1}^{n} \left(\bfA \seed_i + \bfe_i+ \Delta\cdot \psain_{i,\lab}\right)\\
    &=\bfA\sum_{i=1}^n \seed_i +\sum_{i=1}^n \bfe_i+\sum_{i=1}^{n} \Delta\cdot\psain_{i,\lab}\\
   \text{Let } X_{\lab}&:=\sum_{i=1}^n \psact_{i,\lab} - \AUX_\lab=\sum_{i=1}^n \bfe_i+\sum_{i=1}^{n} \Delta\cdot\psain_{i,\lab}\\
    \frac{X_\lab}{\Delta}&=\frac{\sum_{i=1}^n \bfe_i}{\Delta} + \sum_{i=1}^n \psain_{i,\lab}
\end{align*}}
If $\sum_{i=1}^{n} \bfe_i <\frac{\Delta}{2}$, then 
$ \lceil \frac{X_\lab}{\Delta}\rceil= \sum_{i=1}^n \psain_{i,\lab}+1$.
This shows the correctness of our algorithm. 

\ifdefined\IsPRF{} We get a similar theorem statement and proof as for the \LWR\ construction and is deferred to the full version of the paper~\cite[\S 6.2.2]{EPRINT:KarPol24}\fi
\ifdefined\IsSub{}
\else 
\begin{proof}
    The proof proceeds similar to that of Theorem~\ref{thm:opa-lwr}, through a sequence of hybrids. However, there are a few differences.
    Construction~\ref{cons:shprg-lwe} has the error vector $\bfe\getsr\chi$. However, we will replace $\bfe=\bfe'+\bff'$ where $\bfe',\bff'\getsr\chi'$, the distribution present in Hint-LWE Assumption (see Definition~\ref{def:hint-lwe}). The hybrid descriptions are similar, so we only specify the differences: 
    
    \begin{itemize}
        \item In $\Hybrid_2$ we will set: 
         $$\bpsact_{n,\lab}=\bfA\cdot \seed_\lab-\sum_{i\in(\cC\cap H)\setminus \set{n}} \bpsact_i +\bfe_\lab+\Delta \bpsain_{\lab}$$
         \item We will argue that $\Hybrid_2,\Hybrid_3$ are indistinguishable under Hint-LWE Assumption. We will sketch the reduction now. 
         \begin{itemize}
             \item  Recall that, from the Hint-LWE Challenge, we get $(\bfA,\bfu^\ast,\bfs^\ast:=\bfs+\bfr,\bfe^\ast:=\bfe'+\bff')$. 
             \item As done for the $\LWR$ construction, we will set the $\bfs_1+\bfs_n=\bfs^\ast$, the leakage on key.
             \item  For generating $\bpsact_{1,\lab}$ we will use $\bfu^\ast$, while also sampling a separate $\bff_1\getsr\chi'$. This gives us: $\bpsact_{1,\lab}=\bfu^\ast+\bff_{n,\lab}'+\Delta\cdot \bpsain_{1,\lab}$
             \item We will set $\bpsact_{n,\lab}:=\bfA\cdot \seed_\lab-\sum_{i\in(\cC \cap H)\setminus \set{n}} \bpsact_i+\bfe^\ast+\sum_{i\in(\cC \cap H)\setminus \set{1}} (\bfe_{i,\lab}'+\bff_{i,\lab}')+\Delta\cdot \bpsain_{\lab}$
             \item When $\bfu^\ast$ is the real sample, then $\bpsact_{1,\lab}$ satisfies $\Hybrid_2$'s definition. Meanwhile, the $\bpsact_{n,\lab}$ is also correctly simulated. Similarly, the case when it a random sample. 
         \end{itemize}
    \end{itemize}
    The proof of security against malicious servers also follows the previous theorem. 
    \end{proof}
\fi

\subsection{\CAPS\ without Leakage Simulation}
\label{sec:key-reuse}
$\caps$ requires generating a new key in every iteration. We will now present a construction $\caps'$ such that: 
(a) a client's keys can be reused across multiple iterations, and (b) the server does not get the sum of the keys but rather a function of pseudorandom values, which can be argued as itself being pseudorandom. Our core technique in this work is a distributed, key-homomorphic PRF. We formally present constructions from the $\cl$ framework in \ifdefined\IsPRF{}the full version of the paper~\cite[\S B]{EPRINT:KarPol24}\else Section~\ref{sub:khprf-cons}\fi. Specifically, we defer $\caps_\cl$ to \ifdefined\IsPRF{}the full version of the paper~\cite[\S B.2]{EPRINT:KarPol24} and the \LWR-based construction in ~\cite[\S D]{EPRINT:KarPol24}\else  the appendix in Section~\ref{sub:async}. Similarly, we present \LWR\ based construction in Section~\ref{sec:lwr-cons}\fi. Next, we broadly describe the intuition behind our construction. It is important to emphasize that this is a purely theoretical construction as the committee's performance depends on $L$. However, we document this alternative approach for completeness.

A distributed-key-homomorphic PRF has three specific algorithms: $\tpeval$, which allows the evaluation of the PRF with key $\tpk_i$ at a point, $\tppeval$ allows the PRF to be evaluated at a share of the key $\tpk_i^{(j)}$ to get a partial evaluation, and $\tpcombine$ allows for the combination of partial evaluations to recover the actual evaluation. Key Homomorphism implies that both partial and actual evaluations are key homomorphic. Therefore, in our construction, the clients mask a vector of inputs $\bpsain_{i,\lab}$ by computing a pseudorandom evaluation of $\tprf.\tpeval(\tpk_{i,1},\lab,\ldots,\tprf.\tpeval(\tpk_{i,L},\lab)$. Meanwhile, the auxiliary information send to the committee member will be $\tppeval(\tpk_{i,k}^{(j)},\lab)$
for $k=1,\ldots,L$ and $j\in[\csize]$. Then, the committee members combine 
the auxiliary information. The server then reconstructs on its end using $\tpcombine$. Correctness follows from the key homomorphism. Note that the server only computes $\tprf.\tpeval(\sum_{i=1}^{n} \tpk_{i,k},\lab)$ for $k=1,\ldots,L$. This is a PRF evaluation; therefore, the leakage is pseudorandom and can be easily simulated, replacing it with random. 

Finally, we also strengthen the security of $\caps'$, instantiated with the $\HSM$$_{=\Prime}$ based construction (dubbed $\caps_\cl$), by offering the privacy of honest user's inputs when all the committee are corrupt with each other while the server is honest. This is covered in \ifdefined\IsPRF{}the full version of the paper~\cite[\S G]{EPRINT:KarPol24}\else Section~\ref{sec:stronger}\fi. 

\section{Malicious Security with Abort}
\label{sub:mal-clients}
We detail how to provide security against clients who behave maliciously. We detail how: 
\begin{itemize}
    \item[(a)] a client proves that it has shared its key correctly. This is done so using SCRAPE Test~\cite{ACNS:CasDav17} which minimizes the computation on the part of the client (unlike traditional verifiable secret sharing),
    \item[(b)] a client proves that it has computed the masking correctly,
    \item[(c)] a client combines the previous two properties into one relation that is linear over $\bbZ_q$, and
    \item[(d)] a client proves that its masked input satisfies some particular constraints. 
\end{itemize}
We present instantiations of the above using the proof system from Lyubashevsky~\etal\cite{C:LyuNguPla22}. 
Importantly, $\caps$ with bRSA (described in Section~\ref{sub:brsa}) implies that the client's input is a binary vector. This shall simplify the zero-knowledge proof needed for input validation. 

\paragraph{Proof of Correct Sharing.} At its core, $\caps$ (both the \LWR\ and \LWE\ based version) involves the use of a $\seed_{i,\lab}$ that is secret-shared using Shamir's Secret Sharing. For simplicity, we will focus on the unpacked version where only one secret is shared. However, the proof extends to the packed setting as it only tests the existence of a polynomial. A standard 
approach to catch clients sending inconsistent shares would be for the client to rely on a verifiable secret sharing which would empower each committee member to verify if the share 
received by the committee member is consistent with the commitments to the polynomial that was sent by the client. However, this requires
each committee client to perform $n\cdot \crec$ exponentiations which can be expensive where $\crec$ is the reconstruction threshold. Instead, we take an alternative approach to verifying that the secret and the shares lie on the same polynomial, while ensuring that the bulk of the verification is done by the powerful server.

Our public verifiability will rely on a modification of SCRAPE~\cite{ACNS:CasDav17}. SCRAPE test is done to check if $(\seed_{i,\lab}^{(1)},\ldots,\seed_{i,\lab}^{(\csize)})$ is a Shamir sharing over $\bbZ_q$ of degree $d=\crec-1$ (namely there exists a polynomial $p$ of degree $\le d$ such that $p(i) = s_i$ for $i = 1, \ldots, n$), one can sample $w^{(1)}, \ldots, w^{(\csize)}$ uniformly from the dual code to the Reed-Solomon code formed by the evaluations of polynomials of degree $\le d$, and check if $\sum_{i=1}^{\csize} w^{(i)}\cdot \seed_{i,\lab}^{(i)} = 0$ in $\bbZ_q$. If the test passes, then $\seed_{i,\lab}^{(1)},\ldots,\seed_{i,\lab}^{(\csize)}$ are Shamir Shares, except with probability $1/|\bbZ_q|$. Recall that $q$ is a prime number. 

\begin{lemma}[SCRAPE Test~\cite{EC:CasDav24}]
\label{lem:scrape}
    Let $\bbZ_q$  be a finite field and let $d=\crec-1, \csize$ be parameters of the Shamir's Secret Sharing scheme such that $0\leq d\leq \csize-2$, and inputs $\seed_{i,\lab}^{(1)},\ldots,\seed_{i,\lab}^{(\csize)} \in \bbZ_q$. Define $v_i:=\prod_{j\in[\csize]\setminus \{i\} }(i-j)^{-1}$ and let $m^\ast(X):=\sum_{i=0}^{\csize-d-2} m_i\cdot X^i\gets_\$ \bbZ_q[X]_{\leq \csize-d-2}$ (i.e., a random polynomial over the field of degree at most $\csize-d-2$). Now, let $\bfw:=(v_1\cdot m^\ast (1),\ldots,v_n\cdot m^\ast(\csize))$ and $\bfs:=(\seed_{i,\lab}^{(1)},\ldots,\seed_{i,\lab}^{(\csize)})$. Then, 
\begin{itemize}
    \item If there exists $ p\in\bbZ_q[X]_{\leq d}$ such that $\seed_{i,\lab}^{(i)},=p(i)$ for all $i\in[n]$, then $\langle \bfw,\bfs\rangle=0$. 
    \item Otherwise, $\Pr[\langle \bfw,\bfs\rangle=0]=1/|\bbZ_q|. $
\end{itemize} 
\end{lemma}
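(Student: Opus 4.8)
The plan is to prove this via the standard duality argument for Reed–Solomon codes. First I would recall the setup: the Reed–Solomon code $\mathcal{C} = \{(p(1),\ldots,p(\csize)) : p \in \bbZ_q[X]_{\leq d}\}$ has dimension $d+1$, and its dual code $\mathcal{C}^\perp$ has dimension $\csize - d - 1$. The key classical fact I would invoke (or prove in one line) is an explicit description of $\mathcal{C}^\perp$: a vector $\bfw = (w_1,\ldots,w_\csize)$ lies in $\mathcal{C}^\perp$ if and only if $w_i = v_i \cdot g(i)$ for all $i$, where $v_i = \prod_{j \in [\csize]\setminus\{i\}} (i-j)^{-1}$ and $g$ ranges over polynomials of degree at most $\csize - d - 2$. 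This is because for any polynomial $h$ of degree at most $\csize - 1$, the sum $\sum_{i=1}^{\csize} v_i \, h(i)$ equals the leading coefficient of $h$ in the degree-$(\csize-1)$ term (a Lagrange-interpolation / finite-difference identity); applying this with $h = p \cdot g$ where $\deg p \leq d$ and $\deg g \leq \csize - d - 2$ gives $\deg(pg) \leq \csize - 2 < \csize - 1$, so $\sum_i v_i p(i) g(i) = 0$, establishing $\bfw \perp \mathcal{C}$; a dimension count shows these are all of $\mathcal{C}^\perp$.

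Next I would handle the two cases of the lemma. For the first bullet: if $\seed_{i,\lab}^{(i)} = p(i)$ for some $p$ of degree $\leq d$, then $\bfs \in \mathcal{C}$, and since $\bfw$ (as defined in the lemma with $g = m^\ast$, $\deg m^\ast \leq \csize - d - 2$) lies in $\mathcal{C}^\perp$ by the description above, we get $\langle \bfw, \bfs\rangle = 0$ deterministically. For the second bullet: suppose $\bfs \notin \mathcal{C}$. I want to show that a uniformly random $\bfw$ from the set $\{(v_1 g(1),\ldots,v_\csize g(\csize)) : \deg g \leq \csize - d - 2\}$ — which is exactly a uniformly random element of $\mathcal{C}^\perp$ (the map $g \mapsto (v_i g(i))_i$ is linear and injective since $v_i \neq 0$ and $g$ is determined by its evaluations at $\csize > \csize - d - 2$ points, hence a bijection onto $\mathcal{C}^\perp$) — satisfies $\langle \bfw, \bfs\rangle = 0$ with probability exactly $1/q$. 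This follows because $\bfs \notin \mathcal{C} = (\mathcal{C}^\perp)^\perp$ means the linear functional $\bfw \mapsto \langle \bfw, \bfs \rangle$ is not identically zero on $\mathcal{C}^\perp$, hence it is a nonzero linear functional on the vector space $\mathcal{C}^\perp$ over $\bbZ_q$, so its kernel has index exactly $q$, giving $\Pr[\langle \bfw, \bfs\rangle = 0] = 1/q = 1/|\bbZ_q|$.

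The main obstacle is getting the explicit dual-code description and the injectivity/surjectivity of the map $g \mapsto (v_i g(i))_i$ exactly right, so that "$m^\ast$ uniform of degree $\leq \csize-d-2$" genuinely corresponds to "$\bfw$ uniform in $\mathcal{C}^\perp$" — this is what makes the probability come out to precisely $1/|\bbZ_q|$ rather than merely being bounded by it. The supporting identity $\sum_{i=1}^{\csize} v_i h(i) = [\text{coeff of } X^{\csize-1} \text{ in } h]$ for $\deg h \leq \csize - 1$ is the technical heart; it can be proven by noting both sides are linear in $h$ and checking on the basis $\{1, X, \ldots, X^{\csize-1}\}$, or by observing that $v_i$ are exactly the Lagrange-interpolation denominators at the points $1,\ldots,\csize$. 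Once that identity is in hand, the rest is routine linear algebra over $\bbZ_q$. I would cite \cite{ACNS:CasDav17} / \cite{EC:CasDav24} for the original statement and keep the write-up short, emphasizing the duality and the exact-probability counting argument.
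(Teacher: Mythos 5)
The paper does not actually prove Lemma~\ref{lem:scrape}; it is stated and attributed to \cite{ACNS:CasDav17,EC:CasDav24}, so there is no in-paper proof to compare against. Your plan is a correct and complete proof, and it is essentially the same Reed--Solomon duality argument used in those references. The technical heart is exactly as you identify: the identity $\sum_{i=1}^{\csize} v_i\,h(i)=[X^{\csize-1}]\,h$ for $\deg h\le\csize-1$ (immediate from the leading coefficient of the Lagrange basis polynomial $L_i$ being $v_i$), which together with a dimension count shows $g\mapsto(v_i g(i))_i$ is a linear bijection from $\bbZ_q[X]_{\le\csize-d-2}$ onto $\mathcal C^\perp$. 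With that, uniform $m^\ast$ gives uniform $\bfw\in\mathcal C^\perp$, the first bullet is $\bfs\in\mathcal C\perp\mathcal C^\perp$, and the second is that $\bfs\notin\mathcal C=(\mathcal C^\perp)^\perp$ makes $\bfw\mapsto\langle\bfw,\bfs\rangle$ a nonzero linear functional on $\mathcal C^\perp$, whose kernel has index exactly $q$. One thing worth stating explicitly in a write-up is where the hypothesis $d\le\csize-2$ enters: it guarantees $\dim\mathcal C^\perp\ge1$, so that the functional in the second bullet has a nontrivial domain and the probability is exactly $1/q$ rather than vacuously $1$.
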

Typically, we compute the polynomial $m^\ast(X)$ by using the Fiat-Shamir transform over public values. Then, the vector $\bfw$ is a public vector. One simply has to hide the vector $\bfs$. As a result, we have proved that the shares do lie on the same polynomial. Note that in standard Shamir's Secret Sharing, we set $p(0)=\seed_{i,\lab}$, i.e., the secret. Therefore, we will have to perform inner product over a vector of length $\csize+1$. 

Thus, we get the following relation for the proof of correct-sharing:
\[
\cR_{\term{Sharing}}:=\left\{
\begin{array}{c|c}
    (\bfw;\bfs) &  \langle \bfw,\bfs\rangle =0, \bfw\in \bbZ_q^{\csize+1}, \bfs\in \bbZ_q^{\csize+1}
\end{array}
\right\}
\]
While we will employ a commit-and-prove paradigm, observe that the server can only verify if the commitment satisfies the required proof. A malicious client can send a completely arbitrary share to the committee member, under the hood of decryption. To solve this problem, the server will forward the commitment to the share it has received. This will allow the committee member to receive if the share (and necessary opening information) received by the committee member opens the commitment forwarded by the server. If it fails, the committee member complains and the protocol aborts. This guarantees malicious security with abort. 

\paragraph{Proof of Correct Masking.} $\cR_{\term{Sharing}}$ guarantees that the shares are consistent. In contrast, the commitment opening by the committee member guarantees that the share received matches the share committed to; it is important to ensure that the correct seed is used in the computation of the seed-homomorphic PRG. For example, in $\caps_\LWR$ (for $L=1$), client $i$ computes
\[
\bpsact_{i,\lab} = \bfA\cdot \seed_{i,\lab} + \bfe_{i,\lab}+ \Delta\cdot \bpsain_{i,\lab}
\]
Here, $\bfA$ is public while $\seed_{i,\lab}\getsr\bbZ_q^\lambda,\bfe\getsr\chi^L$ are secret along with $\bpsain_{i,\lab}\in \bbZ_p^{L}$. Therefore, we get:
\[
\cR_{\term{Masking}}:=\left\{
\begin{array}{c|c}
    (\bpsact_{i,\lab}\bfA;\seed_{i,\lab},\bfe_{i,\lab},\bpsain_{i,\lab}) &  \bpsact_{i,\lab} =\bfA\cdot \seed_{i,\lab} + \bfe_{i,\lab}+ \Delta\cdot \bpsain_{i,\lab}, \\
    & \bpsact_{i,\lab} \in \bbZ_q^L,\seed_{i,\lab}\in\bbZ_q^\lambda,\bfe\in\chi^L,\bpsain_{i,\lab}\in \bbZ_p^{L}
\end{array}
\right\}
\]
Note that $\bfe\in\chi^L$ is typically proved by showing that its $L_2$ (or $L_\infty$) norm is bounded. 

\paragraph{Proof of Linear Relations over $\bbZ_q$.} It is critical to observe that the relations $\cR_{\term{Sharing}}$ and $\cR_{\term{Masking}}$ are both linear relations over $\bbZ_q$ (except showing that $\bfe$ is ``small'', i.e., $\Vert \bfe \Vert_2 \leq \beta_e$ for some parameter $\beta$. We will combine to get the following relation:
\[
\cR_{\caps}:=\left\{
\begin{array}{c|c}
 & \bpsact_{i,\lab} =\bfA\cdot \seed_{i,\lab} + \bfe_{i,\lab}+ \Delta\cdot \bpsain_{i,\lab}\\
\left(\bfw,\bfA,\bpsact_{i,\lab};\seed_{i,\lab},\set{\seed_{i,\lab}^{(j)}},\bfe_{i,\lab},\bpsain_{i,\lab}\right)      &  \left\langle{\bfw,\left(\seed_{i,\lab},\seed_{i,\lab}^{(1)},\ldots,\seed_{i,\lab}^{(\csize)}\right)=0}\right \rangle \\
     & \Vert \bfe_{i,\lab} \Vert_2 \leq \beta_e,\bfA\in\bbZ_q^{L\times\lambda},\seed_{i,\lab}\in \bbZ_q^{\lambda},\\
     & \bpsact_{i,\lab}\in\bbZ_p^L
\end{array}
\right\}
\]
One can employ the techniques of Lyubashevsky~\etal~\cite[Figure 5]{C:LyuNguPla22} to prove the linear relations between the various secrets. An alternative approach for $\cR_{\term{Sharing}}$ is described in \ifdefined\IsPRF{}the full version of the paper~\cite[Figure 11]{EPRINT:KarPol24}\else Figure~\ref{fig:mal-clients}\fi. Then, one can use Lyubashevsky~\etal~\cite[\S 4]{C:LyuNguPla22} to prove the $L_2$ norm on $\bfe$. 

\paragraph{Input Validation.} A malicious client can provide incorrect or malicious inputs under encryption. Therefore, adding a validation restriction on the input is also essential. Typically, this is done by proving that the $L_2, L_\infty$ norm is bounded. Therefore, it is clear that one can do techniques similar to bounding $\bfe$, to prove that $\Vert \bpsain_{i,\lab}\Vert$ also satisfies some prior bounds. However, looking ahead, in Section~\ref{sec:byzantine}, we describe how to combine bRSA with $\caps$, where each client's input to server is a vector of 0s and 1s. This proof technique is much simpler. Specifically, when combining $\caps$ with bRSA~\cite{brsa} as described in \ifdefined\IsPRF{}the full version of the paper~\cite[\S F]{EPRINT:KarPol24}\else Section~\ref{sec:byzantine}\fi, we require clients only to prove that their input is a binary vector. 


\ignore{\begin{construction}[Detecting Malicious Client Behavior]
\label{cons:mal-clients}
Let $\cH:\set{0,1}^\ast\to \bbF^{\csize-d-2}$ where $d=\crec-1$ is a hash function modeled as a random oracle. 
Let $\cH':\set{0,1}^\ast\to \bbF$ be the hash function used to generate the challenge. Let $G$ be a group generated by $g$ where Discrete Logarithm
and DDH is hard, and is of prime order $q$, the same as the order of the field for Shamir Secret Sharing. 
    \begin{itemize}\itemsep0em
        \item Client $i$ does the following:
        \begin{itemize}\itemsep0em
            \item Commit to $\seed_{i,\lab}^{(0)}=\seed_{i,\lab},\seed_{i,\lab}^{(1)},\ldots,\seed_{i,\lab}^{(\csize)}$ as $C_i^{(j)}:=g^{\seed_{i,\lab}^{(j)}}$. 
            \item Generate the coefficients to polynomial $\csize-d-2$ by using Fiat-Shamir transform. In other words, get $m_0,\ldots,m_{\csize-d-2}\gets \cH(C_i^{(0)},\ldots,C_i^{(\csize)})$
            \item Compute $v_0,\ldots,v_{\csize}$ as $v_i:=\prod_{j\in\set{0,\ldots,\csize}\setminus i} (i-j)^{-1}$. 
            \item Compute $\bfw:=(v_0\cdot m^\ast(0),\ldots,v_\csize\cdot m^\ast(\csize))$
            \item Compute $\bft:=(t_0,\ldots,t_{\csize})\getsr\bbF$ 
            \item Compute $C_t^{(0)},\ldots,C_t^{(\csize)}$ as commitments to $t_0,\ldots,t_\csize$ where $C_t^{(j)}:=g^{t_j}$
            \item Set $r:=\langle \bft,\bfw\rangle $
            \item Compute $c:=\cH'(C_i^{(0)},\ldots,C_i^{(\csize)},C_t^{(0)},\ldots,C_t^{(\csize)},\bfw,r)$
            \item Compute $z_0,\ldots,z_\csize$ where $z_i:=t_i+c \cdot \seed_{i,\lab}^{(i)}$
            \item Set $\pi_i:=\left(\set{C_i^{(j)}},r,\bfz=(z_0,\ldots,z_\csize),c\right)$
        \end{itemize}
        \item Server does the following:
        \begin{itemize}\itemsep0em
            \item Upon receiving $\pi_i$ from client $i$, the server parses $\pi_i:=(\set{C_i^{(j)}},r,\bfz=(z_0,\ldots,z_\csize),c$
            \item It computes $\bfw$ (similar to how the client does it). It then computes $\langle \bfw,\bfz\rangle$. It checks to see if this equals the value $r$ sent by the client $i$. 
            \item For each $j=0,\ldots,\csize$, compute $C_t^{(j)}=g^{z_j}\cdot \left(C_i^{(j)}\right)^{-c}$
            \item Compute $c'=\cH'(C_i^{(0)},\ldots,C_i^{(\csize)},C_t^{(0)},\ldots,C_t^{(\csize)},\bfw,r)$
            \item Accept input from client $i$ if $c==c'$. 
            \item The server sends $C_i^{(j)}$ to committee member $j$, along with the encrypted shares for committee member $j$. 
        \end{itemize}
        \item Committee member $j$ does the following:
        \begin{itemize}\itemsep0em
            \item Decrypt and recover the share $\seed_{i,\lab}^{(j)}$. Verify that this matches the commitment forwarded by the server. 
        \end{itemize}
    \end{itemize}
\end{construction}}

\subsection{Heterogeneity and Poisoning Attacks}
\label{sub:byzantine}
In Section~\ref{sec:byzantine}, we present two approaches, FedOpt~\cite{fedopt} and Byzantine-Robust Stochastic Aggregation~\cite{brsa}. The former is designed to handle heterogeneity in data distribution, while the latter is designed to
handle poisoning attacks where some clients behave arbitrarily. We further describe how to combine $\caps$ with these two approaches, bringing these techniques to the much-needed privacy-preserving alternatives.
\section{Experiments}
\label{sec:exp}
\begin{figure*}[!tb]
\centering
\resizebox{\textwidth}{!}{
\begin{tikzpicture}

\begin{axis}[
    name=ax1,
    xlabel={\textbf{Number of Clients}},
    ylabel={\textbf{Total Running Time of Server (s)}},
    xmin=100, xmax=1000,
    ymin=0, ymax=25, 
    xtick={100,200,300,400,500,1000},
    ytick={5,10,15,20,25},
    legend style={at={(1,-0.3)}, anchor=north, legend columns=3,rounded corners=2pt, draw=gray!50, font=\small},
    legend cell align={left},
    tick label style={font=\footnotesize},
    label style={font=\bfseries},
    grid=major,
    grid style={dashed,gray!30},
    axis line style={thick,gray!70},
]

\addplot[smooth, thick, mark=*, mark options={scale=1.2,fill=white}, brown] plot coordinates {
(100,2.22210754)
(200,20.18119782)
(300,71.268430033)
(400,236.850921849)
(500,350.692884068)
};
\addlegendentry{\cite{bonawitz2017practical}, $\delta=0$};

\addplot[smooth, thick, mark=square, mark options={scale=1.2,fill=white}, brown] plot coordinates{
(100,2.189049914)
(200,17.626738704)
(300,60.535058121)
(400,143.704301151)
(500,287.997793231)
};
\addlegendentry{\cite{bonawitz2017practical}, $\delta=0.1$};

\addplot[smooth, thick, mark=*, mark options={scale=1.2,fill=white}, red] plot coordinates{
(100,0.58041229)
(200,1.258894755)
(300,2.029353019)
(400,2.773672778)
(500,3.67401742)
(1000,10.104327259)
};
\addlegendentry{\cite{bell2020secure}, $\delta=0, \csize=50$};

\addplot[smooth, thick, mark=square, mark options={scale=1.2,fill=white}, red] plot coordinates{
(100,1.881928585)
(200,4.18351766)
(300,6.319696365)
(400,8.667374274)
(500,10.981067395)
(1000,28.76563569)
};
\addlegendentry{\cite{bell2020secure}, $\delta=0.2, \csize=100$};

\addplot[smooth, thick, mark=*, mark options={scale=1.2,fill=white}, purple] plot coordinates{
(100,1.54882326)
(200,3.08866839)
(300,4.63147151666667)
(400,6.67002698)
(500,8.16298726)
(1000,19.047302576)
};
\addlegendentry{\cite{GPSBB22}, $\delta=0$};

\addplot[smooth, thick, mark=square, mark options={scale=1.2,fill=white}, purple] plot coordinates{
(100,1.50020380666667)
(200,3.09948501777778)
(300,4.7742555337037)
(400,6.42861823555556)
(500,8.31829591688889)
(1000,17.8900020293333)
};
\addlegendentry{\cite{GPSBB22}, $\delta=0.1$};

\addplot[smooth, thick, mark=o, mark options={scale=1.2,fill=white}, purple] plot coordinates{
(100,1.65940127)
(200,3.22059227)
(300,5.03315891666667)
(400,6.339051285)
(500,8.063034418)
(1000,16.03913433)
};
\addlegendentry{\cite{GPSBB22}, $\delta=0.1, \csize=50$};

\addplot[smooth, thick, mark=square, mark options={scale=1.2,fill=white}, magenta] plot coordinates{
(128,5.455737)
(256,9.713493)
(384,6.8808962)
(512,9.8518818)
(1000,16.5750486)
};
\addlegendentry{\cite{SP:MWAPR23}, $\csize=50,\delta=0$};

\addplot[smooth, thick, mark=ball, mark options={scale=1.2,fill=white}, blue] plot coordinates{
(100,0.30537408)
(200,0.307561519999999)
(300,0.05513964)
(400,0.05163816)
(500,0.06171612)
(1000,0.309924519999999)
};
\addlegendentry{$\caps_\LWR$, $\delta=0,\csize=50$};

\end{axis}

\begin{axis}[
    at={(ax1.south east)},
    xshift=2cm,
    xlabel={\textbf{Number of Clients}},
    ylabel={\textbf{Total Running Time of Client (ms)}},
    xmin=100, xmax=1000,
    ymin=0, ymax=500,
    xtick={100,200,300,400,500,1000},
    ytick={50,100,200,300,400,500},
    tick label style={font=\footnotesize},
    label style={font=\bfseries},
    grid=major,
    grid style={dashed,gray!30},
    axis line style={thick,gray!70},
]

\addplot[smooth, thick, mark=*, mark options={scale=1.2,fill=white}, brown] plot coordinates {
(100,78.367)
(200,183.80)
(300,358.80)
(400,646.56)
(500,1036.52)
(1000,5714.27)
};

\addplot[smooth, thick, mark=square, mark options={scale=1.2,fill=white}, brown] plot coordinates{
(100,80.926411)
(200,190.469919)
(300,377.026305)
(400,648.181712)
(500,1024.863026)
(1000,4473.49344044444)
};

\addplot[smooth, thick, mark=*, mark options={scale=1.2,fill=white}, red] plot coordinates{
(100,48.79917)
(200,55.102885)
(300,64.8912666666666)
(400,73.5988075)
(500,78.62315)
(1000,146.220567)
};

\addplot[smooth, thick, mark=square, mark options={scale=1.2,fill=white}, red] plot coordinates{
(100,78.83624)
(200,98.0515025)
(300,116.09904)
(400,135.137100625)
(500,154.083074499999)
(1000,247.31732725)
};

\addplot[smooth, thick, mark=*, mark options={scale=1.2,fill=white}, purple] plot coordinates{
(100,58.5905099999999)
(200,114.070695)
(300,203.933726666666)
(400,341.6750025)
(500,532.340492)
(1000,2119.96241)
};

\addplot[smooth, thick, mark=square, mark options={scale=1.2,fill=white}, purple] plot coordinates{
(100,58.981552222222)
(200,112.902698888889)
(300,203.769079259259)
(400,345.390370277778)
(500,529.657483111111)
(1000,2115.418889)
};

\addplot[smooth, thick, mark=o, mark options={scale=1.2,fill=white}, purple] plot coordinates{
(100,78.64522)
(200,88.21435)
(300,89.4430133333332)
(400,89.46912)
(500,90.193542)
(1000,90.73343)
};

\addplot[smooth, thick, mark=square, mark options={scale=1.2,fill=white}, magenta] plot coordinates{
(128,130.102799)
(256,275.559182)
(384,319.072848)
(512,425.929342)
(1000,658.105783)
};

\addplot[smooth, thick, mark=ball, mark options={scale=1.2,fill=white}, blue] plot coordinates{
(100,38.37018)
(200,39.400755)
(300,28.44341)
(400,28.285335)
(500,28.64661)
(1000,38.802717)
};

\end{axis}

\end{tikzpicture}}
\caption{We plot the total client and server running time as a function of client count across various protocols. Our running time includes both computation and communication time. Here, $\delta$ indicates the offline rate and $\csize$ is used to parametrize the number of parties each client has to communicate with (i.e., number of neighbors or committee size). In all our experiments on MicroSecAgg~\cite{GPSBB22}, the clients' maximum input is $10^4$. The running time of the committee members are added to the Client's running time in $\caps_\LWR$.} 
\label{fig:server}
\label{fig:client}
\end{figure*}

In this section, we benchmark $\caps_\LWR$, the construction based on leakage resilient seed-homomorphic PRG, combined with a secret-sharing scheme. \ifdefined\IsFull 
\else Due to space constraints some of the experiments are deferred to \ifdefined\IsPRF the full version of the paper~\cite[\S 8]{EPRINT:KarPol24}.\else Section~\ref{sec:other-exp}.
This includes benchmarking the computation time of various protocols (Figure~\ref{fig:client-comp}), the performance of multiple protocols for $L=1000$ (Figure~\ref{fig:client-1000}), and $\caps'$ instantiated with the \HSM$_{=\Prime}$ called $\caps_\cl$ (Table~\ref{tab:computation_times}).\fi \fi
We run our experiments on an Apple M1 Pro CPU with 16 GB of unified memory without multi-threading or related parallelization. We use the ABIDES simulation~\cite{Abides} to simulate real-world network connections. ABIDES supports a latency model, represented as a base delay and jitter, which controls the number of messages arriving within a specified time. Our base delay is set with the range from 21 microseconds to 100 microseconds), and we use the default parameters for the jitter. This delay is set to correspond to devices locally situated. This framework was used to measure the performance of other prior work, including \cite{SP:MWAPR23,GPSBB22}. More details on the framework can be found in \cite[\S G]{GPSBB22}. 

\paragraph{Parameter Choices     for $\caps_\LWR$.} $\caps_\LWR$ is parametrized by $\lambda,L,q,p$ (See Definition~\ref{def:lwr}). Given an LWR instance  \(\left(\mathbf{A},\mathbf{t} = \left\lfloor \frac{p}{q} \cdot \mathbf{A} \mathbf{s} \right\rfloor\right) \in \bbZ_q^{L\cdot \lambda}\times \mathbb{Z}_p^L
    \), one can convert it to an LWE instance as: $(\mathbf{A},q/p \cdot \mathbf{t} = \mathbf{A}\mathbf{s}+ \mathbf{e})$, where $\mathbf{e}=q/p\cdot \mathbf{e}'$ with $\mathbf{e}'=\mathbf{t}-p/q \mathbf{A\cdot s}$. Now, $\mathbf{e}'$ is uniformly distributed over $\left(\frac{-1}{2},\frac{1}{2}\right)$. Thus, we get that the variance of the error in the LWE samples ($\mathbf{e}$) is $\frac{q^2}{12p^2}$. This analysis is similar to the work of Okada~\etal~\cite[\S 5.5]{okada}. In other words, the Gaussian Error distribution with standard deviation $\alpha\cdot q$ has error rate $\alpha\approx 1/p$ (as used by \cite{PETS:ErnKoc21}). Using the LWE estimator, we set $\lambda:=2048$. $q$ to match the field used for Shamir's Secret Sharing, which is a 128-bit prime, and set $p=2^{53}$. The hardness estimated is $2^{129}$, i.e., we get the security of 129 bits. 





We use Packed Secret Sharing to benchmark the server and client computation cost. One can pack $\rho$ secrets into a single polynomial using packed secret sharing. However, an implicit trade-off exists as the total number of parties $\csize \geq 3/2\cdot \rho$. An increase in $\rho$ implies an increase in $\csize$. Here, we set $\rho=16$, corresponding to requiring $\lambda/16$ polynomials, i.e., each committee member receives 64 shares. Consequently, an $\csize=50$ satisfies the requirement. We then set the reconstruction threshold $\crec=34$, while Packed Secret Sharing guarantees a tolerance for a corruption threshold of $\csize-\rho$. However, we need to ensure this among the committee members, even in the face of dropouts. Specifically, if $\delta_C,\eta_C$ are dropout and corruption rates among the committee members, we need to compute the $\Pr[\delta_C+\eta_C>1/3]$. Using Hypergeometric Distribution (Definition~\ref{def:hypergeo}) and setting $\delta=0.01$, we get that $\csize=50$ ensures $\Pr[\delta_C+\eta_C>1/3]\leq 5\cdot 10^{-5}$. Note that this construction achieves committee performance independent of the vector length and is preferred. This also satisfies the constraint for reconstruction with error correction which requires that $2\csize\cdot \eta_C < (1-\delta_C) \csize - \rho+1$.

\paragraph{Microbenchmarking Secure Aggregation.} Our first series of experiments is to run $\caps_\LWR$ to build a secure aggregation protocol for $L=1$. We also compare with existing work, including \cite{bonawitz2017practical,bell2020secure,GPSBB22,SP:MWAPR23}\footnote{We do not benchmark LERNA~\cite{AC:LLPT23} given that it is not suitable for the FL setting.}. We vary the offline rates ($\delta$) and the ability to group clients ($\csize$), along with increasing the number of clients to study the performance of related work. It is important to note that $\caps$'s one-shot communication (client and committee member) implies that as $\delta$ increases, the client's performance remains the same while the server and the committee's performance improves as fewer clients participate in the aggregation. The results are plotted in Figure~\ref{fig:client}.


\paragraph{Performance of $\caps_\LWR$.} The running time performance of \(\caps_\LWR\) is notably superior to all other protocols tested across server and client computation times.
\begin{itemize}\itemsep0em
    \item \textbf{Server Total Time}: At \(1000\) clients, \(\caps_\LWR\) shows the most significant improvement, with a server computation time of just \(0.31\) seconds, drastically outperforming other protocols. For example, \cite{bonawitz2017practical} with \(\delta=0\) takes approximately \(71.3\) seconds, while \cite{bell2020secure} with \(\delta=0\) and \(\csize=50\) reaches \(10.1\) seconds. Even protocols from \cite{GPSBB22} with various settings, such as \(\delta=0\) and \(\delta=0.1\), take between \(16\) and \(19\) seconds. These results make \(\caps_\LWR\) an outstanding choice for minimizing server computation time.
    \item \textbf{Client Total Time}: Similarly, for client computation time at \(1000\) clients, \(\caps_\LWR\) again stands out with only \(38.8\) milliseconds, which is significantly faster than the other protocols. For instance, the protocol in \cite{bonawitz2017practical} with \(\delta=0\) reaches over \(5700\) milliseconds, while \cite{bell2020secure} with \(\delta=0, \csize=50\) results in \(146.2\) milliseconds. In contrast, \cite{GPSBB22}, even at lower offline rates (\(\delta=0\) and \(\delta=0.1\)), shows client computation times exceeding \(2000\) milliseconds, making \(\caps_\LWR\)'s performance at \(1000\) clients a clear advantage.
\end{itemize}
The remarkable speed of \(\caps_\LWR\) in server and client performance, especially as client count grows, emphasizes its efficiency and scalability in real-world applications. It significantly reduces the computational burden and communication overhead compared to existing protocols, making it an attractive solution for privacy-preserving aggregation tasks. 

\paragraph{Communication Cost of $\caps_\LWR$.} Let $k$ be the number of elements being shared. With naive secret sharing, this would be 1024. Instead, when we pack into 64 different polynomials, we get $k=64$. Thus, $\caps_\LWR$ has a communication cost, in terms of field elements, as follows: 
\begin{itemize}
    \item Total Sent/Received per Client: $L+k\cdot \csize/\bot$ field elements
    \item Total Sent/Received by Server: $nL+\csize \cdot n \cdot k/n+k$ field elements
    \item Total Sent/Received per Committee Member: $\log n\cdot k+k/\log n+ k$ field elements,
\end{itemize}

\ifdefined\IsFull
\else \paragraph{Benchmarking FL Models.} Due to space constraints, we refer readers to Section~\ref{sub:exp-ml}, which details the accuracy experiments conducted on various classifiers and datasets. Meanwhile, we plot the performance of $\caps_\LWR$ with respect to learning in the clear (i.e., without cryptographic protections) in Figure~\ref{fig:mcc}. Our results show that the accuracy of $\caps_\LWR$ is statistically indistinguishable from that of the classifier when learning in the clear.
\fi
\ifdefined\IsFull
\ifdefined\IsFull
\else 
\section{More Experimental Results}
\label{sec:other-exp}
In this section, we look at other experimental benchmarking. 
\fi
\begin{figure*}[!tb]
\centering
\resizebox{\columnwidth}{!}{\begin{tikzpicture}
{\begin{axis}[
name=ax1,
    xlabel={Number of Clients},
    ylabel={Server Computation Time (s)},
    xmin=0, xmax=1000,
    ymin=0, ymax=110,
    xtick={100,200,300,400,500,1000},
    ytick={10,20,30,40,50,60,70,80,90,100,110},
            ]
\addplot[smooth,mark=*,brown] plot coordinates {
(100,2.204353)
(200,20.166329)
(300,71.249484)
(400,236.832975)
(500,350.675013)
};
\addplot[dashed,mark=x,brown] plot coordinates{
(100,1.91435)
(200,14.64972)
(300,49.959081)
(400,115.481861)
(500,228.532808)
}
;
\addplot[smooth,mark=o,red] plot coordinates{
(100,2.252246)
(200,5.003404)
(300,7.668508)
(400,22.312608)
(500,18.527716)
(1000,28.749468)
};
\addplot[dashed, mark=x,red] plot coordinates{
(100,1.864369)
(200,4.167624)
(300,6.303873)
(400,8.650845)
(500,10.964724)
(1000,23.040363)
};
\addplot[smooth, mark=*, purple] plot coordinates{
(100,1.537693)
(200,3.079383)
(300,4.621459)
(400,6.659946)
(500,8.152787)
(1000,19.037787)
};
\addplot[smooth, mark=o, purple] plot coordinates{
(100,1.490812)
(200,3.089455)
(300,4.765061)
(400,6.419494)
(500,8.308121)
(1000,17.880413)
};
\addplot[dashed, mark=o, purple] plot coordinates{
(100,1.574981)
(200,3.090684)
(300,4.830107)
(400,6.396579)
(500,7.617873)
(1000,15.336461)
};
\addplot[smooth, mark=square, magenta] plot coordinates{
(128,5.455737)
(256,9.713493)
(384,6.8808962)
(512,9.8518818)
(1000,16.5750486)
};

\addplot[double, mark=square, blue] plot coordinates{
(100,0.026)
(200,0.027)
(300,0.026)
(400,0.027)
(500,0.027)
(1000,0.029)
};
\end{axis}}
\begin{axis}[
at={(ax1.south east)},
        xshift=1.5cm,
    xlabel={Number of Clients},
    ylabel={Client Computation Time (ms)},
    xmin=0, xmax=1000,
    ymin=0, ymax=1100,
    xtick={100,200,300,400,500,1000},
    ytick={100,200,300,400,500,600,700,800,900,1000,1100},
    legend pos=outer north east,
    legend style={at={(0.5,-0.1)},
	anchor=north,legend columns=-1}
            ]
\addplot[smooth,mark=*,brown] plot coordinates {
(100,60.411)
(200,166.698585)
(300,343.14053)
(400,685.875)
(500,1019.111)
};
\addplot[dashed,mark=x,brown] plot coordinates{
(100,62.1171725)
(200,174.99744875)
(300,358.674363333332)
(400,633.701521875)
(500,980.982129)
}
;

\addplot[smooth,mark=o,red] plot coordinates{
(100,63.1444399999999)
(200,80.846)
(300,96.8828566666667)
(400,131.015685)
(500,183.513902)
(1000,230.217405999999)
};
\addplot[dashed, mark=x,red] plot coordinates{
(100,61.6356)
(200,81.28292875)
(300,100.2608125)
(400,138.259909375)
(500,197.617458499999)
(1000,250.71284075)
};
\addplot[smooth, mark=*, purple] plot coordinates{
(100,49.0913799999999)
(200,103.37621)
(300,193.365406666666)
(400,331.77315)
(500,522.55244)
(1000,2110.139399)
};
\addplot[smooth, mark=o, purple] plot coordinates{
(100,49.0786344444442)
(200,102.635808333333)
(300,193.940535925926)
(400,335.053793888889)
(500,519.855279777778)
(1000,2105.55258588889)
};
\addplot[dashed, mark=o, purple] plot coordinates{
(100,119.792054444444)
(200,141.352514444444)
(300,162.34831962963)
(400,163.938735277778)
(500,164.763129777777)
(1000,165.916322222222)
};
\addplot[smooth, mark=square, magenta] plot coordinates{
(128,130.102799)
(256,275.559182)
(384,319.072848)
(512,425.929342)
(1000,658.105783)
};

\addplot[double, mark=square, blue] plot coordinates{
(100,27.55)
(200,28.518)
(300,27.242)
(400,27.243)
(500,27.2229)
(1000,27.925)
};
\end{axis}
\end{tikzpicture}}\hfill
\resizebox{0.2\textwidth}{!}{\begin{tikzpicture}
    \begin{customlegend}[legend entries={{\cite{bonawitz2017practical} $,\eta=0$},{\cite{bonawitz2017practical}, $\eta=0.2$},{\cite{bell2020secure}, $\eta=0,\csize=100$},{\cite{bell2020secure}, $\eta=0.2, \csize= 100$}}]
    \addlegendimage{smooth,mark=*,brown}
    \addlegendimage{dashed,mark=x,brown}
    \addlegendimage{smooth,mark=o,red}
    \end{customlegend}
\end{tikzpicture}}
\resizebox{0.2\textwidth}{!}{\begin{tikzpicture}
    \begin{customlegend}[legend entries={{\cite{bell2020secure}, $\eta=0.2, \csize= 100$},{\cite{GPSBB22}, $\eta=0$},{\cite{GPSBB22}, $\eta=0.1$}}]
    \addlegendimage{dashed, mark=x,red}
    \addlegendimage{smooth, mark=*, purple}
    \addlegendimage{smooth, mark=o, purple}
    \end{customlegend}
\end{tikzpicture}}
\resizebox{0.2\textwidth}{!}{\begin{tikzpicture}
    \begin{customlegend}[legend entries={{\cite{GPSBB22}, $\eta=0.1,\csize=100$},{\cite{SP:MWAPR23}, $\eta=0$},{$\caps_\LWR$, $\delta=0,\csize=50$}}]
    \addlegendimage{dashed, mark=o, purple}
    \addlegendimage{smooth, mark=square, magenta}
    \addlegendimage{double, mark=star, blue}
    \addlegendimage{double, mark=square, blue}
    \end{customlegend}
\end{tikzpicture}}
\caption{Client and Server Computation Time as a function of client count across different algorithms.}
\label{fig:server-comp}
\label{fig:client-comp}
\end{figure*}
\subsection{Computation Time}
\label{sub:exp-comp}
We also micro-benchmark in the same environment and setting as before to measure the cost of server and client computation alone. This is shown in Figure~\ref{fig:client-comp}. This does not account for the time of communication. As can be seen, $\caps_\LWR$ still significantly outperforms prior work. 

\subsection{Running Time vs length of vector $L$}
We also benchmark the performance of the protocols, with dropout rate $\delta=0$ for $L=1000$. This is covered in Figure~\ref{fig:client-1000}. The server performance of $\caps_\LWR$ significantly outperforms existing protocol. Meanwhile, we observe that Flamingo has a better client performance for smaller choices of client count, while $\caps_\LWR$ begins performing better for a larger number of clients. It is noted that $\caps_\LWR$'s unique one-shot design implies that the client performance is independent of the number of clients. Owing to the number of exponentiation that is proportional to $L$, MicroSecAgg~\cite{GPSBB22} performs degrades with $L$, and our benchmarks indicated that the performance bounds were outside the range of this graph and are omitted from our plot. 

\begin{figure*}[!tb]
\centering
\resizebox{\columnwidth}{!}{
\begin{tikzpicture}

\begin{axis}[
    name=ax1,
    xlabel={\textbf{Number of Clients}},
    ylabel={\textbf{Total Running Time of Server (s)}},
    xmin=100, xmax=1000,
    ymin=0, ymax=250, 
    xtick={100,200,300,400,500,1000},
    ytick={50,100,150,200,250},
    legend style={at={(1,-0.3)}, anchor=north, legend columns=4,rounded corners=2pt, draw=gray!50, font=\small},
    legend cell align={left},
    tick label style={font=\footnotesize},
    label style={font=\bfseries},
    grid=major,
    grid style={dashed,gray!30},
    axis line style={thick,gray!70},
]

\addplot[thick, mark=*, mark options={scale=1.2,fill=white}, brown] plot coordinates {
(100,17.3739416500001)
(200,59.4470552499989)
(300,150.510430759989)
(400,295.376346170023)
(500,515.422273549964)
};
\addlegendentry{\cite{bonawitz2017practical}, $\delta=0$};

\addplot[thick, mark=*, mark options={scale=1.2,fill=white}, red] plot coordinates{
(100,8.47549649999998)
(200,16.99220663)
(300,24.3395425399999)
(400,42.30421542)
(500,45.7882414599998)
(1000,3428.44528893008)
};
\addlegendentry{\cite{bell2020secure}, $\delta=0, \csize=50$};


\addplot[thick, mark=square, mark options={scale=1.2,fill=white}, magenta] plot coordinates{
(128,2.746863666)
(256,5.070136332)
(512,8.468965)
(1000,16.7576)
};
\addlegendentry{\cite{SP:MWAPR23}, $\csize=50,\delta=0$};

\addplot[thick, mark=ball, mark options={scale=1.2,fill=white}, blue] plot coordinates{
(100,0.30537408)
(200,0.307561519999999)
(300,0.05513964)
(400,0.05163816)
(500,0.06171612)
(1000,0.309924519999999)
};
\addlegendentry{$\caps_\LWR$, $\delta=0,\csize=50$};

\end{axis}

\begin{axis}[
    at={(ax1.south east)},
    xshift=2cm,
    xlabel={\textbf{Number of Clients}},
    ylabel={\textbf{Total Running Time of Client (s)}},
    xmin=100, xmax=1000,
    ymin=0, ymax=2,
    xtick={100,200,300,400,500,1000},
    ytick={0.2,0.4,0.6,0.8,1,1.2,1.4,1.6,1.8,2},
    tick label style={font=\footnotesize},
    label style={font=\bfseries},
    grid=major,
    grid style={dashed,gray!30},
    axis line style={thick,gray!70},
]

\addplot[thick, mark=*, mark options={scale=1.2,fill=white}, brown] plot coordinates {
(100,3.58670212)
(200,6.99399983)
(300,10.93254365)
(400,14.93650793)
(500,18.96236001)
};

\addplot[thick, mark=*, mark options={scale=1.2,fill=white}, red] plot coordinates{
(100,1.83269845)
(200,1.82649965)
(300,1.83468779)
(400,1.89630629)
(500,1.84857408)
(1000,37.03698114)
};


\addplot[thick, mark=square, mark options={scale=1.2,fill=white}, magenta] plot coordinates{
(128,0.159602937)
(256,0.276509861)
(512,0.39406)
(1000,0.732817)
};

\addplot[thick, mark=ball, mark options={scale=1.2,fill=white}, blue] plot coordinates{
(100,0.38721672)
(200,0.39382147)
(300,0.37564388)
(400,0.38111999)
(500,0.38155926)
(1000,0.39875789)
};

\end{axis}

\end{tikzpicture}}
\caption{We plot the total client and server running time as a function of client count across various protocols. Our running time includes both computation and communication time. Here, $\delta=0$ indicates the offline rate and $\csize$ is used to parametrize the number of parties each client has to communicate with (i.e., number of neighbors or committee size) and $L=1000$. In all our experiments on MicroSecAgg~\cite{GPSBB22}, the clients' maximum input is $10^4$. The running time of the committee members are added to the Client's running time in $\caps_\LWR$.}
\label{fig:server-1000}
\label{fig:client-1000}
\end{figure*}

\subsection{Performance of $\caps_\cl$}
\label{sub:caps-cl-perf}
$\caps'$ is an alternate framework for instantiating \CAPS\  from threshold key-homomorphic PRF unlocking constructions based on additional assumptions, including class groups. 
While these contributions are practically more expensive, we include benchmarks for completeness. We use the Threshold PRF based on Class Groups (See Section~\ref{cons:caps-cl}). We rely on the BICYCL library~\cite{JC:BCIL22} and use pybind to convert the C++ code to Python. Our implementation will assume that the plaintext space is $\bbZ_p$ for a prime $p$. Our experiments will take that $\csize=50$. 

It is to be noted that the client's performance scales significantly, owing to 50 group exponentiations. However, the server performance is less than 2 seconds, outperforming several existing protocols' computations. 
\begin{table}[!tb]
\centering
\caption{Comparison of Computation Times for Server and Client between $\caps_\cl$ and $\caps_\LWR$ for $L=1$.}
\label{tab:computation_times}
\resizebox{!}{!}{
\rowcolors{2}{gray!10}{white}

  \begin{tabular}{c | c c |  c c}
        \toprule
        \textbf{Clients} & \multicolumn{2}{c|}{\textbf{Server Computation (s)}} & \multicolumn{2}{c}{\textbf{Client Computation (s)}} \\ 
        \cmidrule(lr){2-3} \cmidrule(lr){4-5}
        & $\caps_\LWR$  & $\caps_\cl$ & $\caps_\LWR$  & $\caps_\cl$ \\ 
        \midrule
        \textbf{100}  & 0.0262  & 1.9347  & 0.0276  & 1.6579  \\ 
        \textbf{200}  & 0.0268  & 1.9052  & 0.0286  & 1.6622  \\ 
        \textbf{300}  & 0.0260  & 1.9036  & 0.0273  & 1.6526  \\ 
        \textbf{400}  & 0.0266  & 1.9067  & 0.0273  & 1.6333  \\ 
        \textbf{500}  & 0.0267  & 1.9201  & 0.0273  & 1.6420  \\ 
        \textbf{1000} & 0.0292  & 1.9310  & 0.0280  & 1.6350  \\ 
        \bottomrule
    \end{tabular}

}
\end{table}
\subsection{Performance in Federated Learning Use-case}

\label{sub:exp-ml}
\ifdefined\IsFull
We discuss the accuracy experiments conducted on various classifiers and datasets and plot the performance of $\caps_\LWR$ with respect to learning in the clear (i.e., without cryptographic protections) in Figures~\ref{fig:mcc},\ref{fig:mlp_accuracy}. Our results show that the accuracy of $\caps_\LWR$ is statistically indistinguishable from that of the classifier when learning in the clear.
\fi
\begin{itemize}\itemsep0em
    \item Adult Census Dataset: We first run experiments on the adult census income dataset from \cite{ICAIF:,Jayaraman} to predict if an individual earns over \$50,000 per year. The preprocessed dataset has 105 features and 45,222 records with a 25\% positive class. We randomly split into training and testing, with further splitting by the clients. First, we train in the clear with weights sent to the server to aggregate. With 100 clients and 50 iterations, we achieve 82.85\% accuracy and 0.51 MCC. We repeat with $\caps_\LWR$ with 100 clients and 50 committee members. With 10 iterations, we achieve 82.38\% accuracy and 0.48 MCC. With 20 iterations, we achieve 82\% accuracy and 0.51 MCC. Our quantization technique divides weights into integer and decimal parts (2 integer and 8 decimal values per weight). Training with 50 clients takes under 1 minute per client per iteration with no accuracy loss. This quantization yields a vector size of 1050 (10 per feature).
    \item We use the Kaggle Credit Card Fraud dataset~\cite{Kaggle}, comprising 26 transformed principal components and amount and time features. We omit time and use the raw amount, adding an intercept. The goal is to predict if a transaction was indeed fraudulent or not. There are 30 features and 284,807 rows, with  $<$0.2\% fraudulent. Weights are multiplied by 10,000 and rounded to an integer, accounted for in aggregation. Figure \ref{fig:mcc} shows $\caps_\LWR$'s MCC versus clear learning for varying clients and iterations. With the accuracy multiplier, $\caps_\LWR$'s MCC is close to clear learning and sometimes outperforms. The highly unbalanced dataset demonstrates $\caps_\LWR$ can achieve substantial performance even in challenging real-world scenarios.
    \item We then train a vanilla multi-layer perceptron (MLP) classifier on three datasets: MNIST, CIFAR-10, and CIFAR-100. We quantize the weights by multiplying with $2^{16}$. The MLP accuracy, as a function of the iteration count, is plotted in Figures~\ref{fig:mlp_accuracy},\ref{fig:mcc}. Our experiments demonstrate that $\caps_\LWR$ preserves accuracy while ensuring the privacy of client data. Note that vanilla MLP classifiers do not typically offer good performance for CIFAR datasets, but note that our experiments aimed to show that $\caps_\LWR$ does not impact accuracy. 
\end{itemize}

\begin{figure*}[!tb]
\centering
\resizebox{\textwidth}{!}{
\begin{tikzpicture}
    \begin{axis}[
        name=mnist,
        title={MNIST},
        title style={
                at={(axis description cs:0.5,0.9)}, 
                anchor=north},
        xlabel={Iterations},
        ylabel={MLP Accuracy},
        xmin=0, xmax=30,
        ymin=0.88, ymax=1,
        xtick={10,20,30},
        xticklabels={10,20,30},
        ytick={0.88, 0.90, 0.92, 0.94, 0.96, 0.98, 1.00},
        legend pos=south east,
        ymajorgrids=true,
        grid style=dashed,
    ]
    \addplot[
        color=red,
        mark=none,
        ]
        coordinates {
        (0, 0.8848800690249313) (5, 0.9703982958881384)(10, 0.949543233932329) (15, 0.9704503296316472) (20, 0.9711448703553986) (25, 0.961593629740354) (29, 0.97248235115229)

        };
    \addlegendentry{$\caps$}
    \addplot[
        color=blue,
        mark=none,
        ]
        coordinates {
        (0, 0.885)
        (5, 0.950)
        (10, 0.968)
        (15, 0.956)
        (20, 0.961)
        (25, 0.966)
        (29, 0.970)
        };
        \addlegendentry{Clear}
    \end{axis}
\end{tikzpicture}

\begin{tikzpicture}
    \begin{axis}[
        name = {cifar10},
        at={($ (mnist.east) + (1cm,0) $)},
            anchor=west,
        title={CIFAR-10},
        title style={
                at={(axis description cs:0.5,0.9)}, 
                anchor=north},
        xlabel={Iterations},
        ylabel={MLP Accuracy},
        xmin=0, xmax=30,
        ymin=0.1, ymax=0.7,
        xtick={10,20,30},
        xticklabels={10,20,30},
        ytick={0.1,0.2,0.3,0.4,0.5,0.6,0.7},
        yticklabels={0.1,0.2,0.3,0.4,0.5,0.6},
        legend pos=south east,
         ymajorgrids=true,
        grid style=dashed,
    ]
 \addplot[
        color=red,
        mark=none,
        ]
        coordinates {
     (0, 0.26572222222222222 )
(1, 0.47908888888888889)
(5, 0.49925555555555556)
(10, 0.45961111111111111)
(15,0.46833333333333334)
(20, 0.49788888888888887)
(25,0.4871111111111111)
(29,0.497)
        };
    \addlegendentry{$\caps$}
    \addplot[
        color=blue,
        mark=none,
        ]
        coordinates {
        (0, .25)(1, .48)(5, .49)(10, .46)(15, .48)
       (20, 0.47)(25, .48)
        (29, 0.50)
        };
    \addlegendentry{Clear}
    \end{axis}
\end{tikzpicture}
}
\caption{MLP Accuracy for different datasets: MNIST and CIFAR-10. We use $n=100$ clients in this training, and $\csize=50$. }
\label{fig:mlp_accuracy}
\end{figure*}
\begin{figure*}[!tb]
\centering
\resizebox{\textwidth}{!}{
\begin{tikzpicture}
    \begin{axis}[
        name = {cifar100},
        at={($ (mnist.south) + (1cm,0) $)},
            anchor=west,
        title={CIFAR-100},
        title style={
                at={(axis description cs:0.5,0.9)}, 
                anchor=north},
        xlabel={Iterations},
        ylabel={MLP Accuracy},
        xmin=0, xmax=30,
        ymin=0.0, ymax=0.4,
        xtick={10,20,30},
        xticklabels={10,20,30},
        ytick={0.1,0.2,0.3,0.4},
        yticklabels={0.1,0.2,0.3,0.4},
        legend pos=south east,
         ymajorgrids=true,
        grid style=dashed,
    ]
    
        \addplot[smooth,
        color=red,
        mark=none,
        ]
        coordinates {
        (0, 0.091)
        (5, 0.155)
        (10, 0.218)
        (15, 0.255)
        (20, 0.313)
        (25, 0.326)
        (29, 0.35)
        };
    \addlegendentry{$\caps$}
    \addplot[
        color=blue,
        mark=none,
        ]
        coordinates {
        (0, 0.07)
        (5, 0.1432)
        (10, 0.2123)
        (15, 0.2679)
        (20, 0.352)
        (25, 0.34)
        (29, 0.353)
        };
    \addlegendentry{Clear}
    \end{axis}
\end{tikzpicture}
\begin{tikzpicture}
        \begin{axis}[
                        xlabel={Iterations},
                        ylabel={MCC},
                        xmin=0, xmax=30,
                        ymin=0.06, ymax=0.24,
                        xtick={10,20,30},
                         title={Credit Card Fraud Dataset},
                        xticklabels={10,20,30},   
                        ytick={0.1,0.12,0.14,0.16,0.18,0.2,0.22,0.24},
                        legend pos=south east, ymajorgrids=true, grid style=dashed,
                    ]
\addplot[smooth,blue] plot coordinates {
(1,0.199035421011014)
(5,0.199035421011014)
(10,0.178021442325501)
(15,0.178021442325501)
(20,0.178021442325501)
(25,0.199035421011014)
(30,0.199035421011014)
};
\addlegendentry{Clear}

\addplot[smooth,color=red,]
    plot coordinates {
     (1,0.1888029824116)
(5,0.1888029824116)
(10,0.168869335284846)
(15,0.168869335384846)
(20,0.168869335284846)
(25,0.168869335284846)
(30,0.206824757269518)
    };
\addlegendentry{$\caps$}
\end{axis}
\end{tikzpicture}
}
\caption{MLP Accuracy for CIFAR-100. We use $n=100$ clients in this training, and $\csize=50$. In the last figure, we have the MCC score as a function of the number of iterations for the credit card fraud dataset.}
\label{fig:mcc}
\end{figure*}

\fi

\section{Future Work}

$\mathsf{OPA}$ offers malicious security with abort, yet achieving guaranteed output delivery (GoD) or robustness while ensuring that clients communicate only once remains an unresolved challenge. The first work addressing GoD is presented in \cite{ACORN}, which necessitates multiple rounds of interaction, scaling logarithmically with the number of clients in the worst-case scenario, with a minimum of 6 rounds. The recent advancement by \cite{armadillo} enhances the approach of \cite{ACORN} by introducing a more efficient 3-round protocol.

While we present lattice-based instantiations to achieve security in the face of malicious clients, we leave it as future work to implement and optimize the performance of these constructions. Another direction of research is to balance lattice-based zero-knowledge proofs with techniques from Bulletproof to achieve optimized performance.

  \paragraph{Disclaimer.} 
  This paper was prepared for informational purposes by the Artificial Intelligence Research group of JPMorgan Chase \& Co and its affiliates (“J.P. Morgan”) and is not a product of the Research Department of J.P. Morgan.  
  J.P. Morgan makes no representation and warranty whatsoever and disclaims all liability for the completeness, accuracy or reliability of the information contained herein.  
  This document is not intended as investment research or investment advice, or a recommendation, offer or solicitation for the purchase or sale of any security, financial instrument, financial product or service, or to be used in any way for evaluating the merits of participating in any transaction, and shall not constitute a solicitation under any jurisdiction or to any person, if such solicitation would be unlawful.


\appendix
\section{Preliminaries}
\label{sec:crypto-app}
For completeness,
we discuss secret sharing in Section~\ref{sec:secret}. We discuss pseudorandom functions in Section~\ref{sub:prf}. We then introduce lattice-based cryptographic assumptions in Section~\ref{sub:lattice}.
\subsection{Secret Sharing}
\label{sec:secret}
\label{sub:secret}
A key component of threshold cryptography is the ability to compute distributed exponentiation by sharing a secret. More formally, the standard approach is to compute $g^s$ for some $g\in\bbG$ where $\bbG$ is a finite group and $\secret$ is a secret exponent that has been secret-shared among multiple parties. This problem is much simpler when you assume that the group order is a publicly known prime $\Prime$ which then requires you to share the secret over the field $\bbZ_\Prime$. This was the observation of Shamir~\cite{Shamir79} whereby a secret $\secret$ can be written as a linear combination of $\sum_{i\in \cS} \alpha_i \secret_i\bmod\Prime$ where $\cS$ is a set of servers that is sufficiently large and holds shares of the secret $\secret_i$ and $\alpha_i$ is only a function of the indices in $\cS$. It follows that if each server provides $g_i=g^{\secret_i}$, then one can compute $g^s=g^{\sum_{i\in\cS} \alpha_i\cdot \secret_i}=\prod_{i\in\cS} g_i^{\alpha_i}$. Formally, this is defined below. 

\begin{construction}[Shamir's Secret Sharing over $\bbF_q$]
\label{cons:ssf}
Consider the following $(\cthr,\crec,\csize)$ Secret Sharing Scheme where $\csize$ is the total number of parties, $\cthr$ is the corruption threshold, $\crec$ is the threshold for reconstruction. Then, we have the following scheme:

~~
\begin{itemize}
    \item $\term{Share}(\secret,\cthr,\crec,\csize)$: Sample a random polynomial $f(X)\in \bbF_q[X]$ of degree $\crec-1$ such that $f(0)=\secret$. Then, \Return $\set{\secret^{(j)}:=f(j)}_{j\in[\csize]}$
    \item $\term{Coeff}(\cS)$: On input of a set $\cS=\set{i_1,\ldots,i_{\crec},\ldots}\subseteq [\csize]$ of at least $\crec$ indices, compute $\lambda_{i_j}=\prod_{\zeta\in[\crec]\setminus\set{j}}\frac{i_\zeta}{i_\zeta-i_j}$. Then, \Return $\set{\lambda_{i_j}}_{i_j\in\set{i_1,\ldots,i_\crec}}$ 
    \item $\term{Reconstruct}(\set{s^{(j)}}_{j\in\cS}$: If $|\cS|\geq \crec$, then output $\sum_{j\in\cS} \lambda_j\cdot \secret^{(j)}$ where $\set{\lambda_j}\gets\term{Coeff}(\cS)$.
    \end{itemize}
\end{construction}
The correctness of the scheme guarantees that the secret $\secret$ is correctly reconstructed. 
\newcommand{\pos}{\term{pos}}
\begin{construction}[Packed Secret Sharing over $\bbF_q$]
\label{cons:pssf}
Consider the following $(\cthr,\crec,\csize)$ Secret Sharing Scheme where $\csize$ is the total number of parties, $\cthr$ is the corruption threshold, $\crec$ is the threshold for reconstruction. Further, let $\rho$ be the number of secrets being packed which are to be embedded at points $\pos_1,\ldots,\pos_\rho$ where $\pos_i=\csize+i$. Here, $\cthr:=\crec-\rho$. Then, we have the following scheme:

    \centering
 \begin{pchstack}[center,space=2\fboxsep]
\resizebox{0.7\columnwidth}{!}{\procedure{$\lshare(\bfs=(\secret_1,\ldots,\secret_\rho),\cthr,\crec,\csize)$}{
         (\coeff_0,\ldots,\coeff_{\crec-\rho-1})\getsr \bbF_q\\
         q(X):=\sum_{i=0}^{\crec-\rho-1} X^{i}\cdot \coeff_i\\
         \pos_i=\csize+i~\cFor i=1,\ldots,\rho\\
         \cFor i\in[\rho]~\cDo\\
        \pcind L_i(X):=\prod_{j\in[\rho]\setminus i}  \frac{X-\pos_j}{\pos_i-\pos_j}\cdot \Delta\\
         f(X):=q(X)\prod_{i=1}^{\rho} (X-\pos_i)+\sum_{i=1}^{\rho}{\secret}_i\cdot L_i(X)\\
          \pcreturn \set{\secret^{(i)}}_{i\in[\csize]}
        }
        
         \procedure{$\lcomb(\set{\secret^{(i)}}_{i\in\cS})$}{
            \pcif |\cS|< \crec ~\pcreturn \bot\\
            \text{Parse}~\cS:=\set{i_1,\ldots,i_{\cthr},\ldots}\\
            \cFor k\in [\rho]\\
            \pcind \cFor j\in[\cthr]\\
            \pcind\pcind \Lambda_{i_j}(X):=\prod_{\zeta\in[\cthr]\setminus{j}} \frac{i_\zeta-X}{i_\zeta-i_j}\\
             \pcind \pcind \secret_k':=\sum_{j\in[t]} \Lambda_{i_j}(m+k)\cdot \secret^{(j)}\\
             \pcind \pcreturn \bfs':=(\secret_1',\ldots,\secret_\rho')\\
            }
            }
        \end{pchstack}
\end{construction}
\paragraph{Parameters.}\begin{itemize}
    \item For reconstruction, we require $\crec<\csize(1-\delta_C)$ where $\delta_C$ is the dropout rate within the committee.
    \item For security, we require that $(\crec-\rho)>\csize\cdot\eta_C$ where $\eta_C$ is the corruption rate. 
\end{itemize}
Combining, we get $\csize>\rho/(1-\delta_C-\eta_C)$. Recall that we need $\delta_C+\eta_C<1/3$, for byzantine fault tolerance. In other words, setting $\csize\geq 3\rho/2$ is sufficient. 
\begin{remark}[Optimizations for Packed Secret Sharing]
    Observe that the polynomial $L_i(X)$ is only dependent on points $\pos_i$, which are the points where the secrets are embedded. This can be pre-processed, and indeed, can be a part of the setup algorithm which distributes it to all the clients. Furthermore, rather than naively reconstructing the Lagrange polynomial, one can also rely on FFT techniques to achieve speed up. 
\end{remark}
Unfortunately, the above protocols do not extend to settings where the order of the group is not prime, not publicly known, or even possibly unknown to everyone. In this setting, the work of Damg{\aa}rd and Thorbek presents a construction to build Linear Integer Secret Sharing (LISS) schemes. In this work, we rely on the simpler scheme that extends Shamir's secret sharing into the integer setting from the work of Braun~\etal\cite{C:BraDamOrl23}. We also extend the Packed Secret Sharing scheme to this integer setting in Construction~\ref{cons:pss}. 
\begin{definition}[Secret Sharing over $\bbZ$]
A $(\cthr,\crec,\csize)$ Linear Integer Secret Sharing Scheme  $\liss$ is a tuple of PPT algorithms $\liss:=(\lshare,\lcoeff,\lcomb)$, with the following public parameters: the statistical security parameter $\kappa_s$, the number of parties $\csize$, the corruption threshold $\cthr$, and reconstruction threshold $\crec$ of secrets needed for reconstruction, the randomness bit length $\rsize$, the bit length of the secret $\ssize$ and the offset by which the secret is multiplied, denoted by $\offset=\csize!$, and the following syntax:
\begin{itemize}
    \item $(\secret_1,\ldots,\secret_\csize)\getsr\lshare(\secret,\csize,\crec,\cthr)$: On input of the secret $\secret$, the number of parties $\csize$, and the threshold $\cthr$, the share algorithm outputs shares $\secret_1,\ldots,\secret_\csize$ such that party $i$ receives $\secret_i$. 
    \item $\set{\lambda_{i}}_{i\in\cS}\gets\lcoeff(\cS)$: On input of a set $\cS$ of at least $\crec$ indices, the $\lcoeff$ algorithm outputs the set of coefficients for polynomial reconstruction. 
    \item $\secret'\gets\lcomb(\set{\secret_i}_{i\in\cS})$: On input of a set of secrets of at least $\crec$ shares, the reconstruction algorithm outputs the secret $\secret'$. 
\end{itemize}
We further require the following security properties. 
\begin{itemize}
    \item Correctness: For any $\csize,\kappa_s,\cthr,\crec,\ssize,\rsize\in\bbZ$ with $\cthr<\crec\leq\csize$, and any set $\cS\subseteq[m]$ with $|\cS|\geq \crec$, for any $\secret\in\bbZ$ such that $\secret\in[0,2^{\ssize})$ the following holds:
    \[
    \Pr\left[
    \secret'=f(\secret)~~\begin{array}{|c}
        (\secret_1,\ldots,\secret_m)\getsr\lshare(\secret,\csize,\crec,\cthr)\\
        \secret'\gets\lcomb(\set{\secret_i}_{i\in\cS})
    \end{array}
    \right]
    \]
    where $f$ is some publicly computable function, usually $f(\secret)=\offset^2\cdot \secret$.
    \item Statistical Privacy~\cite{PKC:DamTho06}: We say that a $(\cthr,\crec,\csize)$ linear integer secret sharing scheme $\liss$ is statistically private if for any set of corrupted parties $\cC\subset [\csize]$ with $|\cC|\leq \cthr$, and any two secrets $\secret,\secret'\in[0,2^{\ssize})$ and for independent random coins $\rho,\rho'$ such that $\set{\secret_i}_{i\in[\csize]}\getsr\lshare(\secret;\rho)$, $\set{\secret_i}'_{i\in[\csize]}\getsr\lshare(\secret';\rho')$ we have that the statistical distance between: $\set{\secret_i|i\in\cC}$ and $\set{\secret_i'|i\in\cC}$ is negligible in the statistical security parameter $\kappa_s$. 
\end{itemize}
\end{definition}


\begin{construction}[Shamir's Secret Sharing over $\bbZ$]
\label{cons:ss}
    Consider the following $(\cthr,\crec,\csize)$ Integer Secret Sharing scheme where $\csize$ is the number of parties, $\cthr$ is the corruption threshold, and $\crec$ is the threshold for reconstruction. Further, let $\kappa_{s}$ be a statistical security parameter. Let $\ssize$ be the bit length of the secret and let $\rsize$ be the bit length of the randomness. Then, we have the following scheme:
    \begin{center}
\begin{pchstack}[center,space=0.1cm]
    	\resizebox{0.7\columnwidth}{!}{     \procedure{$\lshare(\secret,\cthr,\crec,\csize)$}{
          \offset:=\csize!,\tilde{\secret}:=\secret \cdot \offset\\
         (\coeff_1,\ldots,\coeff_{\crec-1})\getsr [0,2^{\rsize+\kappa_s})\\
         f(X):=\tilde{\secret}+\sum_{i=1}^{\crec-1} \coeff_i\cdot X^{i}\\
          \pcreturn \set{\secret^{(i)}=f(i)}_{i\in[\csize]}
        }
        \procedure{$\lcoeff(\cS)$}{
             \pcif |\cS|\geq \crec\\
             \pcind \pcfor i\in\cS \pcdo\\
             \pcind \pcind \Lambda_i:=\prod_{j\in \cS\setminus\{i\}} \frac{x_j}{x_j-x_i}\cdot \offset\\
             \pcind \pcreturn \set{\Lambda_{i}}_{i\in\cS}\\
            }
                    \procedure{$\lcomb(\set{\secret^{(i)}}_{i\in\cS})$}{
             \pcif |\cS|\geq \crec\\
             \pcind \set{\Lambda_{i}}_{i\in\cS}\gets\lcoeff(\cS)\\
             \pcind \secret':=\sum_{i\in\cS} \Lambda_i\cdot \secret^{(i)}\\
             \pcind \pcreturn \secret'\\
            }}
        \end{pchstack}
        \end{center}
\end{construction}
We omit the proof of correctness as it is similar to the original Shamir's Secret Sharing scheme. However, we highlight the critical differences:
\begin{itemize}
    \item Unlike Shamir's Secret Sharing over fields, the secret here is already multiplied by the offset $\Delta$. Therefore, any attempt to reconstruct can only yield $\secret \cdot \Delta$ 
    \item However, note that the inverse of $x_j-x_i$ which was defined over the field $\bbZ_q$ might not exist or be efficiently computable in a field of unknown order. Instead, we multiply the Lagrange coefficients by $\offset$. Consequently, the reconstruction yields $\offset\cdot \tilde{\secret}$ which equals $\secret\cdot \offset^2$. 
\end{itemize} 
\begin{theorem}[\cite{C:BraDamOrl23}]
\label{thm:ss}
    Construction~\ref{cons:ss} is statistically private provided $\rsize\geq \ssize+\lceil\log_2(h_{\max}\cdot (\cthr-1))\rceil+1$ where $h_{\max}$ is an upper bound on the coefficients of the sweeping polynomial.
    \end{theorem}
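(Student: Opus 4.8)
The plan is to prove both claims of Theorem~\ref{thm:ss} (the precise statistical-privacy statement for Construction~\ref{cons:ss}, together with its correctness) by a direct counting/distribution argument on the sharing polynomial $f(X)=\tilde\secret+\sum_{i=1}^{\crec-1}\coeff_i X^i$ with $\tilde\secret=\offset\cdot\secret$ and $\coeff_i\getsr[0,2^{\rsize+\kappa_s})$. For correctness, I would simply track the two offsets: reconstruction from $\crec$ points via the integer Lagrange coefficients $\Lambda_i=\offset\cdot\prod_{j\in\cS\setminus\{i\}}\frac{x_j}{x_j-x_i}$ recovers $\offset\cdot f(0)=\offset\cdot\tilde\secret=\offset^2\cdot\secret$, so $f(\secret)=\offset^2\secret$ as claimed; the only thing to verify is that each $\Lambda_i$ is a genuine integer, which follows because $\offset=\csize!$ is divisible by every $x_j-x_i$ for distinct $x_j,x_i\in[\csize]$ (standard divisibility of factorials by Vandermonde-type products). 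This step is routine.

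For statistical privacy, fix a corrupted set $\cC\subseteq[\csize]$ with $|\cC|\le\cthr<\crec$, and two secrets $\secret,\secret'\in[0,2^{\ssize})$. I would compare the joint distribution of the $|\cC|$ shares $(f(x_i))_{i\in\cC}$ under secret $\secret$ versus secret $\secret'$. The key algebraic fact is that over $\mathbb{Q}$ (or $\mathbb{Z}$ after clearing denominators) the map from coefficient vector $(\coeff_1,\dots,\coeff_{\crec-1})$ to the share vector $(f(x_i))_{i\in\cC}$ is affine with an \emph{injective} linear part, because $|\cC|\le\cthr\le\crec-1$ and the relevant submatrix of the Vandermonde matrix has full column rank. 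Hence the share vector under $\secret$ is an affine shift (by a fixed lattice vector depending on $\tilde\secret-\tilde\secret'$) of the share vector under $\secret'$, where both are images of the \emph{same} uniform distribution on the box $[0,2^{\rsize+\kappa_s})^{\crec-1}$. The statistical distance between a uniform distribution on a box and its translate, pushed through a fixed injective integer-linear map, is bounded by the fraction of the box that is ``moved out''; this fraction is $O\!\big(2^{\ssize}\cdot h_{\max}\cdot(\cthr-1)\cdot 2^{-(\rsize+\kappa_s)}\big)$ once one bounds the coordinates of the required shift by $h_{\max}\cdot(\cthr-1)\cdot 2^{\ssize}$. Plugging in the hypothesis $\rsize\ge\ssize+\lceil\log_2(h_{\max}(\cthr-1))\rceil+1$ makes this $O(2^{-\kappa_s})$, which is negligible in $\kappa_s$.

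Concretely, the steps in order are: (1) establish integrality of the Lagrange/offset coefficients and conclude correctness with the $\offset^2$ factor; (2) write the share vector as an affine function of the random coefficient vector and identify the injective linear part using $|\cC|\le\crec-1$; (3) reduce ``privacy under $\secret$ vs.\ $\secret'$'' to bounding the statistical distance between a uniform box distribution and its translate by a vector whose $\ell_\infty$-norm is controlled by $h_{\max}$, $\cthr-1$, and $2^{\ssize}$ --- this is where the ``sweeping polynomial'' bound $h_{\max}$ enters, absorbing the size blow-up caused by the offset $\offset$ and by clearing denominators in the Lagrange basis; (4) apply the parameter inequality on $\rsize$ to conclude the distance is $2^{-\Omega(\kappa_s)}$. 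I expect step (3) to be the main obstacle: carefully bounding the magnitude of the ``leakage vector'' $\tilde\secret-\tilde\secret'$ after it is expressed in the coefficient basis (i.e., controlling how large the sweeping-polynomial coefficients $h_{\max}$ must be so that the difference of two valid sharings can always be re-randomized within the box) is the delicate part, and it is exactly the content that Braun~\etal\ isolate; everything else is bookkeeping. Since Theorem~\ref{thm:ss} is attributed to \cite{C:BraDamOrl23}, I would primarily cite their analysis for step (3) and only reproduce the offset/integrality details (steps (1)--(2)) and the final parameter substitution (step (4)) in full.
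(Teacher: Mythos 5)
Your sketch is essentially the same sweeping-polynomial argument the paper attributes to \cite{C:BraDamOrl23} and spells out explicitly for the packed variant (Theorem~\ref{thm:pss}): the ``fixed lattice vector'' you translate the coefficient box by is exactly $(\secret'-\secret)$ times the coefficient vector of the sweeping polynomial, and your closing union bound over the random coefficients together with the $\rsize$ substitution matches the paper's.

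One claim in your step (2) is backwards, though it does not sink the argument. You assert that the linear part of the map $(\coeff_1,\dots,\coeff_{\crec-1})\mapsto(f(x_i))_{i\in\cC}$ is injective because $|\cC|\le\cthr\le\crec-1$. The relevant Vandermonde submatrix has shape $|\cC|\times(\crec-1)$ with at most as many rows as columns, so it has full \emph{row} rank (surjective over $\mathbb{Q}$), not full column rank; it is injective only in the boundary case $|\cC|=\crec-1$. The argument does not actually need injectivity. What it needs is (a) an \emph{integer} preimage $\bfv$ of the secret-difference vector $\mathbf{1}\cdot(\tilde{\secret}'-\tilde{\secret})$ under this matrix --- which is exactly what the sweeping polynomial and $\offset=\csize!$ deliver, as you correctly note in step (3), since $\mathbb{Q}$-surjectivity is free but integrality is not --- and (b) the data-processing inequality to pass back to shares: the corrupt shares under $\secret'$ are the same function of $\bfc+\bfv$ that the corrupt shares under $\secret$ are of $\bfc$, so their statistical distance is at most that between $\bfc$ and $\bfc+\bfv$ on the coefficient box, regardless of whether the pushforward is injective. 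A small bookkeeping slip in the same vicinity: the per-coordinate bound on $\bfv$ is $h_{\max}\cdot 2^{\ssize}$, not $h_{\max}\cdot(\cthr-1)\cdot 2^{\ssize}$; the factor $\cthr-1$ should enter only through the union bound over coefficients, which is how the paper's proof of Theorem~\ref{thm:pss} arrives at the same final inequality.
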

    
We refer the readers to the proof in \cite[\S B.1]{C:BraDamOrl23}. The key idea behind the proof is first to show that there exists a ``sweeping polynomial'' such that at each of the points that the adversary has a share of, the polynomial evaluates to 0 while at the point where the secret exists, it contains the offset $\offset$. Implicitly, one can add the sweeping polynomial to the original polynomial whereby the sweeping polynomial "sweeps" away the secret information that the adversary has gained knowledge of. Meanwhile, in the later section, we present the proof for the generic construction that uses Shamir's Packed Secret Sharing over the integer space. This again uses the idea of a sweeping polynomial.
\begin{construction}[Shamir's Packed Secret Sharing over $\bbZ$]
\label{cons:pss}
    Let $\csize$ be the number of parties and $\rho$ be the number of secrets that are packed in one sharing. Further, let $\cthr$ denote the threshold for reconstruction (implies that corruption threshold is $\cthr-\rho$). Then,
    consider the following $(\csize,\cthr,\rho)$ Integer Secret Sharing Scheme with system parameters $\kappa_s$ as the statistical security parameter, $\ssize$ is the bit length of the a secret, and let $\rsize$ be the bit length of the randomness. Then, we have the following scheme:

    \centering
	\begin{pchstack}[center,space=0.1cm]
      \resizebox{0.7\columnwidth}{!}{  \procedure{$\term{Packed}\lshare(\bfs=(\secret_1,\ldots,\secret_\rho),\cthr,\csize)$}{
          \offset:=\csize!,\Tbs:=\bfs \cdot \offset\\
         (\coeff_0,\ldots,\coeff_{\cthr-\rho-1})\getsr [0,2^{\rsize+\kappa_s})\\
         q(X):=\sum_{i=0}^{\cthr-\rho-1} X^{i}\cdot \coeff_i\\
         \pos_i=\csize+i~\cFor i=1,\ldots,\rho\\
         \cFor i\in[\rho]~\cDo\\
        \pcind L_i(X):=\prod_{j\in[\rho]\setminus i}  \frac{X-\pos_j}{\pos_i-\pos_j}\cdot \Delta\\
         f(X):=q(X)\prod_{i=1}^{\rho} (X-\pos_i)+\sum_{i=1}^{\rho}\tilde{\secret}_i\cdot L_i(X)\\
          \pcreturn \set{\secret^{(i)}}_{i\in[\csize]}
        }
                    \procedure{$\lcomb(\set{\secret^{(i)}}_{i\in\cS})$}{
            \pcif |\cS|< \cthr ~\pcreturn \bot\\
            \text{Parse}~\cS:=\set{i_1,\ldots,i_{\cthr},\ldots}\\
            \cFor k\in [\rho]\\
            \pcind \cFor j\in[\cthr]\\
            \pcind\pcind \Lambda_{i_j}(X):=\prod_{\zeta\in[\cthr]\setminus{j}} \frac{i_\zeta-X}{i_\zeta-i_j}\cdot(\Delta)\\
             \pcind \pcind \secret_k':=\sum_{j\in[t]} \Lambda_{i_j}(m+k)\cdot \secret^{(j)}\\
             \pcind \pcreturn \bfs':=(\secret_1',\ldots,\secret_\rho')\\
            }
            }
        \end{pchstack}
\end{construction}
\paragraph{Correctness.}
    Observe that for all $i=1,\ldots,\rho$, we have the following:
    \begin{itemize}
        \item $L_i(\pos_i)=\offset$
        \item $L_j(\pos_i)=\offset$ for all $j\in[\rho],j\neq i$
        \item $f(\pos_i)=\Ts_i\cdot \Delta=s_i\cdot\Delta^2$
    \end{itemize} 
    Meanwhile,
     for $\lambda_{i_j}(X):=\prod_{\zeta\in[\cthr]\setminus[j]} \frac{i_\zeta-X}{i_\zeta-i_j}$,
     the polynomial we will be able to compute the polynomial $f(x)=\sum_{j\in[t]} \lambda_{i_j}\cdot s^{(j)}$ by correctness of Lagrange Interpolation. Consequently, $f(\pos_i)$ would return $\secret_i\cdot \Delta^2$. However, we compute $\Lambda_{i_j}$ instead, by multiplying with $\offset$ to remove need for division. Consequently, the resulting polynomial has $\Delta$ multiplied throughout yielding a $\Delta^{3}$ as the total offset. 
\begin{definition}[Vector of Sweeping Polynomials]
\label{def:sp}
Let $\cC\subset [\csize]$ such that $|\cC|=\cthr-\rho$. Then, we have a vector of sweeping polynomials, denoted by $\bsp_\cC(X)=(\sp_{1,\cC},\ldots,\sp_{\rho,\cC})$ where $\sp_{i,\cC}(X):=\sum_{j=0}^{\cthr-\rho} \sp_{i,j}\cdot X^j\in\bbZ[X]_{\leq \cthr-1}$ is the unique polynomial whose degree is at most $\cthr-1$ such that $\sp_{i,\cC}(\csize+i)=\Delta^2$, $\sp_{i,\cC}(\csize+j)=0$ for $j\in[\rho],j\neq i$, and $\sp_{i,\cC}(j)=0$ for all $j\in\cC$. Further, one can define $\sp_{\max}$ as the upper bound for the coefficients for the sweeping polynomials, i.e., $\sp_{max}:=\set{\sp_{i,j}|i\in \set{1,\ldots,\rho},j\in\set{0,\ldots,\cthr-1}}$
\end{definition}
\begin{lemma}[Existence of Sweeping Polynomial]
\label{lem:sp}
    For any $\cC\subset [\csize]$ with $|\cC|=\cthr-\rho$, there exists $\bsp_{\cC}\in(\bbZ[X]_{\leq \cthr-\rho})^{\rho}$ satisfying Definition~\ref{def:sp}. 
\end{lemma}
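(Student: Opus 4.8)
\textbf{Proof proposal for Lemma~\ref{lem:sp} (Existence of Sweeping Polynomial).}

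The plan is to prove existence by a direct dimension/interpolation argument, mirroring the single-secret case of Braun~\etal~\cite{C:BraDamOrl23} but accounting for the $\rho$ packed slots. Fix $\cC \subset [\csize]$ with $|\cC| = \cthr - \rho$. For each $i \in [\rho]$ I want a polynomial $\sp_{i,\cC}(X) \in \bbZ[X]$ of degree at most $\cthr - 1$ prescribed at exactly $\cthr$ points: it must vanish at the $|\cC| = \cthr - \rho$ points of $\cC$, vanish at the $\rho - 1$ points $\{\csize + j : j \in [\rho], j \neq i\}$, and take the value $\Delta^2$ at the point $\csize + i$. That is $\cthr$ interpolation conditions for a polynomial with $\cthr$ free coefficients, so over $\bbQ$ there is a unique such polynomial by Lagrange interpolation (all evaluation points are distinct, since $\cC \subseteq [\csize]$ is disjoint from $\{\csize+1,\ldots,\csize+\rho\}$). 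The only thing that requires care is \emph{integrality}: I need the coefficients of $\sp_{i,\cC}$ to lie in $\bbZ$, not merely in $\bbQ$.

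First I would write $\sp_{i,\cC}$ explicitly via the Lagrange basis. Let $T_i := \cC \cup \{\csize + j : j \in [\rho], j \neq i\}$, a set of $\cthr - 1$ points at which $\sp_{i,\cC}$ vanishes, so $\sp_{i,\cC}(X) = c \cdot \prod_{a \in T_i}(X - a)$ for a constant $c$ determined by the remaining condition $\sp_{i,\cC}(\csize + i) = \Delta^2$, giving $c = \Delta^2 / \prod_{a \in T_i}(\csize + i - a)$. The product $\prod_{a \in T_i}(\csize + i - a)$ is a product of $\cthr - 1$ nonzero integers, each of absolute value at most $\csize + \rho$; the key observation is that each such factor divides $\csize! = \Delta$. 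Indeed every $a \in T_i$ lies in $[\csize] \cup \{\csize+1,\ldots,\csize+\rho\} \subseteq \{1,\ldots,\csize+\rho\}$ and $\csize + i \le \csize + \rho$, so $|\csize + i - a| \le \csize + \rho - 1$; more precisely one checks each difference is a nonzero integer of magnitude at most $\max(\csize, \rho) \le \csize$ (using $\rho \le \csize$, which holds since $\csize \ge 3\rho/2$), hence divides $\csize! = \Delta$. Therefore $\prod_{a\in T_i}(\csize+i-a)$ divides $\Delta^{\cthr-1}$, and since $\cthr - 1 \ge 1$, it divides $\Delta^2$ once we observe $\cthr - 1 \le 2$ is \emph{not} generally true --- so instead I argue: the product divides $\Delta$ because it is a product of integers each dividing $\Delta$? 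That is false in general (products of divisors need not divide). The correct route is the one used in \cite{C:BraDamOrl23}: the denominator $\prod_{a \in T_i}(\csize+i-a)$ is, up to sign, a product of factorials' worth of consecutive-integer differences bounded by $\csize$, and one shows $\Delta^2 = (\csize!)^2$ is divisible by it by a Vandermonde/binomial-coefficient argument, since each ratio of the form $\prod(\csize+i-a)$ over a set of $\cthr-1$ distinct points spaced within an interval of length $\le \csize + \rho$ divides a product of at most two factorials $\le \csize!$. I would make this precise by grouping $T_i$ into its part inside $[\csize]$ and its part inside $\{\csize+1,\ldots,\csize+\rho\}\setminus\{\csize+i\}$ and bounding each sub-product by $\csize!$ via the integrality of binomial coefficients $\binom{\csize+i-1}{\cdot}$ and $\binom{\cdot}{\cdot}$.

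Once integrality of $c$ (hence of $\sp_{i,\cC}$) is established, assembling the vector $\bsp_\cC = (\sp_{1,\cC},\ldots,\sp_{\rho,\cC}) \in (\bbZ[X]_{\le \cthr-1})^\rho$ is immediate, and each component has degree exactly $\cthr - 1 = |T_i| $, which is $\le \cthr - \rho$ only when $\rho = 1$; I should double-check whether the lemma statement's ``$(\bbZ[X]_{\le \cthr-\rho})^\rho$'' is a typo for ``$(\bbZ[X]_{\le \cthr-1})^\rho$'' as in Definition~\ref{def:sp}, and state the proof for the degree bound $\cthr - 1$ that Definition~\ref{def:sp} actually uses. \textbf{The main obstacle} is precisely this integrality/divisibility bookkeeping: showing $\prod_{a \in T_i}(\csize + i - a) \mid \Delta^2$ cleanly, since a naive ``product of divisors'' claim is wrong and one must invoke that appropriate products of consecutive-integer differences are divisible by factorials (equivalently, that multinomial coefficients are integers). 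Everything else --- existence and uniqueness over $\bbQ$, distinctness of evaluation points, the vanishing pattern --- is routine Lagrange interpolation.
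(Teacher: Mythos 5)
Your construction is exactly the one the paper uses: write each $\sp_{i,\cC}$ as $\Delta^2$ times the Lagrange basis polynomial through $T_i=\cC\cup\{\csize+j:j\in[\rho],\,j\neq i\}$ that equals $1$ at $\csize+i$, then argue integrality. You are also right that the degree bound $\cthr-\rho$ in the lemma statement is a typo for $\cthr-1$: the constructed polynomial has degree $|T_i|=\cthr-1$ and Definition~\ref{def:sp} uses $\cthr-1$, while for $\rho\geq 2$ a polynomial of degree $\leq\cthr-\rho$ with $\cthr-1$ prescribed zeros would have to vanish identically. And you are right to distrust the integrality step --- but you should distrust it more. You correctly reject the ``product of divisors of $\Delta$ divides $\Delta$'' fallacy and then gesture at a binomial-coefficient bookkeeping fix, but with $\Delta=\csize!$ as the paper defines it no such fix exists, because the divisibility claim is simply \emph{false} for $i\geq 2$. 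Take $\csize=4$, $\rho=2$, $\cthr=3$, $\cC=\{1\}$, $i=2$: then $T_2=\{1,5\}$, the Lagrange denominator is $(6-1)(6-5)=5$, and $\Delta^2=(4!)^2=576$ is not divisible by $5$, so $\sp_{2,\cC}(X)=\tfrac{576}{5}(X-1)(X-5)\notin\bbZ[X]$.

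This is in fact a gap in the paper's own proof, which asserts that $\prod_j((\csize+i)-i_j)$ divides $\Delta=\csize!$ ``because the $i_j\in[\csize]$ are distinct''; that reasoning is valid only for $i=1$, since for $i\geq 2$ the differences $(\csize+i)-i_j$ lie in $\{i,\ldots,\csize+i-1\}$ and can exceed $\csize$ (in the example the value $5$ is a prime larger than $\csize$, so it divides no power of $\csize!$). The remedy is not sharper Vandermonde bookkeeping but a larger offset: if one sets $\Delta:=(\csize+\rho-1)!$ (or $(\csize+\rho)!$), then $\prod_{a\in\cC}|(\csize+i)-a|$ is a product of distinct elements of $[\csize+\rho-1]$ and $\prod_{j\neq i}|i-j|=(i-1)!(\rho-i)!$ divides $(\rho-1)!$, so each factor divides $\Delta$ and $\Delta^2$ in the numerator clears both. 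Once the offset is enlarged this way, the argument you sketch goes through verbatim; without the change, the lemma as stated fails.
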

\begin{proof}
    For any $i=1,\ldots,\rho$, we have that $\sp_{i,\cC}(\csize+i)=\Delta^2$ and $\sp_{i,\cC}(j)=0$ for $j\in\cC$. Let $\cC:=(i_1,\ldots,i_{\cthr-\rho})$. In other words, we can use these evaluations to construct a polynomial as follows:
    \[
    \sp_{i,\cC}(X):=\Delta^2\cdot \prod_{j=1}^{\cthr-\rho} \frac{(X-i_j)}{(\csize+i)-i_j}\cdot \prod_{j\in[\rho]\setminus\set{i}} \frac{(X-(\csize+j))}{(i-j)}
    \]
    
    Note that $i_1,\ldots,i_j\in[\csize]$ and are distinct. Therefore, $\prod_{j=1}^{\cthr-\rho} (\csize+i)-i_j$ perfectly divides $\Delta$ and so does $\prod_{j\in[\rho]\setminus\set{i}} (i-j)$, which implies that the coefficients are all integers. Further, the degree of this polynomial is at most $\cthr-1$. Thus, $\sp_{i,\cC}(X)\in\bbZ[X]_{\cthr-1}$. This defines the resulting vector of sweeping polynomials $\bsp_{\cC}$. 
\end{proof}
\begin{restatable}{theorem}{pss}
\label{thm:pss}
  Construction~\ref{cons:pss} is statistically private provided $$\rsize\geq \ssize+\ceil{\log_2(\sp_{\max}\cdot (t-1)\cdot \rho)}+1$$
\end{restatable}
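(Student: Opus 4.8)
The plan is to adapt the sweeping-polynomial argument behind Theorem~\ref{thm:ss} to the $\rho$-packed setting. Fix an adversarial set $\cC\subseteq[\csize]$ with $|\cC|=\cthr-\rho$ and two secret vectors $\bfs=(\secret_1,\dots,\secret_\rho)$ and $\bfs'=(\secret_1',\dots,\secret_\rho')$ with all entries in $[0,2^\ssize)$. It suffices to build a bijection $\Psi$ on the randomness space $R:=[0,2^{\rsize+\kappa_s})^{\cthr-\rho}$ of the sampled coefficient vector $\mathbf{c}=(\coeff_0,\dots,\coeff_{\cthr-\rho-1})$ such that (i) running $\term{Packed}\lshare$ on $(\bfs,\mathbf{c})$ and on $(\bfs',\Psi(\mathbf{c}))$ produces identical shares on the coordinates in $\cC$, and (ii) $\Psi$ translates every point by a fixed integer vector of small $\ell_\infty$-norm; then the two conditional share-distributions are each a deterministic image of the uniform distribution on $R$, and a coupling through $\Psi$ bounds their statistical distance by the fraction of $R$ not covered by $\Psi(R)$.

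First I would construct $\Psi$. Given $\mathbf{c}$ with sharing polynomial $f(X)=q(X)\prod_{k=1}^{\rho}(X-\pos_k)+\sum_{i=1}^{\rho}\tilde\secret_i L_i(X)$, set $f'(X):=f(X)+\sum_{i=1}^{\rho}(\secret_i'-\secret_i)\,\sp_{i,\cC}(X)$, where $\bsp_\cC=(\sp_{1,\cC},\dots,\sp_{\rho,\cC})$ is the vector of sweeping polynomials from Lemma~\ref{lem:sp}. Using $\sp_{i,\cC}(\pos_i)=\Delta^2$, $\sp_{i,\cC}(\pos_j)=0$ for $j\ne i$, $\sp_{i,\cC}(\ell)=0$ for $\ell\in\cC$, together with $f(\pos_i)=\secret_i\Delta^2$, $L_i(\pos_i)=\Delta$ and $L_i(\pos_j)=0$ for $j\ne i$, one verifies $f'(\pos_i)=\secret_i'\Delta^2$ for every $i$ and $f'(\ell)=f(\ell)$ for every $\ell\in\cC$. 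Because $f'(X)-\sum_i\tilde\secret_i'L_i(X)$ vanishes at all $\pos_k$ and $\prod_k(X-\pos_k)$ is monic, the quotient $q'(X):=\bigl(f'(X)-\sum_i\tilde\secret_i'L_i(X)\bigr)/\prod_k(X-\pos_k)$ is an integer polynomial of degree at most $\cthr-\rho-1$; its coefficient vector is $\Psi(\mathbf{c})$, and since $f'-f$ does not depend on $\mathbf{c}$, $\Psi$ is a translation, hence a bijection of $\mathbb{Z}^{\cthr-\rho}$. By construction $\Psi(\mathbf{c})$ is a legal randomness string for $\bfs'$ yielding the same shares on $\cC$.

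Next I would bound the translation vector $\mathbf{d}:=\Psi(\mathbf{c})-\mathbf{c}$ and conclude. Writing $\mathbf{d}=\sum_i(\secret_i'-\secret_i)\mathbf{r}_{i,\cC}$ where $\mathbf{r}_{i,\cC}$ is the coefficient vector of $(\sp_{i,\cC}(X)-\Delta L_i(X))/\prod_k(X-\pos_k)$, I would bound $\|\mathbf{r}_{i,\cC}\|_\infty$ using the coefficient bound $\sp_{\max}$ of Definition~\ref{def:sp}, the factored form of $L_i$ (whose integer coefficients divide $\Delta^2$, since $\prod_{j\ne i}|\pos_i-\pos_j|$ divides $\csize!=\Delta$), and the fact that the divisor has roots $\pos_k=\csize+k$, to get $\|\mathbf{d}\|_\infty\le\sp_{\max}\cdot(\cthr-1)\cdot\rho$ — exactly the quantity in the statement. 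Then, with $M:=2^{\rsize+\kappa_s}$ and the box $R=[0,M)^{\cthr-\rho}$, the overlap satisfies $|R\cap(R-\mathbf{d})|\ge(M-\|\mathbf{d}\|_\infty)^{\cthr-\rho}$, so the statistical distance between the two share-distributions is at most $1-|R\cap(R-\mathbf{d})|/|R|\le(\cthr-\rho)\,\|\mathbf{d}\|_\infty/M\le(\cthr-1)\,\sp_{\max}\,\rho\,2^\ssize/2^{\rsize+\kappa_s}$, which is at most $2^{-\kappa_s}$ once $\rsize\ge\ssize+\lceil\log_2(\sp_{\max}(\cthr-1)\rho)\rceil+1$; since $\bfs,\bfs'$ were arbitrary, this gives the claimed statistical privacy.

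The main obstacle is the coefficient bound $\|\mathbf{d}\|_\infty\le\sp_{\max}(\cthr-1)\rho$: a blunt polynomial division by $\prod_k(X-\pos_k)$ blows coefficients up geometrically, so the bound has to be read off from the structure of the sweeping polynomials (equivalently, from their coordinates in the basis $\{L_i\}_i\cup\{X^\ell\prod_k(X-\pos_k)\}_\ell$ of $\mathbb{Q}[X]_{\le\cthr-1}$) and the fact that the offset $\Delta=\csize!$ clears all the relevant denominators — this is the packed analogue of the estimate used in the proof of Theorem~\ref{thm:ss} in~\cite{C:BraDamOrl23}. A minor additional point is the degree bookkeeping showing $\deg q'\le\cthr-\rho-1$ (so $\Psi$ really lands in the right randomness space), which follows from $\deg f'\le\cthr-1$ and the $\rho$ forced roots of $f'-\sum_i\tilde\secret_i'L_i$ at the $\pos_k$.
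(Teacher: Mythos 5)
Your plan follows the same route as the paper's own proof: fix $\cC$, use the sweeping-polynomial vector $\bsp_\cC$ to build a map that carries a sharing polynomial for $\bfs$ to one for $\bfs'$ while fixing the shares on $\cC$, and then bound the fraction of randomness that falls outside the sampling box. Where you are actually more careful than the paper is in identifying the randomness space: you observe that what is uniformly sampled is the coefficient vector $\mathbf{c}$ of $q$, not the monomial coefficients of $f$, build the translation $\Psi$ directly on $[0,2^{\rsize+\kappa_s})^{\cthr-\rho}$, and verify that $q'=(f'-\sum_i\tilde\secret_i'L_i)/\prod_k(X-\pos_k)$ is an integer polynomial of degree $\le\cthr-\rho-1$. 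The paper's proof instead perturbs ``the coefficients of $f$'' and treats those as the independently uniform box variables, which quietly conflates the monomial coefficients of $f$ with the sampled coefficients of $q$. But both arguments then rest on the same quantitative step that neither fully establishes: the paper applies the monomial-basis bound $\sp_{\max}$ on $\sp_{i,\cC}$ to coefficients that are really in the $q$-basis, and you correctly identify that what is needed is a bound on $\|\mathbf r_{i,\cC}\|_\infty$, the coefficients of $(\sp_{i,\cC}-\Delta L_i)/\prod_k(X-\pos_k)$, yet only sketch how $\sp_{\max}$ would control it. That division bound is, as you flag, the genuine obstacle and the one nontrivial estimate; with $\sp_{\max}$ implicitly redefined as the relevant bound in the $q$-basis both proofs close, but as Definition~\ref{def:sp} is written (monomial coefficients of $\sp_{i,\cC}$), the step is asserted rather than proved in both. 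A minor bookkeeping difference: your coupling is one-sided via $|R\cap(R-\mathbf d)|$, whereas the paper uses a two-sided window and picks up a factor of $2$, which is where the ``$+1$'' in the hypothesis on $\rsize$ comes from; both derivations satisfy the stated condition.
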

\begin{proof}
   Let $\bfs,\bfs'\in [0,2^{\ssize})^{\rho}$ be two vectors of secrets. Then, $\Tbs:=\bfs\cdot \Delta$ and $\Tbs':=\bfs'\cdot \Delta$. Let $\cC$ denote an arbitrary subset of corrupted parties of size $|\cC|=\cthr-\rho$. Further, let us assume that $\Tbs$ is shared using the polynomial $f(X)$ as defined below: 
   \[
    f(X):= q(X) \cdot \prod_{k=1}^{\rho} (X-\pos_i)+\sum_{k=1}^{\rho}\tilde{\secret}_i\cdot L_k(X)
   \]
   where $L_k(X):=\prod_{j\in[\rho]\setminus \set{k}} \frac{X-\pos_j}{\pos_k-\pos_j}\cdot \Delta$. and $q(X)$ is a random polynomial of degree $\cthr-\rho-1$. 

   Now observe that the adversary see $|\cC|=\cthr-\rho$ shares corresponding to $f(i_j)$ for $i_j\in\cC$. By Lagrange interpolation, this induces a one-to-one map from possible secrets to corresponding sharing polynomials. Specifically, we can use the vector of sweeping polynomials, as defined in Definition~\ref{def:sp} to explicitly map any secret vector $\bfs^\ast$ to its sharing polynomial defined by $f(X)+\angle{\bfs^\ast-\bfs,\bsp_{\cC}(X)}$

   In other words, the sharing polynomial to share $\bfs^\ast$ is defined by 
   \[
        f^\ast(X)=f(X)+\sum_{k=1}^{\rho} (\secret^\ast_k-\secret_k)\cdot \sp_{k,\cC}(X)
   \]
   One can verify the correctness. For example, to secret share $\secret_1^\ast$, at position $\csize+1$, we get: $$f^\ast(\csize+1)=f(\csize+1)+\sum_{k=1}^{\rho} (\secret_k^\ast-\secret_k)\cdot \sp_{k,\cC}(X)$$
   Now, observe that $f(\csize+1)=\secret_1\cdot (\Delta^2)$. Meanwhile, $\sp_{1,\cC}(\csize+1)=\Delta^2$ while $\sp_{j,\cC}(\csize+1)=0$ for $1<j\leq \rho$. This simplifies to:
   $f^{\ast}(\csize+1)=\secret_1\cdot\offset^2 + (\secret_1^\ast-\secret_1)\cdot \Delta^2=\secret_1^\ast\cdot\offset^2$. 
    However, while we have an efficient mapping, note that $f^\ast(X)$ could have coefficients that are not of the prescribed form, i.e., coefficients do not lie in the range $[0,2^{\rsize+\kappa_s})$. We will call the event $\term{good}$ if the coefficients lie in the range and $\term{bad}$ even if one of the coefficients does not lie in the range.
   
   Let us apply the above mapping to the secret $\bfs'$ and we have the resulting polynomial:
    \[
        f'(X)=f(X)+\sum_{k=1}^{\rho} (\secret_k'-\secret_k)\cdot \sp_{k,\cC}(X)
   \]
   Now, observe that if $f'(X)$ was a good polynomial, then $f'(j)=f(j)$ for every $j\in\cC$. It follows that if $f'$ was $\term{good}$, then an adversary cannot distinguish whether the secret vector was $\bfs$ or $\bfs'$. 
   
   We will now upper bound the probability that $f'$ was $\term{bad}$ in at least one of the coefficients. We know that $|\secret_k'-\secret_k|\in[0,2^{\ssize})$ for $k=1,\ldots,\rho$. Further all coefficients of $\sp_{k,\cC}(X)$ are upper bounded by $\sp_{\max}$. Therefore, to any coefficient of $f(X)$, the maximum perturbation in value is: $2^{\ssize}\cdot\sp_{\max}\cdot\rho$. Therefore, one requires that the original coefficients of $f$ be sampled such that they lie in $[2^{\ssize}\cdot\sp_{\max}\cdot\rho,2^{\rsize+\kappa_s}-2^{\ssize}\cdot\sp_{\max}\cdot\rho]$. In other words, the probability that one coefficient of $f'$ is bad is:
   \[
   \frac{2\cdot 2^{\ssize}\cdot\sp_{\max}\cdot\rho}{2^{\rsize+\kappa_s}}
   \]
   There are $\cthr-1$ such coefficients. This gives us that the probability is $\leq 2^{-\kappa_s}$ assuming that $\rsize\geq \ssize+\ceil{\log_2(\sp_{\max}\cdot (t-1)\cdot \rho)}+1$
\end{proof}

\subsection{Pseudorandom Functions}
\label{sub:prf}
\begin{definition}[Pseudorandom Function (PRF)]
    \label{def:prf}
    A pseudorandom function family is defined by a tuple of PPT algorithms $\prf=(\tpgen,\tpeval)$ with the following definitions:
    \begin{itemize}
        \item $\tppp\getsr\tpgen(1^\kappa)$: On input of the security parameter $\kappa$, the generation algorithm outputs the system parameters required to evaluate the function $F:\cK\times\cX\to\cY$ where $\cK$ is the key space, $\cX$ is the input space, and $\cY$ is the output space. 
        \item $y\gets\tpeval(\tpk,x)$: On input of $x\in\cX$ and a randomly chosen key $\tpk\getsr\cK$, the algorithm outputs $y\in \cY$ corresponding to the evaluation of $F(k,x)$.
    \end{itemize}
    We further require the following security property that: for all PPT adversaries $\cA$, there exists a negligible function $\negl$ such that: 
    \begin{gather*}
	\Pr\left[b=b'~~
	\begin{array}{|c}
	 b\getsr\set{0,1},k\getsr\cK\\
	 \cO_0(\cdot):=F(k,\cdot), \cO_1(\cdot):=U(\cY) \\
	 b'\getsr\cA^{\cO_b(\cdot)}
	\end{array}
	\right]\leq \frac{1}{2}+\negl(\kappa)
    \end{gather*}
    where $U(\cY)$ outputs a randomly sampled element from $\cY$.  
\end{definition}
\begin{definition}[($\gamma$)-Key Homomorphic PRF]
    Let $\prf$ be a pseudorandom function that realizes an efficiently computable function $F:\cK\times\cX\to\cY$ such
    that $(\cK,\oplus)$ is a group. Then, we say that it is
    \begin{itemize}
        \item key homomorphic if: $(\cY,\otimes)$ is also a group and for every $\tpk_1,\tpk_2\in\cK$ and every $x\in\cX$ we get:
        $\tpeval(\tpk_1,x)\times\tpeval(\tpk_2,x)=\tpeval(\tpk_1\oplus\tpk_2,x)$. 
        \item $\gamma=1$-almost key homomorphic if: $\cY=\bbZ_\Prime$ if for every $\tpk_1,\tpk_2\in\cK$ and every $x\in\cX$, there exists an error $e\in\{0,1\}$ we get:
        $\tpeval(\tpk_1,x)\times\tpeval(\tpk_2,x)=\tpeval(\tpk_1\oplus\tpk_2,x)+e$. 
    \end{itemize}
\end{definition}
\subsection{Distributed Key Homomorphic PRF}
\label{sub:dkhprf}
\begin{definition}[Distributed Key Homomorphic PRF (DPRF)]
\label{def:tprf}
    A $(\cthr,\csize)$-Distributed PRF is a tuple of PPT algorithms $\tprf:=(\tpgen,\tpshare,\allowbreak\tpeval,\tppeval,\tpcombine)$ with the following syntax:
    \begin{itemize}
        \item $\tppp\getsr\tpgen(1^\kappa,1^\cthr,1^\csize)$: On input of the threshold $\cthr$ and number of parties $\csize$, and security parameter $\kappa$, the $\tpgen$ algorithm produces the system parameter $\tppp$ which is impliclty consumed by all the other algorithms.
        \item $\tpk^{(1)},\ldots,\tpk^{(\csize)}\getsr\tpshare(\tpk,\cthr,\crec,\csize)$: On input of the number of parties $\csize$, corruption threshold $\cthr$, reconstruction threshold $\crec$, and a key $\tpk\getsr\cK$, the share algorithm produces the key share for each party. 
        \item $Y\gets\tpeval(\tpk,x)$: On input of the PRF key $\tpk$ and input $x$, the algorithm outputs $y$ corresponding to some pseudorandom function $F:\cK\times\cX\to\cY$.
        \item $y_i\getsr\tppeval(\tpk^{(i)},x)$: On input of the PRF key share $\tpk^{(i)}$, the partial evaluation algorithm outputs a partial evaluation $y_i$ on input $x\in\cX$. 
        \item $Y'\getsr\tpcombine(\set{y_i}_{i\in \cS})$: On input of partial evaluations $y_i$ corresponding to some subset of shares $\cS$ such that $|\cS|\geq \cthr$, the algorithm outputs $Y'$.
    \end{itemize}
    We further require the following properties: 
    \begin{itemize}
        \item Correctness: We require that the following holds for any $\csize,\cthr,\crec,\kappa\in\bbZ$ with $\cthr<\crec\leq\csize$ and 
        any set $\cS\subseteq[m]$ with $|\cS|\geq \crec$, any input $x\in\cX$:
        \begin{gather*}
            \Pr\left[Y=Y'~~
            \begin{array}{|c}
            \tppp\getsr\tpgen(1^\kappa,1^\cthr,1^\crec,1^\csize), \tpk\getsr\cK\\
            \set{\tpk^{(i)}}_{i\in[\csize]}\getsr  \tpshare(\tpk), \set{y_i\gets\tppeval(\tpk^{(i)},x)}_{i\in\cS}\\
            Y'\gets\tpcombine(\set{y_i}_{i\in\cS}), Y=\tpeval(\tpk,x)
            \end{array}
            \right]=1
        \end{gather*}
        \item Pseudorandomness with Static Corruptions: We require that for any integers $\cthr,\crec,\csize$ with $\cthr<\crec\leq \csize$, and for all PPT adversary $\cA$, there exists a negligible function $\negl$ such that: 
        \begin{gather*}
            \Pr\left[~~
            \begin{array}{c|c}
            b=b' &\tppp\getsr\tpgen(1^\kappa,1^\cthr,1^\crec,1^\csize), \tpk\getsr\cK\\
            |\bfK\cup \set{j:(j,x^\ast)\in \bfE}|\leq \cthr&(\state,\bfK)\getsr\cA(\tppp)\\
            & \set{\tpk^{(i)}}_{i\in[\csize]}\getsr  \tpshare(\tpk,\cthr,\crec,\csize,)\\
           & (\state,x^\ast)\getsr\cA^{\oEval}(\set{\tpk^{(i)}}_{i\in\bfK})\\
           & b\getsr\{0,1\}, Y_0\getsr\tpeval(\tpk,x^\ast)\\
           & Y_1\getsr\cU(\cY), b'\getsr\cA^{\oEval}(\state,Y_b)
            \end{array}
            \right]=1
        \end{gather*}
        where:
        \begin{pchstack}[center,space=1cm]
        \procedure{$\oEval(i,x)$}{
             \bfE:=\bfE \cup \set{(i,x)}\\
             \pcreturn \tppeval(\tpk^{(i)},x)\\
            }
        \end{pchstack}
        \item Key Homomorphic: We require that if $(\cK,\oplus),(\cY,\otimes)$ are groups such that:
    \begin{itemize}
        \item $\forall~x\in\cX,\forall~\tpk_1,\tpk_2\in\cK, \tpeval(\tpk_1,x)\otimes \tpeval(\tpk_2,x)=\tpeval(\tpk_1\oplus \tpk_2,x)$, and
        \item $\forall~x\in\cX~$, $\forall~\tpk_1,\tpk_2\in\cK$, $\set{\tpk_b^{(j)}\getsr\tpshare(\tpk_b)}_{j\in[m],b\in\{1,2\}}$, 
        \item[] $\forall~j\in[m], y_{1,2}^{(j)}:=\left(\tppeval(\tpk_1^{(j)},x)\otimes \allowbreak \tppeval(\tpk_2^{(j)},x)\right)$, and $\forall\cS\subseteq [m]$ with $|\cS|\geq \crec$, $\tpcombine\allowbreak\left(\set{y_{1,2}^{(j)}}_{j\in\cS}\right)=\tpeval(\tpk_1\oplus \tpk_2,x)$
    \end{itemize}
    \end{itemize}
\end{definition}

\subsection{Class Groups Framework}
\label{sub:class}
Class group-based cryptography is a cryptographic technique that originated in the late 1980s, with the idea that the class group of ideals of maximal orders of imaginary quadratic fields may be more secure than the multiplicative group of finite fields \cite{JC:BucWil88,Curley89}. The CL framework was first introduced by the work of \CL~ \cite{RSA:CasLag15}. This framework operates on a group where there exists a subgroup with support for efficient discrete logarithm construction. Subsequent works \cite{AC:CasLagTuc18,C:CCLST19,PKC:CCLST20,Tucker20,C:BraDamOrl23} have refined the original framework. The framework has been used in various applications over the years \cite{AC:CasLagTuc18,C:CCLST19,PKC:CCLST20,PKC:YueCuiXie21,AC:DMZWSX21,EPRINT:GMMMTT22,CCS:TCLM21,C:BraDamOrl23,EPRINT:KMMST23}. Meanwhile, class group cryptography itself has been employed in numerous applications \cite{EC:Wesolowski19,JC:Wesolowski20,ACNS:Lipmaa12,C:BonBunFis19,C:CasImbLag17,EC:ChaCou18,C:LaiMal19,EC:BunFisSze20,PKC:AGLMS23,C:ADOS22,EC:CKLR21,CCS:CGKR22,EPRINT:ACCDE22}.


Broadly, the framework is defined by two functions - $\cl\Gen,\cl\Solve$ with the former outputting a tuple of public parameters. The elements of this framework are the following:
\begin{itemize}
    \item \textit{Input Parameters}: $\kappa_c$ is the computational security parameter, $\kappa_s$ is a statistical security parameter, a prime $\Prime$ such that $\Prime>2^{\kappa_c}$, and uniform randomness $\rho$ that is used by the $\cl\Gen$ algorithm and is made public.
    \item \textit{Groups}: $\Ghat$ is a finite multiplicative abelian group, $\bbG$ is a cyclic subgroup of $\Ghat$, $\bbF$ is a subgroup of $\bbG$, $\Gp=\set{x^\Prime,x\in \bbG}$
    \item \textit{Orders}: $\bbF$ has order $\Prime$, $\Ghat$ has order $\Prime\cdot \shat$, $\bbG$ has order $\Prime\cdot s$ such that $s$ divides $\shat$ and $gcd(p,\shat)=1, gcd(p,s)=1$, $\Gp$ has order $s$ and therefore $\bbG=\bbF\times \Gp$.
    \item \textit{Generators}: $f$ is the generator of $\bbF$, $g$ is the generator of $\bbG$, and $\gp$ is the generator of $\Gp$ with the property that $g=f\cdot \gp$
    \item \textit{Upper Bound}: Only an upper bound $\sbar$ of $\shat$ (and $s$) is provided. 
    \item \textit{Additional Properties}: Only encodings of $\Ghat$ can be recognized as valid encodings and $s,\shat$ are unknown. 
    \item \textit{Distributions}:  $\cD$ (resp. $\cD_p$) be a distribution over the set of integers such that the distribution $\{g^x:x\getsr\cD\}$ (resp. $\{g_p^x:x\getsr\cD_p\}$) is at most distance $2^{-\kappa_s}$ from the uniform distribution over $\bbG$ (resp. $\Gp$).
    \item \textit{Additional Group and its properties}: $\Ghp=\set{x^\Prime,x\in\Ghat}$, $\Ghat$ factors as $\Ghp\times\bbF$.\footnote{Recall that $\Prime$ and $\shat$ are co-prime.} 
    Let $\baro$ be the group exponent of $\Ghp$. Then, the order of any $x\in\Ghp$ divides $\baro$.\footnote{This follows
        from the property that the exponent of a finite Abelian group is the least common multiple of its elements.}
\end{itemize}

\begin{remark}
    The motivations behind these additional distributions are as follows. One can efficiently recognize valid encodings of elements in $\Ghat$ but not $\bbG$. Therefore, a malicious adversary $\cA$ can run our constructions by inputting elements belonging to $\Ghp$ (rather than in $\Gp$). Unfortunately, this malicious behavior cannot be detected which allows $\cA$ to obtain information on the sampled exponents modulo $\baro$ (the group exponent of $\Ghp$). By requiring the statistical closeness of the induced distribution to uniform in the aforementioned groups allows flexibility in proofs. Note that the assumptions do not depend on the choice of these two distributions. Further, the order $s$ of $\Gp$ and group exponent $\baro$ of $\Ghp$ are unknown and the upper bound $\sbar$ is used to instantiate the aforementioned distribution. 
    Specifically, looking ahead we will set $\Dp$ to be the uniform distribution over the set of integers $[B]$ where $B=2^{\kappa_{s}}\cdot \sbar$. Using Lemma~\ref{lem:sample}, we get that the distribution is less than $2^{-\kappa_s}$ away from uniform distribution in $\Gp$. In our constructions we will set $\kappa_s=40$. 
    We will make this sampling more efficient for our later constructions. We refer the readers to Tucker~\cite[\S 3.1.3, \S 3.7]{Tucker20} for more discussions about this instantiation. 
    Finally, as stated we will also set $\Dhat=\cD$ and $\Dhp=\Dp$. 
\end{remark}
We also have the following lemma from Castagnos, Imbert, and Laguillaumie~\cite{C:CasImbLag17} which defines how to sample from a discrete Gaussian distribution. 
\begin{lemma}
\label{lem:sample}
Let $\bbG$ be a cyclic group of order $n$, generated by $g$. Consider the random variable $X$ sampled uniformly from $\bbG$; as such it satisfies $\Pr[X=h]=\frac{1}{n}$ for all $h\in\bbG$. Now consider the random variable $Y$ with values in $\bbG$ as follows: draw $y$ from the discrete Gaussian distribution $\cD_{\bbZ,\sigma}$ with $\sigma\geq n\sqrt{\frac{\ln(2(1+1/\epsilon))}{\pi}}$ and set $Y:=g^y$. Then, it holds that: 
\[
\Delta(X,Y)\leq 2\epsilon 
\]
\end{lemma}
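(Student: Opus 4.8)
The plan is to reduce the statement to a standard smoothing-parameter argument over the one-dimensional integer lattice. First I would observe that since $g$ has order $n$, the value $Y=g^y$ depends on the sampled integer $y$ only through $y \bmod n$, and that sampling $X$ uniformly from $\bbG$ is identical to setting $X=g^u$ with $u$ uniform on $\mathbb{Z}_n$. Consequently $\Delta(X,Y)$ equals exactly the statistical distance between the law of $y \bmod n$ (for $y\getsr\cD_{\bbZ,\sigma}$) and the uniform law on $\mathbb{Z}_n$, so it suffices to bound that quantity by $2\epsilon$.

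Next I would bring in the lattice smoothing parameter $\eta_\epsilon(\cdot)$, using two standard facts (Micciancio--Regev): it scales as $\eta_\epsilon(c\Lambda)=c\,\eta_\epsilon(\Lambda)$, and for the integers one has $\eta_\epsilon(\mathbb{Z})\le\sqrt{\ln(2(1+1/\epsilon))/\pi}$. Applying both to $\Lambda=n\mathbb{Z}$ gives $\eta_\epsilon(n\mathbb{Z})\le n\sqrt{\ln(2(1+1/\epsilon))/\pi}$, which is precisely the hypothesis on $\sigma$; thus $\sigma\ge\eta_\epsilon(n\mathbb{Z})$.

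Then I would invoke the smoothing lemma: for $\sigma\ge\eta_\epsilon(n\mathbb{Z})$ the Gaussian mass $\rho_\sigma(r+n\mathbb{Z})$ of each coset $r+n\mathbb{Z}$, $r\in\{0,\dots,n-1\}$, lies in a multiplicative window $[\tfrac{1-\epsilon}{1+\epsilon},1]\cdot\rho_\sigma(n\mathbb{Z})$ independent of $r$. Normalizing by $\rho_\sigma(\mathbb{Z})=\sum_r\rho_\sigma(r+n\mathbb{Z})$ then shows each probability $\Pr[y\equiv r \pmod n]$ is within a $1\pm O(\epsilon)$ factor of $1/n$; summing $\tfrac12\sum_{r}|\Pr[y\equiv r]-1/n|$ over the $n$ residues bounds the statistical distance, and combined with the first paragraph this yields $\Delta(X,Y)\le 2\epsilon$.

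The main obstacle I anticipate is purely the constant bookkeeping in the last step: turning the multiplicative $[\tfrac{1-\epsilon}{1+\epsilon},1]$ coset-mass window into a clean additive statistical-distance bound, and checking the crude conversion still fits inside the stated $2\epsilon$ (the statement leaves slack, since one can in fact prove the sharper bound $\epsilon$). Everything else is a direct citation of the smoothing machinery together with the elementary reduction modulo $n$.
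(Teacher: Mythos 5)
The paper does not prove this lemma at all; it is stated verbatim as a cited result from Castagnos, Imbert, and Laguillaumie~\cite{C:CasImbLag17}, so there is no in-paper proof to compare against. That said, your reconstruction is the standard and correct one: reduce $\Delta(X,Y)$ to the statistical distance between $(y\bmod n)$ and uniform on $\bbZ_n$, observe that the hypothesis on $\sigma$ is exactly the Micciancio--Regev bound $\eta_\epsilon(n\bbZ)=n\,\eta_\epsilon(\bbZ)\le n\sqrt{\ln(2(1+1/\epsilon))/\pi}$, and then apply the smoothing lemma to the sublattice $n\bbZ\subseteq\bbZ$ (this is precisely GPV Lemma~4.1 specialized to dimension one, which directly yields the $2\epsilon$ bound on the statistical distance of $D_{\bbZ,\sigma}\bmod n\bbZ$ from uniform). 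Your final ``bookkeeping'' concern is benign: the coset-mass window $[\tfrac{1-\epsilon}{1+\epsilon},1]$ gives, after normalization, per-residue probabilities in $[\tfrac{1-\epsilon}{n(1+\epsilon)},\tfrac{1+\epsilon}{n(1-\epsilon)}]$, and summing the deviations yields a statistical distance of at most $\tfrac{\epsilon}{1-\epsilon}\le 2\epsilon$ for $\epsilon<1/2$, so the claimed constant is indeed safe, and you are right that the true bound is slightly sharper than stated.
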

\begin{remark}
\label{rem:1}
By definition, the distribution $\{g^x:x\getsr\cD\}$ is statistically indistinguishable from $\{g^y:y\getsr\{0,\ldots,\Prime\cdot s-1\}\}$. Therefore, it follows that $\{x\mod \Prime\cdot s:x\getsr\cD\}$ is statistically indistinguishable from $\{x:x\getsr\set{0,\ldots,\Prime\cdot s-1}\}$. Similarly, $\set{x\mod s:x\getsr\allowbreak \cD_\Prime}$ is statistically indistinguishable from $\{x:x\getsr\allowbreak\{\allowbreak 0\allowbreak,\allowbreak\ldots\allowbreak,\allowbreak s\allowbreak-\allowbreak1\}\}$. Furthermore, sampling a value $x$ corresponding to $\cD$ is statistically indistinguishable from the uniform distribution in $\set{0,\ldots,s-1}$ because $s$ divides
$\Prime\cdot s$. 
\end{remark}
\begin{definition}[Class Group Framework]
\label{def:class}
The framework is defined by two algorithms $(\cl\Gen, \cl\Solve)$ such that:
\begin{itemize}
    \item $\pp_\cl=(\Prime,\kappa_c,\kappa_s,\sbar,f,\gp,\Ghat,\bbF.\cD,\Dp,\Dhat,\Dhp,\rho)\getsr\allowbreak\cl\Gen(1^{\kappa_c},\allowbreak 1^{\kappa_s},\allowbreak\Prime;\allowbreak\rho)$
    \item The $\DL$ problem is easy in $\bbF$, i.e., there exists a deterministic polynomial algorithm $\cl\Solve$ that solves the discrete logarithm problem in $\bbF$: 
    \begin{gather*}
	\Pr\left[x=x'~~
	\begin{array}{|l}
	\pp_\cl=\getsr\cl\Gen(1^{\kappa_c},1^{\kappa_s},\Prime;\rho)\\ x\getsr \bbZ/p\bbZ, X=f^x; \\
    x'\gets\cl\Solve(\pp_\cl,X)\\
	\end{array}
	\right] =1 
    \end{gather*}
\end{itemize} 
\end{definition}

\begin{definition}[Hard Subgroup Membership Assumption (\HSM) Assumption~\cite{AC:CasLagTuc18}]
\label{def:hsm}
Let $\kappa$ be a the security parameter with prime $\Prime$ such that $|\Prime|\geq \kappa$. Let $(\cl\Gen,\cl\Solve)$ be the group generator algorithms as defined in Definition~\ref{def:class}, then the \HSM\ assumption requires that that \HSM\ problem be hard in $\bbG$ even with access to the $\Solve$ algorithm. More formally, let $\cD$ (resp. $\cD_p$) be a distribution over the set of integers such that the distribution $\{g^x:x\getsr\cD\}$ (resp. $\{g_p^x:x\getsr\cD_p\}$) is at most distance $2^{-\kappa}$ from the uniform distribution over $\bbG$ (resp. $\Gp$). Then, we say that the \HSM\ problem is hard if for all PPT adversaries $\cA$, there exists a negligible function $\negl(\kappa)$ such that:
\begin{gather*}
	\Pr\left[b=b'
	\begin{array}{c|c}
	& \pp_\cl\getsr\cl\Gen(1^{\kappa_c},1^{\kappa_s},\Prime;\rho)\\& x\getsr\cD,x'\getsr\cD_p\\
	& b\getsr \{0,1\}; Z_0=g^x; Z_1=g_p^{x'} \\
	& b'\getsr\cA^{\Solve(\pp_\cl,\cdot)}(\pp_\cl,Z_b)
	\end{array}
	\right]\leq \frac{1}{2}+\negl(\kappa)
    \end{gather*}
\end{definition}
When dealing with groups of known order, one can sample elements in a group $\bbG$ easily by merely sampling exponents modulo the group order and then raising the generator of the group to that exponent. Unfortunately, note that here neither the order of $\bbG$ (i.e., $ps$) nor that of $\Gp$ (i.e. $s$) is known. Therefore, we instead use the knowledge of the upper-bound $\sbar$ of $s $ to instantiate the distributions $\cD$ and $\cD_p$ respectively. This choice of choosing from the distributions $\cD$ and $\cD_p$ respectively allows for flexibility of various proofs.

\section{Constructions in $\cl$ Framework}
\label{sub:khprf-cons}
\begin{construction}[PRF in $\cl$ Framework]
\label{cons:khprf}
    Let $(\cl\Gen,\cl\Solve)$ be the class group framework as defined in Definition~\ref{def:class}. Then, let 
    $\pp_\cl\getsr\cl\Gen(1^{\kappa_c},\allowbreak1^{\kappa_s})$. Further, let $\hash:\cX\to\bbH$ be a hash function. Then, consider the following definition of $\cK=\Dp,\cX=\{0,1\}^\ast,\cY=\bbH$, $F_\cl(k,x)=\hash(x)^{k}$. 
\end{construction}
\begin{restatable}{theorem}{khprf}
\label{thm:khprf}
  Construction~\ref{cons:khprf} is a secure PRF where $\hash$ is modeled as a random oracle under the \HSM\ assumption.
\end{restatable}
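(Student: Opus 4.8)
The statement is that $F_{\cl}(k,x)=\hash(x)^{k}$ of Construction~\ref{cons:khprf} is a PRF when $\hash$ is a programmable random oracle, assuming $\HSM$. This is the Naor--Pinkas--Reingold construction $H(x)^k$ transplanted to the class-group setting, so my plan is to replay the NPR proof in the random oracle model, replacing its decisional Diffie--Hellman core with an $\HSM$-based indistinguishability step. Fix a PRF adversary $\cA$ that makes at most $q$ distinct queries to $\hash$ and, without loss of generality, only queries its challenge oracle on points already submitted to $\hash$. I would define two games: $\mathsf{G}_0$, the real PRF game with $b=0$, in which a single key $k\getsr\cK=\Dp$ is drawn and each fresh value $\hash(x)$ is lazily set to $\gp^{\rho_x}$ for $\rho_x\getsr\Dp$ (statistically close to uniform on $\bbH$ by the defining property of $\Dp$ together with $\gcd(\modulus,\shat)=1$; see Remark~\ref{rem:1} and Lemma~\ref{lem:sample}), with oracle answer $\hash(x)^k=\gp^{k\rho_x}$; and $\mathsf{G}_1$, the game with $b=1$, in which the oracle is a genuinely random function into $\cY=\bbH$, realised as $x\mapsto\gp^{\sigma_x}$ for fresh $\sigma_x\getsr\Dp$. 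The whole argument reduces to showing $\mathsf{G}_0\approx_c\mathsf{G}_1$.

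For this transition I would build an $\HSM$ distinguisher $\cB$, proceeding in two $\HSM$-backed steps. First, relocate the image of the random oracle from $\bbH$ to $\bbG$: this is legitimate because an element of $\bbG$ cannot be told apart from an element of $\bbH$, which is exactly what $\HSM$ asserts, so after a hybrid over the $q$ oracle queries (or one re-randomisation of a single $\HSM$ instance) we may assume $\hash(x)=g^{r_x}$ and the oracle answer is $g^{r_x k}$. Second, plant the $\HSM$ challenge $Z$ as the key material so that in one world $\cA$'s transcript is exactly $\{(g^{r_i},g^{r_i k})\}_i$ for a common implicit key and in the other it is $\{(g^{r_i},\text{uniform})\}_i$; the point is that $\cA$ can neither recognise membership in $\bbH$ nor isolate the $\bbF$-component of a group element (that would require the unknown order of $\bbH$), so the $\bbF$-randomness injected by the $\bbG$-branch of the challenge makes the answers indistinguishable from a random function into $\bbH$. $\cB$ outputs $\cA$'s guess, yielding $\mathrm{Adv}^{\mathrm{PRF}}_{\cA}(\kappa)\le 2\,\mathrm{Adv}^{\HSM}_{\cB}(\kappa)+\negl(\kappa)$, where $\negl$ absorbs the statistical slack from $\Dp$-sampling and the standard random-oracle collision terms.

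The step I expect to be the real obstacle is this second part --- showing that the ``$\bbG$-branch'' of the simulation is indistinguishable from the ideal random function into $\cY=\bbH$. In the prime-order DDH world this is a one-line re-randomisation, but here one must simultaneously cope with: the fact that $\bbH$ and $\bbG$ are not efficiently recognisable; the fact that $\cl\Solve$ inverts discrete logarithms only inside $\bbF$, so the reduction cannot project arbitrary elements onto $\bbH$ or strip off $\bbF$-components; the fact that keys come from the auxiliary distribution $\Dp$ rather than uniformly over $\mathbb{Z}_s$, so $\gcd(\modulus,\shat)=1$ and the support bound on $\Dp$ must be invoked to argue that the planted and re-randomised challenges carry the correct distributions; and the need to check that, conditioned on the already-revealed values $\hash(x)=g^{r_x}$, the $\bbF$-components of the answers are uniform and independent across queries. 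Once that is nailed down, the remaining ingredients --- the hybrid over the $q$ random-oracle queries, consistency of lazy sampling, and the final advantage accounting --- are routine.
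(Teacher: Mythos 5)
Your proposal departs from the paper's proof in a fundamental way. The paper (see the proof in Section~\ref{sub:thm-1}) uses Coron's biased-coin technique: for every fresh random-oracle query a coin $\delta_t$ with $\Pr[\delta_t=1]=\frac{1}{q_e+1}$ is flipped, the reduction aborts on a bad pattern (event $E$), and only the $\delta_t=1$ answers are contaminated with a fresh $\bbF$-component $f^{u_t}$, paying an $e(q_e+1)$ multiplicative factor in the advantage. You dispense with Coron entirely, instead moving the whole random-oracle image into $\bbG$ by a $q$-step hybrid and then planting a single $\HSM$ challenge as implicit key material. This is a genuinely different decomposition, and if the planting step were sound it would give a tighter bound than the paper's.

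The planting step does not go through, and the gap is concrete. There is no embedding of the $\HSM$ challenge $Z$ that makes the $\bbH$-branch uniform. With $\hash(x_i)=g^{r_i}$ for known $r_i$, setting $y_i:=Z^{r_i}$ gives, for $Z=g^x\in\bbG$, exactly the real transcript $(g^{r_i},g^{xr_i})_i$ with implicit key $x$; but for $Z=\gp^{x'}\in\bbH$ it gives $y_i=\gp^{x'r_i}$, whose exponent is pinned to the same $r_i$ that appears in $\hash(x_i)$, so $(\hash(x_i),y_i)_i$ is a DDH-correlated family, not $(g^{r_i},\text{uniform})_i$. Rerandomizing $y_i$ with a fresh $\bbH$- or $\bbF$-multiplier to scramble this correlation destroys the exactness $y_i=\hash(x_i)^k$ that the $\bbG$-branch requires across all $i$ with one common implicit $k$. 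There is also a structural obstacle you flag only obliquely: $\modulus$ is a public parameter, so the adversary can raise any group element to the power $\modulus$ and project onto $\bbH$, deleting the $\bbF$-component you treat as the sole source of ideal-world randomness. After that projection the real transcript $(\hash(x_i)^{\modulus},y_i^{\modulus})_i$ is a family of same-exponent Diffie--Hellman pairs over $\bbH$, whereas the ideal transcript has independent second components; $\HSM$ is a subgroup-membership statement and is silent on this distinction. Any complete version of your planting step --- and, for that matter, of the paper's final claim that $\Pr[S_2]=0$ in Hybrid~2 --- has to say why this residual $\bbH$-component correlation is unusable, e.g.\ by invoking a DDH-type hardness in $\bbH$ or exhibiting a class-group self-reducibility argument deriving it from $\HSM$; the proposal as written does not supply that.
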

\ifdefined\IsSub{}
\begin{proof}
We denote the challenger by $\cB$. Let $S_j$ be the event that the adversary wins in $\Hybrid_j$ for each $j\in\{0,\ldots,2\}$. Let $q_e$ (resp. $q_h$) denote the number of evaluation queries (resp. hash oracle queries) that the adversary makes. We use an analysis similar to the technique by Coron~\cite{C:Coron00}.
\begin{gamedescription}[name=Hybrid, nr=-1]
\describegame Corresponds to the security game as defined for the security of PRF. It follows that the advantage of the adversary is $$\Adv_0=2\cdot |\Pr[S_0]-1/2]=\term{Adv}_\cA^{\term{PRF}}$$

\describegame This game is identical to $\Hybrid_1$ with the following difference. The challenger tosses biased coin $\delta_t$  for each random oracle query $H(t)$. The biasing of the coin is as follows: takes a value 1 with probability $\frac{1}{q_{e}+1}$ and 0 with probability $\frac{q_{e}}{q_{e}+1}$. Then, one can consider the following event $E$: that the adversary makes a query to the random oracle with $x_i$ as an input where $x_i$ was one of the evaluation inputs and for this choice we have that $\delta_t$ was flipped to 0. 

If $E$ happens, the challenger halts and declares failure. Then, we have that:
\[
\Pr[\neg E]=\left(\frac{q_{e}}{q_{e}+1}\right)^{q_{e}}\geq \frac{1}{e(q_{e}+1)}
\]
where $e$ is the Napier's constant. 
Finally, we get that:
\[
\Pr[S_1]=\Pr[S_0]\cdot \Pr[\neg E] \geq \frac{\Pr[S_0]}{e(q_{e}+1)}
\]
\describegame This game is similar to $\Hybrid_1$ with the following difference: we modify the random oracle outputs. 
\begin{itemize}
    \item If $\delta_t=0$, the challenger samples $w_t\getsr\cD_H$ and sets $H(t)=h^{w_t}$
    \item If $\delta_t=1$, the challenger samples $w_t\getsr\cD_H,u_t\getsr\allowbreak\bbZ/\modulus\bbZ$ and sets $H(t)=h^{w_t}\cdot f^{u_t}$
\end{itemize}
Note that, under the HSM assumption, an adversary cannot distinguish between the two hybrids. Therefore, we get:
\[
|\Pr[S_2]-\Pr[S_1]|\leq \epsilon_{\HSM}
\]
where $\epsilon_{\HSM}$ is the advantage that an adversary has in the $\HSM$ game. 
Note that $\Hybrid_2$ corresponds to the case where the outputs are all random elements in $\bbG$. Therefore, the inputs are sufficiently masked and leak no information about the key. Therefore, $\Pr[S_2]=0$
Then,
\[
\Adv_\cA^{\term{PRF}}\leq (e\cdot (q_e+1)\cdot \epsilon_{\HSM}
\]
\end{gamedescription}
\end{proof}
\begin{remark}
    Note that the above scheme is simply an adaptation of the famous DDH-based construction of a key-homomorphic PRF that was shown to be secure by Naor~\etal\cite{EC:NaoPinRei99}. It is easy to verify that our construction is also key homomorphic as $H(x)^{(k_1+k_2)}=H(x)^{k_1}\cdot H(x)^{k_2}$. 
\end{remark}
\else The proof is deferred to Section~\ref{sub:thm-1}
\fi

\subsection{Distributed PRF in $\cl$ Framework}
\label{sub:tprf-cons}
We build our construction of Distributed PRF from the Linear Secret Sharing Scheme $\liss:=(\lshare,\allowbreak\lcoeff,\allowbreak\lcomb)$ with $\pp_\sss$ denoting the public parameters of the $\liss$ scheme. We specifically employ the Shamir Secret Sharing scheme over the Integers, as defined in \cite{C:BraDamOrl23}. %
\begin{restatable}[Distributed PRF in $\cl$ Framework]{construction}{dprf}
\label{cons:tprf}
A $(\crec,\csize)$-Distributed PRF is a tuple of PPT algorithms $\tprf:=(\tpgen,\tpshare,\allowbreak\tpeval,\tppeval,\tpcombine)$ with the algorithms as defined in Figure~\ref{fig:tprf}. For simplicity, in the construction below we will set the corruption threshold $\cthr=\crec-1$. Though, the construction also holds for a lower $\cthr$.
\end{restatable}

\begin{figure}[!tb]
    \renewcommand{\algorithmiccomment}[1]{{\vspace*{-0.3em}\color{gray}// #1}}

	\centering
	\resizebox{\columnwidth}{!}{\begin{protocolbox}{Distributed PRF}
		\scalebox{0.8}{
			\begin{minipage}[t]{0.55\linewidth}
			
			    \algoHead{$\tpgen(1^\kappa,1^\crec,1^\csize)$}
			    \begin{algorithmic}
                \State Parse $\kappa=(\kappa_s,\kappa_c)$
                \State Run $\pp_\cl\getsr\cl\Gen(1^{\kappa_c},1^{\kappa_s})$
                \State Set $\ssize:=|\modulus|$
                \State Sample $\hash:\cX\to\bbH$
                \State \Return $\pp_{\prf}:=(\pp_\cl,\liss.\pp_\sss,\hash)$
			    \end{algorithmic}
       
			    \algoHead{$\tpshare(\tpk,\crec,\csize)$}
			    \begin{algorithmic}
                \State Run $\tpk^{(1)},\ldots,\tpk^{(\csize)}\getsr\liss.\lshare(\tpk,\crec,\csize)$
                \State \Return $\tpk^{(1)},\ldots,\tpk^{(\csize)}$
			    \end{algorithmic}

		\end{minipage}
		}
		\hfill\scalebox{.8}{
			\begin{minipage}[t]{0.55\linewidth}	
   
       			\algoHead{$\tpeval(\tpk,x)$}
			    \begin{algorithmic}
                \State Compute $Y=\hash(x)^{\Delta^3\cdot \tpk}$
                \State \Return $Y$
			    \end{algorithmic}
			 \algoHead{$\tppeval(\tpk^{(i)},x)$}
			    \begin{algorithmic}
                \State Compute $y_i=\hash(x)^{\Delta\cdot \tpk^{(i)}}$
                \State \Return $y_i$
			    \end{algorithmic}
			    \algoHead{$\tpcombine(\set{y_i}_{i\in \cS})$}
			    \begin{algorithmic}
                    \State Run $\Lambda_{i\in\cS}=\liss.\lcoeff(\cS)$
			        \State  Compute $Y'=\prod_{i\in\cS}y_i^{\Lambda_i}$
                    \State \Return $Y'$
			    \end{algorithmic}
			\end{minipage}
		}
	\end{protocolbox}}
	\vspace*{-0.7em}
	\caption{Construction of Distributed PRF based on the $\liss:=(\lshare,\lcoeff,\lcomb)$ scheme of \cite{C:BraDamOrl23}, with $\pp_\sss$ denoting the public parameters of the $\liss$ scheme. Recall that the offset $\offset:=\csize!$ where $\csize$ is the number of shares generated.}
	\label{fig:tprf}

\end{figure}
\begin{restatable}{theorem}{thmdprf}
    \label{thm:tprf}
    In the Random Oracle Model, if Construction~\ref{cons:khprf} is a secure pseudorandom function if Integer Secret Sharing is statistically private, then Construction~\ref{cons:tprf} is pseudorandom in the static corruptions setting. 
\end{restatable}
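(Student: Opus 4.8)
The plan is to reduce the pseudorandomness-under-static-corruptions of Construction~\ref{cons:tprf} to the plain PRF security of Construction~\ref{cons:khprf}, using statistical privacy of the integer secret sharing of Construction~\ref{cons:ss} (Theorem~\ref{thm:ss}) to simulate the shares held by corrupted parties. Given a PPT distinguisher $\cA$ for the distributed game, I build $\cB$ playing the ``$F_\cl(\tpk^\ast,\cdot)$ versus random function'' game: $\cB$ samples $\hash$ as a random oracle, runs $\cA$ to get the static corrupt set $\bfK$ (with $|\bfK|\le\cthr=\crec-1$), pads $\bfK$ with virtual parties up to size $\crec-1$, and samples all corrupt-slot values uniformly from the prescribed range $[0,2^{\rsize+\kappa_s})$. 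By Theorem~\ref{thm:ss} these are within $2^{-\kappa_s}$ of genuine corrupt shares, regardless of the actual key; together with the symbolic assignment $f(0)=\Delta\tpk^\ast$ (i.e.\ distributed key $\tpk=\tpk^\ast$) they determine the degree-$(\crec-1)$ sharing polynomial $f$. The point of using key-homomorphism of $F_\cl$ is that a single oracle call at $x$ answers every honest-party partial-evaluation query at $x$: for honest $i$, Lagrange interpolation over $\{0\}\cup\bfK$ gives $f(i)=\mu_0(i)\,\Delta\tpk^\ast+e_i$ with $e_i$ a known integer and $\Delta\mu_0(i)\in\bbZ$ (every Lagrange denominator divides $\csize!=\Delta$), hence $\tppeval(\tpk^{(i)},x)=\hash(x)^{\Delta f(i)}=F_\cl(\tpk^\ast,x)^{\Delta^2\mu_0(i)}\cdot\hash(x)^{\Delta e_i}$, which $\cB$ computes from its oracle value at $x$; corrupt-slot queries are answered directly.

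When $\cA$ submits its challenge point $x^\ast$, $\cB$ queries its oracle once more for $O_b(x^\ast)$ and returns $O_b(x^\ast)^{\Delta^3}$, which for the real oracle ($b=0$) equals $\hash(x^\ast)^{\Delta^3\tpk^\ast}=\tpeval(\tpk,x^\ast)$ exactly; $\cB$ relays $\cA$'s bit. The hybrid chain is then: $\Hybrid_0$ is the real distributed game; $\Hybrid_1$ regenerates the honest sharing in ``reversed'' order (uniform corrupt/padding shares, key identified with $\tpk^\ast$), indistinguishable by statistical privacy; $\Hybrid_2$ is $\cB$'s simulation with oracle $F_\cl(\tpk^\ast,\cdot)$, reproducing $\Hybrid_1$ verbatim; $\Hybrid_3$ is $\cB$'s simulation with a random-function oracle, reached from $\Hybrid_2$ at the cost of $\cB$'s PRF advantage against Construction~\ref{cons:khprf}; and a final statistical step identifies $\Hybrid_3$ with the ideal distributed game in which the challenge is a fresh uniform element of $\cY$.

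The crux — and the step I expect to be the main obstacle — is precisely this last comparison of $\Hybrid_3$ to the ideal game, which is the ``complication in reducing the security of our distributed key-homomorphic PRF to that of the HSM-based KH-PRF'' that the technical overview flags. Two things must be handled carefully. First, the offset bookkeeping has to line up exactly: the factors $\Delta^2\mu_0(i)$ in the simulated partial evaluations and $\Delta^3$ in the challenge must match the powers appearing in $\tppeval$ ($\hash(x)^{\Delta\tpk^{(i)}}$), in $\tpcombine$ ($\Lambda_i=\Delta\cdot(\text{Lagrange coefficient})$), and in $\tpeval$ ($\hash(x)^{\Delta^3\tpk}$), so that the simulation is consistent with a genuine sharing. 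Since $F_\cl$ is exactly (not merely almost) key-homomorphic there is no rounding slack, unlike the $\LWR$-based case; but the class-group parameters must also guarantee that the $\csize$-smooth offsets act invertibly on $\bbH$, so that raising a uniform element of the output group to those powers stays (statistically close to) uniform, and this constrains the parameter choices. Second, the constraint $|\bfK\cup\{j:(j,x^\ast)\in\bfE\}|\le\cthr$ must be used to argue that, even though $\cA$ may request up to $\cthr-|\bfK|$ honest partial evaluations at $x^\ast$ — each a fixed power of the now-random $O_b(x^\ast)$ times a known group element — these leak strictly less than the $\crec$-th interpolation point and hence leave the transformed challenge distributed as a fresh uniform element. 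A clean way to sidestep the correlation altogether is to have $\cB$ first guess the set of honest parties that will be queried at $x^\ast$, plant $\tpk^\ast$ at a slot outside that set, and answer those queries directly from pre-sampled shares, making $\Hybrid_3$ literally the ideal game at the price of a $1/\binom{n}{\cthr}$ factor, which is polynomial whenever $\cthr=O(1)$ as in our applications.
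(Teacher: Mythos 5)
Your reduction takes a genuinely different route from the paper's in one key choice, though the overall skeleton is the same: both simulate the honest sharing by sampling the corrupt shares, then answer every honest partial-evaluation and the challenge from a single underlying PRF oracle via Lagrange interpolation, with $\Delta$-scaling used throughout to keep exponents integral and statistical privacy of the integer sharing covering the distributional gap. The divergence is where the unknown PRF key $\tpk^\ast$ is planted. You plant it at the constant term ($f(0)=\Delta\tpk^\ast$, so the distributed key \emph{is} the PRF key), and interpolate over $\{0\}\cup\bfK$, yielding $\tppeval(\tpk^{(i)},x)=O_b(x)^{\Delta^2\mu_0(i)}\cdot\hash(x)^{\sum_j\Delta\mu_j(i)s_j}$ and challenge $O_b(x^\ast)^{\Delta^3}$. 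The paper plants it at a \emph{random honest share slot} $i_\crec$ (so the PRF key is the share $\tpk^{(i_\crec)}$, not the secret) and interpolates only over share slots, yielding $\tppeval(\tpk^{(i^\ast)},x)=\hash(x)^{\sum_{j<\crec}\Lambda_{i_j}(i^\ast)\tpk^{(i_j)}}\cdot(h_{x,\crec})^{\Lambda_{i_\crec}(i^\ast)}$ and challenge $\hash(x^\ast)^{\Delta^2\sum_{j<\crec}\Lambda_{i_j}(0)\tpk^{(i_j)}}\cdot(h^\ast)^{\Delta^2\Lambda_{i_\crec}(0)}$. Both land on integer exponents because the $\Delta$-scaled Lagrange numbers at integer evaluation points are integers, and neither has rounding slack since $F_\cl$ is exactly key-homomorphic — you are right on both counts.

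You also correctly flag the crux: if $|\bfK|<\cthr$, the adversary may request honest partial evaluations at $x^\ast$, and in $\Hybrid_3$ these and the challenge are both deterministic functions of the same random oracle value $R(x^\ast)$, so the challenge is no longer fresh. The paper only explicitly handles $|\bfK|=\crec-1$ (its ``valid input'' check is trivially satisfied under the game constraint $|\bfK\cup\{j:(j,x^\ast)\in\bfE\}|\le\cthr$), so it silently avoids this case rather than resolve it. Your proposed fix — guess the set of honest parties queried at $x^\ast$, plant $\tpk^\ast$ at a slot outside that set, pre-sample shares for the guessed set — is essentially the paper's slot-based embedding with an explicit guessing overhead. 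In other words, your fallback \emph{is} the paper's main construction, which suggests the slot embedding is the cleaner of the two once sub-maximal corruption is on the table: placing the unknown key at a share slot keeps it removable from the $x^\ast$-query set by a random choice of $i_\crec$, whereas placing it at $f(0)$ gives it no slot to hide behind.

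Two small cautions. First, do not sample the corrupt shares ``uniformly from $[0,2^{\rsize+\kappa_s})$''; the marginal distribution of corrupt shares under Construction~\ref{cons:ss} is not uniform on that interval. Sample them from a genuine $\lshare$ execution with a fixed dummy secret, and then invoke Theorem~\ref{thm:ss} to argue statistical closeness to the real marginal. Second, your observation that $\Delta^3$ (and $\Delta^2\mu_0(i)$, $\Delta^2\Lambda_{i_\crec}(0)$, etc.) must act (near-)invertibly on $\bbH$ so that $R(x^\ast)^{\Delta^3}$ is uniform is a genuine parameter constraint, $\gcd(\Delta,s)=1$, that applies equally to the paper's simulation and is not explicitly stated there; in practice it is ensured by the class-group instantiation but you are right that a careful write-up should say so.
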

\ifdefined\IsSub{}
\paragraph{Correctness.} For a polynomial $f\in\bbZ[X]$, every $f(i)$ leaks information about the secret $\secret\bmod i$ leading to a choice of polynomial $f$ such that $f(0)=\Delta\cdot \secret$. For our use case, the secret is the PRF key $\tpk$. Let us consider a set $\cS=\set{i_1,\ldots,i_\crec}$ of indices and corresponding evaluations of the polynomial $f$ at $i_1,\ldots,i_\crec$ giving us key shares: $\tpk^{(i_1)},\ldots,\tpk^{(i_\crec)}$. To begin with, one can compute the Lagrange coefficients corresponding to the set $\cS$ as: $\forall~i\in\cS,\lambda_i(X):=\prod_{j\in\cS\setminus\set{i}} \frac{x_j-X}{x_j-x_i}$. This implies that the resulting polynomial is $f(X):=\sum_{j=1}^{\crec} \lambda_{i_j}(X)\cdot \tpk^{(i_j)}$. 

However, $\lambda_i(X)$ requires one to perform a division $x_j-x_i$ which is undefined as $\hash$ hashes to $\bbG$ whose order is unknown. To avoid this issue, a standard technique is to instead compute coefficient $\Lambda_i(X):=\Delta\cdot \lambda_i(x)$. Thereby, the resulting polynomial that is reconstructed if $f'(X)=\Delta\cdot f(X)=\sum_{j=1}^{\crec} \Lambda_{i_j}(X)\cdot \tpk^{(i_j)}$. Consequently, 

\begin{align*}
    \hash(x)^{\Delta^3\cdot \tpk}=\hash(x)^{\Delta\cdot f'(0)}&=\hash(x)^{\Delta\cdot \sum_{j=1}^{\crec} \Lambda_{i_j}(0)\cdot \tpk^{(i_j)}}\\&=\prod_{j=1}^\crec \left(\hash(x)^{\Delta\cdot \tpk^{{(i_j)}}}\right)^{\Lambda_{i_j}(0)}\\
     &=\prod_{j=1}^\crec \left(\tppeval(\tpk^{(i_j)},x)\right)^{\Lambda_{i_j}(0)}
\end{align*}
Thus, our protocol is correct. 

\paragraph{Pseudrandomness.} Next, we consider the pseudorandomness property of our construction. 
 \thmdprf*
Boneh~\etal\cite{C:BLMR13} showed that from any Key Homomorphic PRF (which Construction~\ref{cons:khprf}), one can build a Distributed PRF. The proof of the following theorem follows the template of this scheme with certain important adaptations as our secret sharing scheme is over integers. The proof technique is to show that if there exists an adversary $\cA$ that can break the $\tprf$ security, one can then use it to build an adversary $\cB$ to break the pseudorandomness of our original PRF, as defined in Construction~\ref{cons:khprf}. The idea behind the proof is for $\cB$, upon receiving choice of $\crec-1$ corruptions as indices $i_1,\ldots,i_{\crec-1}$, to then choose a random index $i_\crec$ and implicitly set $\tpk^{(i_\crec)}$ to be the PRF key chosen by its challenger. Therefore, the $\cB$ now has knowledge of $\crec$ indices, with which it can sample the Lagrange coefficients as before: 
\begin{code}
    \cFor $j=1,\ldots,\crec$ \cDo\\
        \> $\Lambda_{i_j}(X):=\prod_{\zeta\in\set{1,\ldots,\crec}\setminus{j}}^{\crec} \frac{i_\zeta-X}{i_\zeta-i_j}\cdot(\Delta)$
    \end{code}
Now, $\cB$ with knowledge of the keys for indices $i_1,\ldots,i_{\crec-1}$ along with access to Oracle needs to simulate valid responses to $\tppeval$ queries for an unknown index. Call this index $i^\ast$. Then, we have:
\begin{align*}
    \tppeval(\tpk^{(i^\ast)},x):
    &=\hash(x)^{\Delta\cdot \tpk^{(i^\ast)}}=\hash(x)^{\sum_{j=1}^{\crec} \Lambda_{i_j}(i^\ast)\cdot \tpk^{(i_j)}}\\
    &=\hash(x)^{\sum_{i=1}^{\crec-1}\Lambda_{i_j}(i^\ast)\cdot \tpk^{(i_j)}}\cdot \left(\hash(x)^{\tpk^{(i_\crec)}}\right)^{\Lambda_{i_\crec}(i^\ast)}\\
\end{align*}
The last term is simulated using $\cB$'s own oracle access. 
\newcommand{\A}{\cA}
\begin{proof}
Let $\cA$ be a PPT attacker against the pseudorandomness property of $\tprf$, having advantage $\epsilon$.

$\A$ first chooses $\crec-1$ indices $\bfK=\{i_1,\ldots,i_{\crec-1}\}$ where each index is a subset of $\{1,\ldots,\csize\}$. $\cA$ receives the shares of the keys $\tpk^{(i_1)}=f(i_1),\ldots,\tpk^{(i_{\crec-1}) }=f(i_{\crec-1})$ (for unknown polynomial $f$ of degree $\crec$ such that $f(0)=\tpk\cdot \Delta$ with $\Delta:=\Delta$. Further, $\A$ has access to $\oEval(i,x)$ receiving $\tppeval(\tpk_i,x)$ ins response. Additionally, $\A$ expects to have oracle access to the random oracle $\hash$.

Using this attacker $\A$, we now define a PPT attacker $\cB$ which will break the pseudorandomness property of Construction~\ref{cons:khprf}. Note that $\cB$ is given access to the oracle that either outputs the real evaluation of the PRF on key $\tpk^\ast$ or a random value. Additionally, $\cB$ expects to have oracle access to the random oracle $\hash$.
 
\begin{itemize}
\item \textbf{Setup:} $\cB$ does the following during Setup. 
\begin{itemize}
    \item Receive set $S=\{i_1,\ldots,i_{\crec-1}\}$ from $\A$. 
    \item Next $\cB$ generates the key shares and public key as follows: 
    \begin{itemize}
        \item Sample $\tpk^{(i_1)},\ldots,\tpk^{(i_{\crec-1})}\in \bbZ$.
        \item $\cB$ picks an index $i_\crec$ at random and implicitly sets the PRF key chosen by its challenger as $\tpk^{(i_t)}$.
        \item Immediately, given the $\crec$ indices, one can construct the secret sharing polynomial $f\in\bbZ[X]$ as described earlier, but instead recreating the polynomial $f'(X)$ using the coefficients $\Lambda_{i_j}(X)$ for $j=1,\ldots,\crec$ with $\tpk^{(i_t)}$ being unknown to $\cB$ and using its challenger to simulate a response.
        \item $\cB$ gives $\tpk^{(i_1)},\ldots,\tpk^{(i_{\crec-1})}$ to $\cA$.
    \end{itemize}
    \end{itemize}
\item \textbf{Queries to H:} 
$\cB$ merely responds to
all queries from $\cA$ to $\hash$ by using its oracle access to $\hash$.
\item \textbf{Queries to Partial Evaluation:} $\cB$ receives as query input, some choice of key index specified by $i^\ast$ and input $x_j$ for $i=1,\ldots,Q$.  In response $\cB$ does the following:
\begin{itemize}
    \item Forward $x_j$ to its challenger. In response it implicitly receives $\tppeval(\tpk^{(i_\crec)},x_j)$, but off by a factor of $\Delta$ in the exponent. Call this $h_{j,\crec}$.
    \item Compute: $h_{j,i^\ast}=\hash(x_j)^{\sum_{i=1}^{\crec-1}\Lambda_{i_j}(i^\ast)\cdot \tpk^{(i_j)}}\cdot (h_{j,\crec})^{\Lambda_{i_\crec}(i^\ast)}$ where $\cB$ uses its own access to hash oracle to get $\hash(x_j)$.
    \item It returns $h_{j,i^\ast}$ to $\cA$. 
\end{itemize}
\item \textbf{Challenge Query:} On receiving the challenge input $x^\ast$, $\cB$ does the
following:
\begin{itemize}
    \item Ensure that it is a valid input, i.e., there is no partial evaluation queries on $x^\ast$ at any unknown index point. 
    \item If not, $\cB$ forwards to its challenger $x^\ast$. In response it implicitly receives $\tppeval(\tpk^{(i_\crec)},x^\ast)$, but off by a factor of $\Delta$ in the exponent. Call this $h^\ast$. 
    \item It also uses its oracle access to $\hash$ to receive
    $h=\hash(x^\ast)$.
    \item It finally computes $y=\hash(x_j)^{\Delta^2\cdot \sum_{i=1}^{\crec-1}\Lambda_{i_j}(0)\cdot \tpk^{(i_j)}}\cdot (h^\ast)^{\Delta^2\cdot \Lambda_{i_\crec}(0)}$ and outputs
    $y$ to $\cA$
\end{itemize}
\item \textbf{Finish:} It forwards $\cA$'s guess as its own guess. 
\end{itemize}
\paragraph{Analysis of the Reduction.} Note that for the case when $b=0$, $\cA$ expects to receive $\hash(x^\ast)^{\Delta^3\cdot \tpk}$ where $\tpk$ is defined at the point 0. So, we get:
\begin{align*}
    \hash(x^\ast)^{\Delta^3\cdot \tpk^{(0)}}&=\hash(x^\ast)^{\Delta^2\cdot f'(0)}=\hash(x)^{\Delta^2\sum_{j=1}^{\crec} \Lambda_{i_j}(0)\cdot \tpk^{(i_j)}}\\
    &=\hash(x)^{\Delta^2\sum_{i=1}^{\crec-1}\Lambda_{i_j}(0)\cdot \tpk^{(i_j)}}\cdot \left(\hash(x)^{\tpk^{(i_\crec)}}\right)^{\Delta^2\Lambda_{i_\crec}(0)}\\
\end{align*}
This shows that the returned value $y$ is consistent when $b=0$. Meanwhile, when $b=1$, $h^\ast$ is a random element in the group and then $y$ is a truly random value which means that $\cB$ has produced a valid random output for $\cA$. Similarly, when $b=0$, every response to partial evaluation is also done consistently by correctness of the underlying secret sharing scheme. Meanwhile, when $b=1$, we can rely on the statistical privacy preserving guarantee of the underlying secret sharing scheme to argue that the difference that the adversary can notice is statistically negligible. This concludes the proof where $\cB$ can only succeed with advantage $\epsilon$. 
\end{proof}
\ignore{\paragraph{Verification of Key Homomorphism.} Let $\tpk_1$ and $\tpk_2$ be two sampled keys from $\cK$. Let $\tpk_1^{(1)},\ldots,\tpk_1^{(\csize)}\getsr\tpshare(\tpk_1)$ and $\tpk_2^{(1)},\ldots,\tpk_2^{(\csize)}\getsr\tpshare(\tpk_2)$ where we have two polynomials $f_1(X)$ and $f_2(X)$ with the property such that $f_1(X)=\Delta\cdot \tpk_1,f_2(X)=\Delta\cdot \tpk_2$. Consider a subset $\cS$ of size at least $\crec$ indicated by the indices $\set{i_1,\ldots,i_\crec}$. Then, the lagrange coefficients induced by the set $\cS$ can be defined as: $\forall~i_j\in\cS,\lambda_{i_j}(X):=\prod_{\zeta\in[\crec]\setminus\set{i}} \frac{i_\zeta-X}{i_\zeta-i_j}$. This implies that the resulting polynomial is $f_1(X):=\sum_{j=1}^{\crec} \lambda_{i_j}(X)\cdot \tpk_1^{(i_j)}, f_2(X):=\sum_{j=1}^{\crec} \lambda_{i_j}(X)\cdot \tpk_2^{(i_j)}$. Similarly, as before we will consider $\Lambda_i(X):=\Delta\cdot \lambda_i(x)$. 

Thereby, the resulting polynomials are $f_1'(X)=\Delta\cdot f_1(X)=\sum_{j=1}^{\crec} \Lambda_{i_j}(X)\cdot \tpk_1^{(i_j)},f_2'(X)=\Delta\cdot f_2(X)=\sum_{j=1}^{\crec} \Lambda_{i_j}(X)\cdot \tpk_2^{(i_j)}$. 

Then, for any $x$, consider $y_{1,2}^{(j)}:=\hash(x)^{\Delta \tpk_1^{(j)}}\cdot \hash(x)^{\Delta\tpk_2^{(j)}}=\hash(x)^{\Delta\cdot (\tpk_1^{(i_j)}+\tpk_2^{(i_j)})}$ for $j=1,\ldots,\csize$. Then, let us consider:
\begin{align*}
\tpeval(\tpk_1+\tpk_2,x)&:=\hash(x)^{\Delta^3(\tpk_1+\tpk_2)}=\hash(x)^{\Delta^3\tpk_1}\cdot\hash(x)^{\Delta^3\tpk_2}\\&=\hash(x)^{\Delta^2\cdot f_1(0)} \hash(x)^{\Delta^2 f_2(0)}\\
    &=\hash(x)^{\Delta\cdot f_1'(0)+f_2'(0)}\\&=\hash(x)^{\Delta\cdot \sum_{j=1}^{\crec} \Lambda_{i_j}(0)\cdot (\tpk_1^{(i_j)}+\tpk_2^{(i_j)})}\\
    &=\left(\prod_{j=1}^{\crec} \hash(x)^{\Delta\cdot (\tpk_1^{(i_j)}+\tpk_2^{(i_j)})}\right)^{\Lambda_{i_j}(0)}\\&=\left(\prod_{j=1}^{\crec} y_{1,2}^{(i_j)}\right)^{\Lambda_{i_j}(0)}\\
    &=\tpcombine(\set{y_{1,2}^{(i_j)}}_{j\in[\crec]}
\end{align*}
As discussed earlier, the \HSM\ assumption also generalizes the \texttt{DCR} assumption. It follows that we also have a Distributed PRF that is Key Homomorphic under the DCR Assumption in the Random Oracle model. }
\else 
The proof of security and correctness can be found in Section~\ref{sub:tprf}.
\fi

\subsection{Construction of \CAPS\ without Leakage Simulation in $\cl$ Framework}
\label{sub:async}

\begin{construction}
    \label{cons:caps-cl}
   We present the construction of \CAPS\ without Leakage Simulation, in the $\cl$ Framework in Figure~\ref{fig:async}.
\end{construction}
\renewcommand{\fbox}{\fcolorbox{red}{white}}
\newcommand{\fbbox}{\fcolorbox{blue}{white}}
\ifdefined\IsSP
\begin{figure}[!tb]
\else
\begin{figure}[!tb]
\fi
\centering
	\resizebox{\columnwidth}{!}{\begin{protocolbox}{$\caps'$}
		\scalebox{0.9}{
			\begin{minipage}[t]{\linewidth}
			    \algoHead{System Parameters Generation}
			    \begin{algorithmic}
                       \State Run $\pp\getsr\cl.\Gen(1^\kappa,1^\cthr,1^\crec,1^\csize)$
			    \end{algorithmic}
               
        \algoHead{Server Initiating Iteration $\lab$}
			    \begin{algorithmic}
                       \State Select $n$ clients $\cC_\lab$. 
                       \State Select $\csize=\floor{\log_2 n}$-sized committee
                       \State Broadcast to $n$ clients the committee.
			    \end{algorithmic}
			     \algoHead{Data Encryption Phase by Client $i$ in iteration $\lab$}
			    \begin{algorithmic}
                    \State \textbf{Input:} $\mathbf{\psain_{i,\lab}}=(\psain_{i,\lab}^{(1)},\ldots,\psain_{i,\lab}^{(L)})$
                     \State {Sample $\tpk_{i,1},\ldots,\tpk_{i,L}\getsr\tprf.\cK$}
                    \For{ ~$in=1,\ldots,L$}
                    \State $h_{i,\lab}^{(in)}=\tprf.\tpeval(\tpk_{i,in},\lab)$
                    \State Compute $\psact_{i,\lab}^{(in)}=f^{\psain_{i,\lab}^{(in)}}\cdot h_{i,\lab}^{(in)}$
                    \State Compute $\set{\tpk_{i,in}^{(j)}}_{j\in[\csize]}\getsr\tprf.\tpshare(\tpk_{i,in},\cthr,\crec,\csize)$
                    \State For $j\in[\csize]$, set $\set{\psaaux_{i,\lab}^{(j,in)}}=\tprf.\tppeval(\tpk_{i,in}^{(j)},\lab$
                    \EndFor
                    \State {Send $\psact_{i,\lab}^{(1)},\ldots,\psact_{i,\lab}^{(L)}$ to the Server}
                    
                    \State {Send $\psaaux_{i,\lab}^{(j,1)},\ldots,\psaaux_{i,\lab}^{(j,L)}$ to committee member $j~\forall j\in[\csize]$, via Server appropriately encrypted}
			    \end{algorithmic}
        \algoHead{Server Forwards in iteration $\lab$}
			    \begin{algorithmic}
                    \State Let $\cC$ be the clients who sent the prescribed messages. \State Send encrypted $\psaaux_{i,\lab}^{(j)}$ to committee member $j~\forall j\in[\csize],i\in\cC$
			    \end{algorithmic}
        \algoHead{Data Combination Phase by Committee Member $j$ in iteration $\lab$}
			    \begin{algorithmic}
                    \State {\textbf{Input:} $\set{\psaaux_{i,\lab}^{(j,1)},\ldots,\psaaux_{i,\lab}^{(j,L)}}_{i\in\cC}$}
                    \For{$in=1,\ldots,L$}
                    \State {$(\AUX^{(j,in)})\gets\otimes_{i\in\cC}({\psaaux_{i,\lab}^{(j,in)}})$}
                    \EndFor
                    \State {Send $\AUX^{(j,1)},\ldots,\AUX^{(j,L)}$ to Server}
			    \end{algorithmic}
                     \algoHead{Data Aggregation Phase by Server in iteration $\lab$}
                     
			    \begin{algorithmic}
                     \State {\textbf{Input:}  $\set{\set{\AUX_\lab^{(j,in)}}_{j\in\cS},\set{\psact_{i,\lab}^{(in)}}_{i\in\cC}}_{in\in[L]}$, $|\cS|\geq\cthr$.}
                    \For{$in=1,\ldots,L$}
                    \State {Run $\AUX_\lab^{(in)}\gets \tprf.\tpcombine(\set{\AUX_\lab^{(j,in)}}_{j\in\cS})$}
                     \State {Run $X_\lab^{(in)}\gets\cl\Solve(\pp,(\AUX_\lab^{(in)})^{-1}\cdot \prod_{i\in\cC}{\psact_{i,\lab}^{(in)}})$}
                    \EndFor
                    \State  \textbf{return} $\set{X_\lab^{(in)}}_{in\in[L]}$
			    \end{algorithmic}
		\end{minipage}
		}
  \end{protocolbox}}
	\caption{In this figure, we present the modified steps for $\caps'$. For simplicity, we present only the semi-honest construction. For malicious security, we augment similarly with a second mask. }
	\label{fig:async}
 \label{fig:three}
\ifdefined\IsSP
\end{figure}
\else
\end{figure}
\fi
\section{Constructions of Leakage Resilient Key-Homomorphic Pseudorandom Functions}
\label{sec:khprf-cons}
\label{sec:prf-cons}
\subsection{Construction from \LWR\ Assumption}
\label{sub:cons-prf-lwr}
We also have the construction from Boneh~\etal\cite{C:BLMR13} of an \emph{almost}
Key Homomorphic PRF from LWR in the Random Oracle model which was later formally proved
secure by Ernst and Koch~\cite{PETS:ErnKoc21} with $\gamma=1$. 

\begin{construction}[Key Homomorphic PRF from \LWR]
    \label{cons:khprf-lwr}
    Let $\hash:\cX\to\bbZ_q^\lambda$. Then, define the efficiently computable function $F_\LWR:\bbZ_q^\lambda\times\cX\to\bbZ_p$ as $\floor{\angle{\hash(x),\bfk}}_\Prime$. $F_\LWR$ is an almost key homomorphic PRF with $\gamma=1$. 
\end{construction}
\begin{construction}[Length Extended Key-Homomorphic from \LWR]
\label{cons:prf-lwr}
    Let $F_\lwr$ be the function as defined in Construction~\ref{cons:khprf-lwr}. Then, consider $F_\LWR^L:=(F_\LWR(\tpk,(x,1)),\ldots F_\LWR(\tpk,(x,L)))$. 
\end{construction}
It is easy to see that Construction~\ref{cons:prf-lwr} is a secure pseudorandom function. An adversary that can break the security of Construction~\ref{cons:prf-lwr} can be used to break the security of Construction~\ref{cons:khprf-lwr}. 

\begin{theorem}[Leakage Resilience of Construction~\ref{cons:prf-lwr}]
\label{thm:lr-prf-lwr}
Let $\prf$ be the PRF defined in Construction~\ref{cons:prf-lwr}. Recall that $\prf.\cK=\bbZ_q^\lambda$. Then, it is leakage resilient in the following sense:
\begin{gather*}
  \set{\prf(\tpk,x),(\tpk+r)\bmod q:\tpk,r\getsr\cK} \approx_c \set{Y,(\tpk+r)\bmod q:Y\getsr\cY,\tpk,r\getsr\cK}  
\end{gather*}
\end{theorem}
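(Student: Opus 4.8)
The plan is to mirror the hybrid argument used for Theorem~\ref{thm:lr-prg-lwr}, relying on two elementary facts. First, since $r \getsr \cK = \bbZ_q^\lambda$ is uniform and independent of $\tpk$, the pair $(\tpk, (\tpk+r)\bmod q)$ is uniform on $\cK \times \cK$; in particular $(\tpk+r)\bmod q$ acts as a perfect one-time pad and reveals nothing about $\tpk$. Second, once $\hash$ is modeled as a random oracle, the vectors $\hash((x,1)), \ldots, \hash((x,L))$ are uniform and independent, so $\prf(\tpk,x)$ is exactly a fresh LWR-type sample $\floor{\bfA^\top \tpk}_\Prime$ for a matrix $\bfA$ supplied lazily by the oracle; hence $\prf(\tpk, x)$ at a single point is pseudorandom by the PRF security of Construction~\ref{cons:prf-lwr}, which itself reduces to $\LWR$.

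Concretely, I would use the following chain. $\Hybrid_0$ is the left-hand distribution $\set{\prf(\tpk,x), (\tpk+r)\bmod q}$. In $\Hybrid_1$ I replace $(\tpk+r)\bmod q$ by a freshly sampled uniform $\tpk'' \getsr \cK$; by the one-time-pad observation this is a perfect identity, so $\Hybrid_0$ and $\Hybrid_1$ have the same distribution. In $\Hybrid_2$ I replace $\prf(\tpk,x)$ by a uniform $Y \getsr \cY$: a distinguisher $\cB$ for the pair $(\Hybrid_1, \Hybrid_2)$ is converted into a PRF adversary $\cB'$ that queries its challenge oracle once, at the point $x$, samples $\tpk'' \getsr \cK$ itself, hands $(\text{oracle answer}, \tpk'')$ to $\cB$, and relays every $\hash$-query of $\cB$ to the random oracle of the PRF game; since the simulated leakage $\tpk''$ is independent of the key, one challenge query suffices, and $\cB'$ inherits $\cB$'s advantage. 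Finally $\Hybrid_3$ rewrites $\tpk''$ back as $(\tpk+r)\bmod q$, another perfect identity, and $\Hybrid_3$ is precisely the right-hand distribution. Summing the three transitions bounds the overall distinguishing advantage by $\negl$ under $\LWR$.

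The only genuinely delicate part is the random-oracle bookkeeping in the $\Hybrid_1 \to \Hybrid_2$ step. In contrast to the SHPRG statement, where the matrix $\bfA$ was a fixed public parameter and the reduction was to plain PRG security, here the effective matrix is generated on the fly by $\hash$, so the reduction must faithfully forward all of $\cB$'s oracle queries --- including those on the points $(x,1),\ldots,(x,L)$ and on unrelated inputs --- to the PRF challenger, who is the one programming the oracle. Once one verifies that the point $x$ is admissible in the PRF game and that the independent $\tpk''$ exactly reproduces the leakage term, nothing further is needed; note that the \emph{almost} in the key-homomorphism of $F_\LWR$ is irrelevant to this lemma, which never uses homomorphism at all.
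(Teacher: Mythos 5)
Your proof is correct and follows exactly the same three-hybrid chain as the paper's own argument: replace the leakage $(\tpk+r)\bmod q$ by an independent uniform $\tpk''$ (perfect identity, since $r$ one-time-pads $\tpk$), invoke PRF security to replace $\prf(\tpk,x)$ by a random $Y$, then swap $\tpk''$ back. You supply more detail than the paper does — in particular the explicit single-query reduction to the PRF game, the random-oracle forwarding, and the observation that almost-key-homomorphism plays no role — but the underlying argument is identical.
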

\begin{proof}
The proof proceeds through a sequence of hybrids. 
\begin{gamedescription}[name=Hybrid,nr=-1]
    \describegame The left distribution is provided to the adversary. In other words, the adversary gets: 
    \[
    \set{\prf(\tpk,x),(\tpk+r)\bmod q:\tpk,r\getsr\cK}
    \]
    \describegame In this hybrid, we replace $(\tpk+r)\bmod q$ with a uniformly random value $\tpk'\getsr\cK$.
    \[
    \set{\prf(\tpk,x),{\color{blue}\tpk'}:\tpk,\tpk'\getsr\cK}
    \]
    Note that $(\tpk+r)\bmod q$ and $\tpk'$ are identically distributed. Therefore, the $\Hybrid_0,\Hybrid_1$ are identically distributed. 
    \describegame In this hybrid, we will replace the PRF computation with a random value from the range. 
    \[
    \set{{\color{blue}Y},\tpk':Y\getsr\cY,\tpk'\getsr\cK}
    \]
    Under the security of the PRF, we get that $\Hybrid_1,\Hybrid_2$ are computationally indistinguishable. 
    \describegame We replace $\tpk'$ with $(\tpk+r)\bmod q$. 
    \[
    \set{Y,{\color{blue}(\tpk+r)\bmod q}:Y\getsr\cY,\tpk,r\getsr\cK}
    \]
    As argued before $\Hybrid_2,\Hybrid_3$ are identically distributed. 
\end{gamedescription}
Note that $\Hybrid_3$ is the right distribution from the theorem statement. This completes the proof. 
\end{proof}
\subsection{Learning with Errors Assumption}
\label{sub:cons-prf-lwe}
\begin{construction}[Key Homomorphic PRF from \LWE]
    \label{cons:khprf-lwe}
    Let $\hash_\bfA:\cX\to\bbZ_q^{L\times \lambda}$. Then, define the efficiently computable function $F_\LWE:(\bbZ_q^\lambda\times\chi^L)\times\cX\to\bbZ_q^L$ as $F_\LWE((\bfk,\bfe),x):=\hash_\bfA(x) \bfk+\bfe$. $F_\LWE$ is an almost key homomorphic PRF. 
\end{construction}
\begin{remark}
    We note that it has been shown that one can also sample $\bfx$ from the same distribution as $\bfe$. This is known as the short-secret LWE assumption and was employed in the encryption scheme of Lyubashevsky~\etal~\cite{EC:LyuPeiReg10}. An immediate consequence of this assumption is that one can set $\bfA$'s dimension $m$ to be much smaller than what is required for the LWE assumption. 
\end{remark}
Similar to the~\LWR\ construction, we need to show that the PRF based on \LWE\ assumption is also leakage resilient. Unfortunately, a similar proof technique does not work because the construction also suffers from leakage on the error vector which are usually Gaussian secrets. Instead, we rely on the Hint-LWE Assumption (Definition~\ref{def:hint-lwe}, as before. 

\section{Construction from \LWR}
\label{sec:lattice}
\label{sec:lwr-cons}
\subsection{Distributed PRF from \LWR}
\label{sub:tprf-lwr}
Let us revisit Construction~\ref{cons:khprf-lwr}. First, observe that the key space is from $\bbZ_q^\rho$ which implies that the order of $\cK$ is known. Further, the computation occurs over a group whose structure and order is known. This is a departure from the construction based on the \HSM\ assumption. Consequently, by assuming that both $\Prime$ and $q$ are primes, one can avoid integer secret sharing but instead rely on traditional Shamir's Secret Sharing over a field, which we defined earlier (see Construction~\ref{cons:ssf}). 


We saw earlier that Construction~\ref{cons:khprf-lwr} was only almost key homomorphic, i.e.: 
\[
F(\bfk_1+\bfk_2,x)=F(\bfk_1,x)+F(\bfk_2,x)+e
\]
where $e\in\set{0,1}$. It also follows that:
\[
T\cdot F(\bfk_1,x)=F(T\cdot \bfk_0,x)-e_T
\]
where $e_T\in{0,\ldots,T}$. This becomes a cause for concern as, in the threshold construction using the Shamir Secret Sharing over the field as shown in Construction~\ref{cons:ssf}, one often recombines by multiplying with a Lagrange coefficient $\lambda_{i_j}$. Unfortunately, multiplying the result by $\lambda_{i_j}$ implies that the error term $e_{\lambda_{i_j}}\in\set{0,\ldots,i_j}$. The requirement is that this error term should not become ``too large''. However, interpreting Lagrange coefficients as elements in $\bbZ_\Prime$ results in the error term failing to be low-norm leading to error propagation. To mitigate this, we use techniques quite similar to Construction~\ref{cons:ss} by essentially clearing the denominator by multiplying with $\offset:=\csize!$. This is a technique made popular by the work of Shoup~\cite{EC:Shoup00a} and later used in several other works including in the context of lattice-based cryptography by Agrawal~\etal\cite{PKC:ABVVW12} and later to construct a distributed key homomorphic PRF from any almost key homomorphic PRF by Boneh~\etal\cite{C:BLMR13}. \footnote{However, their generic construction is incorrect, owing to issues in their security proof which is not entirely sketched out. We fix the issues in our construction.} Then, the combine algorithm will simply multiply all partial evaluations with $\offset$ as well. 
\begin{construction}[Distributed Almost Key Homomorphic PRF from \LWR]
    \label{cons:tprf-lwr}
         A $(\cthr,\csize)$-Distributed PRF is a tuple of PPT algorithms $\tprf:=(\tpgen,\tpshare,\allowbreak\tpeval,\allowbreak\tppeval,\allowbreak\tpcombine)$ with the algorithms as defined in Figure~\ref{fig:tprf-lwr}.
\end{construction}
\paragraph{Issues with the Construction from Boneh~\etal~\cite[\S 7.1.1]{C:BLMR13}.} As remarked earlier, their generic construction suffers from issues stemming from their security reduction. Specifically, their security reduction proceeds similarly to the proof of Theorem~\ref{thm:tprf} and requires $\cB$ to answer honest evaluation queries for key indices for which it does not know the actual key share. Their explanation suggests that we again use the ``clearing out the denominator'' trick by multiplying with $\offset$. However, the issue is that the resulting response will be of the form $\offset\cdot F(k_{i^\ast},x)$ for $i^\ast$, unknown to $\cB$. Consequently, one has to change the partial evaluation response to also include this offset to ensure the correctness of reduction. This would imply that the $\tpcombine$ algorithm will multiply with $\offset$ again, which would thus result in the actual $\tpeval$ algorithm having an offset of $\offset^2$. Furthermore, the partial evaluation algorithm should also have to round down to the elements in $[0,u-1]$ for the same reason that the $\tpcombine$ algorithm required this fix.  
\begin{figure}[!tb]
    \renewcommand{\algorithmiccomment}[1]{{\vspace*{-0.3em}\color{gray}// #1}}

	\begin{protocolbox}{Distributed PRF from Learning with Rounding Assumption}
		\scalebox{0.8}{
			\begin{minipage}[t]{0.55\linewidth}
			
			    \algoHead{$\tpgen(1^\kappa,1^\cthr,1^\csize)$}
			    \begin{algorithmic}
                \State Parse $\kappa=(\rho)$
                \State Run $\pp_\lwr=(\rho,q,\Prime)\getsr\lwr\Gen(1^\rho)$
                \State Set $\ssize:=|q|$
                \State Set $u$ such that $\floor{\Prime/u}>(\offset+1)\cthr\offset$
                \State Set $v$ such that $\floor{u/v}>\offset\cthr$
                \State Sample $\hash:\cX\to\bbZ_q^{\rho}$
                \State \Return $\pp_{\prf}:=(\pp_\lwr,\pp_{\sss},\hash,u)$
			    \end{algorithmic}
       
			    \algoHead{$\tpshare(\bfk\in\bbZ_q^{\rho},\cthr,\csize)$}
			    \begin{algorithmic}
                        \For{$i=1,\ldots,\rho$}
                            \State $\tpk_i^{(1)},\ldots,\tpk_i^{(\csize)}\getsr\sss.\term{SecretShare}(\tpk_i,\cthr,\csize)$
                        \EndFor
                        \State \Return $\set{\bfk^{(j)}=(\tpk_1^{(j)},\ldots,\tpk_\rho^{(j)}}_{j\in[\csize]}$
			    \end{algorithmic}

		\end{minipage}
		}
		\hfill\scalebox{.8}{
			\begin{minipage}[t]{0.55\linewidth}	
   
       			\algoHead{$\tpeval(\bfk,x)$}
			    \begin{algorithmic}
                \State Compute $Y=\Floor{\offset\Floor{\offset\floor{\angle{\hash(x),\bfk}}_\Prime}_u}_v$
                \State \Return $Y$
			    \end{algorithmic}
			 \algoHead{$\tppeval(\bfk^{(i)},x)$}
			    \begin{algorithmic}
                \State Compute $y_i=\Floor{\offset\Floor{\angle{\hash(x),\bfk^{(i)}}}_\Prime}_u$
                \State \Return $y_i$
			    \end{algorithmic}
			    \algoHead{$\tpcombine(\set{y_i}_{i\in \cS})$}
			    \begin{algorithmic}
                    \State Run $\lambda_{i\in\cS}=\sss.\term{CoefF}(\cS)$
			        \State  Compute $Y'=\floor{\sum_{i\in\cS}{\offset\lambda_i \cdot y_i}}_v$
                    \State \Return $Y'$
			    \end{algorithmic}
			\end{minipage}
		}
	\end{protocolbox}
	\vspace*{-0.7em}
	\caption{Construction of Distributed PRF based on the Secret Sharing scheme of Construction~\ref{cons:ssf} where $\sss=(\term{SecretShare},\term{Coeff})$ with $\pp_\sss$ denoting the public parameters of the secret sharing scheme.}
	\label{fig:tprf-lwr}
\end{figure}
\paragraph{Correctness.} 
\ifdefined\IsSP
{\footnotesize\allowdisplaybreaks
\begin{align*}
    \floor{\offset\floor{\angle{\hash(x),\bfk)}}_\Prime}_u&=\floor{\offset \floor{\sum_{z=1}^{\rho} H(x,z)\cdot \tpk_z}_\Prime}_u\\
\end{align*}}%
{\footnotesize\allowdisplaybreaks
\begin{align*}
    &=\floor{\offset \floor{\sum_{z=1}^{\rho} H(x,z) \sum_{i_j\in\set{i_1,\ldots,i_\cthr}} \lambda_{i_j} s_z^{(i_j)}}_\Prime}_u\\
    &=\floor{\offset\floor{\sum_{i_j\in\set{i_1,\ldots,i_\cthr}} \sum_{z=1}^{\rho} H(x,z)\cdot \lambda_{i_j}\cdot s_z^{(i_j)}}_\Prime}_u\\
    &=\floor{\offset\floor{\sum_{i_j\in\set{i_1,\ldots,i_\cthr}}  \angle{\hash(x),\lambda_{i_j}\bfk^{(i_j)}}}_\Prime}_u\\
    &=\floor{\offset\left(e_\cthr+\sum_{i_j\in\set{i_1,\ldots,i_\cthr}}\floor{\lambda_{i_j}{\angle{\hash(x),\bfk^{(i_j)}}}}_\Prime\right)}_u\\
    &=\floor{\offset\left(e_\cthr+\sum_{i_j\in\set{i_1,\ldots,i_\cthr}}e_{\lambda_{i_j}}+\lambda_{i_j}\floor{{\angle{\hash(x),\bfk^{(i_j)}}}}_\Prime\right)}_u\\
    &=\floor{\offset\left(e_\cthr+\sum_{i_j\in\set{i_1,\ldots,i_\cthr}}e_{\lambda_{i_j}}\right)+\sum_{i_j\in\set{i_1,\ldots,i_\cthr}}\offset~ \lambda_{i_j} \floor{\angle{\hash(x),\bfk^{(i_j)}}}_\Prime}_u\\
    &=\floor{\sum_{i_j\in\set{i_1,\ldots,i_\cthr}}\offset\cdot \lambda_{i_j} \floor{\angle{\hash(x),\bfk^{(i_j)}}}_\Prime}_{u}
\end{align*}}%
\else
{\footnotesize\allowdisplaybreaks
\begin{align*}
    \floor{\offset\floor{\angle{\hash(x),\bfk)}}_\Prime}_u&=\floor{\offset \floor{\sum_{z=1}^{\rho} H(x,z)\cdot \tpk_z}_\Prime}_u\\
    &=\floor{\offset \floor{\sum_{z=1}^{\rho} H(x,z) \sum_{i_j\in\set{i_1,\ldots,i_\cthr}} \lambda_{i_j} s_z^{(i_j)}}_\Prime}_u\\
    &=\floor{\offset\floor{\sum_{i_j\in\set{i_1,\ldots,i_\cthr}} \sum_{z=1}^{\rho} H(x,z)\cdot \lambda_{i_j}\cdot s_z^{(i_j)}}_\Prime}_u\\
    &=\floor{\offset\floor{\sum_{i_j\in\set{i_1,\ldots,i_\cthr}}  \angle{\hash(x),\lambda_{i_j}\bfk^{(i_j)}}}_\Prime}_u\\
    &=\floor{\offset\left(e_\cthr+\sum_{i_j\in\set{i_1,\ldots,i_\cthr}}\floor{\lambda_{i_j}{\angle{\hash(x),\bfk^{(i_j)}}}}_\Prime\right)}_u\\
    &=\floor{\offset\left(e_\cthr+\sum_{i_j\in\set{i_1,\ldots,i_\cthr}}e_{\lambda_{i_j}}+\lambda_{i_j}\floor{{\angle{\hash(x),\bfk^{(i_j)}}}}_\Prime\right)}_u\\
    &=\floor{\offset\left(e_\cthr+\sum_{i_j\in\set{i_1,\ldots,i_\cthr}}e_{\lambda_{i_j}}\right)+\sum_{i_j\in\set{i_1,\ldots,i_\cthr}}\offset\cdot \lambda_{i_j} \floor{\angle{\hash(x),\bfk^{(i_j)}}}_\Prime}_u\\
    &=\floor{\sum_{i_j\in\set{i_1,\ldots,i_\cthr}}\offset\cdot \lambda_{i_j} \floor{\angle{\hash(x),\bfk^{(i_j)}}}_\Prime}_{u}
\end{align*}}%
\fi
The last step follows provided the error term is small. Recall that $e_\cthr\in\set{0,\ldots,\cthr}$ and $e_{\lambda_{i_j}}\in\set{0,\ldots,\lambda_{i_j}}$. Now observe that we multiply with $\offset$ and $\lambda_{i_j}$ has a maximum value $\offset$. Therefore, $\offset \cdot e_{\lambda{i_j}}<\offset^2$. Therefore, the size of the error term is $\leq \cthr\cdot\offset+\cthr\cdot \offset^2$. Therefore, provided $u$ is chosen such that $\floor{p/u}>(\offset+1)\cdot \cthr\cdot \offset$, then the last step is correct. Now, we have:
\begin{align*}
    \floor{\offset\floor{\angle{\hash(x),\bfk)}}_\Prime}_u = \floor{\sum_{i_j\in\set{i_1,\ldots,i_\cthr}}\offset\cdot \lambda_{i_j} \floor{\angle{\hash(x),\bfk^{(i_j)}}}_\Prime}_{u}
\end{align*}

\ifdefined\IsSP
\noindent Or, computing \(\floor{\offset \floor{\offset\floor{\angle{\hash(x),\bfk)}}_\Prime}_u}_v\) we get:

{\small
\begin{align*}
    &=\floor{\offset\floor{\sum_{i_j\in\set{i_1,\ldots,i_\cthr}}\offset\cdot \lambda_{i_j} \floor{\angle{\hash(x),\bfk^{(i_j)}}}_\Prime}_{u}}_v\\
    &=\floor{\offset\left(e_\cthr+\sum_{i_j\in\set{i_1,\ldots,i_\cthr}} \lambda_{i_j} \floor{\offset\cdot \floor{\angle{\hash(x),\bfk^{(i_j)}}}_\Prime}_{u}\right)}_v\\
    &=\floor{\offset\left(e_\cthr+\sum_{i_j\in\set{i_1,\ldots,i_\cthr}} \lambda_{i_j} \tppeval(\bfk^{i_j},x)\right)}_v\\
    &=\floor{\sum_{i_j\in\set{i_1,\ldots,i_\cthr}}\lambda_{i_j}\cdot\offset\cdot \tppeval(\bfk^{(i_j)},x)}_v\\
    &=\tpcombine(\set{\tppeval(\bfk^{(i_j)},x)_{i_j\in\set{i_1,\ldots,i_\cthr}}})
\end{align*}
}
\else
{\small \begin{align*}
    \floor{\offset \floor{\offset\floor{\angle{\hash(x),\bfk)}}_\Prime}_u}_v&=\floor{\offset\floor{\sum_{i_j\in\set{i_1,\ldots,i_\cthr}}\offset\cdot \lambda_{i_j} \floor{\angle{\hash(x),\bfk^{(i_j)}}}_\Prime}_{u}}_v\\
    &=\floor{\offset\left(e_\cthr+\sum_{i_j\in\set{i_1,\ldots,i_\cthr}} \lambda_{i_j} \floor{\offset\cdot \floor{\angle{\hash(x),\bfk^{(i_j)}}}_\Prime}_{u}\right)}_v\\
    &=\floor{\offset\left(e_\cthr+\sum_{i_j\in\set{i_1,\ldots,i_\cthr}} \lambda_{i_j} \tppeval(\bfk^{i_j},x)\right)}_v\\
    &=\floor{\sum_{i_j\in\set{i_1,\ldots,i_\cthr}}\lambda_{i_j}\cdot\offset\cdot \tppeval(\bfk^{(i_j)},x)}_v\\
    &=\tpcombine(\set{\tppeval(\bfk^{(i_j)},x)_{i_j\in\set{i_1,\ldots,i_\cthr}}})
\end{align*}}%

\fi
provided $\floor{u/v}>\cthr\offset$. 
\paragraph{Pseudorandomness.} The proof of pseudorandomness follows the outline of the proof of Theorem~\ref{thm:tprf} but with some important differences. First, we do not rely on integer secret sharing but rather plain secret sharing over the field. Therefore, the Lagrange coefficients correspond to $\lambda_{i_j}$. Or more formally, to respond to a partial evaluation query at point $x_j$ with target key index $i^\ast$, the adversary $\cB$ does the following:
\begin{itemize}
    \item Use its oracle to get partial evaluation on $x_j$ at $i_\cthr$, which we call as $h_{j,\cthr}$. 
    \item Then, use Lagrange coefficients but with suitably multiplying with $\offset$ to compute the correct distribution by rounding down to $u$. The choice of $u$ guarantees that the response is correct. 
\end{itemize}
For challenge query, it simply does two rounding down, first to $u$ and then to $v$. 

\paragraph{Verification of Almost Key Homomorphism.} Let $\bfk_1,\bfk_2$ be two keys that are shared. Now, let the key shares received by some party $i_j$ be $\bfk_1^{(i_j)}$ and $\bfk_2^{(i_j)}$. Then,
\ifdefined\IsSP
{\small\begin{align*}
\tppeval(\bfk_1^{(i_j)},x)+\tppeval(\bfk_2^{(i_j)},x)
\end{align*}
\begin{align*}  
&=\floor{\offset\floor{\angle{H(x),\bfk_1^{(i_j)}}}_\Prime}_u+\floor{\offset\floor{\angle{H(x),\bfk_2^{(i_j)}}}_\Prime}_u\\
&=\floor{\offset\floor{\angle{H(x),\bfk_1^{(i_j)}}}_\Prime+\offset\floor{\angle{H(x),\bfk_2^{(i_j)}}}_\Prime}_u - e_1\\
&=\floor{\offset\floor{\angle{H(x),\bfk_1^{(i_j)}}+\angle{H(x),\bfk_2^{(i_j)}}}_\Prime}_u -2e_1\\
&=\tppeval(\bfk_1^{(i_j)}+\bfk_2^{(i_j)},x)-2e_1\\
\end{align*}}%
\else
{\footnotesize\begin{align*}
\tppeval(\bfk_1^{(i_j)},x)+\tppeval(\bfk_2^{(i_j)},x)&=\floor{\offset\floor{\angle{H(x),\bfk_1^{(i_j)}}}_\Prime}_u+\floor{\offset\floor{\angle{H(x),\bfk_2^{(i_j)}}}_\Prime}_u\\
\end{align*}

\begin{align*}  
&=\floor{\offset\floor{\angle{H(x),\bfk_1^{(i_j)}}}_\Prime+\offset\floor{\angle{H(x),\bfk_2^{(i_j)}}}_\Prime}_u - e_1\\
&=\floor{\offset\floor{\angle{H(x),\bfk_1^{(i_j)}}+\angle{H(x),\bfk_2^{(i_j)}}}_\Prime}_u -2e_1\\
&=\tppeval(\bfk_1^{(i_j)}+\bfk_2^{(i_j)},x)-2e_1\\
\end{align*}}%
\fi
It follows that for $n$ such keys: 
{\begin{align*}
\sum_{i=1}^{n}\tppeval(\bfk_i^{(i_j)},x) &=\tppeval(\sum_{i=1}^{n} \bfk_i^{(i_j)},x)-n\cdot e_1\\
\end{align*}}%
This shows that the $\tppeval$ is almost key-homomorphic. Consequently, one can verify that the whole $\tpeval$ procedure is almost key homomorphic for the appropriate error function. To do this, recall that from the correctness of our algorithm: 
{\small\[
\tpshare(\bfk,\csize,\cthr)=\set{\bfk^{(i)}}_{i\in[\csize]},\tpcombine(\set{\tpeval(\bfk^{(i_j)},x)}_{i_\in\set{i_1,\ldots,i_t}}=\tpeval(\bfk,x)
\]
}%
In other words, for $\bfk_1,\bfk_2\in\cK$, $\tpshare(\bfk_1,\csize,\cthr)=\set{\bfk_1^{(i)}}_{i\in[\csize]}$ and $\tpshare(\bfk_2,\csize,\cthr)=\set{\bfk_2^{(i)}}_{i\in[\csize]}$, we will have for $i\in[\csize]$, $\tpeval(\bfk_1^{(i)},x),\tpeval(\bfk_2^{(i)},x)=\tpeval(\bfk_1^{(i)}+\bfk_2^{(i)},x)-2\cdot e_1$. 

\subsection{$\caps'$ Construction based on \LWR\ Assumption}
\label{sub:def-caps-lwr}
We build $\caps$ based on the \LWR\ Assumption, building it based on the Key Homomorphic, Distributed PRF as presented in
Construction~\ref{cons:khprf-lwr}. However, our construction is largely different from the template followed to build $\caps$
from the \HSM\ assumption. 
This is primarily because of the growth in error when combining partial evaluations. Specifically,
will get that $\tppeval(\sum_{i=1}^{n}\bfk_i^{(j)},x)=\sum_{i=1}^{n}{\tppeval(\bfk_i^{(j},x)}+e$ where $e\in\set{0,\ldots,n-1}$ where $n$ is the number of clients participating for that label. This would require us to round down to a new value $u'$ such that $\floor{u/u'}>n-1$. Therefore, while we still employ the underlying functions of the distributed, key-homomorphic PRF based on LWR, we have to open up the generic reduction. For simplicity, we detail the construction for $L=1$. Then, these are the differences:
\begin{itemize}
    \item As done fore Construction~\ref{cons:caps-lwr}, the input is encoded as $x_i\cdot n+1$. 
    \item Specifically, the client's share to the committee will only be the first level of the evaluation, i.e., rounded down to $\Prime$.
    \item Then, committee member $j$ will then add the shares up, multiply with the offset, and then round down to $u$. We will show that provided $\floor{\Prime/u}>\offset\cdot n$, this is consistent with $\tppeval(\sum_{i=1}^{n}\bfk_i^{(j)},x)$.
    \item The decoding algorithm is also similar to the one from Construction~\ref{cons:caps-lwr}. 
\end{itemize}
Recall that $\tprf$ correctness requires that $\floor{\Prime/u}>\cthr\cdot \offset+\cthr\cdot \offset^2$, and so one just needs $\floor{\Prime/u}>\max(\cthr\cdot \offset+\cthr\cdot \offset^2,n\cdot\offset)$. Then, one can rely on the correctness of $\tprf$ as shown below to argue that when the server runs $\tprf.\tpcombine$, the output is $\tpeval(\sum_{i=1}^{n}\bfk_i,x)$. 

\paragraph{Correctness.} We showed how the output of the server's invocation of $\tprf.\tpcombine$ is $\tpeval(\sum_{i=1}^n \bfk_i,x)$. 
Now, let us look at the remaining steps:
{\footnotesize
\begin{align*}
    X_\lab&=\sum_{i=1}^n \psact_{i,\lab}-\AUX_\lab\\
    &=\sum_{i=1}^n (\psain_{i,\lab}* n +1) + \tpeval(\bfk_i,\lab)-\tpeval(\sum_{i=1}^n \bfk_i,\lab)\bmod v\\
    &=n \cdot \sum_{i=1}^n \psain_{i,\lab} + n +\sum_{i=1}^n \tpeval(\bfk_i,\lab)-\tpeval(\sum_{i=1}^n \bfk_i,\lab)\bmod v\\
    &=n \cdot \sum_{i=1}^n \psain_{i,\lab} + n +\tpeval(\sum_{i=1}^n \bfk_i,\lab)-\tpeval(\sum_{i=1}^n \bfk_i,\lab)-e_{n-1}\bmod v\\
    &=n \cdot \sum_{i=1}^n \psain_{i,\lab} + n - e_{n-1} \bmod v\\
    &=n \cdot \sum_{i=1}^n \psain_{i,\lab} + n - e_{n-1}
\end{align*}
}
For the last step to hold, we need that
\[
0\leq n \cdot \sum_{i=1}^n \psain_{i,\lab} + n - e_{n-1} < v
\]
$e_{n-1}$ the small value is 0 and the largest value is $n-1$ which requires that $\sum_{i=1}^n \psain_{i,\lab}<(v-n)/n$. Note that we already require $\floor{p/u}>n\offset,\floor{u/v}>\cthr\offset \implies \floor{p/v}>n\cthr\offset^2$. In other words, $\sum \psain_i<\frac{p}{n^2\cthr\offset^2}$. Finally, $X_\lab'=n\cdot \sum_{i=1}^{n}\psain_{i,\lab}+n$ and that completes the remaining steps.
\section{Deferred Proofs}
\label{sec:def-proof}
\label{sub:def}
\ifdefined\IsSub{}
\lrprglwr*
\begin{proof}
The proof proceeds through a sequence of hybrids. 
\begin{gamedescription}[name=Hybrid,nr=-1]
    \describegame The left distribution is provided to the adversary. In other words, the adversary gets: 
    \[
   \set{\prg_\LWR(\seed)\bmod p, \seed+\seed'\bmod q:\seed,\seed'\getsr\bbZ_q^{n_1}}
    \]
    \describegame In this hybrid, we replace $\seed+\seed'\bmod q$ with a uniformly random value $\seed''\getsr \bbZ_q^{n_1}$.
    \[
    \set{\prg_\LWR(\seed)\bmod p,{\color{blue}\seed''}:\seed,\seed''\getsr\bbZ_q^{n_1}}
    \]
    Note that $(\seed+\seed')\bmod q$ and $\seed''$ are identically distributed.     Let us assume that there exists a leakage function oracle $L$ that can be queried with an input $\seed$, for which it either outputs $\seed+\seed'\bmod q$ for a randomly sampled $\seed'\getsr\bbZ_q^{n_1}$ or outputs $\bfs'\getsr\bbZ_q^{n_1}$. If one could distinguish between hybrids $\Hybrid_0,\Hybrid_1$, then one could distinguish between the outputs of the leakage oracle, but the outputs are identically distributed. Therefore, the $\Hybrid_0,\Hybrid_1$ are identically distributed. Therefore, the $\Hybrid_0,\Hybrid_1$ are identically distributed. 
    \describegame In this hybrid, we will replace the PRG computation with a random value from the range. 
    \[
    \set{{\color{blue}\bfy},{\seed''}:\bfy\getsr\bbZ_p^L,\seed''\getsr\bbZ_q^{n_1}}
    \]
    Under the security of the PRG, we get that $\Hybrid_1,\Hybrid_2$ are computationally indistinguishable. 
    \describegame We replace $\seed''$ with $(\seed+\seed')\bmod q$. 
    \[
    \set{\bfy,{\color{blue}(\seed+\seed')\bmod q}:\bfy\getsr\bbZ_p^L,\seed,\seed''\getsr\bbZ_q^{n_1}}
    \]
    As argued before $\Hybrid_2,\Hybrid_3$ are identically distributed. 
\end{gamedescription}
Note that $\Hybrid_3$ is the right distribution from the theorem statement. This completes the proof. 
\end{proof}
\fi
\subsection{Simulation-Based Proofs of Security}
\ifdefined\IsPRF{}
\thmopalwr*
\begin{proof}
We will prove the theorem statement by defining a simulator $\tSim$, through a sequence of hybrids such that the view of the adversary $\cA$ between any two subsequent hybrids are computationally indistinguishable. Let $H=[n]\setminus \kcorr$, which are the set of honest clients. Further, let $\cC=[n]\setminus D$ where $D$ is the set of dropout clients. 

It is important to note that the server is semi-honest. Therefore, it is expected to compute the set intersection of online clients $\cC$, as expected. In other words, all committee members (and specifically the honest committee members) receive the same $\cC$. This is an important contrast from active adversaries as a corrupt and active server could deviate from expected behavior and send different $\cC^{(j)}$, for different committee members. This could help it glean some information about the honest clients. 

We now sketch the proof below:
\begin{gamedescription}[name=Hybrid,nr=-1]
    \describegame This is the real execution of the protocol where the adversary $\cA$ interacts with the honest parties. 
    \describegame In this hybrid, we will rely on the security of the secret sharing scheme to do two things:
    \begin{itemize}
        \item On the one-hand, all corrupt committee members receive a random share from the honest client's seed. Note that there can be only a maximum of $\cthr$ corrupt committee members. By appropriately choosing $\csize$, conditioned on $\eta$, we can guarantee that this holds with overwhelming probability. Then, for an honest client $i$, these are the shares denoted by $\set{\seed_i^{(j)}}_{j\in[\csize]\cap \kcorr}$ and are generated randomly. 
        \item On the other hand, all the honest committee members receive a valid share of the honest client's seeds. However, each honest client $i$ need to generate this from a polynomial $p(X)$ that satisfies $p(0)=\seed_i$, while also ensuring $p(j)=\seed_{i}^{(j)}$ for $j\in[\csize]\cap \kcorr$. Note that this is a polynomial time operation and is similar to the way packed secret sharing is done where multiple secrets are embedded at distinct points of the polynomial. See Construction~\ref{cons:pssf} for how to build such a polynomial.
    \end{itemize}
    It is clear that by relying on the privacy of the secret sharing scheme, $\Hybrid_0,\Hybrid_1$ are indistinguishable from the adversary. Specifically, we guarantee that under the IND-CPA security of the public key encryption scheme, the adversary only receives an insufficient number of shares, thereby ensuring the privacy of the secret. 
    \describegame In this hybrid, we change the definition of the last honest party's ciphertext. WLOG, let $n$ be the last honest party in $\cC$. Then, we will set $\bpsact_n:=\prg.\prge(\seed_\lab)+\bpsain_\tau-\sum_{i\in\cC\cap H} \bpsact_i$. Here, $\bpsain_\tau$ is the sum of the honest clients inputs. We are still in the hybrid where $\tSim$ knows all the inputs. 

    It is clear that $\Hybrid_1,\Hybrid_2$ are identically distributed, by the almost seed-homomorphism property of $\prg$, provided $\tSim$ chooses the inputs for the honest parties such that they sum up to the value in $\bpsain_\tau$. 
    \describegame Again, without loss of generality, let client 1 be the first honest client in $\cC\cap H$. We will modify the way $\bpsact_{1,\lab}$ is generated. We will set it as $\bpsact_{1,\lab}:=\bpsain_{1,\lab}+u$ where $u\getsr\prg.\cY$. 

    $\Hybrid_2,\Hybrid_3$ are indistinguishable, provided Theorem~\ref{thm:lr-prg-lwr} holds. In the reduction, we will implicitly set $\seed_1+\seed_n$ to be the leakage obtained from the Theorem~\ref{thm:lr-prg-lwr}'s challenger. In this hybrid, $\tSim$ still continues to know all the inputs. If it was a real PRG output, then we can simulate $\Hybrid_2$, while simulating $\Hybrid_3$ in the random case. 
    \describegame In this hybrid, we will replace $\bpsact_{1,\tau}:=u'$ for $u'\getsr\prg.\cY$. It is clear that $\Hybrid_3,\Hybrid_4$ are identically distributed. 
\end{gamedescription}
At this point, observe that we have successfully replace the first honest client's ciphertext, with a uniformly random value that is independent of its input. $\tSim$ will continue to do this modification for every non-dropout honest client $i\in\cC\cap H$. This leaves the clients with all-but-the-last honest clients' ciphertext to be independent of the input, while leaving the last honest client's ciphertext to be only a function of the sum of the inputs, which can be obtained by $\tSim$'s query to the functionality. 
$\tSim$ beings its interaction with the functionality. After all the honest clients have provided inputs to the trusted party $\cT$, in Step~\ref{stepb}, $\tSim$ does not instruct any corrupted client to abort but rather set their inputs to be 0. Then in Step~\ref{stepc}, $\tSim$ does not abort the server. Therefore, in Step~\ref{stepd}, $\tSim$ will learn the sum of the honest parties inputs. Denote it as $\bpsain_\lab$, which is also the sum of the inputs of the honest, surviving clients. With this information, $\tSim$ uses the last hybrid to interact with the adversary $\cA$, who's expecting the real world interaction. This will enable $\tSim$ to run $\cA$ internally. This is crucial to ensure that $\tSim$ can get the output of $\cA$, in the real world, which might depend on its view (including the output) of the server. This view will, in turn, depend on the the honest clients' inputs. Since $\tSim$ sets the honest inputs, in this internal execution, to match the sum of inputs in the real world, we can guarantee that the output of $\cA$ in the internal simulation is indistinguishable from $\cA$'s interaction in the real world by the aforementioned hybrid arguments. 

\paragraph{Proof of Security against Active Server.}
Our constructions so far have relied on providing security against
a semi-honest server. Note that, as shown in the proof of security for Theorems~\ref{thm:opa-lwr}, we can use the functionality query to obtain the sum of all the honest non-drop out clients, as before. 
     
In the semi-honest setting, it is easy to see that the set $\cC$, with respect to which aggregation is performed, includes \emph{all} the honest, non-dropout clients' inputs. Therefore, querying the functionality, $\tSim$ does indeed get the sum of all the honest clients' inputs that are also included in the summation in the real world. This is imperative to ensure that $\tSim$, when internally invoking $\cA$, can get the output of $\cA$ which should be indistinguishable from $\cA$'s output in the real world. Specifically, this output of $\cA$ (in either the internal invocation or the actual execution) will depend on the view which consists of the output of the server. Therefore, if the output of the server in the real world does not include any of these honest clients' inputs, then the output produced by the internal invocation of $\cA$ can be different from that in the real world. 

Let us look at the case when the server is corrupted. Such a server can mount an attack whereby the real-world execution of the protocol may exclude inputs of some of those honest parties but actually included in the output of the ideal functionality. The proof of malicious security is tricky in this setting. Specifically, a malicious server can drop clients after seeing the honest input. This is an issue in the simulation as the simulator has to generate the masked inputs for the honest clients without knowing which of them would be dropped later. 

Prior works, beginning with that of Bonawitz~\etal~\cite{bonawitz2017practical} have relied on using signatures to ensure that a malicious server does not compromise the privacy of an honest user. Fortunately, for $\caps$, we can rely on the one-shot nature of communication flow to secure messages and avoid using signatures.  

As before, let $\kcorr$ denote the corrupted clients.
Then, $\khon_{\term{Cli}}:=[n]\setminus \kcorr$ is the set of honest clients, $\khon_\term{Com}:=[\csize]\setminus\kcorr$ is the set of honest
committee members. Let $\kcorr_{\term{Cli}}:=[n]\cap \kcorr$ denote the corrupted clients and $\kcorr_{\term{Com}}:=[\csize]\cap \kcorr$ denotes the corrupted
committee members. 

Note that we do not rely on signatures. To achieve a protocol with signatures, there needs to be an additional round of communication between the committee members and the server. First, the server forwards the message to the committee members. Then, the committee members responds with their set $\cC^{(j)}$, which is also duly signed. Then, the server performs the intersection and contacts the committee member with this intersection along with signatures. A committee member then only aggregates if there are $(\csize+\cthr)/2$ valid signatures. 

Our focus is to ensure that the committee members only speaks once. In other words, our construction currently has the server identify $\cC^{(j)}$, for each $j\in[\csize]$, based on the information it has received from the client. Then, the server forwards the message to the committee member along with its computed intersection. This setting allows the server to selectively forward shares to committee member and also choose different sets for different committee members. We will show that if $\crec>(\csize+\cthr)/2$ where $\crec$ is the reconstruction threshold in the committee and $\cthr$ is the corruption threshold, then the server doing so will receive meaningless information. Formally, we will show that  there does not exist two sets of users $\cC\neq \cC'$ such that the server can reconstruct the shares over these two sets. 

Observe that the server controls $\cthr$ committee members. We require each honest committee member to participate once, per iteration. This is easily enforced as the share from the honest client encrypts, along with the share for the honest committee member, also the identity of the honest client and the iteration count. Therefore, a server cannot replay the same share, in another iteration. With this guarantee, a malicious server, in order to reconstruct the shares of two distinct sets $\cC,\cC'$, will require the cooperation of at least $\crec-\cthr$ honest users, while there are $\crec-\cthr$ honest users present. We will therefore need $2(\crec-\cthr)>\csize-\cthr$. Or, $\crec>(\csize+\cthr)/2$. This ensures that the server can only effectively reconstruct with respect to a unique set $\cC$ and $\khon^\ast$ is the set of honest users in this set. 
Note that the above inequality holds for $\crec=2*\csize/3, \cthr<\csize/3$. Indeed, prior works such as Bonawitz~\etal~\cite{bonawitz2017practical} and most recently LERNA~\cite{AC:LLPT23} also tolerated only upto a $\csize/3$ corruption threshold. 

While we have shown that there is a unique set $\khon^\ast$ of honest users, $\khon^\ast$ is only revealed after all the honest clients have sent their inputs. Therefore, the simulator, during its internal execution of $\cA$, needs to be able to generate the masked inputs for the honest users and it only knows the sum of \emph{all} the honest clients that have not dropped out. This set may be distinct from $\khon^\ast$. Therefore, we need a way for the simulator to generate masked inputs, independent of the sum of the inputs, and then ensure that the correct sum is computed during reconstruction.

The simulator does the following:
\begin{itemize}
    \item For every honest client that hasn't dropped out, i.e., for all $i\in \khon_{\term{Cli}}$, the simulator does the following:
    \begin{itemize}
        \item Samples $\hseed_{i,\lab}\getsr\{0,1\}^{\log q}$
        \item Samples $\seed_{i,\lab}\getsr\prg.\cK$
        \item It computes $\bmask_{i,\lab}':=\hash(\hseed_{i,\lab}$
        \item Computes $\bmask_{i,\lab}=\prg.\prge(\seed_{i,\lab})$
        \item Sets $\bpsact_{i,\lab}:=\bmask_{i,\lab}+\bmask_{i,\lab}'$
        \item Like shown in proof of Theorem~\ref{thm:opa-lwr}, the shares of $\seed_{i,\lab}^{(j)},\allowbreak\hseed_{i,\lab}^{(j)}$ for corrupt committee members $j\in \kcorr_{\term{Com}}$ are chosen at random. Meanwhile, the shares for the honest committee members are to be sampled in the second phase, with a specific purpose. However, the server still expects an encryption of shares from honest client to honest committee members. Therefore, it simply encrypts some random shares for the honest committee members too and sends it to the server.
        \item It sends to $\cA$, $\bpsact_{i,\lab}$ and $\seed_{i,\lab}^{(j)}$ and $\hseed_{i,\lab}^{(j)}$ for $j\in[\csize]$, which is encrypted appropriately. 
    \end{itemize}
    \item This concludes the client phase of the operation. Then, comes the interaction with the committee. Note that the simulator is also required to simulate the honest committee member $j$. 
    \item The simulator, which has received $\cC^{(j)}$ for each honest committee member $j$ does the check to make sure that there exists at least $\crec-\cthr$ such committee members with the same $\cC^{(j)}$. We will call this client set as $\cC$, while calling the set of these committee members to be $C_{\term{good}}$. 
    Meanwhile, it records those committee members with a different $\cC{(j)}$. We will call this as some set $C_{\term{bad}}$. 
    Looking ahead, for those honest committee members in $C_{\term{bad}}$, the shares of the honest clients that are to be added up is going to be random values. Note that $|C_\term{bad}|\leq \csize-\crec$. 
    \item The simulator now operates in two phases for honest committee member $j\in\khon_{\term{Com}}$. First is the share generation phase for honest clients $i$. It does the following:
    \begin{itemize}
        \item If $j\in C_{\term{bad}}$, then for honest client $i\in \cC^{(j)}\cap \khon_{\term{Cli}}$, set $\seed_{i,\lab}^{(j)},\hseed_{i,\lab}^{(j)}$ to be random values. 
        \item Now, the simulator computes the shares for all honest clients $i$ to $j\in C_{\term{good}}$. These are valid shares of $\seed_{i,\lab},\hseed_{i,\lab}$ subject to the constraint that random values were fixed for those $j\in C_{\term{bad}}$ where $i\in\cC^{(j)}$, and for those $j\in \kcorr_{\term{Com}}$. 
        \item The honest committee member $j$ receives from $\cA$, $\seed_{i,\lab}^{(j)}$ and $\hseed_{i,\lab}^{(j)}$ for $i\in\cC^{(j)}$. Note that the maximum number of prefixed values is $\csize-\crec+\cthr$, and by our constraint $\crec>\csize-\crec+\cthr$ which guarantees that these prefixed values cannot uniquely determine a polynomial of degree $\crec$. 
    \end{itemize}
    \item The second phase, is the combination phase. It responds, as expected, subject to the set $\cC^{(j)}$ that it receives. 
    \item $\tSim$ now queries the functionality. First, it provides $[n]\setminus\cC$ as the set of dropped out clients. Then, it sends for those corrupted clients $\kcorr\cap \cC$, input as 0 to the functionality. In response, it gets
    $\sum_{\cC\cap H} \bpsain_{i,\lab}$. Call this $\bpsain_{H}$.
    \item $\tSim$ now picks $i^\ast\in \khon^\ast$. It programs the random oracle by setting $\hash(\hseed_{i^\ast,\lab})=\bpsain_{H}-\bmask_{i^\ast,\lab}'$.
    \item $\tSim$ continues to respond, on behalf of the honest committee member, as expected. 
    \item Finally, $\cA$ (which controls the server) will make queries to random oracle and it answers as expected. $\tSim$ outputs whatever $\cA$ outputs at the end. 
\end{itemize}
We will now need to show that the above simulation is indistinguishable from the real world execution that $\cA$ expects when it is internally run. The hybrids proceeds as follows:
\begin{gamedescription}[name=Hybrid,nr=-1]
    \describegame This is the real world execution. 
    \describegame In this execution, we replace the shares sent by the honest client $i$ to honest committee member $j$, which are encrypted under $\upkepk_j$ with a random value. Under the semantic security of this encryption scheme, we can guarantee that this is indistinguishable from the previous hybrid. Meanwhile, these honest committee members (which the simulator controls) will receive the shares directly from the simulator. The view of $\cA$, in this hybrid, is indistinguishable from the real world execution, under the semantic security of the encryption scheme. 
    \describegame We will rely on the security of the secret sharing scheme to sample the shares for the honest clients, similar to $\Hybrid_1$ of semi-honest security. For those $j\in\kcorr_{\term{Com}}$, the shares are randomly chosen. 
    Furthermore, for those $j\in C_{\term{bad}}$ also the shares are randomly chosen. Finally, for those $j\in C_{\term{good}}$ it gets a valid share subject to those previously chosen random values. This is similar to $\Hybrid_1$ in the proof of semi-honest security. 
    \describegame In this hybrid, for all those honest clients $i$ that are not in $\cC$, we will set $\bpsact_{i,\lab}=\bmask_{i,\lab}+\bmask_{i,\lab}'$, effectively setting the input to be 0. Observe that the view of $\cA$ remains unchanged as these honest clients inputs were never incorporated in the final sum anyway. Furthermore, if any of these $i\in\cC^{(j)}$ for $j\in C_{\term{bad}}$, the shares from these honest clients $i$ to these $j$ are completely random \emph{and} independent of $\bmask_{i,\lab}$ and $\bmask_{i,\lab}'$. 
    \describegame In this hybrid, we pick an honest surviving client $i^\ast\in\khon^\ast$. It sets the inputs for all $i\neq i^\ast \in \khon^\ast$ to be 0. Then sets $\bpsain_{i^\ast,\lab}$ to be the sum of all the $i\in\khon^\ast$. Call this sum as $\bpsain_{H}$. Observe that the values are still correlated and pseudorandom. 
    \describegame In this hybrid, we will program $\hash(\hseed_{i^\ast,\lab})=\bpsain_H-\bmask_{i^\ast,\lab}'$, while setting $\bpsact_{i^\ast,\lab}=\bmask_{i^\ast,lab}+\bmask_{i^\ast,\lab}'$. Note that because $\hseed_{i^\ast,\lab}$ is chosen uniformly at random from $q$ values where $q$ is a large prime. The probability of collision is negligible. There is only negligible difference in the view of $\cA$.
    \describegame In this hybrid, we will set $\bpsact_{i,\lab}$ for $i\neq i^\ast$ to be some random term in the ciphertext space. Then, we will set $\bpsact_{i^\ast,\lab}=\hash(\hseed_{i^\ast,\lab})-\sum_{i\neq i^\ast}\bpsact_{i,\lab}$.
    
    Note that under the leakage resilience property of the seed-homomorphic PRG, we can conclude that the two hybrids are computationally indistinguishable. 
    \describegame In this hybrid, we will replace $\bpsact_{i,\lab}=\bmask_{i,\lab}+\bmask_{i,\lab}'$ for $i\neq i^\ast$. 
\end{gamedescription}
Observe that this last hybrid is exactly what the simulator produces. This concludes the proof. 
\end{proof}
\fi

\ifdefined\IsSub{}
\begin{restatable}{theorem}{thmopalwe}
    \label{thm:opa-lwe}
     Let $\delta,\eta$ (resp. $\delta_C,\eta_C$) be the dropout and corruption fraction among the clients chosen for summation (resp. clients in committee). Let $\kappa$ be the security parameter. Let $N$ be the total universe of clients and $n$ be the number of clients chosen for summation in each iteration while $\csize$ be the number of committee clients chosen to help in each iteration. Let $L$ be the length of the vector. 

    Let $\prg=(\prg.\tpgen,\prg.\prge)$ be the leakage-resilient, seed-homomorphic PRG defined in Construction~\ref{cons:shprg-lwe} and $\ss=(\ss.\tpshare,\allowbreak\ss.\recons)$ be the $(\cthr,\crec,\csize)$-secret sharing scheme such that $\crec>(\csize+\cthr)/2)$ defined in Construction~\ref{cons:pssf}. Further, assuming a PKI (or authenticated channels) where each client knows a public key $\pkepk_j$ for a committee member $j$, associated with an IND-CPA secure public key encryption scheme $\pke$. Then, if $\delta_C+\eta_C<1/3$,
    $\caps_\LWR$ securely realizes the functionality $F_{D,\delta}^\lab(X)$  (defined in Equation~\ref{eq:functionality}) with server malicious security 
    with abort where $X=\set{\bpsain_{i,\lab}}_{i\in[n]-\setminus \kcorr}$ and $\kcorr\subset [N]$ and $|\kcorr|\cap [n]\leq \eta n$, under the Hint-\LWE\ assumption.
\end{restatable}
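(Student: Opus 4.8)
The plan is to prove this by a simulation argument that tracks the proof of Theorem~\ref{thm:opa-lwr} almost line for line, with two substitutions: the invocation of the leakage-resilience statement Theorem~\ref{thm:lr-prg-lwr} is replaced by a reduction to the Hint-\LWE\ assumption (Definition~\ref{def:hint-lwe}), and one must account for the fact that in Construction~\ref{cons:shprg-lwe}, hence in $\caps_\LWE$ (Construction~\ref{cons:caps-lwe}), only the seed component $\seed$ is secret-shared while the Gaussian error component $\bfe$ is not. First I would fix the parameter regime via the correctness accounting already carried out: the server reconstructs $\seed_\lab=\sum_{i\in\cC}\seed_{i,\lab}$, evaluates $\prg_\LWE$ with the error component set to $\mathbf{0}$ to obtain $\AUX_\lab=\bfA\cdot\seed_\lab$, and subtracting from $\sum_{i\in\cC}\bpsact_{i,\lab}$ leaves $\sum_{i\in\cC}\bfe_{i,\lab}+\Delta\cdot\sum_{i\in\cC}\bpsain_{i,\lab}$ with $\Delta=\lfloor q/p\rfloor$, so $\term{Decode}$ recovers the sum exactly when $\sum_{i\in\cC}\bfe_{i,\lab}<\Delta/2$; this is the regime in which $\tSim$ operates, and the committee-size bound (hypergeometric tail, Definition~\ref{def:hypergeo}) ensures at most $\cthr$ corrupt committee members with overwhelming probability.

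Next I would replay the hybrid sequence of the proof of Theorem~\ref{thm:opa-lwr}. In $\Hybrid_1$, the shares handed to corrupt committee members are replaced by uniform values and those to honest committee members are obtained by interpolating a polynomial consistent with the fixed corrupt shares and with $\seed_{i,\lab}$ at $0$; indistinguishability follows from IND-CPA security of $\pke$ and privacy of the packed $\ss$ scheme. In $\Hybrid_2$, writing $n$ for the last honest non-dropout client in $\cC$, I set $\bpsact_{n,\lab}:=\bfA\cdot\seed_\lab-\sum_{i\in(\cC\cap H)\setminus\{n\}}\bpsact_{i,\lab}+\bfe_\lab+\Delta\cdot\bpsain_\tau$, where $\bfe_\lab$ is the fresh accumulated error and $\bpsain_\tau$ the honest input sum; by almost-seed-homomorphism of $\prg_\LWE$ this is identically distributed to $\Hybrid_1$ as long as the honest inputs sum to $\bpsain_\tau$. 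The remaining hybrids replace each honest client's ciphertext, one at a time beginning with the first honest client, by a uniform element of the ciphertext space, and at the end $\tSim$ queries $F_{D,\delta}^\lab$ (Equation~\ref{eq:functionality}) with corrupted inputs set to $0$, learns the honest sum, and uses the final hybrid to drive the internal execution of $\cA$.

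The heart of the argument — and the step I expect to be the main obstacle — is the indistinguishability of $\Hybrid_2$ and $\Hybrid_3$ via the Hint-\LWE\ reduction. From the Hint-\LWE\ challenge I receive $(\bfA,\bfu^\ast,\bfs^\ast:=\bfs+\bfr,\bfe^\ast:=\bfe'+\bff')$; following the template used for the \LWR\ construction I would implicitly set $\bfs_1+\bfs_n=\bfs^\ast$ (the hint on the seed is exactly what the server learns upon reconstructing $\seed_\lab$), form the first honest ciphertext as $\bpsact_{1,\lab}=\bfu^\ast+\bff_{1,\lab}'+\Delta\cdot\bpsain_{1,\lab}$ for an auxiliary $\bff_{1,\lab}'\getsr\chi'$, and form the last honest ciphertext as $\bpsact_{n,\lab}:=\bfA\cdot\seed_\lab-\sum_{i\in(\cC\cap H)\setminus\{n\}}\bpsact_{i,\lab}+\bfe^\ast+\sum_{i\in(\cC\cap H)\setminus\{1\}}(\bfe_{i,\lab}'+\bff_{i,\lab}')+\Delta\cdot\bpsain_\tau$. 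The delicate point is verifying that the error terms telescope correctly: when $\bfu^\ast=\bfA\bfk+\bfe'$ is the real sample, every per-client error becomes a sum of two independent $\chi'$ samples (hence $\chi$-distributed) and the seeds of clients $1$ and $n$ form a valid additive split of $\bfs^\ast$, so the ciphertexts match the $\Hybrid_2$ distribution; when $\bfu^\ast$ is uniform, $\bpsact_{1,\lab}$ is uniform and independent of the input, giving $\Hybrid_3$. Arguing that the simulated joint distribution is exactly right, simultaneously reconciling the error split $\bfe=\bfe'+\bff'$ guaranteed by Hint-\LWE, the no-error reconstruction on the server side, and the packed sharing, is the part requiring the most care. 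The active-server case then carries over verbatim from Theorem~\ref{thm:opa-lwr}: the gap $\crec>(\csize+\cthr)/2$ plus the binding of each share to the client identity and iteration count forces a corrupt server to reconstruct with respect to a unique honest set $\khon^\ast$, and the programmable random oracle supplying the second mask $\bmask_{i,\lab}'$ lets $\tSim$ defer which honest inputs are aggregated until after the functionality query, then program $\hash(\hseed_{i^\ast,\lab})$ for a chosen $i^\ast\in\khon^\ast$ to absorb the honest sum $\bpsain_H$.
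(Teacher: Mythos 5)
Your proposal is correct and follows essentially the same route as the paper's proof: replay the hybrid chain of Theorem~\ref{thm:opa-lwr}, split the Gaussian error via $\bfe=\bfe'+\bff'$ per the Hint-\LWE\ distribution, embed the challenge $(\bfA,\bfu^\ast,\bfs^\ast,\bfe^\ast)$ by implicitly setting $\bfs_1+\bfs_n=\bfs^\ast$, and carry the active-server argument over unchanged via the $\crec>(\csize+\cthr)/2$ gap and the programmable random-oracle mask. Your correctness accounting for the zeroed error component and your explicit flagging of the error-telescoping step match the paper's treatment; the only cosmetic difference is that you write $\bff_{1,\lab}'$ where the paper (apparently a typo) writes $\bff_{n,\lab}'$ in the first honest ciphertext.
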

\begin{proof}
    The proof proceeds similar to that of Theorem~\ref{thm:opa-lwr}, through a sequence of hybrids. However, there are a few differences.
    Construction~\ref{cons:shprg-lwe} has the error vector $\bfe\getsr\chi$. However, we will replace $\bfe=\bfe'+\bff'$ where $\bfe',\bff'\getsr\chi'$, the distribution present in Hint-LWE Assumption (see Definition~\ref{def:hint-lwe}). The hybrid descriptions are similar, so we only specify the differences: 
    
    \begin{itemize}
        \item In $\Hybrid_2$ we will set: 
         $$\bpsact_{n,\lab}=\bfA\cdot \seed_\lab-\sum_{i\in(\cC\cap H)\setminus \set{n}} \bpsact_i +\bfe_\lab+\Delta \bpsain_{\lab}$$
         \item We will argue that $\Hybrid_2,\Hybrid_3$ are indistinguishable under Hint-LWE Assumption. We will sketch the reduction now. 
         \begin{itemize}
             \item  Recall that, from the Hint-LWE Challenge, we get $(\bfA,\bfu^\ast,\bfs^\ast:=\bfs+\bfr,\bfe^\ast:=\bfe'+\bff')$. 
             \item As done for the $\LWR$ construction, we will set the $\bfs_1+\bfs_n=\bfs^\ast$, the leakage on key.
             \item  For generating $\bpsact_{1,\lab}$ we will use $\bfu^\ast$, while also sampling a separate $\bff_1\getsr\chi'$. This gives us: $\bpsact_{1,\lab}=\bfu^\ast+\bff_{n,\lab}'+\Delta\cdot \bpsain_{1,\lab}$
             \item We will set $\bpsact_{n,\lab}:=\bfA\cdot \seed_\lab-\sum_{i\in(\cC \cap H)\setminus \set{n}} \bpsact_i+\bfe^\ast+\sum_{i\in(\cC \cap H)\setminus \set{1}} (\bfe_{i,\lab}'+\bff_{i,\lab}')+\Delta\cdot \bpsain_{\lab}$
             \item When $\bfu^\ast$ is the real sample, then $\bpsact_{1,\lab}$ satisfies $\Hybrid_2$'s definition. Meanwhile, the $\bpsact_{n,\lab}$ is also correctly simulated. Similarly, the case when it a random sample. 
         \end{itemize}
    \end{itemize}
    The proof of security against malicious server also follows that of the previous theorem. 
    \end{proof}
\fi
\ifdefined\IsSub{}
\else 
\subsection{Proof of Theorem~\ref{thm:khprf}}
\label{sub:thm-1}
\khprf*

\begin{proof}
We denote the challenger by $\cB$. Let $S_j$ be the event that the adversary wins in $\Hybrid_j$ for each $j\in\{0,\ldots,2\}$. Let $q_e$ (resp. $q_h$) denote the number of evaluation queries (resp. hash oracle queries) that the adversary makes. We use an analysis similar to the technique by Coron~\cite{C:Coron00}.
\begin{gamedescription}[name=Hybrid, nr=-1]
\describegame Corresponds to the security game as defined for security of PRF. It follows that the advantage of the adversary is $$\Adv_0=2\cdot |\Pr[S_0]-1/2]=\term{Adv}_\cA^{\term{PRF}}$$

\describegame This game is identical to $\Hybrid_1$ with the following difference. The challenger tosses biased coin $\delta_t$  for each random oracle query $H(t)$. The biasing of the coin is as follows: takes a value 1 with probability $\frac{1}{q_{e}+1}$ and 0 with probability $\frac{q_{e}}{q_{e}+1}$. Then, one can consider the following event $E$: that the adversary makes a query to the random oracle with $x_i$ as an input where $x_i$ was one of the evaluation inputs and for this choice we have that $\delta_t$ was flipped to 0. 

If $E$ happens, the challenger halts and declares failure. Then, we have that:
\[
\Pr[\neg E]=\left(\frac{q_{e}}{q_{e}+1}\right)^{q_{e}}\geq \frac{1}{e(q_{e}+1)}
\]
where $e$ is the Napier's constant. 
Finally, we get that:
\[
\Pr[S_1]=\Pr[S_0]\cdot \Pr[\neg E] \geq \frac{\Pr[S_0]}{e(q_{e}+1)}
\]
\describegame This game is similar to $\Hybrid_1$ with the following difference: we modify the random oracle outputs. 
\begin{itemize}
    \item If $\delta_t=0$, the challenger samples $w_t\getsr\cD_H$ and sets $H(t)=h^{w_t}$
    \item If $\delta_t=1$, the challenger samples $w_t\getsr\cD_H,u_t\getsr\allowbreak\bbZ/\modulus\bbZ$ and sets $H(t)=h^{w_t}\cdot f^{u_t}$
\end{itemize}
Note that, under the HSM assumption, an adversary cannot distinguish between the two hybrids. Therefore, we get:
\[
|\Pr[S_2]-\Pr[S_1]|\leq \epsilon_{\HSM}
\]
where $\epsilon_{\HSM}$ is the advantage that an adversary has in the $\HSM$ game. 
Note that $\Hybrid_2$ corresponds to the case where the outputs are all random elements in $\bbG$. Therefore, the inputs are sufficiently masked and leak no information about the key. Therefore, $\Pr[S_2]=0$
Then,
\[
\Adv_\cA^{\term{PRF}}\leq (e\cdot (q_e+1)\cdot \epsilon_{\HSM}
\]
\end{gamedescription}
\end{proof}
\begin{remark}
    Note that the above scheme is simply an adaptation of the famous DDH-based construction of a key-homomorphic PRF that was shown to be secure by Naor~\etal\cite{EC:NaoPinRei99}. It is easy to verify that our construction is also key homomorphic as $H(x)^{(k_1+k_2)}=H(x)^{k_1}\cdot H(x)^{k_2}$. 
\end{remark}
\fi
\ifdefined\IsSub{}
\else \subsection{Distributed Pseudorandom Function}
\label{sub:tprf}
\dprf*
\paragraph{Correctness.} For a polynomial $f\in\bbZ[X]$, every $f(i)$ leaks information about the secret $\secret\bmod i$ leading to a choice of polynomial $f$ such that $f(0)=\Delta\cdot \secret$. For our use case, the secret is the PRF key $\tpk$. Let us consider a set $\cS=\set{i_1,\ldots,i_\cthr}$ of indices and corresponding evaluations of the polynomial $f$ at $i_1,\ldots,i_\cthr$ giving us key shares: $\tpk^{(i_1)},\ldots,\tpk^{(i_\cthr)}$. To begin with, one can compute the Lagrange coefficients corresponding to the set $\cS$ as: $\forall~i\in\cS,\lambda_i(X):=\prod_{j\in\cS\setminus\set{i}} \frac{x_j-X}{x_j-x_i}$. This implies that the resulting polynomial is $f(X):=\sum_{j=1}^{\cthr} \lambda_{i_j}(X)\cdot \tpk^{(i_j)}$. 

However, $\lambda_i(X)$ requires one to perform a division $x_j-x_i$ which is undefined as $\hash$ hashes to $\bbG$ whose order is unknown. To avoid this issue, a standard technique is to instead compute coefficient $\Lambda_i(X):=\Delta\cdot \lambda_i(x)$. Thereby, the resulting polynomial that is reconstructed if $f'(X)=\Delta\cdot f(X)=\sum_{j=1}^{\cthr} \Lambda_{i_j}(X)\cdot \tpk^{(i_j)}$. Consequently, 

\begin{align*}
    \hash(x)^{\Delta^3\cdot \tpk}=\hash(x)^{\Delta\cdot f'(0)}&=\hash(x)^{\Delta\cdot \sum_{j=1}^{\cthr} \Lambda_{i_j}(0)\cdot \tpk^{(i_j)}}\\&=\prod_{j=1}^t \left(\hash(x)^{\Delta\cdot \tpk^{{(i_j)}}}\right)^{\Lambda_{i_j}(0)}\\
     &=\prod_{j=1}^t \left(\tppeval(\tpk^{(i_j)},x)\right)^{\Lambda_{i_j}(0)}
\end{align*}
Thus, our protocol is correct. 

\paragraph{Pseudrandomness.} Next, we consider the pseudorandomness property of our construction. 
 \thmdprf*
Boneh~\etal\cite{C:BLMR13} showed that from any Key Homomorphic PRF (which Construction~\ref{cons:khprf}), one can build a Distributed PRF. The proof of the following theorem follows the template of this scheme with certain important adaptations as our secret sharing scheme is over integers. The proof technique is to show that if there exists an adversary $\cA$ that can break the $\tprf$ security, one can then use it to build an adversary $\cB$ to break the pseudorandomness of our original PRF, as defined in Construction~\ref{cons:khprf}. The idea behind the proof is for $\cB$, upon receiving choice of $\cthr-1$ corruptions as indices $i_1,\ldots,i_{\cthr-1}$, to then choose a random index $i_\cthr$ and implicitly set $\tpk^{(i_\cthr)}$ to be the PRF key chosen by its challenger. Therefore, the $\cB$ now has knowledge of $\cthr$ indices, with which it can sample the Lagrange coefficients as before: 
\begin{code}
    \cFor $j=1,\ldots,\cthr$ \cDo\\
        \> $\Lambda_{i_j}(X):=\prod_{\zeta\in\set{1,\ldots,\cthr}\setminus{j}}^{\cthr} \frac{i_\zeta-X}{i_\zeta-i_j}\cdot(\Delta)$
    \end{code}
Now, $\cB$ with knowledge of the keys for indices $i_1,\ldots,i_{\cthr-1}$ along with access to Oracle needs to simulate valid responses to $\tppeval$ queries for an unknown index. Call this index $i^\ast$. Then, we have:
\begin{align*}
    \tppeval(\tpk^{(i^\ast)},x):
    &=\hash(x)^{\Delta\cdot \tpk^{(i^\ast)}}=\hash(x)^{\sum_{j=1}^{\cthr} \Lambda_{i_j}(i^\ast)\cdot \tpk^{(i_j)}}\\
    &=\hash(x)^{\sum_{i=1}^{\cthr-1}\Lambda_{i_j}(i^\ast)\cdot \tpk^{(i_j)}}\cdot \left(\hash(x)^{\tpk^{(i_\cthr)}}\right)^{\Lambda_{i_\cthr}(i^\ast)}\\
\end{align*}
The last term is simulated using $\cB$'s own oracle access. 
\newcommand{\A}{\cA}
\begin{proof}
Let $\cA$ be a PPT attacker against the pseudorandomness property of $\tprf$, having advantage $\epsilon$.

$\A$ first chooses $\cthr-1$ indices $\bfK=\{i_1,\ldots,i_{\cthr-1}\}$ where each index is a subset of $\{1,\ldots,\csize\}$. $\cA$ receives the shares of the keys $\tpk^{(i_1)}=f(i_1),\ldots,\tpk^{(i_{\cthr-1}) }=f(i_{\cthr-1})$ (for unknown polynomial $f$ of degree $\cthr$ such that $f(0)=\tpk\cdot \Delta$ with $\Delta:=\Delta$. Further, $\A$ has access to $\oEval(i,x)$ receiving $\tppeval(\tpk_i,x)$ ins response. Additionally, $\A$ expects to have oracle access to the random oracle $\hash$.

Using this attacker $\A$, we now define a PPT attacker $\cB$ which will break the pseudorandomness property of Construction~\ref{cons:khprf}. Note that $\cB$ is given access to the oracle that either outputs the real evaluation of the PRF on key $\tpk^\ast$ or a random value. Additionally, $\cB$ expects to have oracle access to the random oracle $\hash$.
 
\begin{itemize}
\item \textbf{Setup:} $\cB$ does the following during Setup. 
\begin{itemize}
    \item Receive set $S=\{i_1,\ldots,i_{\cthr-1}\}$ from $\A$. 
    \item Next $\cB$ generates the key shares and public key as follows: 
    \begin{itemize}
        \item Sample $\tpk^{(i_1)},\ldots,\tpk^{(i_{\cthr-1})}\in \bbZ$.
        \item $\cB$ picks an index $i_\cthr$ at random and implicitly sets the PRF key chosen by its challenger as $\tpk^{(i_t)}$.
        \item Immediately, given the $\cthr$ indices, one can construct the secret sharing polynomial $f\in\bbZ[X]$ as described earlier, but instead recreating the polynomial $f'(X)$ using the coefficients $\Lambda_{i_j}(X)$ for $j=1,\ldots,\cthr$ with $\tpk^{(i_t)}$ being unknown to $\cB$ and using its challenger to simulate a response.
        \item $\cB$ gives $\tpk^{(i_1)},\ldots,\tpk^{(i_{\cthr-1})}$ to $\cA$.
    \end{itemize}
    \end{itemize}
\item \textbf{Queries to H:} 
$\cB$ merely responds to
all queries from $\cA$ to $\hash$ by using its oracle access to $\hash$.
\item \textbf{Queries to Partial Evaluation:} $\cB$ receives as query input, some choice of key index specified by $i^\ast$ and input $x_j$ for $i=1,\ldots,Q$.  In response $\cB$ does the following:
\begin{itemize}
    \item Forward $x_j$ to its challenger. In response it implicitly receives $\tppeval(\tpk^{(i_\cthr)},x_j)$, but off by a factor of $\Delta$ in the exponent. Call this $h_{j,\cthr}$.
    \item Compute: $h_{j,i^\ast}=\hash(x_j)^{\sum_{i=1}^{\cthr-1}\Lambda_{i_j}(i^\ast)\cdot \tpk^{(i_j)}}\cdot (h_{j,\cthr})^{\Lambda_{i_\cthr}(i^\ast)}$ where $\cB$ uses its own access to hash oracle to get $\hash(x_j)$.
    \item It returns $h_{j,i^\ast}$ to $\cA$. 
\end{itemize}
\item \textbf{Challenge Query:} On receiving the challenge input $x^\ast$, $\cB$ does the
following:
\begin{itemize}
    \item Ensure that it is a valid input, i.e., there is no partial evaluation queries on $x^\ast$ at any unknown index point. 
    \item If not, $\cB$ forwards to its challenger $x^\ast$. In response it implicitly receives $\tppeval(\tpk^{(i_\cthr)},x^\ast)$, but off by a factor of $\Delta$ in the exponent. Call this $h^\ast$. 
    \item It also uses its oracle access to $\hash$ to receive
    $h=\hash(x^\ast)$.
    \item It finally computes $y=\hash(x_j)^{\Delta^2\cdot \sum_{i=1}^{\cthr-1}\Lambda_{i_j}(0)\cdot \tpk^{(i_j)}}\cdot (h^\ast)^{\Delta^2\cdot \Lambda_{i_\cthr}(0)}$ and outputs
    $y$ to $\cA$
\end{itemize}
\item \textbf{Finish:} It forwards $\cA$'s guess as its own guess. 
\end{itemize}
\paragraph{Analysis of the Reduction.} Note that for the case when $b=0$, $\cA$ expects to receive $\hash(x^\ast)^{\Delta^3\cdot \tpk}$ where $\tpk$ is defined at the point 0. So, we get:
\begin{align*}
    \hash(x^\ast)^{\Delta^3\cdot \tpk^{(0)}}&=\hash(x^\ast)^{\Delta^2\cdot f'(0)}=\hash(x)^{\Delta^2\sum_{j=1}^{\cthr} \Lambda_{i_j}(0)\cdot \tpk^{(i_j)}}\\
    &=\hash(x)^{\Delta^2\sum_{i=1}^{\cthr-1}\Lambda_{i_j}(0)\cdot \tpk^{(i_j)}}\cdot \left(\hash(x)^{\tpk^{(i_\cthr)}}\right)^{\Delta^2\Lambda_{i_\cthr}(0)}\\
\end{align*}
This shows that the returned value $y$ is consistent when $b=0$. Meanwhile, when $b=1$, $h^\ast$ is a random element in the group and then $y$ is a truly random value which means that $\cB$ has produced a valid random output for $\cA$. Similarly, when $b=0$, every response to partial evaluation is also done consistently by correctness of the underlying secret sharing scheme. Meanwhile, when $b=1$, we can rely on the statistical privacy preserving guarantee of the underlying secret sharing scheme to argue that the difference that the adversary can notice is statistically negligible. This concludes the proof where $\cB$ can only succeed with advantage $\epsilon$. 
\end{proof}
\ignore{\paragraph{Verification of Key Homomorphism.} Let $\tpk_1$ and $\tpk_2$ be two sampled keys from $\cK$. Let $\tpk_1^{(1)},\ldots,\tpk_1^{(\csize)}\getsr\tpshare(\tpk_1)$ and $\tpk_2^{(1)},\ldots,\tpk_2^{(\csize)}\getsr\tpshare(\tpk_2)$ where we have two polynomials $f_1(X)$ and $f_2(X)$ with the property such that $f_1(X)=\Delta\cdot \tpk_1,f_2(X)=\Delta\cdot \tpk_2$. Consider a subset $\cS$ of size at least $\cthr$ indicated by the indices $\set{i_1,\ldots,i_\cthr}$. Then, the lagrange coefficients induced by the set $\cS$ can be defined as: $\forall~i_j\in\cS,\lambda_{i_j}(X):=\prod_{\zeta\in[\cthr]\setminus\set{i}} \frac{i_\zeta-X}{i_\zeta-i_j}$. This implies that the resulting polynomial is $f_1(X):=\sum_{j=1}^{\cthr} \lambda_{i_j}(X)\cdot \tpk_1^{(i_j)}, f_2(X):=\sum_{j=1}^{\cthr} \lambda_{i_j}(X)\cdot \tpk_2^{(i_j)}$. Similarly, as before we will consider $\Lambda_i(X):=\Delta\cdot \lambda_i(x)$. 

Thereby, the resulting polynomials are $f_1'(X)=\Delta\cdot f_1(X)=\sum_{j=1}^{\cthr} \Lambda_{i_j}(X)\cdot \tpk_1^{(i_j)},f_2'(X)=\Delta\cdot f_2(X)=\sum_{j=1}^{\cthr} \Lambda_{i_j}(X)\cdot \tpk_2^{(i_j)}$. 

Then, for any $x$, consider $y_{1,2}^{(j)}:=\hash(x)^{\Delta \tpk_1^{(j)}}\cdot \hash(x)^{\Delta\tpk_2^{(j)}}=\hash(x)^{\Delta\cdot (\tpk_1^{(i_j)}+\tpk_2^{(i_j)})}$ for $j=1,\ldots,\csize$. Then, let us consider:
\begin{align*}
\tpeval(\tpk_1+\tpk_2,x)&:=\hash(x)^{\Delta^3(\tpk_1+\tpk_2)}=\hash(x)^{\Delta^3\tpk_1}\cdot\hash(x)^{\Delta^3\tpk_2}\\&=\hash(x)^{\Delta^2\cdot f_1(0)} \hash(x)^{\Delta^2 f_2(0)}\\
    &=\hash(x)^{\Delta\cdot f_1'(0)+f_2'(0)}\\&=\hash(x)^{\Delta\cdot \sum_{j=1}^{\cthr} \Lambda_{i_j}(0)\cdot (\tpk_1^{(i_j)}+\tpk_2^{(i_j)})}\\
    &=\left(\prod_{j=1}^{\cthr} \hash(x)^{\Delta\cdot (\tpk_1^{(i_j)}+\tpk_2^{(i_j)})}\right)^{\Lambda_{i_j}(0)}\\&=\left(\prod_{j=1}^{\cthr} y_{1,2}^{(i_j)}\right)^{\Lambda_{i_j}(0)}\\
    &=\tpcombine(\set{y_{1,2}^{(i_j)}}_{j\in[\cthr]}
\end{align*}
As discussed earlier, the \HSM\ assumption also generalizes the \texttt{DCR} assumption. It follows that we also have a Distributed PRF that is Key Homomorphic under the DCR Assumption in the Random Oracle model. }
\fi
\section{Heterogeneity and Poisoning Attacks}
\label{sec:byzantine}
Secure Aggregation was a useful tool to realize FedAvg~\cite{pmlr-v54-mcmahan17a}. However, research~\cite{fedopt} has shown that FedAvg does not yield good accuracy or model convergence when confronted with non-i.i.d client dataset distribution. Prior works such as DReS-FL~\cite{dresfl} often require expensive cryptographic techniques to be resilient to heterogeneous datasets. For example, DReS-FL requires that each client secretly share its datasets with other clients before having each client train. In this section, we show how to combine secure aggregation with FedOpt algorithm~\cite{fedopt}, extending existing secure aggregation techniques to privacy-preserving federated learning that can handle heterogeneity in the dataset. 

\subsection{FedOpt} 
\label{sub:fedopt} 
FedOpt~\cite{fedopt} is a family of algorithms that abstracts (and generalizes FedAvg). It allows for a choice of optimizer other than Stochastic Gradient Descent (SGD) on the client side and a more resilient update rule on the server side. Indeed, the work of Reddi~\etal~\cite{fedopt}, also presents instantiations of various server-side update rules. See Algorithm~\ref{fig:caps-opt} for pseudocode where the text is in black. Here $T$ is the number of iterations, and $x_t$ is the global model at $t$. At the start of every iteration, each client $i$ sets its model $x_{i,0}^t$ to be $x_t$. Meanwhile, $K$ is the number of local iterations the client performs, with $k$ being the iterating variable. $\term{ClientOpt}$ is the algorithm employed by the client based on its local learning rate $\eta_i$ to update the model $x_{i,k}^t$ to $x_{i,k+1}^t$ (\refmarker{8}).  $\Delta_i^t$ is the update between the global model at iteration $t$ ad the local model at the end of $K$ local iterations, at iteration $t$. The former is denoted by $x_t$ while the latter is $x_{i,K}^t$ (\refmarker{9}). Finally, \refmarker{13} shows the server side optimization $\term{ServerOpt}$ to update the current global model $x_t$ based on the computed aggregate of clients $\Delta_t$ along with global learning rate $\eta$. 

\paragraph{FedOpt and $\caps$.}  Observe that the input to $\term{ServerOpt}$ is independent of the individual client updates and instead only takes as parameter $\Delta_t$ (the \emph{average} of client updates $\Delta_i^t$), the current model $x_t$, learning parameter $\eta$, and iteration count $t$ (\refmarker{11}). Therefore, with Secure Aggregation, the server can compute $\Delta_t$, while preserving the honest client's updates $\Delta_i^t$, and later rely on a suitable $\term{ServerOpt}$ that is more resilient to heterogeneity. We concretely formalize the pseudocode in Algorithm~\ref{fig:caps-opt}, with the additional steps needed marked in blue. 
\begin{algorithm}[!tb]
\caption{FedOpt and $\caps$}
\label{fig:caps-opt}
\begin{algorithmic}[1]
 \linemarker{1}\State \textbf{Input:} $x_0$, \texttt{ClientOpt}, \texttt{ServerOpt}
 \linemarker{2}\For{$t = 0$ \textbf{to} $T - 1$}
 \linemarker{3}   \State Sample a subset $S$ of clients and subset $C$ of committee clients 
    \Comment{{We can use the approach of Flamingo to generate $C$, independent of server.}}
  \linemarker{4}   \State $x_{i,0}^t = x_t$
  \linemarker{5}   \For{each client $i \in S$ \textbf{in parallel}}
     \linemarker{6}    \For{$k = 0$ \textbf{to} $K - 1$}
     \linemarker{7}        \State Compute an unbiased estimate $g_{i,k}^t$ of $\nabla F_i(x_{i,k}^t)$
      \linemarker{8}       \State $x_{i,k+1}^t = \texttt{ClientOpt}(x_{i,k}^t, g_{i,k}^t, \eta_i, t)$
        \EndFor
      \linemarker{9}   \State $\Delta_i^t = x_{i,K}^t - x_t$
       \linemarker{10}  \State \textcolor{blue}{Use $\caps_\text{LWR}$ to send masked $\Delta_i^t$ to server and route the encrypted seed share to $C$.}
    \EndFor
    \linemarker{11} \State \textcolor{blue}{Reconstruct the seed sum and aggregate the masked updates.}
    \linemarker{12} \State \textcolor{blue}{Recover $\Delta_t = \frac{1}{|S|} \sum_{i \in S} \Delta_i^t$}
     \linemarker{13}\State $x_{t+1} = \texttt{ServerOpt}(x_t, -\Delta_t, \eta, t)$
\EndFor
\end{algorithmic}
\end{algorithm}

\subsection{Byzantine-Robust Stochastic Aggregation (bRSA)}
\label{sub:brsa}
bRSA \cite{brsa} is a class of stochastic sub-gradient methods for distributed learning resilient to Byzantine workers (i.e., clients sending arbitrary inputs). It mitigates the effects of incorrect messages due to poisoning behaviors, communication failures, or uneven data distribution by incorporating a regularization term in the objective function.
At each iteration $t$, clients compute parameter updates based on local data, prior local models, and global parameters.




At each iteration $k$, client $i$ computes parameter updates based on local data ($\xi_i^k$), prior local models ($x_i^k$), and global parameters ($w^k$). The client and server updates are:

{\small \[
\text{Client: } x_i^{k+1} = x_i^k - \eta^k \left( \nabla F(x_i^k, \xi_i^k) + \lambda\text{sign}(x_i^k - w^k) \right) 
\]
\[
\text{Server: } w^{k+1} = w^k - \eta^k \left( \nabla f_0(w^k) + \lambda \sum_{i\in\lbrack n\rbrack} \text{sign}(w^k - x_i^k) \right)
\]}
where $\eta$ is the learning rate, $\xi$ is a local dataset sample, $F(\cdot, \cdot)$ is the loss function, $f_{\ell_2}(\cdot)$ is the robust regularization term, $\lambda$ weights the robustness term, $\text{sign}$ is element-wise, and $\lbrack n\rbrack$ is the client set.

 However, unlike FedOpt, this approach aggregates not the model gradient but rather a function of the current worldwide model and the gradient update sent by the client. Specifically, the server computes $\text{sign}(w^k,x_i^k)$, which can be viewed as a function of how far away the client's gradient ($x_i^k$) is from the current global model ($w^k$). These are then added before proceeding with additional server-side optimization.



\paragraph{bRSA and $\caps$.} As pointed out by Franzese~\etal\cite{franzese2023robust}, the only information needed by the server to aggregate is $\text{sign}(\Delta_i^t-x_t)$. The clients, rather than providing $\Delta_i^t$, computes locally $\text{sign}(\Delta_i^t-x_t)$ and provides this as an input to the server. Note that this is simply a vector with elements in $\set{-1,1}$. Let this be vector $\bfu_i\in \set{-1,1}^{L}$ where $L$ is the size of the model. 

We can optimize further by sending a binary vector instead (say $\bfv_i$) with the property that $\bfv_i[j]=0$ iff $\bfu_i[j]=0$ for $j=1,\ldots,L$. Then, while the server requires $\sum \bfu_i$, this is equivalent to $2\cdot \sum_{i=1}^{n} \bfv_i-n$. Sending $\bfv_i$ lends itself to efficient zero-knowledge proof to show that a masked input is a binary vector. 


\begin{figure*}[!tb]
\centering
\begin{protocolbox}{Malicious Security with Abort}
\begin{minipage}{\textwidth}
\begin{description}
    \item[Setup:] Let $\cH:\set{0,1}^\ast\to \bbF^{\csize-d-2}$ where $d=\crec-1$ be a hash function modeled as a random oracle. Let $\cH':\set{0,1}^\ast\to \bbF$ be the hash function used to generate the challenge. Let $G$ be a group generated by $g$ where the Discrete Logarithm and DDH problems are hard, and $G$ is of prime order $q$, the same as the order of the field used for Shamir Secret Sharing.
    
    \item[Client $i$:] Performs the following steps:
    \begin{description}
        \item[1.] Commit to $\seed_{i,\lab}^{(0)}=\seed_{i,\lab},\seed_{i,\lab}^{(1)},\ldots,\seed_{i,\lab}^{(\csize)}$ as $C_i^{(j)}:=g^{\seed_{i,\lab}^{(j)}}$.
        \item[2.] Generate the coefficients of the polynomial of degree $\csize-d-2$ using the Fiat-Shamir transform: $m_0,\ldots,m_{\csize-d-2}\gets \cH(C_i^{(0)},\ldots,C_i^{(\csize)})$.
        \item[3.] Compute $v_0,\ldots,v_{\csize}$ as $v_i:=\prod_{j\in\set{0,\ldots,\csize}\setminus i} (i-j)^{-1}$.
        \item[4.] Compute $\bfw:=(v_0\cdot m^\ast(0),\ldots,v_\csize\cdot m^\ast(\csize))$.
        \item[5.] Generate $\bft:=(t_0,\ldots,t_{\csize})\getsr\bbF$.
        \item[6.] Commit to $t_0,\ldots,t_\csize$ as $C_t^{(j)}:=g^{t_j}$.
        \item[7.] Compute $r:=\langle \bft,\bfw\rangle$.
        \item[8.] Compute $c:=\cH'(C_i^{(0)},\ldots,C_i^{(\csize)},C_t^{(0)},\ldots,C_t^{(\csize)},\bfw,r)$.
        \item[9.] Compute $z_0,\ldots,z_\csize$ where $z_i:=t_i+c \cdot \seed_{i,\lab}^{(i)}$.
        \item[10.] Set $\pi_i:=\left(\set{C_i^{(j)}},r,\bfz=(z_0,\ldots,z_\csize),c\right)$.
    \end{description}

    \item[Server:] Upon receiving $\pi_i$ from client $i$, performs the following:
    \begin{description}
        \item[1.] Parse $\pi_i:=(\set{C_i^{(j)}},r,\bfz=(z_0,\ldots,z_\csize),c)$.
        \item[2.] Compute $\bfw$ (as done by the client) and check if $\langle \bfw,\bfz\rangle = r$.
        \item[3.] For each $j=0,\ldots,\csize$, compute $C_t^{(j)}=g^{z_j}\cdot \left(C_i^{(j)}\right)^{-c}$.
        \item[4.] Compute $c'=\cH'(C_i^{(0)},\ldots,C_i^{(\csize)},C_t^{(0)},\ldots,C_t^{(\csize)},\bfw,r)$.
        \item[5.] Accept input from client $i$ if $c == c'$, else client $i$ is dropped.
        \item[6.] Send $C_i^{(j)}$ and the encrypted shares for committee member $j$ to committee member $j$.
    \end{description}

    \item[Committee Member $j$:] Upon receiving data from the server:
    \begin{description}
        \item[1.] Decrypt and recover the share $\seed_{i,\lab}^{(j)}$.
        \item[2.] Verify that the recovered share matches the commitment forwarded by the server.
        \item[3.] If verification fails, complain to the server.
    \end{description}
    \item[Server:] 
    \begin{description}
        \item[1.] If any complaint is received, protocol is aborted.
        \item[2.] Server checks if $\prod_{i\in\cC} C_i^{(0)}\eqq g^{\sum_{i\in\cC} \seed_{i,\lab}}$ where $\sum_{i\in\cC} \seed_{i,\lab}$ is obtained by reconstruction from the shares. 
    \end{description}
\end{description}
\end{minipage}
\end{protocolbox}
\caption{Malicious Security with Abort}
\label{fig:mal-clients}
\end{figure*}

\ifdefined\IsSP 
\section{Extensions of \CAPS}
\label{sec:extensions}
\subsection{Security Against Committee Members}
\label{sec:stronger}
\else \section{Stronger Security Definition}
\label{sec:stronger}
\fi
Hitherto, we have only considered the indistinguishability of information from the perspective of the server. However, one can consider the requirement to hold for even corrupt committee members. Specifically, the client's input remains hidden if their entire committee collude (or at least $\cthr$ of them). It is easy to observe that Figure~\ref{fig:async} does not satisfy the stronger security definition. If we had a single committee member, the auxiliary information (available to the committee member) masks the input and, therefore, can be unmasked. 
%
Thus, to accommodate security against collusion of all committee members, we modify $\caps$ syntax and construction to include a key from the server to keep client privacy. Informally, we do the following:
\begin{itemize}
    \item First, the server or the aggregator also has a secret key, denoted by $\tpk_0$. 
    \item Second, for each label $\lab$, the server first publishes a ``public key'', as a function of the following algorithm $\psaaux_{0,\lab}\gets\psag(\tpk_0,\lab)$
    \item Third, the encrypt procedure takes into account this auxiliary information, i.e., 
    
    $\left(\psact_{i,\lab},\allowbreak \set{\psaaux_{i,\lab}^{(j)}}_{j\in[\csize]}\allowbreak \right)\allowbreak \getsr\allowbreak \psae(\pp,\psask_i,\psain_{i},\psaaux_{0,\lab},\cthr,\csize,\lab)$. In other words, the server publishes the public key, and then the client can encrypt it to a label. 
    \item Fourth, the decrypt procedure takes $\tpk_0$ as input too. 
\end{itemize} 
\ifdefined\IsSP 
\else The committee indistinguishability game proceeds in phases.
\begin{itemize}
    \item Setup Phase: The challenger begins by running the setup algorithm to generate the system parameters. The adversary is then provided with the system parameters $\pp$ and is asked to output an adversarial choice of $n$, which is the number of users that will be registered. In response, the challenger runs the $\psakg$ algorithm $n+1$ times, each for the $n$ users and once for the server's secret key. This phase ends with the adversary being provided with the server's secret key denoted by $\psask_0$. 
    \item Learning Phase: The adversary issues queries to the various oracles defined by $\oCorr,\oEnc$ to learn any information it could. $\oCorr$ proceeds where the adversary can corrupt any user and receive its key. These corruptions are tracked. Meanwhile, $\oEnc$ allows the adversary to issue any arbitrary encryption queries on behalf of any of the users, with the restriction that it can only do so once per user per label. In response, it receives both the ciphertext encrypting the input and \emph{all} the auxiliary information. 
    
    This phase ends with the adversary committing to a target label $\tau$. 
    \item Challenge Phase: In this phase, the challenger begins by identifying eligible users $\cU$ who are honest, which is defined by $[n]\setminus \kcorr$. Without loss of generality, we assume there have been no queries to $\oEnc$ with $\tau$ as the label. Should there be such queries, those users $i$ such that $(i,\tau,\cdot)\in\bfE$ are also removed from the set $\cU$, and these inputs are later used to compute the challenge. Upon receiving $\cU$, the adversary commits to two sets: $\cH\subseteq \cU$ is the set of honest users that the adversary is targeting, and $\cS$ that is the set of committee members for whom the adversary receives $\set{\psaaux_{i,\tau}^{(j)}}_{i\in\cH,j\in\cS}$ provided $|\cS|\leq \cthr-1$. Further, the adversary also provides inputs two choices of inputs for user in $\cH$ denoted by $\set{\psain_{i,0},\psain_{i,1}}_{i\in \cH}$ and inputs $\set{\psain_{i}}_{i\in [n]\setminus \cH}$ for the remaining users.
    \item Finally, the adversary is provided with individual encryptions and auxiliary information for all committee members.  
    \item Guessing Phase: The adversary outputs a guess $b'$ and wins if $b'=b$, provided trivial attacks do not happen.
\end{itemize}
\begin{definition}[C-IND-CPA Security]
    We say that a $(\cthr,\csize,\modulus)$ \CAPS~Scheme $\allowbreak\caps$ with label space $\cL$ is Server-Indistinguishable under Chosen Plaintext Attack (S-IND-CPA) if for any PPT adversary $\cA$, there exists a negligible function $\negl$ such that: 
    \begin{gather*}
	\Pr\left[
	\begin{array}{c|c}
	& \pp\getsr\pres(1^\kappa);b\getsr\{0,1\}\\
	 & (\state,n)\getsr\cA(\pp),\set{\psask_i\getsr\psakg()}_{i\in[n]\cup \{0\}}\\
	b=b'&(\state,\tau)\getsr\cA^{\oCorr,\oEnc}(\state,\psask_0), \cU:=[n]\setminus \kcorr\\ 
    & (\cH,\cS,\set{\psain_{i,0},\psain_{i,1}}_{i\in\cH},\set{\psain_{i}}_{i\in [n]\setminus \cH})\getsr\cA(\state,\cU)\\
    & \set{\prect_{i,\tau},\set{\psaaux_{i,\tau}^{(j)}}_{j\in[\csize]}\getsr\psae(\pp,\presk_i,\psain_{i,b})}_{i\in\cH}\\ 
    & \set{\prect_{i,\tau},\set{\psaaux_{i,\tau}^{(j)}}_{j\in[\csize]}\getsr\psae(\pp,\presk_i,\psain_{i})}_{i\in[n]\setminus\cH}\\ 
	& b'\getsr\cA(\state,\set{\psact_{i,\tau},\psaaux_{i,\tau}^{(j)}}_{i\in[n],j\in[\csize]})\\
	\end{array}
	\right] \leq \frac{1}{2}+\negl(\kappa)
    \end{gather*}
\end{definition}
\subsection{Updated Committee Indistinguishable Construction }
\label{sub:stronger}
These are the changes to $\caps$ construction based on the \HSM\ assumption to adapt it to the stronger security definition:

\begin{itemize}
    \item  \algoHead{$\psag(\tpk_0,\lab)$}
			    \begin{algorithmic}
                \State Compute $\psapk_{0,\lab}\gets\tprf.\tpeval(\tpk_0,\lab)$
                \State \Return $\psapk_{0,\lab}$
			    \end{algorithmic}
    \item Modify the encryption procedure as follows: 
    \item[]\algoHead{$\psae(\pp,\psask_i,\psain_{i},\psapk_{0,\lab},\cthr,\csize,\lab)$}
			    \begin{algorithmic}
                    \State Parse $\psask_i=\tpk_i$
                    \State Compute $h_{i,\lab}=\tprf.\tpeval(\tpk_i,\lab)$
                    \State Compute $\psact_{i,\lab}=f^{\psain_{i}}\cdot \psapk_{0,\lab}^{\psask_i}$
                    \State Compute $(\tpk_i^{(j)})_{j\in[\csize]}\getsr\tprf.\tpshare(\tpk_i,\cthr,\csize)$
                    \For{$j=1,\ldots,\csize$}
                    \State $\psaaux_{i,\lab}^{(j)}=\tprf.\tpeval(\tpk_i^{(j)},\lab)$
                    \EndFor
			        \State \Return $\psact_{i,\lab},\set{\psaaux_{i,\lab}^{(j)}}_{j\in[\csize]}$
			    \end{algorithmic}
    In other words, $\psaaux_{i,\lab}^{(j)}$ can be viewed as the $j$-th partial evaluation of the key $(\tpk_i)$ where $\tpk_i$ was key shared using Secret Sharing scheme, while $\psact_{i,\lab}$ was masked by the $\tprf$ evaluation on the key $\psask_i\cdot \psask_0$
\end{itemize}
Now, the committee member multiplies all the auxiliary information. As a result, $\AUX_\lab^{(j)}$ is simply a partial evaluation of the following key share $\sum_{i=1}^{n} \tpk_i^{(j)})$. Therefore, the server computes $\tprf.\tpeval(\cdot (\sum_{i=1}^{n} \tpk_i)$. Now, let us look at the decryption procedure:

\algoHead{$\psaa(\AUX_\lab,\tpk_0\set{\prect_{i,\lab}}_{i\in\cC})$}
			    \begin{algorithmic}
                    \State Compute $M=\AUX_\lab^{-\psask_0}\cdot (\prod_{i\in\cC} \psact_{i,\lab})$
                    \State Compute $X_\lab\gets\cl\Solve(\pp,M)$
                    \State \Return $X_\lab\bmod \modulus$
			    \end{algorithmic}


The security of this construction follows from the intuition that the adversary gets all of the auxiliary information, from which it can only construct a Diffie-Hellman key on the fly, from which it cannot compute any masking information to unmask the inputs. 

\fi


\end{document}